\documentclass[a4paper,12pt]{article}

\usepackage[margin=2.10cm]{geometry}
\usepackage[T1]{fontenc}
\usepackage{lmodern}
\usepackage{microtype}

\usepackage{amsmath, amssymb, amsthm, mathtools, bm, dsfont, bbm}
\usepackage{mathrsfs}

\usepackage{enumitem}
\usepackage{booktabs}
\usepackage{threeparttable}
\usepackage{multirow}
\usepackage{siunitx}

\usepackage{graphicx}
\usepackage{subcaption}
\usepackage{float}

\usepackage{natbib}
\usepackage[nottoc,numbib]{tocbibind}
\settocbibname{References}

\usepackage{hyperref}
\usepackage{enumitem}
\usepackage{cleveref}
\usepackage{bookmark}

\crefname{assumptioni}{Assumption}{Assumptions}
\Crefname{assumptioni}{Assumption}{Assumptions}

\hypersetup{
	colorlinks = true,
	citecolor  = blue,
	linkcolor  = blue,
	urlcolor   = blue,
	pdfauthor  = {Jonas Skjold Raaschou-Pedersen},
	pdftitle   = {Doubly Robust Instrumented Difference-in-Differences}
}

\theoremstyle{plain}

\newtheorem{proposition}{Proposition}
\newtheorem{lemma}{Lemma}

\theoremstyle{definition}

\newtheorem{assumption}{Assumption}

\theoremstyle{remark}
\newtheorem{remark}{Remark}

\setlist[itemize]{noitemsep, topsep=3pt}
\setlist[enumerate]{noitemsep, topsep=3pt}

\crefname{assumption}{assumption}{assumptions}
\Crefname{assumption}{Assumption}{Assumptions}
\crefname{theorem}{theorem}{theorems}
\Crefname{theorem}{Theorem}{Theorems}
\crefname{lemma}{lemma}{lemmas}
\Crefname{lemma}{Lemma}{Lemmas}

\usepackage{xcolor}

\newif\ifdraft
\drafttrue        

\ifdraft
	\newcommand{\draftnote}[1]{\textcolor{red}{[#1]}}
\else
	\newcommand{\draftnote}[1]{}
\fi

\usepackage{xcolor}
\usepackage{marginnote}

\newif\iffinal
\finalfalse   

\iffinal
	\newcommand{\jonas}[1]{}
	\newcommand{\todo}[1]{}
	\newcommand{\mynote}[1]{}
\else
	\newcommand{\jonas}[1]{\textcolor{blue}{[Jonas: #1]}}
	\newcommand{\todo}[1]{\textcolor{red}{[TODO: #1]}}
	\newcommand{\mynote}[1]{\marginnote{\scriptsize\textcolor{purple}{#1}}}
\fi

\crefname{enumi}{Assumption}{Assumptions}
\Crefname{enumi}{Assumption}{Assumptions}

\usepackage{xparse}

\NewDocumentEnvironment{proofref}{m}
{\noindent\textbf{Proof of \Cref{#1}. }\pushQED{\qed}\ignorespaces}
{\popQED}

\newlist{tricks}{enumerate}{2}

\setlist[tricks,1]{
	label=\textbf{Trick \arabic*.},
	ref=\arabic*,
	leftmargin=*,
	align=left,
	wide
}

\setlist[tricks,2]{
	label=\textbf{Trick \arabic{tricksi}\alph*.},
	ref=\arabic{tricksi}\alph*,
	leftmargin=2em,
	labelindent=2em,
	align=left,
	wide
}

\crefname{tricksi}{Trick}{Tricks}
\Crefname{tricksi}{Trick}{Tricks}
\crefname{tricksii}{Trick}{Tricks}
\Crefname{tricksii}{Trick}{Tricks}

\usepackage{tikz}
\usetikzlibrary{arrows.meta,positioning}

\usepackage{makecell}

\begin{document}

\title{
	Doubly Robust Instrumented Difference-in-Differences
}
\date{\today}
\author{
	Jonas Skjold Raaschou-Pedersen\thanks{
		Center for Social Data Science, University of Copenhagen.
		Email: \texttt{jsr-p@sodas.ku.dk}.
	} \thanks{
		I thank Andreas Bjerre-Nielsen and Anders Munch for helpful comments.
	}
}
\maketitle

\begin{abstract}
	We study estimation of the local average treatment effect on the treated
	($LATT$) in instrumented difference-in-differences (IDiD) designs with
	covariates and staggered instrument exposure. We derive the efficient influence function
	(EIF) of the target parameter in both panel and repeated cross-sections
	settings, allowing for two classes of control groups: never-exposed and
	not-yet-exposed.
	Building on the EIF, we construct doubly robust estimands and corresponding
	estimators from first principles. The resulting procedures are the IDiD
	analogues of the difference-in-differences (DiD) procedures in \cite{csa},
	targeting $LATT$ rather than $ATT$.
	We further establish a Bloom-type result under one-sided compliance and
	absorbing treatment, linking $LATT$ to a convex combination of
	exposure-cohort-specific $ATT(g, t)$ parameters,
	making the connection between IDiD and DiD explicit.
	Asymptotic properties are established under conditions on the remainder
	term and either Donsker conditions or via cross-fitting.
	We also construct double machine learning (DML) estimators for the
	$LATT$ in both data settings and show their equivalence to
	cross-fitted estimators.
	Simulations assess the double robustness and finite-sample performance of the
	proposed methods.
	An implementation is available in the Python package \texttt{idid}\footnote{
		Replication code is available at \url{https://github.com/jsr-p/idid}.
	}.
\end{abstract}

\textbf{Keywords:} Difference-in-differences, instrumental variables, causal
inference, semiparametric inference, instrumented difference-in-differences,
double machine learning

\newpage

\section{Introduction}
Difference-in-differences (DiD) is a central tool in applied econometrics for
estimating causal effects in non-experimental settings.
Recent work, often referred to as
the ``DiD renaissance'', has exposed important limitations of classical DiD
methods, particularly under treatment effect heterogeneity across cohorts and
over time. In such settings, coefficients from standard regression
specifications can lack a clear causal interpretation.

Two broad responses have emerged. One is to refine the underlying regression
model. The other is to begin with a well-defined, interpretable statistical
target, an \textit{estimand}, rather than starting from a regression model.
Under suitable identifying assumptions, this
estimand can be linked to observable quantities and estimated in a way that
admits a transparent causal interpretation.
A central object in this approach is the efficient influence function
(EIF), which is used to construct semiparametrically efficient estimators
and valid inference.
This perspective is adopted, for example, in the work of
\citet{sazhao} and \citet{csa}.

However, even modern DiD approaches can fail when the treatment is endogenous.
A classical remedy is the use of instrumental variables (IV). The integration
of IV with DiD remained underdeveloped until \citet{miyaji}, who introduced the
instrumented difference-in-differences (IDiD) framework with staggered
exposure to the instrument.

Adopting an estimand-based perspective, \citet{miyaji} defines the causal
parameter of interest, the cohort-specific time-varying local average treatment
effect on the treated ($LATT(e, t)$), and links it to an estimable quantity under
identifying assumptions.
However, the framework does not incorporate covariates, which are often
essential in empirical applications to support the plausibility of the
identifying assumptions.

In related work, \citet{csax} identify the $LATT$ parameter of
\citet{miyaji} with the inclusion of covariates as a special case of their
sequential conditional moment restriction framework for DiD.
They derive the efficient influence function of the parameter in the
panel data setting and construct a corresponding doubly robust estimator.
However, their analysis focuses on the case with a single exposure date to
the instrument.
Moreover, they leave the case of repeated cross-sections for future work.

This paper studies the general case with covariates and staggered exposure to
the instrument in both panel data and repeated cross-sections.
We derive the efficient influence function for the $LATT$ parameter in
both data settings.
The repeated cross-sections case is empirically relevant in settings where
balanced panels are unavailable, for example in ``trimmed panel'' data or
repeated survey samples.
A novel contribution is that the EIF is derived explicitly using the
approach developed by \citet{kennedy}.
Moreover, as done in \cite{miyaji}, the framework of \cite{csa} is extended to
the IDiD case allowing the use of not-yet-exposed units as controls, analogous
to the not-yet-treated control group in the staggered DiD setting.
Both control variables are handled generally by invoking either of the
corresponding identifying assumptions.

Using the derived EIFs, we construct doubly robust estimands and
corresponding estimators for the $LATT$ parameter in both data settings.
The construction follows the estimating equation approach:
the estimand solves the population moment condition implied by the EIF,
while the estimator solves its empirical counterpart.
This construction also underlies the doubly robust estimand and
estimators in \citet{sazhao} (and, by extension, \citet{csa}), although
implicitly.
The resulting doubly robust estimators have a structure closely related to
those proposed by \citet{sazhao} and \citet{csa}.
In particular, the estimator takes the form of a ratio of two
doubly robust estimators for ATT-type parameters.
The numerator corresponds to the outcome of interest and the denominator
to the treatment variable, with the instrument exposure variable playing
the role of the treatment indicator.
This structure arises naturally because the identified $LATT$ parameter
itself can be written as a ratio of two ATT-type parameters.
The estimators reduce to those of \cite{miyaji} in the case of no covariates.

The asymptotic behavior of the estimator is established via a decomposition
into an influence function term, an empirical process term, and a remainder
term (cf. \cite{kennedy}).
The remainder term is handled directly in the proofs.
The empirical process term is controlled either via Donsker class
assumptions or by employing cross-fitting, which permits the use of
flexible machine learning methods. We derive DML estimators, which we show are
equivalent to the cross-fitted estimator in \cite{kennedy}, for both data
settings.

A simulation study illustrates the finite-sample properties of the
proposed estimators.
A freely available implementation is provided in the Python package
\texttt{idid}.

\paragraph{Contributions}
This paper makes three contributions.
First, we derive the efficient influence function for the $LATT$ parameter
in the IDiD framework with covariates and
staggered instrument exposure.
The derivation covers both panel data and repeated cross-sections.
Second, using the EIF, we construct doubly robust estimands and estimators
for the $LATT$ parameter.
Third, the paper illustrates how the influence function derivation
strategy of \citet{kennedy} can be applied to econometric target
parameters.
This approach provides a practical alternative to classical tangent
space calculations \citep{newey1990,tsiatis} and ensures that the resulting
estimators target precisely the interpretable estimand specified
by the researcher.
We also construct
Neyman-orthogonal scores for DML estimation using the EIFs and relate IDiD with
staggered instrument exposure to DiD with staggered treatment via a
Bloom-type result.

\paragraph{Related literature}
This paper relates to several strands of the literature.

First, it contributes to the literature on IDiD.
\citet{miyaji} introduce the IDiD framework and define the $LATT$
parameter.
\citet{csax} extend this framework to allow for covariates and derive
a doubly robust estimator in the panel setting with a single exposure
date.
The present paper studies the general case with covariates and staggered
exposure to the instrument in both panel data and repeated cross-sections,
and derives corresponding doubly robust estimands and estimators.

Second, the paper relates to the literature on semiparametric
DiD estimation.
\citet{sazhao} and \citet{csa} develop doubly robust estimators for
ATT-type parameters in two-period and staggered adoption settings.
The estimators proposed here have a similar structure but tailored to the IDiD
framework.
Moreover, in deriving the influence function, doubly robust estimand, and
estimators for the LATT parameter, we simultaneously derive the corresponding
results for the ATT parameter, thereby recovering the results of \citet{sazhao}
and \citet{csa} from first principles.

Third, the paper connects to recent work at the intersection of modern
semiparametric methods and DiD/IDiD, as well as to recent work on doubly
robust IV estimators.
For example,
\cite{chang2020} develop DML estimators for DiD in the case of panel data and
repeated cross-sections,
\citet{deng} develop a TMLE estimator for the two-period
DiD parameter, while \citet{metalearner} propose a meta-learner
algorithm for the $LATT$ in two-period IDiD with panel data.
\citet{sloczynskiDoublyRobustEstimation2022} develop doubly robust
estimators for LATE and LATT in cross-sectional IV settings, whose
structure resembles the panel data estimator derived here.

More broadly, the paper relates to recent work emphasizing clearly defined
target parameters in causal inference and IV analysis,
including \citet{mogstad2024instrumental}, who distinguish between forward
and reverse engineering approaches to IV parameters.
In this paper, we take the forward engineering approach.
It also connects to the targeted learning literature
\cite{vdlrose,van2015statistics}, which emphasizes aligning the estimand
with the underlying scientific question and constructing estimators under
minimal modeling assumptions, rather than relying on potentially
misspecified parametric models. Finally, it relates to the double machine
learning literature \cite{chernozhukovDoubleDebiasedMachine2018a}, as we
derive two DML estimators.

\paragraph{Organization of the paper}
\Cref{sec:identification} introduces IDiD, the causal estimands, and the main
identification results.
\Cref{sec:weighting} studies aggregated effects.
\Cref{sec:estimation-and-inference} develops the doubly robust and DML estimators and
establishes their asymptotic properties. \Cref{sec:simulation} reports the
simulation results. Proofs and derivations are collected in the appendix.
\Cref{sec:influence-functions} provides a detailed account of how the influence
function calculations fit into the development of the paper.

\paragraph{Notation}
For a measurable function $f$, let $\Vert f \Vert_{q,P} = (\int |f|^q
	dP)^{1/q}$ denote its $L^q(P)$ norm.
Write $P_n f = n^{-1}\sum_{i=1}^n f(X_i)$ for the empirical average and $Pf =
	\int f\, dP$ for the expectation under $P$.
We also write $E[f(X)]$ for the expectation (under the relevant distribution) and
use the notation interchangeably where convenient.
If $f$ depends on a parameter $\tau$ and some nuisance $\eta$, we write $P_n
	f(\cdot;\tau, \eta) =
	n^{-1}\sum_{i=1}^n f(X_i;\tau, \eta)$ and $P f(\cdot;\tau, \eta) = \int
	f(\cdot;\tau, \eta)\, dP$.
The empirical process is written $\sqrt{n}(P_n - P)[f]$.
Calligraphic letters denote supports of random variables, e.g.\ $\mathcal{E}$, $\mathcal{D}$, and $\mathcal{Z}$ for $E$, $D$, and $Z$.
For an event $A$, $\mathbf{1}\{A\}$ denotes its indicator.
We write $O$ for a generic tuple of random variables that is context-dependent.
See also \Cref{sec:eif-derivation} for details on the notation for the
influence function operator, $\mathbb{IF}$.

\section{Identification}
\label{sec:identification}

\subsection{Setup}
We first introduce the notation used throughout the article, building on \cite{miyaji}.
We consider the general case of $\mathcal{T}$ periods.
Let $D_{t} \in \{0, 1\}$ denote treatment status and
$Z_{t} \in \{0, 1\}$ the instrument status.
Moreover, let
$D = (D_{1}, D_{2}, \ldots, D_{\mathcal{T}})$
and
$Z = (Z_{1}, Z_{2}, \ldots, Z_{\mathcal{T}})$
denote the treatment and instrument paths.

We make the following assumption on the instrument:

\begin{assumption}[Staggered adoption]
	\label{ass:staggered-instrument}
	$Z_{1} = 0$\footnote{If not already stated explicitly, equalities involving random variables are understood to hold almost
		surely.};
	and for $t = 2, \ldots, \mathcal{T}$:
	\begin{equation}
		\label{eq:staggered-instrument}
		Z_{t-1} = 1 \implies  Z_{t} = 1.
	\end{equation}
	\label{i:ms0}
\end{assumption}
\Cref{ass:staggered-instrument} enforces that no units are exposed to the
instrument in the first period and all units that are exposed in some period
stay exposed\footnote{
	This is analogous to Assumption 1 on the treatment variable assumed by
	\cite{csa}.
}.
\Cref{ass:staggered-instrument}
implies that the time period where the instrument switches on characterizes
the instrument path $Z$ completely. Because of this we define the
\textit{cohort exposure variable} $E := \min \{t \mid Z_{t} = 1\}$ and $E_{e}
	:=  \mathbf{1}\{E = e\}$.

To identify the target parameter, valid control groups are needed.
In the DiD literature, commonly used control groups are never-treated and
not-yet-treated units.
Here, we construct similar control groups but based on the instrument instead
of the treatment variable.
Hence, we define
\begin{equation}
	\label{eq:control-variables}
	C^{nev} := \mathbf{1}\{E = \infty\},
	\quad
	C^{nye}_{e, s} := \mathbf{1}\{E_{e} = 0, Z_{s} = 0\},
\end{equation}
for the \textit{never-exposed} and \textit{not-yet-exposed} control groups,
respectively\footnote{
	The control variable $C^{nye}_{e,s}$ is analogous to $(1-D_s)(1-G_g)$
	in \cite{csa}, but here corresponds to units not in the cohort exposed
	at $E = e$ and not yet exposed at time $s$, i.e., $Z_s = 0$.
	In the appendix, we provide a table comparing the different objects in
	the two-period DiD, staggered adoption DiD and our case of staggered
	IDiD; see \Cref{table:comparison}.
}.
Let $\bar{e} := \max_{i}E_{i}$;
in the case of never-exposed units, $\bar{e} = \infty$,
and in the case of only not-yet-exposed units, $\bar{e} < \infty$.
Denote the support of the exposure variable excluding $\bar{e}$ as
$\mathcal{E} := \mathrm{supp}(E) \setminus \bar{e} \subseteq \{2, 3, \ldots,
	\mathcal{T}\}$\footnote{
	Analogous to \citet{csa}, when there is a never-exposed cohort, $E = \infty$,
	$\mathcal{E}$ only excludes this.
	In the case of not-yet-exposed control groups only, we exclude the
	last exposed cohort because there are no available control groups for
	this cohort.
}.
Denote the \textit{generalized propensity scores} corresponding to the control
variable in use as
\begin{align}
	\label{eq:generalized-propensity-scores}
	p_{e}(X) := P(E_{e} = 1 \mid X, E_{e} + C^{nev} = 1),
	\quad
	p_{e, s}(X) := P(E_{e} = 1 \mid X, E_{e} + C^{nye}_{e, s} = 1).
\end{align}
The conditioning on $E_{e} + C = 1$ restricts attention to the relevant
$2 \times 2$ comparison: units are either in the exposed cohort, $E_{e} = 1$,
or in the control group, $C = 1$. Within each such slice, the objects behave
as in the corresponding two-period setup.

\paragraph{Potential treatment and outcomes}
Let $D_{t}(\infty)$ denote a unit's unexposed potential treatment at time
$t$ if they remain untreated through time period $\mathcal{T}$,
i.e., if they were not to be exposed to the instrument across all available time
periods.
For $e = 2, 3, \ldots, \mathcal{T}$, let $D_{t}(e)$
denote the potential treatment
that a given unit would experience at time $t$ if they were to first become
exposed to the instrument in time period $e$.
The observed and potential treatment for a given unit are related through
\begin{align}
	\label{eq:observed-treatment-status}
	D_{t} = D_{t}(\infty) + \sum_{2 \leq e \leq \mathcal{T}}
	[D_{t}(e) - D_{t}(\infty)] \cdot \mathbf{1}\{E = e\},
\end{align}
i.e., we only observe one potential treatment path for each unit.

Let $Y_{t}(d, z)$ be a given unit's potential outcome in period $t$ had they
been given treatment path $D = d$ and instrument path $Z = z$.
The analogy between the treatment and instrument for the DiD and IDiD
frameworks differs in the sense that the instrument does not affect the outcome
directly but only through the treatment. Specifically, the instrument creates
exogenous variation in the treatment that allows us to identify the effect of
the treatment on the outcome in the presence of hidden confounders.
This relation between the instrument and the potential outcomes is enforced in the
following assumption:
\begin{assumption}[No carryover and exclusion restriction in multiple time periods]
	\label{ass:no-carryover}
	\begin{align}
		\label{eq:no-carryover}
		\forall z \in \mathcal{Z},
		\forall d \in \mathcal{D},
		\forall t \in \{1, \ldots, \mathcal{T}\},
		\,
		Y_{t}(d, z) = Y_{t}(d_{t}),
	\end{align}
	\label{i:ms1}
\end{assumption}
\Cref{ass:no-carryover} means that the potential outcomes
at time $t$ only depend on the treatment variable at time $t$,
and that they do not depend directly on the instrument.
The latter is analogous to the exclusion restriction in the simple
cross-sectional IV design.

To arrive at an expression for the observed outcome,
let $Y_{t}(0)$ denote a given unit's potential outcome
at time $t$ if they are untreated at time period $t$,
and $Y_{t}(1)$ if they are treated at time period $t$.
With \eqref{eq:observed-treatment-status} we can write the observed
outcome in terms of the potential outcomes and treatments as:
\begin{align}
	\label{eq:observed-outcome}
	Y_{t} & = Y_{t}(0) + [Y_{t}(1) - Y_{t}(0)]D_{t}
	\\&=
	\label{eq:exposed-stag}
	Y_{t}(0) + [Y_{t}(1) - Y_{t}(0)][
	D_{t}(\infty) + \sum_{2 \leq e \leq \mathcal{T}}
	[D_{t}(e) - D_{t}(\infty)] \cdot \mathbf{1}\{E = e\}
	].
\end{align}
Using \eqref{eq:exposed-stag}, we define the \textit{exposed/unexposed
	outcomes} as:
\begin{align}
	\label{eq:exposed-unexposed-outcome}
	Y_{t}(D_{t}(E)) := \begin{cases}
		                   Y_{t}(D_{t}(e))
		                   := Y_{t}(0) + [Y_{t}(1) - Y_{t}(0)]D_{t}(e),      &
		                   E = e,                                              \\
		                   Y_{t}(D_{t}(\infty))
		                   := Y_{t}(0) + [Y_{t}(1) - Y_{t}(0)]D_{t}(\infty), &
		                   E = \infty.                                         \\
	                   \end{cases}
\end{align}
The exposed/unexposed outcomes\footnote{
As noted by \cite{miyaji}, the concept of exposed and unexposed outcomes
is not new. In the standard cross-sectional binary IV setup with potential
outcomes $Y(0), Y(1)$, potential treatments $D(0), D(1)$, and instrument $Z$,
the observed treatment can be written as
$D = D(0) + [D(1) - D(0)]Z$.
Substituting this into the observed outcome equation
$Y = Y(0) + [Y(1) - Y(0)]D$
yields the exposed and unexposed outcomes
$Y(D(Z)) := Y(0) + [Y(1) - Y(0)]D(Z)$ for $Z \in \{0,1\}$.
This is analogous to the staggered exposure setting in
\cref{eq:exposed-stag}, which leads to the exposed and unexposed outcomes in
\cref{eq:exposed-unexposed-outcome}.
} play a key role in the identification results,
\Cref{prop:identification-panel,prop:identification-rc}.

\paragraph{Sampling assumption}
Our results apply to both panel data and repeated cross-sections,
which are covered by the following assumption.
Let $T \in \{1,\ldots,\mathcal{T}\}$ denote the period a unit is
observed in the repeated cross-sections case.
\begin{assumption}[Random Sampling or Repeated Cross-Sections]
	\label{ass:sample}
	Assume either:
	\begin{enumerate}[label=(\Alph*), ref=\theassumption(\Alph*)]
		\item \label{ass:sample:panel}
		      The data are i.i.d. from the distribution of
		      $(Y_{1},\ldots,Y_{\mathcal{T}},D_{1},\ldots,D_{\mathcal{T}},
			      E_{2},\ldots,E_{\mathcal{T}},C^{nev}, X)$.
		\item \label{ass:sample:mixture}
		      Conditional on $T=t$, the data are i.i.d. from the distribution of
		      $(Y_t,D_t,E_2,\ldots,E_{\mathcal{T}},C^{nev},X)$
		      with $(E_{2}, E_{3}, \ldots, E_{\mathcal{T}}, C^{nev}, X)$ being
		      invariant to $T$.
	\end{enumerate}
\end{assumption}

\Cref{ass:sample:panel} implies that the observed data consist of panel
data, whereas \cref{ass:sample:mixture} implies that the observed
data are i.i.d. draws from the mixture distribution
\begin{align}
	\label{eq:mixture-csa}
	F_{M}(y, d, e_{2}, \ldots, e_{\mathcal{T}}, c, t, x)
	= \sum_{t=1}^{\mathcal{T}}
	\lambda_{t} \cdot
	F_{Y, D, E_{2}, E_{3}, \ldots, E_{\mathcal{T}}, C^{nev}, X \mid T}
	(y, d, e_{2}, \ldots, e_{\mathcal{T}}, c, x \mid t),
\end{align}
where $\lambda_{t} := P(T_{t} = 1)$ and $T_{t} := \mathbf{1}\{T = t\}$.
This mixture distribution setup is analogous to
\cite{abadieSemiparametricDifferenceinDifferencesEstimators2005a,sazhao,csa}
but here for the IDiD design with staggered exposure.

\begin{remark}
	All expectations in the repeated cross-sections case are taken
	with respect to the mixture measure $P_{M}$, although we suppress this
	subscript in the notation.

	Moreover, in the repeated cross-section setting, for each unit in the pooled
	sample we observe
	$(Y, D, E_{2},\ldots,E_{\mathcal{T}}, C^{nev}, T, X)$.
\end{remark}

\subsection{The LATT parameter}
The target parameter in this paper is the cohort-specific time-varying local
average treatment effect on the treated of \cite{miyaji}
\begin{align}
	\label{eq:latt}
	LATT(e, t)
	 & := E[Y_{t}(1) - Y_{t}(0) \mid E_{e} = 1, D_{t}(e) > D_{t}(\infty)].
\end{align}
\Cref{eq:latt} is the treatment effect, $Y_{t}(1) - Y_{t}(0)$,
averaged over the subpopulation of compliers, $D_{t}(e) > D_{t}(\infty)$,
and the units exposed to the instrument in period $e$, $E_{e} = 1$.
The parameter varies across cohorts $E$ and time $t$;
hence it allows us to answer questions related to the heterogeneity across
cohorts and time.
In \Cref{sec:weighting} we show how to aggregate the parameters into
aggregated effects; similar to how \cite{csa} aggregates their $ATT(g, t)$
parameters.

\subsection{Identifying assumptions}

\begin{assumption}[Monotonicity assumption in multiple time periods]
	\label{ass:monotonicity}
	For all $e \in \mathcal{E}$ and $t \geq e$:
	\begin{align*}
		P(D_{t}(e) \geq D_{t}(\infty) \mid X) = 1
		\ \text{a.s.}
	\end{align*}
\end{assumption}
\Cref{ass:monotonicity} is analogous to the monotonicity assumption in
cross-sectional IV and requires that the instrument, here the exposure
cohort variable $E$, affects the treatment in only one direction.
\begin{assumption}[No anticipation in the first stage]
	\label{ass:no-anticipation}
	For all
	$e \in \mathcal{E}$
	and $t < e$:
	\begin{align}
		\label{eq:no-anticipation}
		E[D_{t}(e) \mid X, E_{e} = 1]
		= E[D_{t}(\infty) \mid X, E_{e} = 1]
	\end{align}
\end{assumption}
\Cref{ass:no-anticipation} is analogous to the standard no-anticipation
assumption in DiD, with the exposure variable replacing the treatment,
and requires that exposure does not affect the treatment prior to the
exposure date.

\medskip\noindent
The following two assumptions use the never-exposed control group $C^{nev}$.
\begin{assumption}[Conditional parallel trends in the treatment based on "Never-Exposed" Group]
	\label{ass:cpt-treat-ne}
	For all $e, t$:
	\begin{align}
		\label{eq:cpt-treat-ne}
		E[D_{t}(\infty)-  D_{t-1}(\infty) \mid X, E_{e} = 1]
		= E[D_{t}(\infty)-  D_{t-1}(\infty) \mid X, C^{nev} = 1]
		\; \text{a.s.}
	\end{align}
\end{assumption}
\begin{assumption}[Conditional parallel trends in the unexposed outcome based
		on "Never-Exposed" Groups]
	\label{ass:cpt-unexpout-ne}
	For all $e, t$:
	\begin{align}
		\label{eq:cpt-unexpout-ne}
		E[Y_{t}(D_{t}(\infty))-  Y_{t-1}(D_{t-1}(\infty)) \mid X, E_{e} = 1]
		= E[Y_{t}(D_{t}(\infty))-  Y_{t-1}(D_{t-1}(\infty)) \mid X, C^{nev} = 1]
		\; \text{a.s.}
	\end{align}
\end{assumption}

\medskip\noindent
The following two assumptions are analogous to the previous two, but use
the not-yet-exposed control group $C^{nye}_{e, s}$.
\begin{assumption}[Conditional parallel trends in the treatment based on "Not-Yet-Exposed" Groups]
	\label{ass:cpt-treat-nye}
	For all $e, t$:
	\begin{align}
		\label{eq:ptstagtreatmentX}
		E[D_{t}(\infty)-  D_{t-1}(\infty) \mid X, E_{e} = 1]
		= E[D_{t}(\infty)-  D_{t-1}(\infty) \mid X, C^{nye}_{e, s} = 1]
		\; \text{a.s.}
	\end{align}
\end{assumption}
\begin{assumption}[Conditional parallel trends in the unexposed outcome based on "Not-Yet-Exposed" Groups]
	\label{ass:cpt-unexpout-nye}
	For all $e, t$:
	\begin{align}
		\label{eq:ptstagoutcomeX}
		E[Y_{t}(D_{t}(\infty))-  Y_{t-1}(D_{t-1}(\infty)) \mid X, E_{e} = 1]
		= E[Y_{t}(D_{t}(\infty))-  Y_{t-1}(D_{t-1}(\infty)) \mid X, C^{nye}_{e, s} = 1]
		\; \text{a.s.}
	\end{align}
\end{assumption}
In the identification proofs we invoke either pair of the four assumptions above depending on
the control variable used.
If using the never-exposed control group $C^{nev}$, we assume
\Cref{ass:cpt-treat-ne,ass:cpt-unexpout-ne},
and if using the not-yet-exposed control group $ C^{nye}_{e,t}$, we assume
\Cref{ass:cpt-treat-nye,ass:cpt-unexpout-nye}.
Writing $C$ for a generic control variable allows us to encompass both control
variables in the identification arguments.

\begin{assumption}[Overlap]
	\label{ass:overlap}
	For each $t \in \{2, 3, \ldots, \mathcal{T}\}$, $e \in \mathcal{E}$,
	there exist some $c > 0$ such that $P(E_{e} = 1) > c$
	and $P(E_{e} = 1 \mid X) < 1 - c$ a.s.
\end{assumption}
\Cref{ass:overlap} is a standard overlap condition. It requires that, for
each exposure cohort, the probability of exposure is bounded away from
zero, and from one conditional on covariates. This ensures that all relevant
subpopulations have a non-negligible probability of being both exposed
and unexposed, which is necessary for identification and stable
estimation.

\subsection{Nonparametric identification of the group-time LATT parameter}
\label{sec:nonparametric-identification}

\subsubsection{Notation}

To derive the main identification result, we introduce the central objects and
notation below. Let $V_t$ be a generic random variable and define
\[
	\Delta_{t-e+1} V_t := V_t - V_{e-1},
\]
i.e.\ the change of $V_{t}$ between period $t$ and the pre-exposure period
$e-1$.

\medskip
\noindent
\textit{Panel data.}
For panel data, define the outcome regression functions for treated,
never-exposed, and not-yet-exposed units as
\begin{align}
	\label{eq:mean-functions-panel}
	m_{e, t}^{trt, p}(X) & := E[\Delta_{t-e+1} Y_t \mid X, E_e = 1],
	\\
	\nonumber
	m_{e, t}^{nev, p}(X) & := E[\Delta_{t-e+1} Y_t \mid X, C^{nev} = 1],
	\quad
	m_{e, t}^{nye, p}(X) := E[\Delta_{t-e+1} Y_t \mid X, C^{nye}_{e,t} = 1].
\end{align}
Analogously, define $g_{e, t}^{trt, p}(X)$, $g_{e, t}^{nev, p}(X)$, and
$g_{e, t}^{nye, p}(X)$ with outcome $\Delta_{t-e+1} D_t$.

\medskip
\noindent
\textit{Repeated cross-sections.}
For repeated cross-sections, the outcome regression functions are defined as
\begin{align}
	\label{eq:mean-functions-rc}
	m_{e, t}^{trt, rc}(X) & := E[Y \mid X, E_e = 1, T = t],
	\\
	\nonumber
	m_{e, t}^{nev, rc}(X) & := E[Y \mid X, C^{nev} = 1, T = t],
	\quad
	m_{e, s, t}^{nye, rc}(X) := E[Y \mid X, C^{nye}_{e,s} = 1, T = t],
\end{align}
with corresponding definitions
$g_{e, t}^{trt, rc}(X)$, $g_{e, t}^{nev, rc}(X)$, and
$g_{e, s, t}^{nye, rc}(X)$ obtained by replacing $Y$ with $D$.

\paragraph{Encompassing both never exposed and not-yet-exposed}
A generic control variable is written as $C$ and the corresponding
propensity as $p(X)$.
Below we write $m_{e,  t}^{c, p}(X), g_{e, t}^{c, p}(X)$
for a generic control variable outcome regression function in the case of panel
data,
and similarly
$m_{e, s, t}^{c, rc}(X), g_{e, s, t}^{c, rc}(X)$
in the case of repeated cross-sections.

The proofs for the not-yet-exposed and never-exposed cases are identical,
differing only in the control indicator and propensity score.
To avoid repetition, we present a unified argument covering both cases.
The results follow under \Cref{ass:cpt-treat-ne,ass:cpt-unexpout-ne}
for the never-exposed case and under
\Cref{ass:cpt-treat-nye,ass:cpt-unexpout-nye}
for the not-yet-exposed case.
This is done for both the panel-data and repeated-cross-sections settings.

\subsubsection{Panel Data}
\label{sec:identification-panel}
\begin{proposition}[Identification Panel Data]
	\label{prop:identification-panel}
	Fix $e \in \mathcal{E}$ and $t \ge e$.
	Suppose \Cref{ass:sample:panel,ass:staggered-instrument,ass:no-carryover,ass:monotonicity,ass:no-anticipation,ass:overlap} hold.
	Let $C$ denote a generic control-group indicator, where either $C = C^{nev}
		\quad \text{or} \quad C = C^{nye}_{e,t}.$ If $C = C^{nev}$, additionally
	assume
	\Cref{ass:cpt-treat-ne,ass:cpt-unexpout-ne}.
	If $C = C^{nye}_{e,t}$, additionally assume
	\Cref{ass:cpt-treat-nye,ass:cpt-unexpout-nye}.
	Let $m_{e, t}^{c, p}(X)$ and $g_{e, t}^{c, p}(X)$
	be the mean functions \cref{eq:mean-functions-panel} for a generic control
	variable.
	Then the cohort-specific time-varying local average treatment effect on the
	treated \cref{eq:latt} is identified by
	\begin{align}
		\label{eq:identification-panel}
		LATT(e,t)
		=
		\frac{E[
					m_{e, t}^{trt, p}(X) - m_{e, t}^{c, p}(X) \mid E_{e} = 1
				]
		}{E[
					g_{e, t}^{trt, p}(X) - g_{e, t}^{c, p}(X) \mid E_{e} = 1
				]} =: \tau^{p}_{e, t}.
	\end{align}
\end{proposition}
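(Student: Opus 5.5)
The plan is to show that the numerator of $\tau^p_{e,t}$ equals $E[Y_t(D_t(e)) - Y_t(D_t(\infty)) \mid E_e=1]$ and the denominator equals $E[D_t(e) - D_t(\infty) \mid E_e = 1]$. Then a Wald-type argument under monotonicity gives that their ratio is $LATT(e,t)$. The treatment and outcome pieces are handled identically, replacing $Y$ by $D$ and \Cref{ass:cpt-unexpout-ne} (resp.\ \ref{ass:cpt-unexpout-nye}) by \Cref{ass:cpt-treat-ne} (resp.\ \ref{ass:cpt-treat-nye}). So I treat a generic pair $(V_t(e), V_t(\infty))$ with $V\in\{D, Y(D)\}$ and a generic control indicator $C$.

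\textbf{Step 1 (observed $=$ potential).} By \eqref{eq:observed-treatment-status} and \eqref{eq:exposed-unexposed-outcome}, on $\{E_e=1\}$ we have $D_s = D_s(e)$ and $Y_s = Y_s(D_s(e))$ for every $s$. On $\{C=1\}$ we have $D_s = D_s(\infty)$ and $Y_s = Y_s(D_s(\infty))$ for the relevant periods $s\in\{e-1,t\}$. For $C^{nev}$ this is immediate. For $C^{nye}_{e,t}$, the units are exposed after $t$ or never, and I use that $Z_s=0$ for $s\le t$, so their observed paths up to $t$ coincide with unexposed ones. (If needed, I will argue this via no anticipation of the exposure for these later cohorts, i.e.\ interpret $D_s(e') = D_s(\infty)$ for $s<e'$ in the same way as in \cite{miyaji}.) Hence
\[
m^{trt,p}_{e,t}(X) = E[Y_t(D_t(e)) - Y_{e-1}(D_{e-1}(e)) \mid X, E_e=1]
\quad\text{and}\quad
m^{c,p}_{e,t}(X) = E[Y_t(D_t(\infty)) - Y_{e-1}(D_{e-1}(\infty)) \mid X, C=1].
\]

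\textbf{Step 2 (parallel trends and no anticipation).} I telescope $Y_t(D_t(\infty)) - Y_{e-1}(D_{e-1}(\infty))$ into one-period differences. I apply the conditional parallel trends assumption period by period to replace the conditioning event $C=1$ with $E_e=1$. The overlap assumption \Cref{ass:overlap} guarantees these conditional expectations are well defined on the support of $X$ given $E_e=1$. This gives
\[
m^{trt,p}_{e,t}(X) - m^{c,p}_{e,t}(X) = E[Y_t(D_t(e)) - Y_t(D_t(\infty)) \mid X, E_e=1] - E[Y_{e-1}(D_{e-1}(e)) - Y_{e-1}(D_{e-1}(\infty)) \mid X, E_e=1].
\]
For $D$, the last term vanishes by \Cref{ass:no-anticipation}. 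For the outcome, I write $Y_{e-1}(D_{e-1}(e)) - Y_{e-1}(D_{e-1}(\infty)) = [Y_{e-1}(1)-Y_{e-1}(0)](D_{e-1}(e)-D_{e-1}(\infty))$. I need this to have conditional mean zero, which I will argue from no anticipation together with \Cref{ass:no-carryover}. Taking $E[\cdot\mid E_e=1]$ via iterated expectations yields the claimed numerator and denominator.

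\textbf{Step 3 (Wald ratio).} By \Cref{ass:monotonicity}, $D_t(e)-D_t(\infty) = \mathbf{1}\{D_t(e)>D_t(\infty)\}$ a.s. Therefore
\[
Y_t(D_t(e)) - Y_t(D_t(\infty)) = [Y_t(1)-Y_t(0)]\,\mathbf{1}\{D_t(e)>D_t(\infty)\}.
\]
Dividing its conditional mean given $E_e=1$ by $P(D_t(e)>D_t(\infty)\mid E_e=1)$, assumed positive (a relevance condition implicit in the ratio being well defined), gives $LATT(e,t)$ by Bayes' rule.

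The main obstacle is the pre-period term in Step 2 for the outcome. No anticipation as stated is only a mean statement on $D$, which does not directly kill $E[(Y_{e-1}(1)-Y_{e-1}(0))(D_{e-1}(e)-D_{e-1}(\infty))\mid E_e=1]$. I would first try to read \Cref{ass:no-anticipation} together with monotonicity, which the paper states only for $t\ge e$. If that does not suffice, I would strengthen it to the pathwise $D_{e-1}(e)=D_{e-1}(\infty)$, as is implicit in \cite{miyaji}. A secondary subtlety is justifying Step 1 for the not-yet-exposed group.
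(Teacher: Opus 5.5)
Your route is the paper's: rewrite the observed means through the exposed/unexposed outcomes, use no anticipation to move the pre-period term to $Y_{e-1}(D_{e-1}(\infty))$, telescope with the conditional parallel-trends assumptions, and finish with monotonicity and iterated expectations. The difference is only bookkeeping. The paper first shows $m^{trt,p}_{e,t}(X)-m^{c,p}_{e,t}(X)=LATT^{num}(e,t,X)$ and $g^{trt,p}_{e,t}(X)-g^{c,p}_{e,t}(X)=LATT^{den}(e,t,X)$ and then averages over $X\mid E_e=1$, whereas you average first and apply Bayes' rule once; the two are equivalent.

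The obstacle you flag in Step 2 is real, and \Cref{ass:no-anticipation} and \Cref{ass:no-carryover} as stated cannot remove it. The paper's proof takes this step at its third equality, citing only \Cref{ass:no-anticipation}, which does not justify it. To see that it fails, condition on $X$ and $E_e=1$:
\begin{itemize}
\item a fraction $q>0$ of units have $(D_{e-1}(e),D_{e-1}(\infty))=(1,0)$ and $Y_{e-1}(1)-Y_{e-1}(0)=1$;
\item another fraction $q$ have $(0,1)$ and $Y_{e-1}(1)-Y_{e-1}(0)=-1$;
\item all other units have $D_{e-1}(e)=D_{e-1}(\infty)$.
\end{itemize}
The mean condition in \Cref{ass:no-anticipation} holds and \Cref{ass:no-carryover} is untouched. \Cref{ass:monotonicity} is silent at $e-1<e$, and the parallel-trends assumptions involve only $D(\infty)$ and $Y(D(\infty))$, so all of them can be made to hold. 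Yet $E[\{Y_{e-1}(1)-Y_{e-1}(0)\}\{D_{e-1}(e)-D_{e-1}(\infty)\}\mid X,E_e=1]=2q$. The numerator of $\tau^p_{e,t}$ is therefore shifted by $-2q$ while the denominator is not, so $\tau^p_{e,t}\neq LATT(e,t)$.

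So skip your first attempt and impose a strengthening directly. One option is pathwise no anticipation, $D_s(e')=D_s(\infty)$ a.s.\ for $s<e'$. Another is monotonicity also for $t<e$: a nonnegative difference with zero conditional mean then forces $D_{e-1}(e)=D_{e-1}(\infty)$ a.s.\ on $\{E_e=1\}$.

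Your Step 1 concern is the same gap, and it is also implicit in the paper. The paper appeals to the definition of exposed/unexposed outcomes, but that definition covers only $E\in\{e,\infty\}$. For $C=C^{nye}_{e,t}$ you need pathwise no anticipation for every cohort $e'>t$ in the control group. This includes $\bar e$, to which \Cref{ass:no-anticipation} does not apply at all since $\bar e\notin\mathcal{E}$.

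Finally, \Cref{ass:overlap} only bounds $P(E_e=1\mid X)$ away from one. That does not give $P(C=1\mid X)>0$ when other cohorts are present. To make $m^{c,p}_{e,t}$ and $g^{c,p}_{e,t}$ well defined on the support of $X\mid E_e=1$, you need overlap in the generalized propensity score, $p(X)<1-c$, as in \cite{csa}.
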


\medskip\noindent
Proof in \Cref{sec:proofs-panel}.

\begin{remark}
	The identification result in \cref{eq:identification-panel} is closely
	related to \cite{csax}, but adapted here to also including the case of
	not-yet-exposed controls. Its
	general structure also mirrors the results of
	\cite{frolichExploitingRegionalTreatment2010} and
	\cite{sloczynskiDoublyRobustEstimation2022}, derived here for panel
	data in the IDiD framework allowing for the different control group variables
	in \cref{eq:control-variables}.
\end{remark}

\paragraph{Influence functions and construction of the DR Estimands}
Below we derive DR estimands and corresponding DR estimators for the $LATT(e,
	t)$ in both data settings.
The steps are as follows:
\begin{enumerate}
	\item Identification yields an initial estimand which is a ratio of two
	      ATT-type parameters, e.g.
	      \cref{eq:identification-panel,eq:identification-rc}.
	\item We derive the efficient influence function (EIF) of the initial estimand\footnote{
		      Kennedy distinguishes between the influence function for a parameter
		      and that for an estimator, referring to the former as an
		      \textit{influence curve}. We do not adopt this distinction in terminology.
	      }.
	      This EIF involves weights that are normalized to Hájek-type weights
	      that are variance stabilizing cf. \citep{sazhao}.
	      \label{i:if-procedure-2}
	\item Solving the population version of the estimating equation using the EIF
	      for the target parameter yields the doubly robust (DR) estimand.
	      \label{i:if-procedure-3}
	\item We derive the EIF of the DR estimand\footnote{
		      The normalization in Step \ref{i:if-procedure-2} requires an adjusted
		      influence function that accounts for the normalization.
	      }.
	\item The estimator is the plug-in estimator of the DR estimand, with
	      inference based on the estimated influence function from step 4.
\end{enumerate}
As the above procedure shows, the EIF of the target
parameter is central to deriving the DR estimands. A key
feature is that the identified LATT parameters are ratios of ATT
components and that the staggered exposure setting reduces to
two-period comparisons. This allows us to start from the canonical
two-period DiD case targeting the ATT parameter and derive the DR
estimands and EIFs in this setting. As part of this derivation, we
recover the DR DiD estimands of \cite{sazhao} from first principles. We
then exploit the ratio structure of the LATT parameters together with
the EIF machinery of \cite{kennedy} developed for our setting in
\Cref{sec:influence-functions} to combine the components. Constructing
the resulting estimators requires additional work and is taken up in
\Cref{sec:estimation-and-inference}.

\paragraph{Weights definition}
As noted in Step~\ref{i:if-procedure-3} of the procedure, the doubly robust
estimands and their corresponding EIFs rely on Hájek-type normalized weights.
The construction of these weights in both sampling settings follows the same
principle as in \cite{sazhao,csa}: the control weights are normalized to sum to one in
sample (in contrast to the unnormalized control weights appearing in the EIF
derived in Step~\ref{i:if-procedure-2}).

The weights in the panel data setting are:
\begin{align}
	\label{eq:weights1-panel-norm}
	w^{trt,p}_{e}
	 & := \frac{E_{e}}{\rho},
	\quad
	\rho := E[E_{e}]
	\\
	\label{eq:weights-nev-panel-norm}
	w^{nev,p}_{e}
	 & :=
	C^{nev}\frac{p_{e}(X)}{1 - p_{e}(X)}
	/
	E\left[C^{nev}\frac{p_{e}(X)}{1 - p_{e}(X)}\right]
	\\
	\label{eq:weights-nye-panel-norm}
	w^{nye,p}_{e, t}
	 & :=
	C^{nye}_{e, t}\frac{p_{e, t}(X)}{1 - p_{e, t}(X)}
	/
	E\left[C^{nye}_{e, t}\frac{p_{e, t}(X)}{1 - p_{e, t}(X)}\right]
\end{align}
The weights in the repeated cross-sections setting are:
\begin{align}
	\label{eq:weights-trt-rc-norm}
	w^{trt,rc}_{e, t}
	 & := \frac{E_{e} \cdot \mathbf{1}\{T = t\}}{ \rho^{trt, rc}_{e, t} },
	\quad
	\rho^{trt, rc}_{e, t}  := E[E_{e} \cdot \mathbf{1}\{T = t\}]
	\\
	\label{eq:weights-nev-rc-norm}
	w^{nev,rc}_{e, t}
	 & :=
	\frac{C^{nev} \cdot \mathbf{1}\{T = t\}p_{e}(X)}{1 - p_{e}(X)}
	/
	E\left[
	\frac{C^{nev} \cdot \mathbf{1}\{T = t\}p_{e}(X)}{1 - p_{e}(X)}
	\right]
	\\
	\label{eq:weights-nye-rc-norm}
	w^{nye,rc}_{e, s, t}
	 & :=
	\frac{C^{nye}_{e, s} \cdot \mathbf{1}\{T = t\}p_{e, s}(X)}{1 - p_{e, s}(X)}
	/
	E\left[
	\frac{C^{nye}_{e, s} \cdot \mathbf{1}\{T = t\}p_{e, s}(X)}{1 - p_{e, s}(X)}
	\right].
\end{align}
In the repeated cross-sections setting, we further define:
\begin{align}
	\label{eq:weights-trt-joint-rc}
	w^{trt,rc}_{e}
	 & :=
	w^{trt,rc}_{e, t} - w^{trt,rc}_{e, e - 1},
	\\
	\label{eq:weights-control-joint-rc}
	w^{c,rc}_{e}
	 & :=
	w^{c,rc}_{e, t, t}
	-
	w^{c,rc}_{e, t, e-1}
\end{align}
for $w^{c,rc}_{e, t, s}$ a generic control-weight.

\paragraph{Estimand and EIF}
\begin{proposition}
	\label{prop:if-latt-panel}
	Fix $e \in \mathcal{E}$ and $t \ge e$.
	Suppose the conditions of \Cref{prop:identification-panel} hold.
	Let $C$ denote a generic control-group indicator, where either
	$C = C^{nev}$ or $C = C^{nye}_{e,t}$, and let $p(X)$ denote the corresponding
	propensity score.
	Write $m_{e, t}^{c, p}(X)$ and $g_{e, t}^{c, p}(X)$
	for the mean functions in \cref{eq:mean-functions-panel}
	corresponding to the chosen control group.
	For \cref{eq:identification-panel}, define
	$\tau^{p}_{e, t} =: \tau^{p, num}_{e, t}/\tau^{p, den}_{e, t}$.
	Then the efficient influence function for $\tau^{p}_{e, t}$ is
	\begin{align}
		\label{eq:if-latt-panel}
		 & \varphi^{p}(O; \tau^{p}_{e, t}, \eta^{p}_{e, t})
		\\
		\nonumber
		 & =
		\frac{1}{\tau^{p, den}_{e, t}}
		[
		\{ w^{trt,p}_{e} - w^{c,p}_{e, t} \}
		\{\Delta_{t-e+1}Y_{t} - m_{e, t}^{c, p}(X)\}
		- \tau^{p}_{e, t}
		\{ w^{trt,p}_{e} - w^{c,p}_{e, t} \}
		\{\Delta_{t-e+1}D_{t} - g_{e, t}^{c, p}(X)\}
		],
	\end{align}
	and the nonparametric efficiency bound
	equals
	$E[\{\varphi^{p}(O; \tau^{p}_{e, t}, \eta^{p}_{e, t})\}^{2}]$,
	where $\eta^{p}_{e, t} = (\rho, p, m_{e, t}^{c, p}, g_{e, t}^{c, p})$
	is the nuisance-function tuple.
\end{proposition}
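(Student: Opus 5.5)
The plan is to use the ratio structure and the influence-function calculus of \cite{kennedy} (point-mass contamination, with the operator $\mathbb{IF}$ obeying product, quotient and chain rules). Since $\tau^{p}_{e,t} = \tau^{p,num}_{e,t}/\tau^{p,den}_{e,t}$, the quotient rule gives
\[
\mathbb{IF}(\tau^{p}_{e,t}) = \frac{1}{\tau^{p,den}_{e,t}}\Bigl[\mathbb{IF}(\tau^{p,num}_{e,t}) - \tau^{p}_{e,t}\,\mathbb{IF}(\tau^{p,den}_{e,t})\Bigr],
\]
so it suffices to find the EIF of a generic ATT-type functional
\[
\psi(V) := E\bigl[m^{trt}_V(X) - m^{c}_V(X)\mid E_e=1\bigr],
\]
with $V=\Delta_{t-e+1}Y_t$ for the numerator and $V=\Delta_{t-e+1}D_t$ for the denominator. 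Restricting to the subpopulation $\{E_e + C = 1\}$, which is exactly the two-period comparison, I would treat $E_e$ as the ``treatment'' and derive this EIF once. Both control groups are then covered, because only $C$ and $p$ change.

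For the generic piece, the first step is to rewrite $E[m^{trt}_V(X)\mid E_e=1] = E[E_e V]/E[E_e]$. By the quotient rule its IF is $E_e(V-\psi^{trt})/\rho$. The second step handles $E[m^c_V(X)\mid E_e=1] = E[E_e\, m^c_V(X)]/\rho$. Applying the product rule to $E[E_e m^c_V(X)]$ gives two contributions:
\begin{itemize}
\item $E_e m^c_V(X) - E[E_e m^c_V(X)]$ from perturbing the joint law of $(E_e,X)$;
\item the term coming from $\mathbb{IF}(m^c_V(x))$, namely $\frac{C\,\mathbf 1\{X=x\}}{P(C=1,X=x)}\{V-m^c_V(x)\}$, integrated against $P(E_e=1,X=x)$. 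This integration yields $C\,\frac{P(E_e=1\mid X)}{P(C=1\mid X)}\{V-m^c_V(X)\}$.
\end{itemize}
The ratio $P(E_e=1\mid X)/P(C=1\mid X)$ equals $p(X)/(1-p(X))$, with $p$ the generalized propensity score in \cref{eq:generalized-propensity-scores}, because conditioning on $E_e+C=1$ cancels the common normalizer. Combining the pieces and dividing by $\rho$ gives
\[
\frac{E_e}{\rho}\{V - m^c_V(X) - \psi\} - \frac{C\,p(X)}{\rho(1-p(X))}\{V-m^c_V(X)\}.
\]
This is the \cite{sazhao} form with unnormalized control weights.

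The next step is to pass to the stated form with the normalized weights $w^{trt,p}_e$ and $w^{c,p}_{e,t}$ of \cref{eq:weights1-panel-norm,eq:weights-nev-panel-norm,eq:weights-nye-panel-norm}. The key identity is
\[
E\bigl[C\,p(X)/(1-p(X))\bigr] = E\bigl[P(E_e=1\mid X)\bigr] = \rho,
\]
so the unnormalized control weight $C p/(\rho(1-p))$ coincides with $w^{c,p}_{e,t}$ at the true distribution. Also $E[w^{c,p}_{e,t}\{V-m^c_V(X)\}]=0$. Hence the $-\psi E_e/\rho$ term can be written as $-\psi\, w^{trt,p}_e$. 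When the numerator and denominator IFs are combined, these centering terms are $-\tau^{num} w^{trt}$ and $\tau^{p}_{e,t}\tau^{den} w^{trt}$, and they cancel exactly because $\tau^{num} = \tau^{p}_{e,t}\tau^{den}$. Similarly, the treated-arm regression terms $m^{trt}$ and $g^{trt}$ either do not appear or cancel. This leaves precisely \cref{eq:if-latt-panel}.

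Finally, efficiency. The model restricts the observed-data law only through overlap (\Cref{ass:overlap}); the parallel trends and other identifying assumptions impose no testable restriction on the observed data distribution. The model is therefore locally nonparametric, so the influence function is unique and hence efficient, and the bound is its second moment.

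The main obstacle I expect is the second step: correctly computing the IF of $m^c_V(X)$ evaluated under the treated-cohort covariate distribution, and converting the density ratio into $p/(1-p)$ under the conditioning on $E_e+C=1$. For the not-yet-exposed control group $C^{nye}_{e,t}$, $C$ is not the complement of $E_e$, so I would verify this step carefully. I would also check that the nonparametric-model argument holds when $C^{nye}_{e,t}$ is used.
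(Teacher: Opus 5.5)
Your derivation is essentially the paper's proof. The paper likewise applies the quotient rule to $\tau^{p,num}_{e,t}/\tau^{p,den}_{e,t}$ and obtains each component's IF from the two-period DiD result (\Cref{prop:att-panel}, proved with the same discrete-data calculus and odds identity), relabeled to $E_e$ and $C$. It then normalizes the control weights and lets the $w^{trt,p}_e\tau$ terms cancel. Computing directly with $E_e$ and $C$ in the full population is only a cosmetic difference. Your odds identity needs only that $\{E_e=1\}$ and $\{C=1\}$ are disjoint, so the $C^{nye}_{e,t}$ case you worry about goes through unchanged.

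One correction to your efficiency paragraph. As stated (for all $e,t$), the identifying assumptions do restrict the observed law. For example, take $C=C^{nev}$, $e\ge 3$ and $2\le s\le e-1$. Then \Cref{ass:no-anticipation} and \Cref{ass:cpt-treat-ne} imply the testable equality
\[
E[D_s-D_{s-1}\mid X,E_e=1]=E[D_s-D_{s-1}\mid X,C^{nev}=1].
\]
So the model is not locally nonparametric, and your ``unique hence efficient'' argument does not apply as written. Drop that claim. The proposition, like the paper's proof (which follows \cite{kennedy} ``in a nonparametric model''), concerns the EIF of the identified functional in the nonparametric model for the observed data. There, uniqueness of the influence function already gives efficiency and the bound $E[\{\varphi^{p}\}^{2}]$.
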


\medskip\noindent
Proof in \Cref{sec:latt-parameter-if-panel}.

\paragraph{DR estimands and estimator decomposition}
In the following, we denote a generic estimand by $\tau$
and its plug-in estimator by $\hat{\tau}$.
A generic decomposition of this plug-in estimator into a CLT, empirical process and
remainder term,
similar to \cite{kennedy}\footnote{
	We depart from the notation in \cite{kennedy} by indexing the EIF explicitly
	by the target parameter and nuisance functions, $\varphi(\cdot; \tau, \eta)$,
	rather than by the distribution $P$ alone.
},
can be written as:
\begin{align}
	\label{eq:decomposition}
	 & \sqrt{n}(\hat{\tau} - \tau)
	\\
	\nonumber
	 & =
	\sqrt{n} (P_{n} - P) \varphi(\cdot; \tau, \eta)
	+
	\sqrt{n} (P_{n} - P)
	[
		\varphi(\cdot; \tau, \hat{\eta})
		-
		\varphi(\cdot; \tau, \eta)
	]
	+ \sqrt{n} P \varphi(\cdot; \tau, \hat{\eta})
	+
	o_{P}(1)
\end{align}
where $\varphi(\cdot; \tau, \eta)$ is the influence function
and $\eta$ is a tuple of nuisance parameters.
We derive our doubly robust estimands by using the EIF, $\varphi(\cdot; \tau,
	\eta)$, corresponding to $\tau$ and solving
\begin{align}
	\label{eq:dr-estimand-foc}
	P\varphi(\cdot; \tau^{dr}, \eta) = 0
\end{align}
for the new target parameter $\tau^{dr}$, where the superscript "dr" means
doubly robust.
This is also how the estimands of \cite{sazhao} are derived (although not
explicitly shown in their paper).
In \Cref{sec:influence-functions} we show this from first principles, hence
demystifying where the estimands come from, building
up to our DR estimands for the $LATT$ parameter in both data settings.
The resulting estimators are of the "estimating equation form", i.e., for
$\hat{\eta}$ a generic estimator of the nuisance, the estimator, $\hat{\tau}$,
solves:
\begin{align}
	\label{eq:estimating-equation}
	P_{n}\varphi(\cdot; \hat{\tau}, \hat{\eta}) = 0.
\end{align}
Note that the DR estimands, e.g. the $\tau^{dr}$ estimand found through
\cref{eq:dr-estimand-foc}, have their own influence functions,
which differ from those of the original parameter, e.g. $\tau$.
Denote this influence function for the DR estimand as $\varphi^{dr}(\cdot;
	\tau^{dr}, \eta^{dr})$.
We use this influence function to conduct inference for the DR estimators.
This is also what \cite{sazhao} does.
However, \cite{sazhao} also study the estimation effects arising from the
nuisance function estimators of $\eta^{dr}$.
The estimation effects arise from their linearization of their estimators and
entails Taylor expanding the components building up to their influence
function.
In the IDiD setting, the estimand and estimators are ratios and hence the
Taylor expansion becomes significantly more tedious to derive.
Hence, in this paper, we do not pursue this approach.
Instead, we take the more general approach of either assuming Donsker
conditions or using cross-fitting to tame the empirical process term in
\cref{eq:decomposition} cf. \cite{kennedy}.
The remainder term is handled explicitly in the proofs; see
\Cref{prop:remainder-term-panel,prop:remainder-term-rc}.
A drawback of taking the more general approach is that our estimator does not
inherit the DR-for-inference property as in \cite{sazhao}; we leave this for
future work (see also \cite{dukesvansteelandt}).

\paragraph{DR Estimand and EIF}
\begin{proposition}
	\label{prop:dr-estimand-panel}
	Fix $e \in \mathcal{E}$ and $t \ge e$.
	Suppose the conditions of \Cref{prop:identification-panel} hold.
	Then the doubly robust panel-data estimand for $LATT(e,t)$ is
	\begin{align}
		\label{eq:dr-estimand-panel}
		\tau^{dr, p}_{e, t}
		 & = \frac{
		E[
		\{ w^{trt,p}_{e} - w^{c,p}_{e, t} \}
		\{\Delta_{t-e+1}Y_{t} - m_{e, t}^{c, p}(X)\}
		]
		}{
		E[
		\{ w^{trt,p}_{e} - w^{c,p}_{e, t} \}
		\{\Delta_{t-e+1}D_{t} - g_{e, t}^{c, p}(X)\}
		]
		}.
	\end{align}
\end{proposition}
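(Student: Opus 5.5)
The plan is to follow the estimating-equation route of \cref{eq:dr-estimand-foc}. We take the EIF of \Cref{prop:if-latt-panel}, set its population mean to zero at a generic value $\tau$, and solve for $\tau$. Write
\[
A := E\bigl[\{w^{trt,p}_{e} - w^{c,p}_{e,t}\}\{\Delta_{t-e+1}Y_t - m^{c,p}_{e,t}(X)\}\bigr],
\qquad
B := E\bigl[\{w^{trt,p}_{e} - w^{c,p}_{e,t}\}\{\Delta_{t-e+1}D_t - g^{c,p}_{e,t}(X)\}\bigr].
\]
Both $A$ and $B$ are computed with the Hájek-normalized weights \cref{eq:weights1-panel-norm,eq:weights-nev-panel-norm,eq:weights-nye-panel-norm}. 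Since $1/\tau^{p,den}_{e,t}$ is a nonzero constant, the condition $P\varphi^{p}(\cdot;\tau,\eta^{p}_{e,t}) = 0$ reduces to $A - \tau B = 0$. This condition is linear in $\tau$, so its unique solution is $\tau^{dr,p}_{e,t} = A/B$, which is exactly \cref{eq:dr-estimand-panel}. The only things left to verify are that $B \neq 0$ and that the normalization of the weights does not change this argument.

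The nonzero denominator will be handled by evaluating $B$ at the true nuisances. The treated part of $B$ is
\[
E[w^{trt,p}_{e}\{\Delta D - g^{c}(X)\}] = E[g^{trt}(X) - g^{c}(X) \mid E_e = 1],
\]
which follows because $E_e/\rho$ reweights to the $E_e = 1$ subpopulation. The control part vanishes: condition on $X$ and use $E[C(\Delta D - g^{c}(X)) \mid X] = 0$, which holds by the definition of $g^{c}$ as a regression on the slice $C = 1$. This step uses only that the propensity $p$ is a function of $X$. Hence $B = \tau^{p,den}_{e,t}$. The same calculation gives $A = \tau^{p,num}_{e,t}$, and therefore $\tau^{dr,p}_{e,t} = \tau^{p}_{e,t} = LATT(e,t)$ by \Cref{prop:identification-panel}. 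The denominator is nonzero under the monotonicity and relevance that the identification result already presupposes, because $\tau^{p,den}_{e,t}$ appears as a divisor there. Overlap (\Cref{ass:overlap}) ensures that $p/(1-p)$ is bounded, so the normalizing constants in $w^{c,p}_{e,t}$ are finite and positive.

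It is also worth recording why the estimand is called doubly robust. The treated part of the identity above never uses $p$. For the control part, there are two cases. If $m^{c}$ and $g^{c}$ are correct, the control term is zero for any $p$. If instead $p$ is correct, then the normalized odds weights reweight the $C = 1$ population to the $X$-distribution of $E_e = 1$. In that case the $m^{c}$ terms cancel between the treated and control parts, since both weights have mean one, leaving the IPW form.

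The main obstacle is the normalization issue in the EIF of \Cref{prop:if-latt-panel}. That EIF is naturally derived with unnormalized control weights $C\,p/(1-p)/\rho$. Solving the moment equation with those weights gives a Horvitz--Thompson-type ratio, not the Hájek ratio in the statement. To reconcile them, I would show that at the true $p$ the unnormalized and normalized weights coincide in population. This holds because $E[C\,p(X)/(1-p(X))] = E[E_e] = \rho$, which follows from the definition of $p$ as $P(E_e = 1 \mid X, E_e + C = 1)$. So replacing the constant $\rho$ by the normalizing expectation changes nothing at the truth. The stated estimand is then defined as the solution with normalized weights, consistent with Step~\ref{i:if-procedure-2} and \cite{sazhao}.
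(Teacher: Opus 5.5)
Your proposal is correct and matches the paper's proof. The paper likewise sets $P\varphi^{p}(\cdot;\tau,\eta^{p}_{e,t})=0$ for the EIF of \Cref{prop:if-latt-panel}, which is already written with the normalized weights, and solves the linear equation $A-\tau B=0$. Your extra checks go beyond the paper's one-line argument but are sound and anticipate \Cref{prop:dr-both}: that $B=\tau^{p,den}_{e,t}\neq 0$, that $E[C\,p(X)/\{1-p(X)\}]=\rho$ at the true propensity, and hence that $\tau^{dr,p}_{e,t}=LATT(e,t)$. One small correction to your side remark: under a correct $p$, the $m^{c,p}_{e,t}$ terms cancel because of the conditional balance $E[w^{trt,p}_{e}-w^{c,p}_{e,t}\mid X]=0$, not merely because both weights have mean one.
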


\medskip\noindent
Proof in \Cref{sec:dr-estimand-panel}.

\begin{remark}
	\leavevmode
	The estimand in \cref{eq:dr-estimand-panel} resembles a ratio of two
	$ATT_{dr}(g,t;0)$ estimands from \cite{csa}, with the treatment replaced by
	$E_e$ and $D_t$ used as the outcome in the denominator.

	It is also related to the
	cross-sectional LATT estimand in
	\cite{sloczynskiDoublyRobustEstimation2022}. As in that work, the choice
	of nuisance models follows the outcome structure: for count outcomes, a
	Poisson model for $m^{c,p}(X)$ is natural, while for the denominator, a
	binomial mean model for $g^{c,p}(X)$ is appropriate.

	The resulting plug-in estimator, discussed in
	\Cref{sec:estimation-and-inference}, is related to the estimator
	of LATE in \cite{tan2006} in the cross-sectional IV case.
\end{remark}

\begin{proposition}
	\label{prop:dr-estimand-weighted-panel}
	Fix $e \in \mathcal{E}$ and $t \ge e$.
	Let $C$ denote a generic control-group indicator, where either
	$C = C^{nev}$ or $C = C^{nye}_{e,t}$.
	Write $m_{e, t}^{c, p}(X)$ and $g_{e, t}^{c, p}(X)$
	for the corresponding control-group mean functions.
	Define the residuals
	\begin{align*}
		\varepsilon^{Y, p}_{e, t} := \Delta_{t-e+1}Y_{t} - m_{e, t}^{c, p}(X),
		\quad
		\varepsilon^{D, p}_{e, t} := \Delta_{t-e+1}D_{t} - g_{e, t}^{c, p}(X),
	\end{align*}
	and
	$\tau^{dr, p}_{e, t}
		=: \tau^{dr, p, num}_{e, t}/\tau^{dr, p, den}_{e, t}$.
	Let $w^{c,p}_{e, t}$ be the control-weight for the generic control variable.
	Then the influence function of the doubly robust estimand
	\cref{eq:dr-estimand-panel} is
	\begin{align}
		\label{eq:if-dr-estimand-panel}
		\varphi^{dr, p}(O; \tau^{dr, p}_{e, t}, \eta^{dr, p}_{e, t})
		 & =
		\frac{1}{\tau^{dr, p, den}_{e, t}}
		\Big[
			w^{trt,p}_{e}
			\{
			\varepsilon^{Y, p}_{e, t}
			-
			E[ w^{trt,p}_{e} \varepsilon^{Y, p}_{e, t} ]
			\}
			-
			w^{c,p}_{e, t}
			\{
			\varepsilon^{Y, p}_{e, t}
			-
			E[ w^{c,p}_{e, t} \varepsilon^{Y, p}_{e, t} ]
			\}
		\\ & \quad
			\nonumber
			- \tau^{dr, p}_{e, t}
			(w^{trt,p}_{e}
			\{
			\varepsilon^{D, p}_{e, t}
			-
			E[ w^{trt,p}_{e} \varepsilon^{D, p}_{e, t} ]
			\}
			-
			w^{c,p}_{e, t}
			\{
			\varepsilon^{D, p}_{e, t}
			-
			E[ w^{c,p}_{e, t} \varepsilon^{D, p}_{e, t} ]
			\})
			\Big]
	\end{align}
	where $\eta^{dr, p}_{e, t}
		=
		(\rho, p, m_{e, t}^{c, p}, g_{e, t}^{c, p})$
	is the tuple of nuisance functions.
\end{proposition}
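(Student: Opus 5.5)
We compute the influence function of the doubly robust estimand in \cref{eq:dr-estimand-panel} with the calculus rules of \cite{kennedy}, i.e.\ the $\mathbb{IF}$ operator of \Cref{sec:eif-derivation}. The estimand is a ratio $\tau^{dr,p,num}_{e,t}/\tau^{dr,p,den}_{e,t}$. By the quotient rule,
\[
\mathbb{IF}(\tau^{dr,p}_{e,t}) = \frac{1}{\tau^{dr,p,den}_{e,t}}\big[\mathbb{IF}(\tau^{dr,p,num}_{e,t}) - \tau^{dr,p}_{e,t}\,\mathbb{IF}(\tau^{dr,p,den}_{e,t})\big].
\]
So it suffices to find the influence function of a single ATT-type functional
\[
\psi(V) := E[w^{trt,p}_e \varepsilon^{V}] - E[w^{c,p}_{e,t}\varepsilon^{V}],
\]
and then specialise to $V=\Delta_{t-e+1}Y_t$ and to $V=\Delta_{t-e+1}D_t$. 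This is the two-period DR-DiD functional of \cite{sazhao}, with $E_e$ playing the role of treatment. I will work out the treated and control pieces separately.

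\textbf{Treated piece.} Write $E[w^{trt,p}_e\varepsilon^V] = E[E_e(V-m^c(X))]/E[E_e]$. This is a ratio of means, with $m^c$ treated as a nuisance that also depends on $P$. The plan is:
\begin{itemize}
\item Apply the product and quotient rules to get $w^{trt,p}_e\{\varepsilon^V - E[w^{trt,p}_e\varepsilon^V]\}$.
\item Add the term from perturbing $m^c$, namely $-E[E_e\,\mathbb{IF}(m^c(X))]/\rho$.
\item Use $\mathbb{IF}(m^c(x)) = \mathbf{1}\{X=x\}C(V-m^c(x))/P(C=1,X=x)$ from the discrete-$X$ derivation. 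This turns the extra term into $-\frac{C\,P(E_e=1\mid X)}{\rho\,P(C=1\mid X)}(V-m^c(X))$.
\item Convert it, via the definition \cref{eq:generalized-propensity-scores} of $p$ on the slice $E_e+C=1$, to $-C\frac{p(X)}{1-p(X)}\varepsilon^V/\rho$.
\end{itemize}

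\textbf{Control piece.} Write $E[w^{c,p}_{e,t}\varepsilon^V] = E[C\frac{p}{1-p}\varepsilon^V]/E[C\frac{p}{1-p}]$. The plan is:
\begin{itemize}
\item The quotient rule gives the centred term $w^{c,p}_{e,t}\{\varepsilon^V - E[w^{c,p}_{e,t}\varepsilon^V]\}$.
\item The contribution from perturbing $m^c$ gives $+C\frac{p}{1-p}\varepsilon^V$ divided by the normaliser. This uses the same conditional-mean influence function as above.
\item The contribution from perturbing $p$ enters only through $E[C\frac{p}{1-p}(\varepsilon^V - E[w^c\varepsilon^V])\mid X]$. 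This vanishes because $E[C\varepsilon^V\mid X]=0$, since $m^c$ is exactly the control conditional mean.
\end{itemize}

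\textbf{Main obstacle.} The hardest step is verifying that the two $m^c$-perturbation terms cancel. This requires the identity $E[C\,p(X)/(1-p(X))] = E[E_e] = \rho$, which follows from $p/(1-p) = P(E_e=1\mid X)/P(C=1\mid X)$ on the slice. Once it holds, the two $m^c$ terms cancel exactly. What remains is the two centred weighted residual terms, which gives the bracketed expression for both the $Y$ and $D$ versions. Substituting into the quotient rule yields \cref{eq:if-dr-estimand-panel}.

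I also need to check that the normalisers $\rho$ and $E[C p/(1-p)]$ do not contribute extra terms. They do not, because each appears in a Hájek ratio whose quotient-rule derivative is already captured by the centring. The argument is identical for $C^{nev}$ and $C^{nye}_{e,t}$, differing only in which $p$ is used, and \Cref{ass:overlap} guarantees that all denominators are bounded away from zero.
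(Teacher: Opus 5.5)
Your argument is correct, but it takes a different route from the paper.

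\textbf{The paper's route.} The paper applies the Hájek building block \cref{eq:if-w1-term-panel}, $\mathbb{IF}(E[wV])=w\{V-E[wV]\}$, to each of the four terms $E[w^{trt,p}_{e}\Delta_{t-e+1}Y_t]$, $E[w^{c,p}_{e,t}\Delta_{t-e+1}Y_t]$, $E[w^{trt,p}_{e}m^{c,p}_{e,t}(X)]$ and $E[w^{c,p}_{e,t}m^{c,p}_{e,t}(X)]$. It differentiates only through the normalising constants and treats $p$ and $m^{c,p}_{e,t}$ (resp.\ $g^{c,p}_{e,t}$) as fixed functions. It then regroups the terms and applies the quotient rule.

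\textbf{Your route.} You additionally differentiate through the nuisances and show that these contributions drop out at the true distribution. The two $m^{c}$-terms cancel because $E[C\,p(X)/\{1-p(X)\}]=\rho$, which is the identity \cref{eq:normalizations-reduce-csa}. The $p$-terms vanish because $E[C\varepsilon^{V}\mid X]=0$, which also forces $E[w^{c,p}_{e,t}\varepsilon^{V}]=0$. In that step, the factor multiplying $C\{\varepsilon^{V}-E[w^{c,p}_{e,t}\varepsilon^{V}]\}$ should be the influence function of $p/(1-p)$ at $X$, not $p/(1-p)$ itself; the conclusion is unaffected.

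\textbf{What each buys.} Your route actually justifies that the displayed formula is the nonparametric influence function of $P\mapsto\tau^{dr,p}_{e,t}(P)$; in effect it is a Neyman-orthogonality check. It also shows that the formula coincides with \cref{eq:if-latt-panel} at the truth. There $E[w^{c,p}_{e,t}\varepsilon^{V}]=0$, and $E[w^{trt,p}_{e}\varepsilon^{V}]$ equals $\tau^{dr,p,num}_{e,t}$ or $\tau^{dr,p,den}_{e,t}$ according as $V=\Delta_{t-e+1}Y_t$ or $V=\Delta_{t-e+1}D_t$. The paper's shortcut is shorter and produces the nuisance-held-fixed object matching the $\eta^{p}_{1}-\eta^{p}_{0}$ terms of \cite{sazhao}. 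That object still has a meaning when the nuisances are replaced by misspecified working-model limits. Your cancellations rely on correct specification, and away from the truth the full derivative would carry extra nuisance terms.

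\textbf{One small correction.} $\tau^{dr,p,den}_{e,t}\neq 0$ is a relevance condition (a positive complier share, implicit in $LATT(e,t)$ being defined). It is not something \Cref{ass:overlap} delivers.
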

\medskip\noindent
Proof in \Cref{sec:eif-dr-estimand-panel}.

\subsubsection{Repeated Cross-Sections}

\begin{proposition}[Identification in repeated cross-sections]
	\label{prop:identification-rc}
	Fix $e \in \mathcal{E}$ and $t \ge e$.
	Suppose
	\Cref{ass:sample:mixture,ass:staggered-instrument,ass:no-carryover,ass:monotonicity,ass:no-anticipation,ass:overlap}
	hold.
	Let $C$ denote a generic control-group indicator, where either $C = C^{nev}
		\quad \text{or} \quad
		C = C^{nye}_{e,t}$, and let $p(X)$ denote the corresponding
	propensity score.
	Write $m_{e, s, t}^{c, rc}(X)$ and $g_{e, s, t}^{c, rc}(X)$
	for the mean functions in \cref{eq:mean-functions-rc} corresponding to the
	chosen control group (in the never-exposed case, the middle subscript $s$ is
	redundant).
	If $C = C^{nev}$, additionally assume
	\Cref{ass:cpt-treat-ne,ass:cpt-unexpout-ne}.
	If $C = C^{nye}_{e,t}$, additionally assume
	\Cref{ass:cpt-treat-nye,ass:cpt-unexpout-nye}.
	Then the cohort-time local average treatment effect on the treated is identified by
	\begin{align}
		\label{eq:identification-rc}
		LATT(e,t)
		=
		\frac{
		E[
		m_{e,t}^{trt,rc}(X)
		- m_{e,e-1}^{trt,rc}(X)
		-
		[
		m_{e, t, t}^{c,rc}(X)
		- m_{e, t, e-1}^{c,rc}(X)
		]
		\mid E_e = 1 ]
		}{
		E\left[
		g_{e,t}^{trt,rc}(X)
		- g_{e,e-1}^{trt,rc}(X)
		-
		[
		g_{e, t, t}^{c,rc}(X)
		- g_{e, t, e-1}^{c,rc}(X)
		]
		\mid  E_e = 1
		\right]
		} =: \tau^{rc}_{e, t}.
	\end{align}
\end{proposition}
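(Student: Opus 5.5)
The plan is to follow the structure of the panel-data identification argument in \Cref{prop:identification-panel}, replacing first differences within a unit by differences of period-specific conditional means. Under \Cref{ass:sample:mixture} the vector $(E_2,\ldots,E_{\mathcal{T}},C^{nev},X)$ has the same law given $T=s$ for every $s$. So conditioning on $T=s$ recovers the marginal distribution of $(Y_s,D_s,E,C,X)$ from the panel population. In particular, $m^{trt,rc}_{e,s}(X)=E[Y_s\mid X,E_e=1]$ and $m^{c,rc}_{e,t,s}(X)=E[Y_s\mid X,C=1]$, where the control indicator is $C^{nev}$ or $C^{nye}_{e,t}$. Analogous statements hold for $g$ with $D_s$. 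Hence the numerator integrand in \eqref{eq:identification-rc} equals $E[\Delta_{t-e+1}Y_t\mid X,E_e=1]-E[\Delta_{t-e+1}Y_t\mid X,C=1]$, computed under the (hypothetical) panel law. The same holds for the denominator with $D$. The conditioning law of $X$ given $E_e=1$ is also invariant to $T$. Thus \eqref{eq:identification-rc} reduces to the same population expression as \eqref{eq:identification-panel}, and the rest of the argument is identical to the panel proof.

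The panel argument I would give runs as follows. For the denominator, the first step uses \eqref{eq:observed-treatment-status}: on $E_e=1$ we have $D_t=D_t(e)$ and $D_{e-1}=D_{e-1}(e)$. By \Cref{ass:no-anticipation}, $E[D_{e-1}(e)\mid X,E_e=1]=E[D_{e-1}(\infty)\mid X,E_e=1]$. For control units, $D_s=D_s(\infty)$ for $s\le t$. For $C^{nye}_{e,t}$ this holds because $Z_t=0$ implies $E>t$. The second step sums the one-period parallel-trends assumptions (\Cref{ass:cpt-treat-ne} or \Cref{ass:cpt-treat-nye}) telescopically from $e$ to $t$. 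This replaces the control trend $E[D_t(\infty)-D_{e-1}(\infty)\mid X,C=1]$ by the corresponding trend for the $E_e=1$ cohort. Together these steps show that the conditional denominator integrand equals $E[D_t(e)-D_t(\infty)\mid X,E_e=1]$. Averaging over $X\mid E_e=1$ then gives $P(D_t(e)>D_t(\infty)\mid E_e=1)$ by \Cref{ass:monotonicity}.

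The numerator is handled the same way, using the exposed/unexposed outcomes \eqref{eq:exposed-unexposed-outcome}. On $E_e=1$ we have $Y_t=Y_t(D_t(e))$, and for controls we have $Y_s=Y_s(D_s(\infty))$, which requires \Cref{ass:no-carryover}. For the pre-period $e-1$, I need $E[Y_{e-1}(D_{e-1}(e))\mid X,E_e=1]=E[Y_{e-1}(D_{e-1}(\infty))\mid X,E_e=1]$. I would obtain this exactly as in the panel proof. Then telescoping \Cref{ass:cpt-unexpout-ne} or \Cref{ass:cpt-unexpout-nye} yields $E[Y_t(D_t(e))-Y_t(D_t(\infty))\mid X,E_e=1]=E[(Y_t(1)-Y_t(0))(D_t(e)-D_t(\infty))\mid X,E_e=1]$. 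By monotonicity, the factor $D_t(e)-D_t(\infty)$ is the complier indicator. Integrating over $X$ and dividing by the denominator gives $LATT(e,t)$ in \eqref{eq:latt}. \Cref{ass:overlap} guarantees that all conditional expectations are well defined. The denominator is assumed positive, as in the panel case.

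I expect the main obstacle to be the bookkeeping for the not-yet-exposed control. The index $s$ in $m^{c,rc}_{e,t,s}$ must be fixed at $t$ in both periods, so that the same control population $C^{nye}_{e,t}$ is used for periods $t$ and $e-1$. The telescoping over intermediate periods must also use parallel trends relative to this fixed group. I would check that \Cref{ass:cpt-treat-nye,ass:cpt-unexpout-nye} are invoked with $s=t$ for every intermediate step. The other point to make explicit is the invariance step. The conditional mean given $(X,C=1,T=s)$ under the mixture $P_M$ coincides with the population conditional mean of $Y_s$. This uses $T$ being independent of $(E,C,X)$ and the i.i.d.\ sampling within each $T=s$ slice.
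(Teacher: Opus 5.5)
Your proposal is correct and follows essentially the paper's route. The paper also uses the invariance in \Cref{ass:sample:mixture} to turn the period-specific mean-function contrasts into $E[\Delta_{t-e+1}Y_t\mid X,E_e=1]-E[\Delta_{t-e+1}Y_t\mid X,C=1]$ (and the $D$ analogue), identifies these with $LATT^{num}(e,t,X)$ and $LATT^{den}(e,t,X)$ via the panel argument, and then drops the conditioning on $T=t$ in the outer average over $X\mid E_e=1$. One caveat, which the paper's own panel step shares: replacing $Y_{e-1}(D_{e-1}(e))$ by $Y_{e-1}(D_{e-1}(\infty))$ on $E_e=1$, and $D_s=D_s(E)$, $Y_s(D_s(E))$ by their unexposed versions on $C^{nye}_{e,t}=1$ (where $Z_t=0$ only gives $E>t$), actually requires a pointwise no-anticipation condition for cohort $e$ and for every cohort exposed after $t$, which is stronger than the conditional-mean version in \Cref{ass:no-anticipation}.
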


\medskip\noindent
Proof in \Cref{sec:proofs-rc}.

\paragraph{Estimand and EIF}
In the repeated cross-sections case, we denote generic mean functions
for both control groups in \cref{eq:mean-functions-rc} by
$m^{c,rc}_{e,s,T}(X)$ and $g^{c,rc}_{e,s,T}(X)$ (again, in the never-exposed
case, the index $s$ is not used). This notation allows us to treat both
cases jointly.
Using this notation, we define the control mean functions unified across
periods $e-1$ and $t$ as:
\begin{align}
	\label{eq:mean-function-m-period-rc-control}
	m_{e, Y}^{c,rc}(X)
	 & :=
	\mathbf{1}\{T = e-1\} \cdot m_{e, t, e-1}^{c, rc}(X)
	+ \mathbf{1}\{T = t\} \cdot m_{e, t, t}^{c, rc}(X),
	\\
	\label{eq:mean-function-g-period-rc-control}
	g_{e, D}^{c,rc}(X)
	 & :=
	\mathbf{1}\{T = e-1\} \cdot g_{e, t, e-1}^{c, rc}(X)
	+ \mathbf{1}\{T = t\} \cdot g_{e, t, t}^{c, rc}(X),
\end{align}
and
\begin{align}
	\label{eq:mean-function-both-period-rc}
	m_{e,Y}^{trt,rc}(X)
	 & :=
	\mathbf{1}\{T = e-1\} \cdot m_{e, e - 1}^{trt, rc}(X)
	+ \mathbf{1}\{T = t\} \cdot m_{e, t}^{trt, rc}(X),
	\\
	g_{e,D}^{trt,rc}(X)
	 & :=
	\mathbf{1}\{T = e-1\} \cdot g_{e, e - 1}^{trt, rc}(X)
	+ \mathbf{1}\{T = t\} \cdot g_{e, t}^{trt, rc}(X),
\end{align}
where implicitly, because we only consider $2 \times 2$ comparisons, $\mathbf{1}\{T = e
	- 1\} + \mathbf{1}\{T = t\} = 1$.

\begin{proposition}
	\label{prop:if-latt-rc}
	Let $C$ be either of the control variables and $p(X)$ the corresponding
	propensity.
	Suppose the conditions of \Cref{prop:identification-rc} hold. For
	\cref{eq:identification-rc},
	define $\tau^{rc}_{e, t} =: \tau^{rc, num}_{e, t}/\tau^{rc, den}_{e, t}$.
	Then the efficient influence function
	for the LATT when repeated cross-sections are available
	equals:
	\begin{align}
		\label{eq:if-latt-rc}
		\varphi^{rc}(O; \tau^{rc}_{e, t}, \eta^{rc}_{e, t})
		          & =
		\frac{1}{\tau^{rc, den}_{e, t}}
		[
		w^{trt,rc}_{e} \{Y - m_{e, Y}^{trt, rc}(X)\}
		-
		w^{c,rc}_{e}
		\{Y - m_{e, Y}^{c,rc}(X)\}
		\\
		\nonumber
		\nonumber & \quad
		+ \frac{E_{e}}{\rho}
		\{
		m_{e,t}^{trt, rc}(X) - m_{e,e-1}^{trt, rc}(X)
		-
		[
		m_{e, t, t}^{c, rc}(X) - m_{e, t, e-1}^{c, rc}(X)
		]
		\}
		\\
		\nonumber & \quad
		\nonumber
		- \tau^{rc}_{e, t}
		(
		w^{trt,rc}_{e} \{D - g_{e, Y}^{trt, rc}(X)\}
		-
		w^{c,rc}_{e}
		\{D - g_{e, D}^{c,rc}(X)\}
		\\
		\nonumber
		\nonumber & \quad
		+ \frac{E_{e}}{\rho}
		\{
		g_{e,t}^{trt, rc}(X) - g_{e,e-1}^{trt, rc}(X)
		-
		[ g_{e, t, t}^{c, rc}(X) - g_{e, t, e-1}^{c, rc}(X) ]
		\}
		)
		]
	\end{align}
	where the nonparametric efficiency bound equals
	$E[\{\varphi^{rc}(O; \tau^{rc}_{e, t}, \eta^{rc}_{e, t})\}^{2}]$,
	and where $\eta^{rc}_{e, t} = (
		\rho,
		\lambda_{e - 1},
		\lambda_{t},
		p,
		m_{e, t}^{trt, rc},
		m_{e, e - 1}^{trt, rc},
		g_{e, t}^{trt, rc},
		g_{e, e - 1}^{trt, rc}
		m_{e, t, t}^{c, rc},
		m_{e, t, e-1}^{c, rc},
		g_{e, t, t}^{c, rc},
		g_{e, t, e-1}^{c, rc}
		)$
	is the tuple of nuisance functions.
\end{proposition}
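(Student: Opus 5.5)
The plan is to compute the EIF with the point-mass (Gateaux derivative) calculus of \citet{kennedy}, which \Cref{sec:influence-functions} sets up. Two facts do the work: the quotient rule $\mathbb{IF}(a/b) = \{\mathbb{IF}(a) - (a/b)\,\mathbb{IF}(b)\}/b$, and the fact that the numerator and the denominator of \cref{eq:identification-rc} have the same functional form. The only difference is that $Y$ is replaced by $D$. So I only need the EIF of the generic functional
\[
\psi(V) := E\bigl[\mu^{trt}_{t}(X) - \mu^{trt}_{e-1}(X) - \{\mu^{c}_{t}(X) - \mu^{c}_{e-1}(X)\} \mid E_e = 1\bigr],
\]
where $\mu^{trt}_s(X) = E[V \mid X, E_e=1, T=s]$ and $\mu^c_s(X) = E[V\mid X, C=1, T=s]$ under the mixture measure $P_M$, with $V \in \{Y, D\}$. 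Plugging $\mathbb{IF}(\psi(Y))$ and $\mathbb{IF}(\psi(D))$ into the quotient rule gives \cref{eq:if-latt-rc} directly: the prefactor $1/\tau^{rc,den}_{e,t}$ and the factor $-\tau^{rc}_{e,t}$ on the $D$-block both come from the quotient rule.

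For $\psi$ I would first write $\psi = E[E_e\,\Delta(X)]/\rho$, where $\Delta$ is the integrand, and work with discrete $X$ so that the operator rules apply. The ratio rule gives $\{E_e\Delta(X) - E_e\psi\}/\rho + E[E_e\,\mathbb{IF}(\Delta(X))]/\rho$. The first piece is the $\frac{E_e}{\rho}\{\dots\}$ term in \cref{eq:if-latt-rc}, and its $-E_e\psi/\rho$ part will be absorbed later. Each conditional mean $\mu_s(x)$ has influence function $\mathbf 1\{X=x\}\mathbf 1\{A=1,T=s\}\{V-\mu_s(x)\}/P(X=x,A=1,T=s)$, with $A \in \{E_e, C\}$. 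Averaging over $x$ with weights $P(X=x\mid E_e=1)$ gives the residual terms.

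For the treated means, $P(x\mid E_e=1)/P(x,E_e=1,T=s)$ simplifies, using the invariance of $(E,X)$ to $T$ in \Cref{ass:sample:mixture}, to $1/(\rho\lambda_s)$. This yields $E_e\mathbf 1\{T=s\}\{V-\mu^{trt}_s\}/(\rho\lambda_s)$. Since $\rho^{trt,rc}_{e,s} = \rho\lambda_s$, this is $w^{trt,rc}_{e,s}\{V-\mu^{trt}_s\}$, and the two periods combine into $w^{trt,rc}_e\{Y - m^{trt,rc}_{e,Y}(X)\}$ through the indicator notation in \cref{eq:mean-function-both-period-rc}.

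For the control means, the weight ratio $P(x\mid E_e=1)/P(x,C=1,T=s)$ is rewritten via Bayes as $\frac{p(x)}{1-p(x)}\cdot\frac{P(E_e+C=1)}{\rho}\cdot\frac{1}{\lambda_s P(E_e+C=1)}$. I then need to check that the normalizer $E[C\mathbf 1\{T=s\}p/(1-p)]$ equals $\rho\lambda_s$. This holds because $E[C\,p/(1-p)] = E[E_e] = \rho$, again using invariance in $T$. The control term therefore becomes exactly $w^{c,rc}_{e,t,s}\{V-\mu^c_s\}$.

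The step I expect to be the main obstacle is the bookkeeping of the centering terms, which must make the final expression have mean zero. The derivative of $\rho$ and of the $\lambda_s$ normalizations must cancel, or be absorbed into the $-E_e\psi/\rho$ term. I will also check that under the Hájek normalization no extra centering terms appear for the $\mathbb{IF}$ of $\tau^{rc}$ itself, unlike in \Cref{prop:dr-estimand-weighted-panel}. Finally, the term $-\frac{E_e}{\rho}\tau^{rc,num}$ from the numerator cancels against $+\tau^{rc}\frac{E_e}{\rho}\tau^{rc,den}$ from the denominator, because $\tau^{rc,num} = \tau^{rc}\tau^{rc,den}$. This cancellation explains why no explicit centering constant appears in \cref{eq:if-latt-rc}.

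Efficiency then follows because the model on the observed mixture distribution is nonparametric: the identifying assumptions only restrict unobservables, so the influence function is unique, hence efficient, and the bound is $E[\varphi^{rc\,2}]$. Overlap (\Cref{ass:overlap}) ensures that all the denominators above are bounded away from zero.
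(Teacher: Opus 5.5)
\textbf{The influence-function derivation is correct and essentially the paper's; the efficiency justification is the gap.}

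Your computation of the influence function follows the paper's route. The paper obtains $\mathbb{IF}(\tau^{rc,num}_{e,t})$ and $\mathbb{IF}(\tau^{rc,den}_{e,t})$ by relabeling \Cref{prop:att-if-estimand-rc}. That proposition is proved by exactly your point-mass calculation, with invariance used to factor $P(X=x,A=1,T=s)$. The paper then renormalizes the control weights and applies the linear-IF quotient rule \cref{eq:if-quotient-rule-estimand-linear-if}, so the $E_e\tau/\rho$ terms cancel as you describe.

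Working the $2\times 2$ step directly in IDiD notation has one small advantage over relabeling. It makes explicit why the generalized score $p(X)=P(E_e=1\mid X,E_e+C=1)$ appears, and why $E[C\,p/(1-p)]=\rho$ even though $E_e$ and $C$ do not exhaust the population. The relabeling leaves both implicit. Your worry about extra $\lambda_s$-derivative terms is unfounded: $\lambda_s$ enters only through $P(X=x,A=1,T=s)$, whose perturbation is already inside the conditional-mean building block.

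\textbf{Why ``unique, hence efficient'' fails.} Under \Cref{ass:sample:mixture}, invariance of $(E_2,\ldots,E_{\mathcal{T}},C^{nev},X)$ to $T$ is a restriction on the \emph{observed} law, and it is testable. So the observed-data model is not locally nonparametric and influence functions are not unique.

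\textbf{What the efficiency claim actually needs.} Consider the coarsened observation $(Y,D,E_e,C,T,X)$ and the model imposing $T\perp(E_e,C,X)$. Its tangent space consists of the mean-zero elements of $\{a(T)\}\oplus\{b(E_e,C,X)\}\oplus\{c:E[c\mid E_e,C,X,T]=0\}$. Your function lies in it: each weighted residual has conditional mean zero given $(E_e,C,X,T)$, and the $\frac{E_e}{\rho}\{\cdots\}$ part is a mean-zero function of $(E_e,X)$. This is the argument \citet{sazhao} use for DiD. The paper's proof omits this check as well.

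\textbf{The check depends on the coarsening.} Suppose the full cohort label $E$ is retained and $C=C^{nye}_{e,t}$ pools several cohorts. Then $w^{c,rc}_{e,t,s}\{Y-m^{c,rc}_{e,t,s}(X)\}$ contains $\mathbf{1}\{T=s\}\,C\frac{p(X)}{1-p(X)}\{E[Y\mid E,X,T=s]-m^{c,rc}_{e,t,s}(X)\}/(\rho\lambda_s)$. When the pooled cohorts have different conditional means, this term is generically not of the form $a(T)+b(E,X)$. So ``efficient'' must be read relative to the coarsened data.

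\textbf{Overlap.} \Cref{ass:overlap} bounds $P(E_e=1\mid X)$, not the generalized score $p(X)$. It also says nothing about $\lambda_{e-1},\lambda_t>0$ or $\tau^{rc,den}_{e,t}\neq 0$. These must be assumed separately rather than read off from overlap.
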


\medskip\noindent
Proof in \Cref{sec:latt-parameter-if-rc}.

\paragraph{DR Estimand and EIF}
\begin{proposition}
	\label{prop:dr-estimand-rc}
	Fix $e \in \mathcal{E}$ and $t \ge e$.
	Suppose the conditions of \Cref{prop:identification-rc} hold.
	Then the doubly robust repeated-cross-section estimand for $LATT(e,t)$ is
	\begin{align}
		\label{eq:dr-estimand-rc}
		\tau^{dr, rc}_{e, t}
		 & =
		\frac{
			E [
					\{w^{trt,rc}_{e} - w^{c,rc}_{e}\}
					\{Y - m_{e, Y}^{c,rc}(X)\}
				] + \kappa_{e, t}^{Y, rc}
		}{
			E [
					\{w^{trt,rc}_{e} - w^{c,rc}_{e}\}
					\{D - g_{e, Y}^{c,rc}(X)\}
				] + \kappa_{e, t}^{D, rc}
		},
	\end{align}
	where
	\begin{align}
		\label{eq:kappa-Y-rc}
		 & \kappa_{e, t}^{Y, rc}
		\\
		\nonumber
		 & :=
		E [ m_{e, t}^{trt, rc}(X)
				-  m_{e, t, t}^{c, rc}(X) \mid E_{e} = 1 ]
		-
		E[ m_{e, t}^{trt, rc}(X)
				-  m_{e, t, t}^{c, rc}(X) \mid E_{e} = 1, T = t ]
		\\ & \quad
		\nonumber
		-
		\left(
		E [ m_{e, e-1}^{trt, rc}(X)
					-  m_{e, t, e-1}^{c, rc}(X) \mid E_{e} = 1 ]
		-
		E[m_{e, e-1}^{trt, rc}(X)
					-  m_{e, t, e-1}^{c, rc}(X) \mid E_{e} = 1, T = e - 1]
		\right),
		\\
		\label{eq:kappa-D-rc}
		 & \kappa_{e, t}^{D, rc}
		\\
		\nonumber
		 & :=
		E [ g_{e, t}^{trt, rc}(X)
				-  g_{e, t, t}^{c, rc}(X) \mid E_{e} = 1 ]
		-
		E[ g_{e, t}^{trt, rc}(X)
				-  g_{e, t, t}^{c, rc}(X) \mid E_{e} = 1, T = t ]
		\\ & \quad
		\nonumber
		-
		\left(
		E[ g_{e, e-1}^{trt, rc}(X)
					-  g_{e, t, e-1}^{c, rc}(X) \mid E_{e} = 1 ]
		-
		E[g_{e, e-1}^{trt, rc}(X)
					-  g_{e, t, e-1}^{c, rc}(X) \mid E_{e} = 1, T = e - 1]
		\right).
	\end{align}
\end{proposition}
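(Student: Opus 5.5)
The doubly robust estimand is obtained exactly as in the panel case (\Cref{prop:dr-estimand-panel}). We take the EIF $\varphi^{rc}$ from \Cref{prop:if-latt-rc}, but with the Hájek-normalized weights \cref{eq:weights-trt-rc-norm,eq:weights-nev-rc-norm,eq:weights-nye-rc-norm} and the combined weights \cref{eq:weights-trt-joint-rc,eq:weights-control-joint-rc}. We then solve the population estimating equation $P\varphi^{rc}(\cdot;\tau^{dr,rc},\eta)=0$ of \cref{eq:dr-estimand-foc} for $\tau^{dr,rc}$. The factor $1/\tau^{rc,den}_{e,t}$ is a nonzero constant, so it can be dropped. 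The equation is then affine in $\tau$, say $A_Y - \tau A_D = 0$, which gives $\tau^{dr,rc}=A_Y/A_D$ with
\[
A_Y = E\big[w^{trt,rc}_e\{Y-m^{trt,rc}_{e,Y}(X)\} - w^{c,rc}_e\{Y-m^{c,rc}_{e,Y}(X)\} + \tfrac{E_e}{\rho}\{m^{trt,rc}_{e,t}-m^{trt,rc}_{e,e-1}-(m^{c,rc}_{e,t,t}-m^{c,rc}_{e,t,e-1})\}(X)\big],
\]
and $A_D$ defined analogously with $D$ and the $g$'s in place of $Y$ and the $m$'s. What remains is to rearrange $A_Y$ and $A_D$ into the form \cref{eq:dr-estimand-rc}.

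For the rearrangement, I add and subtract $w^{trt,rc}_e m^{c,rc}_{e,Y}(X)$ in $A_Y$. This yields
\[
A_Y = E[\{w^{trt,rc}_e - w^{c,rc}_e\}\{Y - m^{c,rc}_{e,Y}(X)\}] - E[w^{trt,rc}_e\{m^{trt,rc}_{e,Y}(X) - m^{c,rc}_{e,Y}(X)\}] + E\big[\tfrac{E_e}{\rho}\{\cdots\}\big].
\]
The first piece is the leading term in the numerator of \cref{eq:dr-estimand-rc}, so it remains to show that the other two pieces sum to $\kappa^{Y,rc}_{e,t}$. The third piece is exactly $E[m^{trt,rc}_{e,t}-m^{c,rc}_{e,t,t}\mid E_e=1]-E[m^{trt,rc}_{e,e-1}-m^{c,rc}_{e,t,e-1}\mid E_e=1]$. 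For the second piece I expand $w^{trt,rc}_e = w^{trt,rc}_{e,t}-w^{trt,rc}_{e,e-1}$. On the support of $w^{trt,rc}_{e,t}$ we have $\mathbf{1}\{T=t\}=1$, so $m^{\cdot,rc}_{e,Y}$ reduces to its period-$t$ component. Because $w^{trt,rc}_{e,t}=E_e\mathbf{1}\{T=t\}/E[E_e\mathbf{1}\{T=t\}]$, we get $E[w^{trt,rc}_{e,t} f(X)] = E[f(X)\mid E_e=1,T=t]$, and the same holds for period $e-1$. The second piece therefore equals $-E[m^{trt}_{e,t}-m^{c}_{e,t,t}\mid E_e=1,T=t] + E[m^{trt}_{e,e-1}-m^{c}_{e,t,e-1}\mid E_e=1,T=e-1]$. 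Combining it with the third piece gives precisely $\kappa^{Y,rc}_{e,t}$ in \cref{eq:kappa-Y-rc}. The same manipulation with $(D,g)$ produces $\kappa^{D,rc}_{e,t}$ and the denominator.

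The main obstacle is justifying that we may use the EIF with normalized weights and with the joint indicator convention $\mathbf{1}\{T=e-1\}+\mathbf{1}\{T=t\}=1$ inside the $2\times 2$ slice. The EIF in \Cref{prop:if-latt-rc} carries unnormalized inverse-probability weights: the control weight involves $\lambda_t$, $\rho$ and $p/(1-p)$. Replacing them by their Hájek versions changes the weights only by a population normalizing constant. To show the two versions agree, I would verify that the normalizing constants coincide in population, using \cref{eq:generalized-propensity-scores} and the invariance of $(E,C,X)$ to $T$ from \cref{ass:sample:mixture}. For example, $E[C\mathbf{1}\{T=t\}p/(1-p)] = \lambda_t E[E_e\cdot \mathbf{1}\{E_e+C=1\}] = \lambda_t\rho$, and likewise $\rho^{trt,rc}_{e,t}=\lambda_t\rho$. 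Hence the population estimating equation, and with it $A_Y$ and $A_D$, is unchanged by the normalization.

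Finally, I would note that $\kappa$ vanishes under the invariance assumption, since then $E[\cdot\mid E_e=1,T=t]=E[\cdot\mid E_e=1]$. It is retained because it is nonzero once the estimated nuisances replace the true ones. This mirrors the locally efficient DR estimator of \cite{sazhao}, which I would cite as a consistency check.
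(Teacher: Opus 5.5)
Your proposal is correct and is essentially the paper's argument. The paper likewise sets the expectation of the EIF in \Cref{prop:if-latt-rc} to zero and uses its affine structure in $\tau$, going through the DiD computation of \Cref{prop:rc-estimand-set-zero} for numerator and denominator plus the quotient rule; it then regroups the treated-weighted mean functions into $\kappa^{Y,rc}_{e,t}$ and $\kappa^{D,rc}_{e,t}$, and your direct add-and-subtract of $w^{trt,rc}_{e}m^{c,rc}_{e,Y}(X)$ is a slightly more streamlined version of that step. Two minor remarks: the weights in \Cref{prop:if-latt-rc} are already the normalized ones, so your normalization check is only a consistency verification; and in your closing comment it is the sample analogue of $\kappa$ that is nonzero, since under \Cref{ass:sample:mixture} the population $\kappa$ vanishes for any fixed functions of $X$, estimated or not.
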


\medskip\noindent
Proof in \Cref{sec:dr-estimand-rc}.

\begin{remark}
	\leavevmode
	The estimand in \cref{eq:dr-estimand-rc} again looks familiar; it
	resembles a ratio of two $ATT_{dr,rc}(g,t;0)$ estimands from
	\cite{csa}, with the treatment
	replaced by the exposure variable $E_e$ and $D$ used as the outcome in
	the denominator.
	Also, \cref{eq:dr-estimand-rc} has a similar form as
	\cref{eq:dr-estimand-panel},
	a weighted residual including the control units mean function,
	but in the repeated cross-sections setting also including the terms
	$\kappa_{e, t}^{Y, rc},\kappa_{e, t}^{D, rc}$,
	consisting of differences between the treated and control mean functions in
	both periods.
\end{remark}

\begin{proposition}
	\label{prop:dr-estimand-weighted-rc}
	For the doubly robust estimand \cref{eq:dr-estimand-rc}, define
	\begin{align}
		\label{eq:resid-rc-case-individual}
		\varepsilon^{Y,trt,rc}_{e,t} & := Y - m_{e, t}^{trt, rc}(X),
		\quad
		\varepsilon^{Y,c,rc}_{e,s,t} := Y - m_{e, s, t}^{c,rc}(X),
		\\
		\nonumber
		\varepsilon^{D,trt,rc}_{e,t} & := D - g_{e, t}^{trt, rc}(X),
		\quad
		\varepsilon^{D,c,rc}_{e,s,t} := D - g_{e, s, t}^{c,rc}(X),
	\end{align}
	$\tau^{dr, rc}_{e, t}
		=: \tau^{dr, rc, num}_{e, t}/\tau^{dr, rc, den}_{e, t}$,
	and:
	\begin{align}
		\label{eq:mean-functions-rc-diffs}
		m_{e, \Delta}^{trt, rc}(X)
		:=
		m_{e, t}^{trt, rc}(X) - m_{e, e - 1}^{trt, rc}(X),
		\quad
		m_{e, t, \Delta}^{c, rc}(X)
		:= m_{e, t, t}^{c, rc}(X) - m_{e, t, e - 1}^{c, rc}(X).
	\end{align}
	Then the efficient influence function of \cref{eq:dr-estimand-rc} is
	\begin{align}
		\label{eq:if-dr-estimand-rc}
		 & \varphi^{dr,rc}(O; \tau^{dr, rc}_{e, t}; \eta^{dr, rc}_{e, t})
		\\
		\nonumber
		 & =
		\frac{1}{\tau^{dr, rc, den}_{e, t}}
		\Big[
			\varphi^{dr,rc}_{num}(O; \tau^{dr, rc, num}_{e, t}; \eta^{dr, rc, num}_{e, t})
			- \tau^{dr, rc}_{e, t}
			\varphi^{dr,rc}_{den}(O; \tau^{dr, rc, den}_{e, t}; \eta^{dr, rc, den}_{e, t})
			\Big],
	\end{align}
	where
	\begin{align}
		\label{eq:if-dr-num-estimand-rc}
		 & \varphi^{dr,rc}_{num}(O; \tau^{dr, rc, num}_{e, t}; \eta^{dr, rc, num}_{e, t})
		\\
		\nonumber
		 & :=
		w^{trt,rc}_{e, t}
		\left\{ \varepsilon^{Y,trt,rc}_{e,t}
		- E[w^{trt,rc}_{e, t} \varepsilon^{Y,trt,rc}_{e,t} ] \right\}
		+ w^{trt,p}_{e} \left\{ m_{e, \Delta}^{trt, rc}(X)
		- E[w^{trt,p}_{e} m_{e, \Delta}^{trt, rc}(X) ] \right\}
		\\ & \quad
		\nonumber
		-
		\left[w^{trt,rc}_{e, e-1}
			\left\{ \varepsilon^{Y,trt,rc}_{e,e-1}
			- E[w^{trt,rc}_{e, e-1} \varepsilon^{Y,trt,rc}_{e,e-1} ] \right\}
			+ w^{trt,p}_{e} \left\{ m_{e, t, \Delta}^{c, rc}(X)
			- E[w^{trt,p}_{e} m_{e, t, \Delta}^{c, rc}(X) ] \right\}\right]
		\\
		\nonumber
		 & \quad
		- [w^{c,rc}_{e, t, t}
		\left\{ \varepsilon^{Y,c,rc}_{e,t,t}
		- E[w^{c,rc}_{e, t, t} \varepsilon^{Y,c,rc}_{e,t,t} ] \right\}
		- w^{c,rc}_{e, t, e - 1}
		\left\{ \varepsilon^{Y,c,rc}_{e,t,e-1}
		- E[w^{c,rc}_{e, t, e - 1} \varepsilon^{Y,c,rc}_{e,t,e-1} ] \right\}]
		\\
		\label{eq:if-dr-den-estimand-rc}
		 &
		\varphi^{dr,rc}_{den}(O; \tau^{dr, rc, den}_{e, t}; \eta^{dr, rc, den}_{e, t})
		\\
		\nonumber
		 & :=
		w^{trt,rc}_{e, t}
		\left\{ \varepsilon^{D,trt,rc}_{e,t}
		- E[w^{trt,rc}_{e, t} \varepsilon^{D,trt,rc}_{e,t} ] \right\}
		+ w^{trt,p}_{e} \left\{ g_{e, \Delta}^{trt, rc}(X)
		- E[w^{trt,p}_{e} g_{e, \Delta}^{trt, rc}(X) ] \right\}
		\\ & \quad
		\nonumber
		-
		\left[w^{trt,rc}_{e, e-1}
			\left\{ \varepsilon^{D,trt,rc}_{e,e-1}
			- E[w^{trt,rc}_{e, e-1} \varepsilon^{D,trt,rc}_{e,e-1} ] \right\}
			+ w^{trt,p}_{e} \left\{ g_{e, t, \Delta}^{c, rc}(X)
			- E[w^{trt,p}_{e} g_{e, t, \Delta}^{c, rc}(X) ] \right\}\right]
		\\
		\nonumber
		 & \quad
		- [w^{c,rc}_{e, t, t}
		\left\{ \varepsilon^{D,c,rc}_{e,t,t}
		- E[w^{c,rc}_{e, t, t} \varepsilon^{D,c,rc}_{e,t,t} ] \right\}
		- w^{c,rc}_{e, t, e - 1}
		\left\{ \varepsilon^{D,c,rc}_{e,t,e-1}
		- E[w^{c,rc}_{e, t, e - 1} \varepsilon^{D,c,rc}_{e,t,e-1} ] \right\}],
	\end{align}
	and
	$\eta^{dr, rc}_{e, t} = (
		\rho,
		\lambda_{e - 1},
		\lambda_{t},
		p,
		m_{e, t}^{trt, rc},
		m_{e, e - 1}^{trt, rc},
		g_{e, t}^{trt, rc},
		g_{e, e - 1}^{trt, rc},
		m_{e, t, t}^{c, rc},
		m_{e, t, e-1}^{c, rc},
		g_{e, t, t}^{c, rc},
		g_{e, t, e-1}^{c, rc}
		)$
	is the nuisance-function tuple.
\end{proposition}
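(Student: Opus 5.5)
We obtain $\varphi^{dr,rc}$ by computing the influence function of the DR estimand \cref{eq:dr-estimand-rc} with the Kennedy-style operator $\mathbb{IF}$ of \Cref{sec:eif-derivation}, treating the estimand as a smooth functional of the observed-data distribution under the mixture sampling of \Cref{ass:sample:mixture}. We do not treat it as a moment condition. Since $\tau^{dr,rc}_{e,t}=\tau^{dr,rc,num}_{e,t}/\tau^{dr,rc,den}_{e,t}$, the quotient rule gives
\[
\mathbb{IF}(\tau^{dr,rc}_{e,t})=\frac{1}{\tau^{dr,rc,den}_{e,t}}\Big[\mathbb{IF}(\tau^{dr,rc,num}_{e,t})-\tau^{dr,rc}_{e,t}\,\mathbb{IF}(\tau^{dr,rc,den}_{e,t})\Big].
\]
This is \cref{eq:if-dr-estimand-rc}. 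The problem therefore reduces to showing that $\mathbb{IF}(\tau^{dr,rc,num}_{e,t})=\varphi^{dr,rc}_{num}$. The denominator is the same calculation with $(Y,m)$ replaced by $(D,g)$.

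\textbf{Rewriting the numerator.} Because the conditional means satisfy $E[w^{trt,rc}_{e,s}\varepsilon^{Y,trt,rc}_{e,s}]=0$ and $E[w^{c,rc}_{e,t,s}\varepsilon^{Y,c,rc}_{e,t,s}]$ is a normalized weighted mean, I rewrite $\tau^{dr,rc,num}_{e,t}$ as a sum of separable pieces:
\[
E[w^{trt,rc}_{e,t}\varepsilon^{Y,trt,rc}_{e,t}]-E[w^{trt,rc}_{e,e-1}\varepsilon^{Y,trt,rc}_{e,e-1}]+E[w^{trt,p}_{e}(m^{trt,rc}_{e,\Delta}-m^{c,rc}_{e,t,\Delta})]-\big(E[w^{c,rc}_{e,t,t}\varepsilon^{Y,c,rc}_{e,t,t}]-E[w^{c,rc}_{e,t,e-1}\varepsilon^{Y,c,rc}_{e,t,e-1}]\big).
\]
To obtain this, expand $\{w^{trt,rc}_e-w^{c,rc}_e\}\{Y-m^{c,rc}_{e,Y}\}$ using $\mathbf{1}\{T=e-1\}+\mathbf{1}\{T=t\}=1$. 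Add and subtract $m^{trt,rc}$ inside the treated terms. Then note that the $\kappa^{Y,rc}_{e,t}$ terms turn the $T$-conditional means $E[\cdot\mid E_e=1,T=s]=E[w^{trt,rc}_{e,s}\,\cdot\,]$ into unconditional means over $E_e=1$, i.e.\ $E[w^{trt,p}_e\,\cdot\,]$. This identity check is routine bookkeeping.

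\textbf{Influence functions of the building blocks.} I use two generic results, derived once with the point-mass and chain-rule tricks of \citet{kennedy}.

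(i) A Hájek ratio $\psi=E[AB]/E[A]$ with observed $A,B$ has influence function $\frac{A}{E[A]}\{B-\psi\}$. Since $w=A/E[A]$, this equals $w\{B-E[wB]\}$.

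(ii) A term $E[w\,\varepsilon]$ involves nuisances $m$, $p$, $\rho$ that are themselves functionals of $P$. Here $m$ is a conditional mean on the same stratum that the weight selects, so the weighted residual is orthogonal to perturbations of $m$. The contributions from perturbing $m$ either vanish or, for the treated terms, are exactly what combines with the $E[w^{trt,p}_e\,m_\Delta]$ pieces.

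Applying (i) gives the centred terms $w\{\varepsilon-E[w\varepsilon]\}$ for each of the four residual pieces. Applying (i) with $A=E_e$ gives the $w^{trt,p}_e\{m_\Delta-E[w^{trt,p}_e m_\Delta]\}$ terms. Summing with the signs above reproduces \cref{eq:if-dr-num-estimand-rc}.

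\textbf{Main obstacle.} The hard step is showing that the perturbations of the nuisance functions inside the weights and residuals cancel. This means derivatives of $p_{e,s}(X)$ in the control weights, of $m^{c,rc}_{e,t,s}$ in the control residuals, and of $m^{trt,rc}$ in the $E[w^{trt,p}_e m_\Delta]$ terms. For the control terms, I first write $p/(1-p)$ as the density ratio $dP(X\mid E_e=1)/dP(X\mid C=1)$. Then $E[w^{c}_{e,t,s}\varepsilon^{c}_{e,t,s}]$ becomes $E[E[\varepsilon^c\mid X,C=1,T=s]\mid E_e=1]\cdot(\text{normalizer})$, and the inner conditional mean is identically zero. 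The Gateaux derivative therefore only picks up the outer weighting of a zero function plus the explicit residual term. For the treated $m$-terms, the derivative of $E[w^{trt,p}_e m^{trt,rc}_{e,s}]$ in $m$ yields $w^{trt,rc}_{e,s}\varepsilon^{Y,trt,rc}_{e,s}$. This is the same as the residual already produced, and it must be checked that it is not double-counted with the residual piece. I would verify the signs carefully in the $T=e-1$ block, which is where I expect sign errors. Finally, I would confirm mean zero of each bracket, so that $E[\varphi^{dr,rc}]=0$ as required.
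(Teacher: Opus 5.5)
The computation that actually produces the formula is the paper's proof. That computation is: split $\tau^{dr,rc,num}_{e,t}$ into the four weighted residuals plus $E[w^{trt,p}_e(m^{trt,rc}_{e,\Delta}-m^{c,rc}_{e,t,\Delta})]$, apply the H\'ajek rule (i) to each piece, and sum. The paper works in the other order. It differentiates $E[\{w^{trt,rc}_e-w^{c,rc}_e\}\varepsilon^{Y,rc}_{e,t}]$ and $\kappa^{Y,rc}_{e,t}$ (written as in \cref{eq:kappa-y-rc-exp}) separately, then cancels the control mean functions under the treated period weights. You do that cancellation on the estimand first; the paper records this identity later as \cref{eq:rc-eps-terms-num}. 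By linearity of $\mathbb{IF}$ the two orders agree. The paper applies \cref{eq:if-w1-term-panel} with the residuals, the mean functions and the factor $p/(1-p)$ all held fixed, and never differentiates through the nuisances.

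Your step (ii) and the ``main obstacle'' paragraph go beyond this, and as stated they are wrong and would double count. At the true nuisances, each functional $E[w^{trt,rc}_{e,s}\varepsilon^{Y,trt,rc}_{e,s}]$ and $E[w^{c,rc}_{e,t,s}\varepsilon^{Y,c,rc}_{e,t,s}]$ is identically zero in $P$. This is because the mean function is the conditional mean on exactly the stratum the weight selects. Its full influence function is therefore zero: the H\'ajek term $w\varepsilon$ from (i) is cancelled by the $m$-perturbation $-\sum_x E[w\mathbf{1}\{X=x\}]\,\mathbb{IF}(m(x))=-w\varepsilon$.

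So the weighted residual is \emph{not} orthogonal to perturbations of $m$. For the control pieces, the derivative does not ``pick up the explicit residual term'' once $m^{c,rc}_{e,t,s}$ is perturbed; only the perturbations of $p$ and of the normalizer vanish there.

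The residual terms of \cref{eq:if-dr-num-estimand-rc} come instead from perturbing the mean functions inside the $w^{trt,p}_e$ piece:
\begin{itemize}
\item Under \Cref{ass:sample:mixture}, $\sum_x E[w^{trt,p}_e\mathbf{1}\{X=x\}]\,\mathbb{IF}(m^{trt,rc}_{e,s}(x))=w^{trt,rc}_{e,s}\varepsilon^{Y,trt,rc}_{e,s}$.
\item Using in addition $P(E_e=1\mid X)/P(C=1\mid X)=p(X)/\{1-p(X)\}$ and \cref{eq:csa-denom-rc}, the same expression with $m^{c,rc}_{e,t,s}$ equals $w^{c,rc}_{e,t,s}\varepsilon^{Y,c,rc}_{e,t,s}$.
\end{itemize}
You track only the treated one; the control one is missing from your accounting.

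The correct bookkeeping is this. Every mean function enters twice with opposite signs, once in a residual piece and once in the $w^{trt,p}_e$ piece. By the two identities above, the corresponding $m$-perturbations are equal and opposite, so they cancel in pairs. The perturbation of $p$ inside the control weights vanishes because it multiplies a residual with zero conditional mean; normalizer perturbations are already included in (i).

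This pairwise cancellation holds at the true nuisances, where the residual centring constants $E[w\varepsilon]$ vanish. It is what justifies the fixed-nuisance calculation that you and the paper both perform. Replace (ii) with it and your argument is complete.
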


\medskip\noindent
Proof in \Cref{sec:eif-dr-estimand-rc}.

\subsubsection{Double Robustness}

\begin{proposition}
	\label{prop:dr-both}
	Fix $e \in \mathcal{E}$ and $t \ge e$, and let $C$ denote a generic
	control-group indicator, where either $C = C^{nev}$ or
	$C = C^{nye}_{e,t}$ and $p(X)$ the corresponding propensity.
	Let $\pi(X)$ be a working model for the propensity $p(X)$.
	In the panel data setting, let
	$\mu^{c,p}_{e,t}(X)$ and $\eta^{c,p}_{e,t}(X)$ be working models of
	$m^{c,p}_{e,t}(X)$ and $g^{c,p}_{e,t}(X)$, respectively.
	Similarly, in the repeated cross-sections setting, let $\mu^{c,rc}_{e,Y}(X)$
	and $\eta^{c,rc}_{e,Y}(X)$ be the working models
	of $m_{e, Y}^{c,rc}(X)$ and $g_{e, D}^{c,rc}(X)$, respectively.
	Assume the conditions of
	\Cref{prop:identification-panel,prop:identification-rc} hold.

	Then the following statements hold:
	\begin{enumerate}[label=(\alph*), nosep]
		\item In the panel-data setting,
		      \[
			      \tau^{dr,p}_{e,t} = LATT(e,t)
		      \]
		      if either (but not necessarily both) $\pi(X) = p(X)$ almost surely, or
		      \[
			      \mu^{c,p}_{e,t}(X) = m^{c,p}_{e,t}(X)
			      \quad \text{and} \quad
			      \eta^{c,p}_{e,t}(X) = g^{c,p}_{e,t}(X)
		      \]
		      almost surely.
		\item In the repeated cross-section setting,
		      \[
			      \tau^{dr,rc}_{e,t} = LATT(e,t)
		      \]
		      if either (but not necessarily both) $\pi(X) = p(X)$ almost surely, or
		      \[
			      \mu_{e,Y}^{c,rc}(X) = m_{e,Y}^{c,rc}(X)
			      \quad \text{and} \quad
			      \eta_{e,Y}^{c,rc}(X) = g_{e,D}^{c,rc}(X)
		      \]
		      almost surely.
	\end{enumerate}
\end{proposition}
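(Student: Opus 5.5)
The plan is to show, separately for the numerator and the denominator of each DR estimand, that under either misspecification scenario the population quantity equals the numerator (resp.\ denominator) of the identified ratio $\tau^{p}_{e,t}$ in \Cref{prop:identification-panel} (resp.\ $\tau^{rc}_{e,t}$ in \Cref{prop:identification-rc}). The conclusion $\tau^{dr}=LATT(e,t)$ then follows by taking the ratio, since those results already equate the ratios with $LATT(e,t)$ and the denominator is nonzero. Because numerator and denominator have identical structure (with $(Y,m,\mu)$ replaced by $(D,g,\eta)$), it suffices to treat a generic outcome $V\in\{\Delta_{t-e+1}Y_t,\Delta_{t-e+1}D_t\}$ with control regression $h(X)=E[V\mid X,C=1]$ and working model $\tilde h$. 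This is essentially the argument of \cite{sazhao} for $ATT_{dr}$, applied twice.

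\emph{Panel case.} The weights are built from $\pi$ in place of $p$, normalized so that $E[w^{trt,p}_e]=E[w^{c,p}_{e,t}]=1$. First compute $E[w^{trt,p}_e\{V-\tilde h(X)\}]=E[V-\tilde h(X)\mid E_e=1]$. If $\tilde h=h$, the treated piece is $E[V-h(X)\mid E_e=1]$. The control piece $E[w^{c}\{V-h(X)\}]$ vanishes, because $w^{c}$ is a function of $(C,X)$ times $C$, and $E[C\{V-h(X)\}\mid X]=0$. This holds regardless of $\pi$, so the expression reduces to $E[E[V\mid X,E_e=1]-h(X)\mid E_e=1]$, the identified numerator. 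If instead $\pi=p$, use the identity $E[C\,p(X)/(1-p(X))\,f(X,V)]=E[E_e f(X,V')]$ with the conditional law of $V$ given $X$ in the control group. This identity comes from Bayes' rule within the slice $E_e+C=1$, where $p(X)=P(E_e=1\mid X,E_e+C=1)$, so that $P(C=1\mid X,\text{slice})=1-p(X)$. It gives $E[w^{c}\{V-\tilde h(X)\}]=E[h(X)-\tilde h(X)\mid E_e=1]$; the normalizing constant equals $E[E_e]=\rho$ by the same identity. Subtracting from the treated piece, the $\tilde h$ terms cancel, leaving $E[m^{trt}(X)-h(X)\mid E_e=1]$ for any $\tilde h$. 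I will apply this to both $(\mu,m)$ and $(\eta,g)$.

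\emph{Repeated cross-sections.} The same two computations apply period by period, using $\mathbf 1\{T=s\}$ for $s\in\{e-1,t\}$ together with the invariance in \Cref{ass:sample:mixture}: $T$ is independent of $(E_e,C,X)$, so $E[\,\cdot\,\mid E_e=1,T=s]$ of functions of $X$ equals $E[\,\cdot\,\mid E_e=1]$. The $\kappa$ terms are therefore zero at the truth. I expect the main obstacle here to be bookkeeping with $\kappa^{Y,rc}_{e,t}$. When $\pi=p$ but the outcome models are wrong, I must check whether the $\kappa$ terms, evaluated at working models, still vanish. By the invariance just noted they vanish identically for any functions of $X$, since conditioning on $T$ does not change the law of $X$ given $E_e=1$. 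With this, the weighted residual terms reproduce the $2\times2$ contrast $E[m^{trt}_{e,t}-m^{trt}_{e,e-1}-(m^{c}_{e,t,t}-m^{c}_{e,t,e-1})\mid E_e=1]$ exactly as in the panel case. In the outcome-model-correct case, the control weighted residuals vanish for each $T=s$ slice by the conditional mean-zero property. The treated weighted residual gives $E[m^{trt}_{e,s}(X)-h_s(X)\mid E_e=1]$ after again using the invariance of $X\mid E_e=1,T=s$.

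Finally I will note that overlap (\Cref{ass:overlap}) guarantees all normalizing expectations are positive and finite, and that the denominator is nonzero under \Cref{ass:monotonicity}, as already required in \Cref{prop:identification-panel,prop:identification-rc}.
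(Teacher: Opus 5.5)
Your proposal is correct and follows essentially the same route as the paper's proof. Both use the same four cases (panel/RC $\times$ propensity-correct/outcome-correct) and treat numerator and denominator identically via LIE cancellations. In both, the normalizing constants reduce to $\rho$ (resp.\ $\rho\lambda_s$) when $\pi=p$, and the $\kappa$ terms vanish for arbitrary functions of $X$ by the invariance of $(E_e,C,X)$ to $T$.

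Your period-by-period handling of the RC propensity-correct case is, if anything, slightly cleaner than the paper's. The paper conditions only on $X$ when killing the $\mu^{c,rc}_{e,Y}$ term, even though that working model depends on $T$.
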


\medskip\noindent
Proof in \Cref{sec:proof-other}.

\subsubsection{Bloom-type result}
In the special case of absorbing treatment (defined below) and
one-sided compliance (i.e., no units unexposed to the instrument are
treated), the $LATT(e,t)$ parameter can be related to the
instrument-exposure-cohort-specific $ATT(g,t)$ parameters of
\cite{csa}.
The following result shows this:
\begin{proposition}[Bloom-type result]
	\label{prop:bloom}
	Assume that panel data is available.
	Let $C$ denote a generic control-group indicator, where either $C = C^{nev}$
	or $C = C^{nye}_{e,t}$.
	Assume that treatment is an absorbing state, i.e. $D_{1} = 0$
	and
	\begin{align}
		\label{eq:treatment-absorbed}
		D_{t-1} = 1 \implies D_{t} = 1,
		\quad
		t \in \{2, 3, \ldots, \mathcal{T}\},
	\end{align}
	and one-sided compliance:
	\begin{align}
		\label{eq:assumption-bloom}
		C = 1 \implies D_{t} = 0.
	\end{align}
	Then:
	\begin{align}
		\label{eq:bloom}
		LATT(e, t)
		=
		\sum_{g \leq t}  ATT(g, t, e)P(G_{g} = 1 \mid D_{t} = 1, E_{e} = 1)
	\end{align}
	where
	\begin{equation}
		\label{eq:csa-objects}
		ATT(g, t, e) := E[Y_t(g)-Y_t(0)\mid G_g=1, E_{e} = 1],
	\end{equation}
	is an exposure-cohort-specific $ATT(g, t)$-estimand of \cite{csa} and
	$G := \min \{t \mid D_{t} = 1\}; \, G_{g} := \mathbf{1}\{G = g\}$
	the corresponding treatment cohort dummies.
	This result shows that the local average treatment effect on the treated in
	period $t$ for exposure cohort $e$, $LATT(e, t)$, under absorbing treatment
	and one-sided compliance, identifies an exposure-cohort-specific weighted
	average of $ATT(g, t)$'s for all those treated by time $t$.
\end{proposition}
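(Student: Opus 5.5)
The plan is to work directly from the definition \cref{eq:latt} of $LATT(e,t)$, using the two extra assumptions to simplify the complier event. One-sided compliance \cref{eq:assumption-bloom} is read at the level of potential treatments, $D_t(\infty)=0$ for the relevant units. Under the parallel-trends-in-treatment assumption (\Cref{ass:cpt-treat-ne} or \Cref{ass:cpt-treat-nye}) and no anticipation (\Cref{ass:no-anticipation}), this transfers to the exposed cohort: $E[D_t(\infty)\mid X, E_e=1]=0$, hence $D_t(\infty)=0$ a.s. on $\{E_e=1\}$.

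Intuitively, the control group's unexposed treatment path is identically zero. Its trend is therefore zero, so the exposed cohort's unexposed treatment trend is zero. Its pre-period level $D_{e-1}(\infty)=D_{e-1}(e)=D_{e-1}$ is also zero, because a unit treated before exposure would, by absorption, show a nonzero trend. If this transfer step turns out to be delicate, I would instead state it explicitly as the one-sided compliance interpretation $D_t(\infty)=0$ on $E_e=1$.

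Given $D_t(\infty)=0$ on $E_e=1$, the complier event reduces as follows:
\[
\{E_e=1,\ D_t(e)>D_t(\infty)\}=\{E_e=1,\ D_t(e)=1\}=\{E_e=1,\ D_t=1\},
\]
since on $E_e=1$ we observe $D_t=D_t(e)$ by \cref{eq:observed-treatment-status}. Hence
\[
LATT(e,t)=E[Y_t(1)-Y_t(0)\mid E_e=1, D_t=1].
\]

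Next I would use absorbing treatment \cref{eq:treatment-absorbed} together with $D_1=0$. These give $\{D_t=1\}=\{G\le t\}=\bigcup_{g\le t}\{G_g=1\}$, a disjoint union. The law of total expectation over $G$ then yields
\[
LATT(e,t)=\sum_{g\le t}E[Y_t(1)-Y_t(0)\mid G_g=1,E_e=1]\,P(G_g=1\mid D_t=1,E_e=1).
\]
Finally, on $\{G=g\}$ with $g\le t$ the unit is treated at $t$. Interpreting $Y_t(g)$ as the potential outcome under treatment path starting at $g$, \Cref{ass:no-carryover} gives $Y_t(g)=Y_t(1)$. Similarly, $Y_t(0)$ for the never-treated path equals $Y_t(0)$ with $d_t=0$. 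Each summand is therefore exactly $ATT(g,t,e)$ as in \cref{eq:csa-objects}, which proves \cref{eq:bloom}.

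The main obstacle is the first step: making precise that one-sided compliance, stated for the control group $C=1$, implies $D_t(\infty)=0$ for the exposed cohort $E_e=1$. This requires handling the conditioning on $X$ and the pre-period level carefully, including the requirement that $D_{e-1}=0$ on $E_e=1$. The remaining steps are bookkeeping with disjoint treatment cohorts and the exclusion restriction.
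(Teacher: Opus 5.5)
Your proof is correct, but it reaches the key quantity $E[Y_t(1)-Y_t(0)\mid D_t=1,E_e=1]$ by a different route from the paper.

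\textbf{The paper's route.} The paper never works with the complier event. It starts from the identified ratio of \Cref{prop:identification-panel}, conditional on $X$. It uses one-sided compliance to write the control-group outcome change as $Y_t(0)-Y_{e-1}(0)$ and to set $D_{e-1}=0$ on $\{E_e=1\}$. It then cancels the untreated-outcome trends, which reduces the numerator and denominator to $E[\tau_tD_t\mid X,E_e=1]$ and $P(D_t=1\mid X,E_e=1)$, and integrates out $X$. From that point the decomposition over $G$ is the same as yours.

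\textbf{Your route.} You work from the structural definition \cref{eq:latt}: once $D_t(\infty)=0$ on $\{E_e=1\}$, the exposed compliers are exactly the exposed units treated at $t$. This buys a cleaner argument that needs neither monotonicity nor outcome parallel trends. The paper attributes its trend cancellation to the exclusion restriction, but that step actually requires \Cref{ass:cpt-unexpout-ne} or \Cref{ass:cpt-unexpout-nye} together with $D_s(\infty)=0$ in both groups. Your route also makes explicit the structural fact the paper uses only implicitly when it asserts $D_{e-1}=0$ for units ``not yet exposed''. In exchange, the paper's route shows directly that the observable Wald-DiD ratio $\tau^{p}_{e,t}$ collapses to the weighted $ATT(g,t,e)$ average, which is what the simulations compare. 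Via \Cref{prop:identification-panel} the two statements coincide.

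\textbf{Tightening the transfer step.} Your justification of the transfer step should be tightened, not abandoned. \Cref{ass:no-anticipation} only equates conditional means, so $D_{e-1}(\infty)=D_{e-1}(e)$ need not hold pointwise. Also, absorption of the exposed cohort's path is not what does the work; the condition $D_1=0$ does. The clean version runs as follows:
\begin{enumerate}
	\item $D_1=0$ and \Cref{ass:no-anticipation} give $E[D_1(\infty)\mid X,E_e=1]=0$.
	\item One-sided compliance and absorption give $D_s=0$ for all $s\le t$ on $\{C=1\}$, so the control group's unexposed treatment trends vanish. For $C=C^{nye}_{e,t}$ this uses no anticipation for the later cohorts, exactly as the paper's identification proof does implicitly.
	\item Summing the treatment parallel-trends equalities over $s=2,\dots,t$ gives $E[D_t(\infty)\mid X,E_e=1]=0$.
	\item Since $D_t(\infty)\in\{0,1\}$, this yields $D_t(\infty)=0$ a.s.\ on $\{E_e=1\}$.
\end{enumerate}
With this, your fallback reinterpretation of one-sided compliance is unnecessary.
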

\medskip\noindent
Proof in \Cref{sec:proof-other}.
\begin{remark}
	\leavevmode
	\begin{enumerate}[nosep]
		\item \Cref{prop:bloom}
		      is a Bloom-type result
		      \citep{bloom1984estimating,ap}
		      for the case of IDiD.
		      Moreover, \cref{eq:bloom} is a ``sensible causal estimand''
		      \citep{mogstad2024instrumental} in the sense that it is a convex
		      combination of
		      the causal $ATT(g, t, e)$ estimands with weights $P(G_{g} = 1 \mid
			      D_{t} = 1, E_{e} = 1)$, $g \leq t$.
		\item \Cref{prop:bloom} is useful in randomized experiments with
		      one-sided \textit{non}-compliance, where treatment received $D_t$ may differ
		      from treatment assignment $Z_t$.
		      In this case, the random assignment $Z_t$ can be used as an instrument
		      for the received treatment $D_t$, allowing researchers to approximate the
		      treatment effect in the presence of hidden confounding, extending the
		      cross-sectional Bloom result, as described in \cite{ap}, to the IDiD
		      framework.
		      In the two-period case ($\mathcal{T}=2$), there is only a single exposed and
		      treatment cohort. By \cref{eq:bloom}, $LATT(2,2)=ATT(2,2,2)$, which
		      coincides with the DiD estimand $ATT(2,2)$, implying that IDiD exactly
		      recovers the average treatment effect on the treated.
		\item The result is also related to the point that \textit{"IDiD is a natural
			      generalization of DiD, and the differences between the two designs
			      arise because IDiD allows the non-compliance of the assigned
			      treatment"} as stated in \cite{miyaji2025}.
		      Here, because of the mechanical one-sided compliance,
		      we can write the $LATT(e, t)$ as a convex average of
		      exposure-cohort-specific staggered DiD estimands, making the link
		      between IDiD with staggered exposure and DiD with staggered treatment
		      explicit.
		\item An analogous result obtains under the one-sided compliance condition,
		      $E_{e} = 1 \implies D_{t} = 1$. In this case, $LATT(e,t)$ admits a
		      representation as a convex average of treatment effects for the
		      non-treated (cf. \cite{ap}); this result is omitted for brevity.
		\item The result extends to repeated cross-sections with analogous
		      arguments; details are again omitted for brevity.
	\end{enumerate}
\end{remark}

\subsubsection{Group-level difference estimand}
Researchers often compare the target parameter across subgroups. This
can be handled explicitly by expressing subgroup parameters as functionals of the
full population distribution. Differences are then represented by
embedding the corresponding influence functions in a common
full-sample framework, weighted by the group indicator and its
probability.
To see this, let $B \in \{m, f\}$ denote a group indicator.
We consider the difference in subgroup-specific LATTs,
$LATT^{\Delta}(e,t) := LATT^{m}(e,t) - LATT^{f}(e,t)$,
for example based on the DR estimand in the panel data setting:
\begin{align}
	\label{eq:dr-estimand-group}
	\tau^{dr, p, \Delta}_{e, t}
	:= \tau^{dr, p, m}_{e, t} - \tau^{dr, p, f}_{e, t},
\end{align}
where each component admits an influence function under the corresponding
subgroup distribution,
$\varphi^{dr, p, m}(O; \tau^{dr, p, m}_{e, t}, \eta^{m})$
and
$\varphi^{dr, p, f}(O; \tau^{dr, p, f}_{e, t}, \eta^{f})$,
respectively.
These subgroup influence functions can be embedded into the full population as:
\begin{align}
	\label{eq:dr-if-group}
	 & \varphi^{dr, p}(O; \tau^{dr, p, \Delta}_{e, t}, \eta^{dr, p, \Delta}_{e, t})
	\\
	\nonumber
	 & =  \frac{\mathbf{1}\{B = m\}}{P(B = m)}
	\varphi^{dr, p, m}(O; \tau^{dr, p, m}_{e, t}, \eta^{dr, p, m}_{e, t})
	- \frac{\mathbf{1}\{B = f\}}{P(B = f)}
	\varphi^{dr, p, f}(O; \tau^{dr, p, f}_{e, t}, \eta^{dr, p, f}_{e, t})
\end{align}
where $\eta^{dr, p, \Delta}_{e, t} = (\eta^{dr, p, m}_{e, t}, \eta^{dr, p,
		f}_{e, t})$ collects the subgroup-specific nuisance functions.
Thus, inference on \cref{eq:dr-estimand-group} can be based on
\cref{eq:dr-if-group}.
The group-level difference estimand
in the case of repeated cross-sections is constructed in the exact same way and
is omitted for brevity.

\subsection{Aggregated effects}
\label{sec:weighting}
As shown in \cite{miyaji}, the $LATT(e,t)$ parameters can be aggregated into
estimands of interest, in a manner entirely analogous to the
aggregation of $ATT(g,t)$ parameters in \cite{csa}.
In this section, we apply the EIF machinery developed in
\Cref{sec:influence-functions} to derive the influence function for a
general weighted estimand $\theta$, defined below.

Consider an estimand that aggregates the $LATT(e,t)$ parameters:
\begin{align}
	\label{eq:thetaiv}
	\theta
	:=
	\sum_{e \in \mathcal{E}} \sum_{t = 2}^{\mathcal{T}}
	w(e,t)\, LATT(e,t),
\end{align}
where $w(e,t)$ denotes a (possibly data-dependent) weighting scheme.
To ease notation, define $\tau_{e, t} := LATT(e, t)$.
Let $\mathcal{I} :=
	\{(e, t) \in \mathcal{E} \times \{2, 3, \ldots, \mathcal{T}\} \mid
	A_{e, t} \}$
be a set of indices that picks out the correct cohorts and time indices
related to the summary parameter $\theta$.
By the product rule \cref{eq:if-product-rule},
the EIF of the weighted estimand \cref{eq:thetaiv} equals:
\begin{align}
	\label{eq:if-agg-estimand}
	\varphi^{\theta}(O; \{\tau_{i}\}_{i \in \mathcal{I}};
	\{\eta_{i}\}_{i \in \mathcal{I}})
	:=
	\sum_{(e, t) \in \mathcal{I}}
	\left[
	\mathbb{IF}(w(e, t))\tau_{e, t}+ w(e, t)\mathbb{IF}(\tau_{e, t})
	\right].
\end{align}

\paragraph{Weighted estimands and weights}
Define the cohort-specific average exposed effect on the treated in the first
stage, i.e., the share of the compliers in cohort $e$ in period $t$:
\begin{align}
	\label{eq:aet}
	AET(e, t)
	:= P(D_{t}(e) > D_{t}(\infty) \mid E = 1).
\end{align}
\Cref{tab:weighted-estimands} shows the weighted estimands and their
corresponding weights. For comparison, we also include the same aggregation
procedure as in \cite{csa}, but without the cohort-specific time-varying
complier weights. This allows researchers to choose whether to weight by
complier shares or not, although the complier-weighted version is the most
natural when the target parameter is the LATT.

The weights are themselves estimands, so the tools developed in
\Cref{sec:eif-derivation} can be applied to derive influence functions
for each weight in \Cref{tab:weighted-estimands}. With these in hand, we
apply \cref{eq:if-agg-estimand} to obtain the influence function of the
weighted estimand.

As is evident from \Cref{tab:weighted-estimands}, most of the weights
consist of sums of cohort probabilities conditional on different events, and the
time-varying complier share $AET(e, t)$.
For instance, for the weighted estimand $\theta^{o,IV}_{W}$, the main component
of the weight has influence function
(using the quotient rule \cref{eq:if-quotient-rule-estimand})
\begin{align}
	\label{eq:if-component-event}
	 & \mathbb{IF}(P(E = e \mid E \leq \mathcal{T}))
	\\
	\nonumber
	 & =
	\frac{1}{P(E \leq \mathcal{T})}
	\left[
		\mathbb{IF}\left(P(E = e)\right)
		- P(E = e \mid E \leq \mathcal{T})
		\mathbb{IF}\left(P(E \leq \mathcal{T})\right)
		\right]
\end{align}
with
$\mathbb{IF}\left(P(E \leq \mathcal{T})\right) = \sum_{e \leq \mathcal{T}}
	\mathbb{IF}\left(P(E = e)\right)$
and $\mathbb{IF}\left(P(E = e)\right) = \mathbf{1}\{E = e\} - P(E = e)$
cf. \cref{eq:if-block-prob}.
The influence function of the weight can now be derived using
\cref{eq:if-component-event} and the quotient-rule
\cref{eq:if-quotient-rule-estimand} summing the correct terms.

\medskip
\begin{threeparttable}

	\centering
	\small
	\begin{tabular}{l l}
	\hline
	Weighted Estimand & $w(e,t)$                                               \\
	\hline
	\multicolumn{2}{l}{\textbf{IDiD}}                                          \\

	$\theta^{IV}_{es}(l)$
	                  & $\mathbf{1}\{e+l \le \mathcal{T}\}\mathbf{1}\{t=e+l\}
		P(E=e \mid E+l \le \mathcal{T})
		\frac{AET(e,t)}{
			\sum_{e \in \mathcal{E}} P(E=e \mid E+l \le \mathcal{T})AET(e,t)
	}$                                                                         \\

	$\theta^{IV}_{bal,es}(l,l')$
	                  & $\mathbf{1}\{e+l' \le \mathcal{T}\}\mathbf{1}\{t=e+l\}
		P(E=e \mid E+l' \le \mathcal{T})
		\frac{AET(e,t)}{
			\sum_{e \in \mathcal{E}} P(E=e \mid E+l' \le \mathcal{T})AET(e,t)
	}$                                                                         \\

	$\theta^{IV}_{sel}(\tilde{e})$
	                  & $\mathbf{1}\{t \ge e\}\mathbf{1}\{e=\tilde{e}\}
	\frac{AET(e,t)}{\sum_{t=\tilde{e}}^{\mathcal{T}} AET(e,t)}$                \\

	$\theta^{IV}_{c}(\tilde{t})$
	                  & $\mathbf{1}\{t \ge e\}\mathbf{1}\{t=\tilde{t}\}
		P(E=e \mid E \le t)
		\frac{AET(e,t)}{
			\sum_{e \in \mathcal{E}} P(E=e \mid E \le t)AET(e,t)
	}$                                                                         \\

	$\theta^{cumm,IV}_{c}(\tilde{t})$
	                  & $\mathbf{1}\{t \ge e\}\mathbf{1}\{t \le \tilde{t}\}
		P(E=e \mid E \le t)
		\frac{AET(e,t)}{
			\sum_{e \in \mathcal{E}} P(E=e \mid E \le t)AET(e,t)
	}$                                                                         \\

	$\theta^{o,IV}_{W}$
	                  & $\mathbf{1}\{t \ge e\} P(E=e \mid E \le \mathcal{T})
		\Big/
		\sum_{e \in \mathcal{E}} \sum_{t=2}^{\mathcal{T}}
	\mathbf{1}\{t \ge e\} P(E=e \mid E \le \mathcal{T})$                       \\

	$\theta^{o,IV}_{sel}$
	                  & $\mathbf{1}\{t \ge e\} P(E=e \mid E \le \mathcal{T})
	\frac{AET(e,t)}{\sum_{t=e}^{\mathcal{T}} AET(e,t)}$                        \\

	\hline
	\multicolumn{2}{l}{\textbf{CSA}}                                           \\

	$\theta_{es}(l)$
	                  & $\mathbf{1}\{e+l \le \mathcal{T}\}\mathbf{1}\{t=e+l\}
	P(E = e \mid E+l \le \mathcal{T})$                                         \\

	$\theta_{bal,es}(l,l')$
	                  & $\mathbf{1}\{e+l' \le \mathcal{T}\}\mathbf{1}\{t=e+l\}
	P(E = e \mid E+l' \le \mathcal{T})$                                        \\

	$\theta_{sel}(\tilde{e})$
	                  & $\mathbf{1}\{t \ge e\}
	\mathbf{1}\{e=\tilde{e}\}/(\mathcal{T}-e+1)$                               \\

	$\theta_{c}(\tilde{t})$
	                  & $\mathbf{1}\{t \ge e\}\mathbf{1}\{t=\tilde{t}\}
	P(E = e \mid E \le t)$                                                     \\

	$\theta_{c}^{cumm}(\tilde{t})$
	                  & $\mathbf{1}\{t \ge e\}\mathbf{1}\{t \le \tilde{t}\}
	P(E = e \mid E \le t)$                                                     \\


	$\theta^{o}_{sel}$
	                  & $\mathbf{1}\{t \ge e\} P(E = e \mid E \le \mathcal{T})
	/(\mathcal{T}-e+1)$                                                        \\

	\hline
\end{tabular}

	\caption{Weighted estimands and their weights}

	\begin{tablenotes}
		\footnotesize
		\item The table shows different versions of \cref{eq:thetaiv}
		and the corresponding weights $w(e, t)$.
		The upper IDiD part of the table is a modification of the table in
		\cite{miyaji}.
		The lower part is a modification of the one in \cite{csa} tailored to the
		IDiD setting.
		Note that the
		$\theta^{IV}_{es}(l)$, $\theta^{bal,IV}_{es}(l,l')$,
		$\theta^{IV}_{c}(\tilde{t})$,
		have the group probability multiplied in the denominator compared to \cite{miyaji}
		in order for the weights to sum to $1$ (and likewise for
		$\theta^{cumm,IV}_{c}(\tilde{t})$ for it to sum to $\mathcal{T} - 1$).
		The $\theta^{o,IV}_{W}$ parameter equals $\theta^{o}_{W}$, so the
		latter is omitted. The indicators for cohort $e$ and time $t$ are
		also implicitly included in the denominator notation.
	\end{tablenotes}
	\label{tab:weighted-estimands}
\end{threeparttable}

\paragraph{Event study at horizon $l$ parameter, $\theta_{es}^{IV}(l)$}
Here, we focus on the event study at horizon $l$ target parameter, which
aggregates the $LATT(e, t)$ parameter at time $t = e + l$, i.e., $l$ periods
after exposure to the instrument.
The target parameter is:
\begin{align}
	\label{eq:esl-estimand}
	\theta_{es}^{IV}(l)
	 & :=
	\sum_{e \in \mathcal{E}} \sum_{t = 2}^{\mathcal{T}}
	w^{IV}_{es(l)}(e, t) \tau_{e, t}
\end{align}
where the weight $w^{IV}_{es(l)}(e, t)$ is defined in the upper part of
\Cref{tab:weighted-estimands}.
\paragraph{Influence function of $\theta_{es}^{IV}(l)$}
To derive the influence function of the $\theta_{es}^{IV}(l)$ estimand,
we apply the product rule \cref{eq:if-product-rule} and derive the influence
function of each component separately.
First, we find the influence function of the weight $w^{IV}_{es(l)}(e, t)$.
We apply the product rule \cref{eq:if-product-rule} to get the
EIF of a generic term in the numerator and denominator, and then use the
quotient rule to get the EIF of the weight.
Towards this end, define
\begin{align*}
	\varphi^{w^{IV}_{es(l)}, num}_{e, t}
	 & :=
	\mathbb{IF}(
	P(E = e \mid E + l \leq \mathcal{T}) AET(e, t)
	)
	\\
	 & =
	\mathbb{IF}(P(E = e \mid E + l \leq \mathcal{T})) AET(e, t)
	+ P(E = e \mid E + l \leq \mathcal{T})\mathbb{IF}(AET(e, t)).
\end{align*}
For the conditional probability we can use the block \cref{eq:if-block-cprob},
and the EIF for $AET(e, t)$ is the IF of the denominator for our (estimable) target
parameter at hand.
Hence, applying the quotient rule gives:
\begin{align}
	\label{eq:if-weight-esl}
	 & \varphi^{w^{IV}_{es(l)}}_{e, t}(O; w^{IV}_{es(l)}(e, t), \eta)
	\\
	\nonumber
	 & :=
	\frac{\mathbf{1}\{e + l \leq \mathcal{T}, t = e + l\}}{
		\sum_{e \in \mathcal{E}}
		P(E = e \mid E + l \leq \mathcal{T})AET(e, t)
	}
	\Big[
		\varphi^{w^{IV}_{es(l)}, num}_{e, t}
		- w^{IV}_{es(l)}(e, t)
		\sum_{e \in \mathcal{E}}
		\varphi^{w^{IV}_{es(l)}, num}_{e, t}
		\Big].
\end{align}
Let $\mathcal{I}_{l} :=
	\{(e, t) \in \mathcal{E} \times \{2, 3, \ldots, \mathcal{T}\} \mid
	e + l \leq \mathcal{T} \wedge t = e + l\}$.
Thus, using \cref{eq:if-weight-esl},
and writing $\varphi_{e, t}(O; \tau_{e, t}, \eta)$
for the influence function of the generic $\tau_{e, t}$,
we get using \cref{eq:if-agg-estimand}:
\begin{align}
	\label{eq:if-esl}
	 & \varphi^{\theta_{es}(l)}(O; \{\tau_{i}\}_{i \in \mathcal{I}_{l}};
	\{\eta_{i}\}_{i \in \mathcal{I}})
	\\
	 & :=
	\nonumber
	\sum_{(e, t) \in \mathcal{I}_{l}}
	\left[
	\varphi^{w^{IV}_{es(l)}}_{e, t}(O; w^{IV}_{es(l)}(e, t), \eta)\tau_{e, t}
	+ w^{IV}_{es(l)}(e, t)\varphi_{e, t}(O; \tau_{e, t}, \eta)
	\right].
\end{align}
\Cref{eq:if-esl} is the influence function we use to conduct inference with on
the horizon $l$ parameters, $\theta^{IV}_{es(l)}$.

Note that we chose $\theta^{IV}_{es(l)}$ over $\theta^{IV}_{bal,es}(l,l')$ for
simplicity.
The point on compositional changes making the interpretation of
$\theta^{IV}_{es(l)}$ harder, as pointed out in \cite{csa}, still applies,
and it is up to the econometrician to make the trade-off of interpretability vs
efficiency (more observations).

\paragraph{Multiplier Bootstrap and Simultaneous Confidence Bands}
Researchers often interpret multiple $\theta^{IV}_{es(l)}$ jointly.
In this case, simultaneous confidence bands are appropriate.
\cite{csa} show how to construct such bands via the multiplier bootstrap
and establish validity for the vector of $ATT(g,t)$ in their Theorem~3.
An analogous result applies to the vector of $LATT(e,t)$, allowing us to
use their Algorithm~1 for inference on our parameters of interest.
In particular, their Corollary~1 implies that the resulting bands
achieve correct asymptotic coverage, so simultaneous inference on
$\{\theta^{IV}_{es}(l) : l \in \{0,1,\ldots,h\}\}$ is valid.
The multiplier bootstrap is applied to the influence function
\cref{eq:if-esl} for each $l$.
We verify the procedure in \Cref{sec:simexp2}.

\section{Estimation and inference}
\label{sec:estimation-and-inference}

We consider plug-in estimators of $\tau^{dr,p}_{e,t}$ and
$\tau^{dr,rc}_{e,t}$ in
\cref{eq:dr-estimand-panel,eq:dr-estimand-rc}. We take two approaches.
First, we estimate the nuisance functions using simple parametric models
(e.g.\ linear or logistic regression) and invoke Donsker conditions to control
the empirical process term. Second, we allow for flexible machine learning
methods, avoid empirical process assumptions, and instead use DML.
We show below that the resulting DML estimators coincide with the cross-fitted estimator of \cite{kennedy}.

\paragraph{Asymptotically linear representation}
The plug-in estimators of $\tau^{dr,p}_{e,t}$ and $\tau^{dr,rc}_{e,t}$
admit the decomposition in \Cref{eq:decomposition}.
\cite{kennedy} shows that the general plug-in estimator has first-order bias
equal to $- \int \varphi(\cdot; \hat{\tau}, \hat{\eta}) \, dP$. This bias can
be removed by imposing the moment condition
$P_n[\varphi(\cdot; \hat{\tau}, \hat{\eta})] = 0$ and solving for the estimator,
yielding an estimating equation estimator\footnote{
	We show in \Cref{sec:dml} that the cross-fitted estimating equation estimator
	coincides with the DML estimator under the \texttt{DML2} algorithm;
	see \Cref{remark:comparing-dml-cross-fit}.
}\footnote{
	Alternative approaches include one-step estimators and targeted maximum
	likelihood estimation (TMLE); see \cite{kennedy}.
}.

As discussed in \Cref{sec:identification-panel}, the estimators based on the DR
estimands in this paper are of this type and therefore eliminate the
first-order bias by construction\footnote{
	Although the plug-in estimator of \cref{eq:dr-estimand-panel} solves the
	estimating equation based on \cref{eq:if-latt-panel},
	$P_n \varphi^{p}(O; \hat{\tau}^{dr,p}_{e,t}, \hat{\eta}^{p}_{e,t}) = 0$,
	rather than the IF
	$\varphi^{dr,p}(\cdot; \tau^{dr,p}_{e,t}, \eta^{dr,p}_{e,t})$ in
	\cref{eq:if-dr-estimand-panel}, the plug-in bias is still zero since
	$P_n \varphi^{dr,p}(O; \hat{\tau}^{dr,p}_{e,t}, \hat{\eta}^{dr,p}_{e,t}) = 0$
	as well. The same holds for repeated cross-sections. This follows by direct
	calculation; see \Cref{eq:plug-in-extra}.
}.
This justifies the decomposition in \cref{eq:decomposition} for our plug-in
estimators. To obtain the asymptotic linear representation
\begin{align}
	\label{eq:decomposition-alin}
	\sqrt{n}(\hat{\tau} - \tau)
	=
	\sqrt{n}(P_n - P)\varphi(\cdot; \tau, \eta)
	+ o_P(1),
\end{align}
it remains to control the empirical process and remainder terms. Specifically,
to go from \cref{eq:decomposition} to \cref{eq:decomposition-alin}
we require
\begin{align}
	\label{eq:conditions-alin}
	\sqrt{n}(P_n - P)
	\big[
		\varphi(\cdot; \tau, \hat{\eta})
		-
		\varphi(\cdot; \tau, \eta)
		\big]
	= o_P(1),
	\quad
	\sqrt{n} P \varphi(\cdot; \tau, \hat{\eta}) = o_P(1).
\end{align}
The first condition is ensured by cross-fitting or Donsker assumptions, while
the second requires case-specific arguments \citep{kennedy}, formalized in
\Cref{prop:remainder-term-panel,prop:remainder-term-rc}.

\subsection{Remainder terms}
The following proposition provides conditions under which the remainder
term is asymptotically negligible, i.e., satisfies the right-hand side of
\cref{eq:conditions-alin}, in the panel data setting.

\begin{proposition}[Remainder term: panel data]
	\label{prop:remainder-term-panel}
	Suppose the identification in \Cref{prop:identification-panel} holds,
	and that the second-order rate condition holds:
	\begin{align*}
		\Big\Vert
		\frac{p}{1 - p}
		-
		\frac{\hat{p}}{1 - \hat{p}}
		\Big\Vert_{2, P}
		\Big\Vert
		m_{e, t}^{c, p} - \hat{m}_{e, t}^{c, p}
		\Big\Vert_{2, P}
		 & =
		o_{P}(n^{-1/2}),
		\\
		\Big\Vert
		\frac{p}{1 - p}
		-
		\frac{\hat{p}}{1 - \hat{p}}
		\Big\Vert_{2, P}
		\Big\Vert
		g_{e, t}^{c, p} - \hat{g}_{e, t}^{c, p}
		\Big\Vert_{2, P}
		 & =
		o_{P}(n^{-1/2}).
	\end{align*}
	Then the remainder term for the influence function
	\cref{eq:if-dr-estimand-panel} satisfies
	\begin{align}
		\label{eq:remainder-term-panel}
		\sqrt{n}\,
		P \varphi^{dr, p}(\cdot; \tau^{dr, p}_{e, t}, \hat{\eta}^{dr, p}_{e, t})
		=
		o_{P}(1).
	\end{align}
\end{proposition}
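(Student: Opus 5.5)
The plan is to compute $P\varphi^{dr,p}(\cdot;\tau^{dr,p}_{e,t},\hat\eta)$ directly. The ratio structure lets us reduce the computation to two ATT-type remainders, one for the outcome $\Delta_{t-e+1}Y_t$ and one for the treatment $\Delta_{t-e+1}D_t$. Write the numerator contrast at estimated nuisances as
\[
N(\hat\eta) := P\big[\{w^{trt,p}_e - \hat w^{c,p}_{e,t}\}\{\Delta_{t-e+1}Y_t - \hat m^{c,p}_{e,t}(X)\}\big],
\]
and define $D(\hat\eta)$ analogously with $\Delta_{t-e+1}D_t$ and $\hat g^{c,p}_{e,t}$. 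The centering terms $E[w\,\varepsilon]$ in \cref{eq:if-dr-estimand-panel} integrate against the weights, and the weights have mean one. In the normalized Hájek form they therefore cancel, up to the normalization of $\hat w^{c,p}$, which I will handle as in \cite{sazhao}: the normalizing constant is a smooth functional of $\hat p$, and I treat it as part of $\hat\eta$. This leaves
\[
P\varphi^{dr,p}(\cdot;\tau,\hat\eta) = \frac{1}{\tau^{den}}\big[N(\hat\eta) - \tau\, D(\hat\eta)\big],
\]
and since $\tau = N(\eta)/D(\eta)$ this equals $\frac{1}{\tau^{den}}\big[(N(\hat\eta)-N(\eta)) - \tau\,(D(\hat\eta)-D(\eta))\big]$. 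Overlap (\Cref{ass:overlap}) together with a nondegenerate first stage bounds $1/\tau^{den}$ away from infinity, so it suffices to show that $N(\hat\eta)-N(\eta)$ and $D(\hat\eta)-D(\eta)$ are each $o_P(n^{-1/2})$. The two terms are identical in structure, so I treat only the outcome one.

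For the numerator, condition on $X$ and use iterated expectations.

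\emph{Treated part.} This equals $E[E_e\{m^{trt,p}_{e,t}(X)-\hat m^{c,p}_{e,t}(X)\}]/\rho$.

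\emph{Control part.} Write $r=p/(1-p)$ and $\hat r=\hat p/(1-\hat p)$, and let $\kappa := E[C\,r(X)]$ and $\hat\kappa := E[C\,\hat r(X)]$ be the normalizing constants of the true and estimated control weights. Because $P(C=1\mid X)\,r(X)=P(E_e=1\mid X)$ on the relevant $2\times2$ slice, we have $\kappa=\rho$. Using $E[C\,\Delta Y\mid X]=P(C=1\mid X)\,m^{c,p}(X)$, the control part becomes
\[
\hat\kappa^{-1}E\big[C\,\hat r(X)\{m^{c,p}_{e,t}(X)-\hat m^{c,p}_{e,t}(X)\}\big].
\]
Subtracting the truth $N(\eta)=E[E_e\{m^{trt,p}_{e,t}-m^{c,p}_{e,t}\}]/\rho$ and regrouping gives
\[
N(\hat\eta)-N(\eta) = E\Big[C\Big\{\frac{r}{\rho}-\frac{\hat r}{\hat\kappa}\Big\}\{m^{c,p}_{e,t}-\hat m^{c,p}_{e,t}\}\Big].
\]
Split $\frac{r}{\rho}-\frac{\hat r}{\hat\kappa}$ into $\frac{r-\hat r}{\rho}+\hat r\big(\frac1\rho-\frac1{\hat\kappa}\big)$.

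\emph{First piece.} By Cauchy–Schwarz and $C\le1$, it is bounded by $\rho^{-1}\Vert r-\hat r\Vert_{2,P}\Vert m^{c,p}-\hat m^{c,p}\Vert_{2,P}=o_P(n^{-1/2})$, using the stated rate condition.

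\emph{Second piece.} It carries the factor $\hat\kappa-\rho=E[C(\hat r-r)]$, which is not second order on its own. This is the main obstacle. The normalization must be handled so that the product structure survives. My first attempt is to expand more carefully. With Hájek weights the control mean satisfies
\[
E[\hat w^{c,p}(\Delta Y-\hat m)] = E\big[C\hat r\,(m^{c,p}-\hat m)\big]/\hat\kappa,
\]
and the comparison should be made against $E[E_e(m^{c,p}-\hat m)]/\rho$, not against the truth directly. The difference is
\[
E\big[C(\hat r/\hat\kappa-r/\rho)(m^{c,p}-\hat m)\big],
\]
and the factor $(m^{c,p}-\hat m)$ multiplies everything. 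The normalization error therefore appears as $(\hat\kappa^{-1}-\rho^{-1})\,E[C\hat r(m^{c,p}-\hat m)]$, a product of an $O_P(\Vert r-\hat r\Vert)$ term and an $O_P(\Vert m^{c,p}-\hat m\Vert)$ term. This is again $o_P(n^{-1/2})$ by the rate condition, provided $\hat r$ is bounded, which follows from the overlap condition imposed on $\hat p$. If instead $\hat\kappa$ is the sample normalizer, its extra fluctuation is $O_P(n^{-1/2})$ times $\Vert m-\hat m\Vert=o_P(1)$, which is also negligible.

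The same argument with $g^{c,p}$ in place of $m^{c,p}$ controls $D(\hat\eta)-D(\eta)$ under the second rate condition. Combining the two bounds and multiplying by $\sqrt n$ gives \cref{eq:remainder-term-panel}.
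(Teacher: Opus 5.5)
Your proposal is correct and follows the paper's proof. Both arguments use iterated expectations to reduce the numerator remainder to $E[(\hat w^{trt,p}_{e}-\hat w^{c,p}_{e,t})\{m^{c,p}_{e,t}(X)-\hat m^{c,p}_{e,t}(X)\}]$, which after conditioning on $X$ is your $E[C\{r/\rho-\hat r/\hat\kappa\}\{m^{c,p}_{e,t}-\hat m^{c,p}_{e,t}\}]$; both treat the denominator the same way and finish with Cauchy--Schwarz. Your split $\frac{r-\hat r}{\rho}+\hat r\bigl(\frac{1}{\rho}-\frac{1}{\hat\kappa}\bigr)$ shows that the normalization mismatch is itself of product order, assuming bounded $\hat r$ and, for sample normalizers such as $\hat\rho$, $\Vert m^{c,p}_{e,t}-\hat m^{c,p}_{e,t}\Vert_{2,P}=o_P(1)$. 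This supplies the step the paper leaves implicit when it passes from its bound in terms of $\Vert \hat\rho^{-1}\frac{p}{1-p}-\hat\rho_0^{-1}\frac{\hat p}{1-\hat p}\Vert_{2,P}$ to the stated rate condition.
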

\medskip\noindent
Proof in \Cref{sec:remainder-panel}.

\medskip\noindent
The following proposition provides conditions under which the remainder
term is asymptotically negligible, i.e., satisfies the right-hand side of
\cref{eq:conditions-alin}, in the repeated cross-sections setting.
\begin{proposition}[Remainder term: repeated cross-sections]
	\label{prop:remainder-term-rc}
	Suppose the identification in \Cref{prop:identification-rc} holds.
	Moreover, assume:
	\begin{enumerate}
		\item (Treated nuisance consistency)
		      \begin{align}
			      \label{eq:treated-nuisance-rc-conditions}
			      \Vert m^{trt, rc}_{e,t} - \hat{m}^{trt, rc}_{e,t} \Vert_{2,P}
			      = o_P(1),
			      \quad
			      \Vert g^{trt, rc}_{e,t} - \hat{g}^{trt, rc}_{e,t} \Vert_{2,P}
			      = o_P(1),
		      \end{align}
		      and similarly for $t = e-1$.

		\item (Weight fluctuations)
		      \begin{align*}
			      \left|
			      \frac{\lambda_t}{\hat{\rho}^{trt,rc}_{e,t}} - \frac{1}{\hat{\rho}}
			      \right|
			      = O_P(n^{-1/2}),
			      \quad
			      \left|
			      \frac{\lambda_{e-1}}{\hat{\rho}^{trt,rc}_{e,e-1}} - \frac{1}{\hat{\rho}}
			      \right|
			      = O_P(n^{-1/2}),
		      \end{align*}
		      where
		      $\hat{\rho}^{trt, rc}_{e, t} :=  P_{n}[E_{e} \cdot \mathbf{1}\{T = t\}]$
		      and $\hat{\rho} := P_{n}[E_{e}]$
		      are estimators of the denominators
		      of the weights in
		      \cref{eq:weights-trt-rc-norm,eq:weights1-panel-norm},
		      respectively.
		\item (Second-order rate condition)
		      \begin{align*}
			      \Big\Vert
			      \frac{p}{1 - p}
			      -
			      \frac{\hat{p}}{1 - \hat{p}}
			      \Big\Vert_{2, P}
			      \Big\Vert
			      m^{c, rc}_{e,t} - \hat{m}^{c, rc}_{e,t}
			      \Big\Vert_{2, P}
			       & = o_P(n^{-1/2})
			      \\
			      \Big\Vert
			      \frac{p}{1 - p}
			      -
			      \frac{\hat{p}}{1 - \hat{p}}
			      \Big\Vert_{2, P}
			      \Big\Vert
			      g^{c, rc}_{e,t} - \hat{g}^{c, rc}_{e,t}
			      \Big\Vert_{2, P}
			       & = o_P(n^{-1/2}),
		      \end{align*}
		      and similarly for $t = e-1$.
	\end{enumerate}
	Then
	\begin{align}
		\label{eq:remainder-term-rc}
		\sqrt{n}\,
		P \varphi^{dr, rc}(\cdot; \tau^{dr, rc}_{e,t}, \hat{\eta}^{dr, rc}_{e,t})
		= o_P(1).
	\end{align}
\end{proposition}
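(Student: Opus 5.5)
The plan is to compute $P\varphi^{dr,rc}(\cdot;\tau^{dr,rc}_{e,t},\hat\eta^{dr,rc}_{e,t})$ explicitly, using the ratio structure in \cref{eq:if-dr-estimand-rc}. Since $\tau^{dr,rc,den}_{e,t}$ is bounded away from zero (the complier share), it suffices to show
\[
\sqrt n\,P\varphi^{dr,rc}_{num}(\cdot;\hat\eta)=o_P(1)
\quad\text{and}\quad
\sqrt n\,P\varphi^{dr,rc}_{den}(\cdot;\hat\eta)=o_P(1).
\]
The factor $\tau^{dr,rc}_{e,t}$ multiplying the second term is fixed, so these two bounds combine directly. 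The numerator and denominator components have identical structure, with $Y,m$ in one and $D,g$ in the other. I will therefore treat only $\varphi^{dr,rc}_{num}$ and obtain the $D$-version by relabelling, using the $g$-halves of assumptions 1 and 3.

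I would split $\varphi^{dr,rc}_{num}$ into three blocks: the treated-residual block, the control-residual block, and the plug-in $w^{trt,p}_e\{\cdot\}$ block.

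\textbf{Control block.} Consider $w^{c,rc}_{e,t,s}\{\varepsilon^{Y,c,rc}_{e,t,s}-E[\cdots]\}$ evaluated at $\hat\eta$, with the weights normalized by their (estimated) expectation. Its $P$-mean is the mean of the weighted residual minus the centering, and it should reduce to a bias term of the usual DR form. Conditioning on $X$, $T=s$ and $C+E_e=1$ gives
\[
E\Big[C\,\mathbf 1\{T=s\}\Big(\tfrac{\hat p}{1-\hat p}-\tfrac{p}{1-p}\Big)\big(m^{c,rc}_{e,t,s}-\hat m^{c,rc}_{e,t,s}\big)\Big]\big/\text{normalizer}
\;+\;E\Big[\tfrac{E_e \mathbf 1\{T=s\}}{\rho^{trt,rc}_{e,s}}\big(m^{c,rc}_{e,t,s}-\hat m^{c,rc}_{e,t,s}\big)\Big].
\]
This uses $E[C\,p/(1-p)\mid X]\propto E[E_e\mid X]$, the invariance of $(E,C,X)$ to $T$ in \cref{ass:sample:mixture}, and the overlap condition \cref{ass:overlap}. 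By Cauchy--Schwarz, the first term is bounded by the product of rates in assumption 3, and so is $o_P(n^{-1/2})$.

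\textbf{Cancelling the leftover term.} The second, first-order term must cancel against the part of the plug-in block involving $\hat m^{c,rc}_{e,t,\Delta}$, namely $w^{trt,p}_e\{\hat m^{c,rc}_{e,t,\Delta}-E[\cdots]\}$. This block, however, is weighted by $E_e/\hat\rho$ rather than $E_e\mathbf 1\{T=s\}/\hat\rho^{trt,rc}_{e,s}$. Because $T$ is independent of $(E_e,X)$, the two expectations coincide up to the factor $\lambda_s/\hat\rho^{trt,rc}_{e,s}-1/\hat\rho$. Assumption 2 makes this factor $O_P(n^{-1/2})$, and it multiplies $\|m-\hat m\|_{2,P}=o_P(1)$, so the mismatch is $o_P(n^{-1/2})$.

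\textbf{Treated block.} The treated residual block is handled the same way. $E[w^{trt,rc}_{e,s}\hat\varepsilon^{Y,trt,rc}_{e,s}]$ equals a weighted mean of $m^{trt}-\hat m^{trt}$. This cancels the $\hat m^{trt,rc}_{e,\Delta}$ piece of the plug-in block, again up to the weight mismatch controlled by assumptions 2 and 1. No propensity enters here, so only consistency is needed. The centering terms $E[\cdot]$ built into each summand make the remaining constants mean-zero.

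The main obstacle is the bookkeeping of the centered terms. One must check that, once every expectation is evaluated at $\hat\eta$, the terms not absorbed by the cancellations above are exactly zero. This requires care because $\tau^{dr,rc}_{e,t}$ is kept at its true value while the weights and regressions are estimated. I would first verify this in the simpler panel case (\Cref{prop:remainder-term-panel}) and then mirror that computation period by period for $s\in\{e-1,t\}$.
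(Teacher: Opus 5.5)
Your architecture is the paper's. You split $P\varphi^{dr,rc}$ into numerator and denominator and use the LIE to turn each weighted residual into a weighted regression error. These then cancel against the $w^{trt,p}_e$ block, leaving products of a weight mismatch and $m-\hat m$: the treated ones are bounded by condition~2 times condition~1, the control ones by a propensity--outcome product. (The paper pairs each residual weight directly with $\hat w^{trt,p}_e$; your intermediate treated weight is an inessential detour.)

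The gap is in the step you defer as ``bookkeeping'', and your description of it is wrong. The centering constants must stay at their true values. The residual centerings $E[w\,\varepsilon]$ are then exactly zero by the LIE, and the plug-in centerings are the components of $\tau^{dr,rc,num}_{e,t}$. These constants are not mean-zero against the estimated weight, since $E[\hat w^{trt,p}_e]=\rho/\hat\rho\neq 1$. The paper's missing ingredient is the identity $E[E_e\,\tau^{dr,rc,num}_{e,t}]=E[E_e\{m^{trt,rc}_{e,\Delta}(X)-m^{c,rc}_{e,t,\Delta}(X)\}]$, i.e.\ $E[\hat w^{trt,p}_e\,\tau^{dr,rc,num}_{e,t}]=E[\hat w^{trt,p}_e\{m^{trt,rc}_{e,\Delta}-m^{c,rc}_{e,t,\Delta}\}]$. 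It turns the plug-in block into $E[\hat w^{trt,p}_e\{(\hat m^{trt,rc}_{e,\Delta}-m^{trt,rc}_{e,\Delta})-(\hat m^{c,rc}_{e,t,\Delta}-m^{c,rc}_{e,t,\Delta})\}]$. Only in this form can it absorb the terms $E[w\,(m-\hat m)]$ produced by the residual blocks; the $\hat m$-parts alone, which is what you say cancels, cannot. If the centerings were instead evaluated at $\hat\eta$ (one reading of your last paragraph), a term $(1-\rho/\hat\rho)\,E[\hat w^{trt,p}_e(\hat m^{trt,rc}_{e,\Delta}-\hat m^{c,rc}_{e,t,\Delta})]$ would survive, which is $O_P(n^{-1/2})$ but not $o_P(n^{-1/2})$.

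There is also an error in the control block. Let $\delta_s:=m^{c,rc}_{e,t,s}-\hat m^{c,rc}_{e,t,s}$. The LIE gives exactly $E[\hat w^{c,rc}_{e,t,s}\hat\varepsilon^{Y,c,rc}_{e,t,s}]=\{E[C\mathbf{1}\{T=s\}(\tfrac{\hat p}{1-\hat p}-\tfrac{p}{1-p})\delta_s]+\lambda_s E[p(X)\delta_s]\}/\hat\rho^{c,rc}_{e,s}$. So the treated-type term carries the control normalizer $\hat\rho^{c,rc}_{e,s}$, not $\rho^{trt,rc}_{e,s}$. Its mismatch with the plug-in block is $E[p(X)\delta_s]\,(\lambda_s/\hat\rho^{c,rc}_{e,s}-1/\hat\rho)$, and condition~2 says nothing about this scalar. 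The scalar is $O(\Vert\tfrac{p}{1-p}-\tfrac{\hat p}{1-\hat p}\Vert_{2,P})+O_P(n^{-1/2})$. Condition~3 handles the first part. The second part needs $\Vert m^{c,rc}_{e,t,s}-\hat m^{c,rc}_{e,t,s}\Vert_{2,P}=o_P(1)$, and its $g$ analogue, which is not among the hypotheses: condition~1 covers only the treated regressions. Yet you invoke ``$\Vert m-\hat m\Vert_{2,P}=o_P(1)$'' at exactly this point. The paper's proof has the same hidden requirement, because its bound is stated through $\Vert\tfrac{p}{1-p}/\hat\rho-\lambda_s\tfrac{\hat p}{1-\hat p}/\hat\rho^{c,rc}_{e,s}\Vert_{2,P}$ rather than the odds error in condition~3. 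You should therefore state control-regression consistency as an explicit extra assumption rather than attribute it to condition~1 or~2.
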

\medskip\noindent
Proof in \Cref{sec:remainder-rc}.

\begin{remark}[Rate double robustness of the remainder term]
	Compared to \Cref{prop:remainder-term-panel}, the repeated
	cross-sections remainder does not exhibit the same rate double robustness,
	in that the outcome and propensity models cannot offset each other’s
	estimation error, as in the panel setting.
	While double robustness for consistency still holds, only the
	control-group nuisance components enter through the usual
	propensity-outcome product structure; see also \cite[p.~26]{kennedy}.
\end{remark}

\subsection{Plug-in estimators}
In the following, we derive asymptotic linear representations for
plug-in estimators of $\tau^{dr,p}_{e,t}$ and
$\tau^{dr,rc}_{e,t}$ under Donsker class conditions, building on the
remainder results in
\Cref{prop:remainder-term-panel,prop:remainder-term-rc}.
Such conditions are often restrictive in high-dimensional settings.
Accordingly, \Cref{sec:dml} introduces a DML estimator that avoids these
assumptions by controlling the empirical process term via cross-fitting.

\paragraph{Panel Data}
\begin{proposition}
	\label{prop:dr-estimator-alin-panel}
	Consider the plug-in estimator of \cref{eq:dr-estimand-panel},
	$\hat{\tau}^{dr,p}_{e, t}$.
	Suppose \Cref{prop:remainder-term-panel} holds and that Donsker conditions
	ensure the empirical process term is $o_{P}(1)$ as in
	\cref{eq:conditions-alin}. Then
	\begin{align}
		\label{eq:dr-estimator-alin-panel}
		\sqrt{n}(\hat{\tau}^{dr,p}_{e, t} - \tau^{dr,p}_{e, t})
		=
		\sqrt{n} (P_{n} - P)
		\varphi^{dr, p}(\cdot; \tau^{dr, p}_{e, t}, \eta^{dr, p}_{e, t})
		+ o_{P}(1),
	\end{align}
	and consequently,
	\begin{align*}
		\sqrt{n}(\hat{\tau}^{dr,p}_{e, t} - \tau^{dr,p}_{e, t})
		\overset{d}{\to}
		N\left(
		0,
		E[
			\{\varphi^{dr, p}(O; \tau^{dr, p}_{e, t}, \eta^{dr, p}_{e, t})\}^{2}
		]
		\right).
	\end{align*}
\end{proposition}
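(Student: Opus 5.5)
The plan is to start from the generic decomposition \cref{eq:decomposition}, applied with $\hat\tau = \hat\tau^{dr,p}_{e,t}$, $\tau = \tau^{dr,p}_{e,t}$ and $\varphi = \varphi^{dr,p}$ from \Cref{prop:dr-estimand-weighted-panel}. The first task is to justify the decomposition for this particular estimator. The plug-in estimator is the ratio of two empirical averages with estimated nuisances $\hat\eta^{dr,p}_{e,t} = (\hat\rho,\hat p,\hat m^{c,p}_{e,t},\hat g^{c,p}_{e,t})$, where the Hájek weights are normalized by their sample means. By the footnote preceding \cref{eq:decomposition-alin}, it satisfies $P_n\varphi^{dr,p}(\cdot;\hat\tau^{dr,p}_{e,t},\hat\eta^{dr,p}_{e,t}) = 0$.

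I would check this directly. The centering terms $E[w\varepsilon]$ are replaced by their sample analogues, so each centered block has empirical mean zero. The remaining combination vanishes because $\hat\tau$ is exactly the ratio of the empirical numerator and denominator.

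Next, write
\[
\hat\tau - \tau = -\bigl(P_n\varphi(\cdot;\tau,\hat\eta) - P_n\varphi(\cdot;\hat\tau,\hat\eta)\bigr) + (P_n-P)\varphi(\cdot;\tau,\hat\eta) + P\varphi(\cdot;\tau,\hat\eta).
\]
Since $\varphi^{dr,p}$ is affine in $\tau$ with slope $-\hat\tau^{den}/\tau^{den}$ (up to the centering terms), the difference in the first bracket equals $(\hat\tau-\tau)$ times a factor $\hat\tau^{dr,p,den}/\tau^{dr,p,den}\to 1$. This gives \cref{eq:decomposition} with an $o_P(1)$ adjustment, provided $\hat\tau^{dr,p,den}$ is consistent and $\tau^{dr,p,den}\neq 0$. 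The latter holds because the denominator equals $AET(e,t)>0$ under monotonicity and a nontrivial first stage.

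Then I control the three terms. The leading term $\sqrt n(P_n-P)\varphi^{dr,p}(\cdot;\tau,\eta)$ is kept as is. The empirical process term $\sqrt n(P_n-P)[\varphi(\cdot;\tau,\hat\eta)-\varphi(\cdot;\tau,\eta)]$ is $o_P(1)$ by the assumed Donsker condition, together with $L_2(P)$ consistency of $\varphi(\cdot;\tau,\hat\eta)$. That consistency follows from overlap (\Cref{ass:overlap}), which keeps $\hat p/(1-\hat p)$ bounded, and from consistency of the nuisances, which is implied by the rate conditions in the usual way. The remainder $\sqrt n P\varphi^{dr,p}(\cdot;\tau,\hat\eta)$ is $o_P(1)$ by \Cref{prop:remainder-term-panel}.

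Combining these gives \cref{eq:dr-estimator-alin-panel}, with the rescaling factor absorbed into $o_P(1)$ via Slutsky. Asymptotic normality then follows from the Lindeberg–Lévy CLT, since $\varphi^{dr,p}$ has mean zero and finite variance. Finite variance holds because the weights are bounded by overlap and $\Delta Y,\Delta D$ have finite second moments.

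I expect the main obstacle to be the first step: showing rigorously that the estimating-equation structure with estimated normalizing constants ($\hat\rho$ and the sample mean of the control weights) produces the decomposition. The centering expectations inside $\varphi^{dr,p}$ must be treated as part of $\hat\eta$, and I need to confirm that their estimation error enters only through the remainder and empirical-process terms. I would handle this by treating those expectations as additional finite-dimensional nuisances estimated at $\sqrt n$-rate, and by verifying that $\varphi^{dr,p}$ is Neyman-orthogonal in them, i.e. that derivatives with respect to the centering constants have mean zero.
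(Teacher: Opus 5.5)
Your skeleton matches the paper's. You express $\hat\tau^{dr,p}_{e,t}-\tau^{dr,p}_{e,t}$ through the estimating equation, pick up the factor $\hat\tau^{dr,p,den}_{e,t}/\tau^{dr,p,den}_{e,t}\to 1$, and split $\sqrt n P_n\varphi^{dr,p}(\cdot;\tau^{dr,p}_{e,t},\hat\eta^{dr,p}_{e,t})$ into the CLT term, the empirical-process term (assumed $o_P(1)$) and the remainder (\Cref{prop:remainder-term-panel}). The factor identity you assert is true. However, the step you flag as the crux is not justified, and your proposed fix would fail.

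First, your display is off. Its right-hand side collapses to $P_n\varphi(\cdot;\hat\tau,\hat\eta)=0$, so its left-hand side should be $0$; the expansion only appears after moving the bracket across.

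Second, the centering constants $E[w^{trt,p}_e\varepsilon^{Y,p}_{e,t}]$ and $E[w^{c,p}_{e,t}\varepsilon^{Y,p}_{e,t}]$ (and their $D$-analogues) cannot be treated as nuisances with Neyman orthogonality. $\varphi^{dr,p}$ is linear in them with coefficients $-w^{trt,p}_e/\tau^{dr,p,den}_{e,t}$ and $w^{c,p}_{e,t}/\tau^{dr,p,den}_{e,t}$, whose means are $\mp 1/\tau^{dr,p,den}_{e,t}\neq 0$. So the score is not orthogonal in them. Worse, suppose you put their sample analogues into $\hat\eta$. Because $P_n\hat w^{trt,p}_e=P_n\hat w^{c,p}_{e,t}=1$, every centered block then has empirical mean exactly zero. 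Hence $P_n\varphi^{dr,p}(\cdot;\tau',\hat\eta)=0$ for \emph{every} $\tau'$, and the equation carries no information about $\hat\tau-\tau$. The same fact shows that $-\hat\tau^{dr,p,den}_{e,t}/\tau^{dr,p,den}_{e,t}$ is not the slope in $\tau$ of $P_n\varphi^{dr,p}$. It is the slope for the uncentered $\varphi^{p}$ in \cref{eq:if-latt-panel}, and it is for $\varphi^{p}$, not $\varphi^{dr,p}$, that the empirical mean vanishes because $\hat\tau$ is the ratio of the empirical numerator and denominator.

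The missing idea, and the paper's actual route, is that these constants are components of the target, not nuisances. We have $\tau^{dr,p,num}_{e,t}=\tau^{dr,p,num}_{e,t,1}-\tau^{dr,p,num}_{e,t,0}$, and likewise for the denominator, so they are held at their true values in $\varphi^{dr,p}(\cdot;\tau^{dr,p}_{e,t},\hat\eta^{dr,p}_{e,t})$. In-sample normalization then gives exactly $P_n\varphi^{dr,p,num}(\cdot;\tau^{dr,p,num}_{e,t},\hat\eta^{dr,p,num}_{e,t})=\hat\tau^{dr,p,num}_{e,t}-\tau^{dr,p,num}_{e,t}$, and the same for the denominator. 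Using $\tau^{dr,p,num}_{e,t}=\tau^{dr,p}_{e,t}\tau^{dr,p,den}_{e,t}$, this yields
\[
\frac{\hat\tau^{dr,p,den}_{e,t}}{\tau^{dr,p,den}_{e,t}}\bigl(\hat\tau^{dr,p}_{e,t}-\tau^{dr,p}_{e,t}\bigr)
=P_n\varphi^{dr,p}(\cdot;\tau^{dr,p}_{e,t},\hat\eta^{dr,p}_{e,t}),
\]
which is \cref{eq:dr-estimator-panel-expansion}. No orthogonality argument is needed. The remainder bounded in \Cref{prop:remainder-term-panel} is precisely $P\varphi^{dr,p}$ with these true centering constants. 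With this replacement the rest of your argument goes through.

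A smaller point: you need not derive $L_2(P)$-consistency of $\varphi^{dr,p}(\cdot;\tau,\hat\eta)$, since the empirical-process term is assumed $o_P(1)$. In any case, the product-rate conditions of \Cref{prop:remainder-term-panel} do not imply consistency of each nuisance separately.
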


\paragraph{Repeated Cross-Sections}
\begin{proposition}
	\label{prop:dr-estimator-alin-rc}
	Consider the plug-in estimator of \cref{eq:dr-estimand-rc},
	$\hat{\tau}^{dr,rc}_{e, t}$.
	Suppose \Cref{prop:remainder-term-rc} holds and that Donsker conditions
	ensure the empirical process term is $o_{P}(1)$ as in
	\cref{eq:conditions-alin}.
	Then
	\begin{align}
		\label{eq:dr-estimator-alin-rc}
		\sqrt{n}(\hat{\tau}^{dr,rc}_{e, t} - \tau^{dr,rc}_{e, t})
		=
		\sqrt{n} (P_{n} - P)
		\varphi^{dr, rc}(\cdot; \tau^{dr, rc}_{e, t}, \eta^{dr, rc}_{e, t})
		+ o_{P}(1),
	\end{align}
	and consequently,
	\begin{align*}
		\sqrt{n}(\hat{\tau}^{dr, rc}_{e, t} - \tau^{dr, rc}_{e, t})
		\overset{d}{\to}
		N(
		0,
		E[
		\{\varphi^{dr, rc}(O; \tau^{dr, rc}_{e, t}, \eta^{dr, rc}_{e, t})\}^{2}
		]
		).
	\end{align*}
\end{proposition}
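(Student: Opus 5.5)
The plan mirrors the panel case, \Cref{prop:dr-estimator-alin-panel}, and goes through the three-term decomposition \cref{eq:decomposition} with $\varphi = \varphi^{dr,rc}$ and $\eta = \eta^{dr,rc}_{e,t}$.

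First, I will check that the plug-in $\hat{\tau}^{dr,rc}_{e,t}$ solves the empirical estimating equation $P_n \varphi^{dr,rc}(\cdot;\hat{\tau}^{dr,rc}_{e,t},\hat{\eta}^{dr,rc}_{e,t}) = 0$. By \cref{eq:if-dr-estimand-rc}, this reduces to showing that
\[
P_n\varphi^{dr,rc}_{num}(\cdot;\hat{\tau}^{dr,rc,num}_{e,t},\hat{\eta}) = 0
\quad\text{and}\quad
P_n\varphi^{dr,rc}_{den}(\cdot;\hat{\tau}^{dr,rc,den}_{e,t},\hat{\eta}) = 0 .
\]
Each component of \cref{eq:if-dr-num-estimand-rc,eq:if-dr-den-estimand-rc} has the form $w\{V - E[wV]\}$. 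The plug-in replaces $E[wV]$ by $P_n[\hat w \hat V]$, and the Hájek normalisation gives $P_n \hat w = 1$. Hence every component averages to zero in sample. This is the direct calculation referred to around \cref{eq:plug-in-extra}.

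The difference $\hat\tau^{num}/\hat\tau^{den} - \tau^{num}/\tau^{den}$ is then linearised by the quotient rule. This uses $\tau^{dr,rc,den}_{e,t} \neq 0$ (a nonzero first stage), together with overlap (\Cref{ass:overlap}) so that $\hat\rho$, $\hat\rho^{trt,rc}_{e,s}$ and the normalising constants are bounded away from zero with probability tending to one. The resulting representation is
\[
\sqrt n(\hat\tau-\tau) = \sqrt n (P_n - P)\varphi + \sqrt n (P_n-P)[\hat\varphi-\varphi] + \sqrt n P\hat\varphi + o_P(1),
\]
where $\hat\varphi = \varphi(\cdot;\tau,\hat\eta)$. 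The $o_P(1)$ term collects products of the form $(\hat\tau^{den}-\tau^{den})(\hat\tau-\tau)$. These are second order once $\hat\tau^{den}$ is shown to be $\sqrt n$-consistent, which follows from the same argument applied to the denominator alone.

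Next, the empirical process term is $o_P(1)$ by the assumed Donsker condition, together with $L^2(P)$-consistency $\|\hat\varphi-\varphi\|_{2,P} = o_P(1)$. That consistency follows from the nuisance consistency in \Cref{prop:remainder-term-rc}(1) and the implied consistency of $\hat p$ and the control mean functions, using boundedness of $p/(1-p)$ from overlap and Lemma 19.24 of van der Vaart. The remainder term $\sqrt n P\hat\varphi$ is $o_P(1)$ directly by \Cref{prop:remainder-term-rc}, since the conditions there are assumed. This yields \cref{eq:dr-estimator-alin-rc}.

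Finally, under \Cref{ass:sample:mixture} the pooled observations are i.i.d.\ draws from $F_M$. The influence function is mean zero, $P\varphi^{dr,rc}=0$ (by construction from \cref{eq:dr-estimand-foc}), and it has finite variance given bounded weights and square-integrable $Y$ and $D$. The Lindeberg--Lévy CLT then gives the stated normal limit with variance $E[\{\varphi^{dr,rc}\}^2]$.

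The main obstacle is the first step. In repeated cross-sections the weights $w^{trt,rc}_{e,t}$ and $w^{trt,p}_{e}$ use different normalisers, $\rho^{trt,rc}_{e,t}$ versus $\rho$. The $\kappa$ terms in \cref{eq:dr-estimand-rc} mix these normalisers, so one has to make sure that their estimation error is absorbed into the $o_P(1)$ term. I would handle this using the weight-fluctuation condition of \Cref{prop:remainder-term-rc}(2), which makes the discrepancy between the normalisers $O_P(n^{-1/2})$ and multiplies it by quantities that are themselves $o_P(1)$.
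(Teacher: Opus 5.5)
Your proposal is correct and follows essentially the argument the paper uses: the plug-in solves the estimating equation because the Hájek weights sum to one in sample, the ratio is linearised as in \cref{eq:latt-estimator-expansion-generic} to give the three-term decomposition \cref{eq:decomposition}, the empirical-process term is handled by the Donsker hypothesis and the remainder by \Cref{prop:remainder-term-rc}, and the CLT finishes. Two small corrections: the normaliser mismatch you flag as the main obstacle does not touch the estimating-equation step, which is exact algebra once $\hat\kappa$ is built with the same sample-normalised weights, and it is already absorbed into the remainder bound of \Cref{prop:remainder-term-rc} through its weight-fluctuation condition; and the empirical-process term is assumed $o_P(1)$ outright, so you need not derive it from nuisance consistency, which the product-rate condition alone would not guarantee.
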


\subsection{Double Machine Learning / Cross-Fitting}
\label{sec:dml}
In this section, we present DML estimators for both data settings. In
each setting, the Neyman-orthogonal score is based on the corresponding
EIF in \cref{eq:if-latt-panel,eq:if-latt-rc}.
The resulting estimators coincide with the cross-fitted estimating
equation estimator of \cite{kennedy}. This follows because DML applies
cross-fitting to Neyman-orthogonal scores; when these scores are derived
from the EIFs in \cref{sec:nonparametric-identification}, the procedure
is equivalent to cross-fitting the corresponding estimating equation estimator.

\paragraph{Regularization and overfitting bias}
The Donsker assumptions made in the previous section are often unrealistic in
settings where the nuisances are high-dimensional and complex.
Here, flexible machine learning estimators are more appropriate.
Estimating nuisance functions with machine learning induces two forms of bias,
often referred to as regularization and overfitting bias in the DML literature
\citep{chernozhukovDoubleDebiasedMachine2018a}.
The remedy is to leverage a Neyman-orthogonal score for the DML estimator and
apply cross-fitting \citep{chernozhukovDoubleDebiasedMachine2018a}. This is
what we will do below.

\paragraph{DML Algorithm}
We construct our DML estimators using the \texttt{DML2} procedure of
\citep{chernozhukovDoubleDebiasedMachine2018a,bach2022doubleml}. The key components are:
\begin{enumerate}
	\item a Neyman-orthogonal score identifying the target parameter
	\item machine learning estimators of the nuisance functions
	\item $K$-fold cross-fitting of the nuisance estimators
	\item plugging the cross-fitted nuisances into the score and solving the
	      resulting moment condition yielding the DML estimator
\end{enumerate}
Hence, we first construct the Neyman-orthogonal scores.
For this we can use the influence functions
\cref{eq:if-latt-panel,eq:if-latt-rc}.

\paragraph{Panel Data}
Using the EIF for the panel data setting in
\cref{eq:if-latt-panel}, we obtain:
\begin{align}
	\nonumber
	 & \varphi^{p}(O; \tau^{p}_{e, t}, \eta^{p}_{e, t})
	\\
	\nonumber
	 & \propto
	\{ w^{trt,p}_{e} - w^{c,p}_{e, t} \}
	\{\Delta_{t-e+1}Y_{t} - m_{e, t}^{c, p}(X)\}
	- \tau^{p}_{e, t}
	\{ w^{trt,p}_{e} - w^{c,p}_{e, t} \}
	\{\Delta_{t-e+1}D_{t} - g_{e, t}^{c, p}(X)\}
	\\
	\nonumber
	 & =:
	\psi^{p}_{b}(O; \eta^{p}_{e, t})
	- \tau^{p}_{e, t}
	\psi^{p}_{a}(O; \eta^{p}_{e, t})
	\\
	\label{eq:dml-score-panel}
	 & =: \psi^{p}(O; \tau^{p}_{e, t}, \eta^{p}_{e, t})
\end{align}
where we use DML notation for the score components, i.e. $\psi$, $\psi_{a}$ and
$\psi_{b}$.
Hence, we see that the influence function is proportional to a score (i.e.
function that identifies the target parameter) that is linear in the estimand
$\tau^{p}_{e, t}$.

Let $\{O_i\}_{i=1}^n$ be partitioned into $K$ folds of equal size
$n_k = n/K$, and denote the empirical measure on fold $k$ by $P_n^k$\footnote{
For equal-sized folds:
$K^{-1}P_{n}^{k}[f]
	= K^{-1}n_{k}^{-1}\sum_{i=1}^{n} f(O_{i})
	= n^{-1}\sum_{i=1}^{n}f(O_{i})$.
}.
Let $\hat{\eta}^{p,-k}_{e,t}$ be the nuisance estimator trained on all
observations outside fold $k$ (the superscript $-k$ indicates this).
Hats denote estimated quantities.

The DML estimator solves:
\begin{align*}
	 & K^{-1} \sum_{k=1}^{K}
	P_{n}^{k}
		[\psi^{p}(O; \hat{\tau}^{dml, p}_{e, t}, \hat{\eta}^{p, -k}_{e, t})]
	= 0
	\\
	\iff
	\hat{\tau}^{dml, p}_{e, t}
	 & =
	\frac{
		K^{-1} \sum_{k=1}^{K} P_{n}^{k}
			[\psi^{p}_{b}(O; \hat{\eta}^{p, -k}_{e, t})]
	}{
		K^{-1} \sum_{k=1}^{K} P_{n}^{k}
			[\psi^{p}_{a}(O_; \hat{\eta}^{p, -k}_{e, t})]
	}
	\\
	 & =
	\frac{
	K^{-1}\sum_{k=1}^{K} P_{n}^{k}
	[
	\{ \hat{w}^{trt, p, -k}_{e} - \hat{w}^{c, p, -k}_{e, t} \}
	\{\Delta_{t-e+1}Y_{t} - \hat{m}_{e, t}^{c, p, -k}(X)\}
	]
	}{
	K^{-1}\sum_{k=1}^{K} P_{n}^{k}
	[\{ \hat{w}^{trt, p, -k}_{e} - \hat{w}^{c, p, -k}_{e, t} \}
	\{\Delta_{t-e+1}D_{t} - \hat{g}_{e, t}^{c, p, -k}(X)\}]
	}.
\end{align*}
Hence we have the result:

\begin{proposition}[DML Panel Data]
	\label{prop:dml-panel}
	Suppose the identification in
	\Cref{prop:identification-panel} holds,
	and also the condition on the remainder in
	\cref{prop:remainder-term-panel}\footnote{
		In \Cref{remark:identical-remainders} I show the remainder terms
		using either \cref{eq:if-latt-panel}
		or \cref{eq:if-dr-estimand-panel} are the same
		and likewise for the repeated cross-sections setting
		\cref{eq:if-latt-rc}
		or \cref{eq:if-dr-estimand-rc}.
		Hence, \Cref{prop:remainder-term-panel} can be used even though
		the DML score was constructed based on \cref{eq:if-latt-panel}
		and not \cref{eq:if-dr-estimand-panel}.
		The same statements hold in the repeated cross-sections case for
		\Cref{prop:dml-rc}.
	}.
	Then:
	\begin{align*}
		\sqrt{n}(\hat{\tau}^{dml, p}_{e, t}  - \tau^{p}_{e, t, 0})
		\overset{d}{\to} N(0, \sigma^{2, p}_{e, t}),
	\end{align*}
	where
	\begin{align*}
		\sigma^{2, p}_{e, t} :=
		(J^{p}_{0})^{-2}
		E[\{\psi^{p}(O; \tau^{p}_{e, t, 0}, \eta^{p}_{e, t, 0})\}^{2}],
		\quad
		J^{p}_{0} = E[\psi^{p}_{a}(O; \eta^{p}_{e, t, 0})]
	\end{align*}
	and we use the notation $\tau^{p}_{e, t, 0}, \eta^{p}_{e, t, 0}$
	for the true value of the target parameter and nuisance function, respectively.
	A variance estimator is:
	\begin{align*}
		\hat{\sigma}^{2, p}_{e, t}
		=
		(\hat{J}^{p}_{0})^{-2}
		K^{-1} \sum_{k=1}^{K}
		P_{n}^{k}
		[\{\psi^{p}(O; \hat{\tau}^{dml, p}_{e, t}, \hat{\eta}^{p, -k}_{e, t})\}^{2}],
		\quad
		\hat{J}^{p}_{0}
		=
		K^{-1} \sum_{k=1}^{K}
		P_{n}^{k}
			[\psi^{p}_{a}(O; \hat{\eta}^{p, -k}_{e, t})]
	\end{align*}
	and an approximate $(1 - \alpha)\%$ confidence interval is constructed as:
	\begin{align*}
		[\hat{\tau}^{dml, p}_{e, t}
			\pm \Phi^{-1}(1 - \alpha/2) \hat{\sigma}^{p}_{e, t}/\sqrt{n}],
	\end{align*}
	where $\Phi^{-1}(\cdot)$ is the quantile function of the standard normal
	distribution.
\end{proposition}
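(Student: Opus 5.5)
The plan is to treat $\hat{\tau}^{dml,p}_{e,t}$ as a ratio estimator and linearize it around the truth. Write $\hat{J} := K^{-1}\sum_k P_n^k[\psi^p_a(O;\hat{\eta}^{p,-k}_{e,t})]$ and $\hat{B} := K^{-1}\sum_k P_n^k[\psi^p_b(O;\hat{\eta}^{p,-k}_{e,t})]$. Since the score is linear in $\tau$, we have the exact identity
\begin{align*}
	\sqrt{n}(\hat{\tau}^{dml,p}_{e,t} - \tau^{p}_{e,t,0})
	= \hat{J}^{-1} K^{-1}\sum_{k=1}^{K} \sqrt{n}\, P_n^k[\psi^{p}(O;\tau^{p}_{e,t,0},\hat{\eta}^{p,-k}_{e,t})].
\end{align*}
The argument then has two parts. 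First, show $\hat{J}\to_P J^p_0$ with $J^p_0 \neq 0$. Second, show that the averaged score at the true $\tau$ equals $\sqrt{n}P_n\psi^p(\cdot;\tau^p_{e,t,0},\eta^p_{e,t,0}) + o_P(1)$. Slutsky and the CLT then give the normal limit with variance $(J^p_0)^{-2}E[\psi^2]$. Note that $J^p_0 = \tau^{p,den}_{e,t}$ is the first-stage effect. It is nonzero by \Cref{ass:monotonicity} together with a relevance condition implicit in \Cref{prop:identification-panel}, and $\psi^p/J^p_0$ coincides with $\varphi^p$ in \cref{eq:if-latt-panel}.

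For the second part, I would decompose each fold term as in \cref{eq:decomposition}:
\begin{align*}
	P_n^k\psi^p(\cdot;\tau_0,\hat{\eta}^{-k})
	= P_n^k\psi^p(\cdot;\tau_0,\eta_0)
	+ (P_n^k - P)[\psi^p(\cdot;\tau_0,\hat{\eta}^{-k}) - \psi^p(\cdot;\tau_0,\eta_0)]
	+ P\psi^p(\cdot;\tau_0,\hat{\eta}^{-k}),
\end{align*}
using $P\psi^p(\cdot;\tau_0,\eta_0)=0$. Averaging the first term over equal folds gives $P_n\psi^p(\cdot;\tau_0,\eta_0)$.

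The empirical-process term is handled by sample splitting. Conditional on the training data outside fold $k$, the summand is a mean-zero average of i.i.d.\ variables. By Chebyshev's inequality it is therefore $O_P\big(n_k^{-1/2}\Vert\psi^p(\cdot;\tau_0,\hat{\eta}^{-k})-\psi^p(\cdot;\tau_0,\eta_0)\Vert_{2,P}\big)$ (cf.\ \cite{kennedy}, Lemma 2). This is $o_P(n^{-1/2})$ provided the nuisances are $L^2$-consistent. Consistency is implied by the rate conditions whenever neither factor degenerates, and otherwise I would add it as a mild assumption. Overlap (\Cref{ass:overlap}) and boundedness of $\hat p$ away from one keep the odds weights bounded, and $\hat{\rho}$ is consistent by the LLN.

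The remainder term $P\psi^p(\cdot;\tau_0,\hat{\eta}^{-k})$ is the main obstacle, because the DML score uses \cref{eq:if-latt-panel} rather than \cref{eq:if-dr-estimand-panel}. Here I would invoke the footnoted \Cref{remark:identical-remainders}, which says the two remainders coincide, and then apply \Cref{prop:remainder-term-panel} fold by fold.

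As a sanity check, conditioning on $X$ and $C=1$ turns the control part into $E[\hat w^c(\hat m - m)]$. The treated part, by contrast, contributes $E[w^{trt}(m^{trt}-\hat m)]$. Their sum collapses to a product of the form $E\big[\big(\frac{\hat p}{1-\hat p}-\frac{p}{1-p}\big)(1-p)(m-\hat m)\cdots\big]$, plus terms from the normalization of $\hat{w}^c$, which are of the same product form or are $O_P(n^{-1/2})$ times a consistent quantity. Cauchy--Schwarz and the stated rate conditions then give $o_P(n^{-1/2})$. The $g$ part is handled identically, and the $\tau_0$-weighted combination is finite.

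For the first part, $\hat{J}\to_P J^p_0$ follows by the same split, applied with $\psi_a$ in place of $\psi$. The variance estimator is consistent by the standard DML argument: expand the square, use consistency of $\hat\tau$ and of the nuisances, and apply the fold-wise LLN. This yields the stated confidence interval.
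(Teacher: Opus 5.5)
Your proposal is correct and follows essentially the paper's argument. The paper obtains this result by identifying the DML estimator with the cross-fitted estimating-equation estimator (\Cref{remark:comparing-dml-cross-fit}), letting cross-fitting control the empirical-process term as in \cite{kennedy}, and bounding the remainder via \Cref{remark:identical-remainders} and \Cref{prop:remainder-term-panel}; your exact linear-score identity plays the role of the ratio expansion \cref{eq:latt-estimator-expansion-generic}, and the $L^2$-consistency of the nuisances you add is a condition the paper leaves implicit.

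One slip in your sanity check: the treated piece $E[\hat w^{trt}(m^{trt}-\hat m)]$ contains the term $(\rho/\hat\rho)\,\tau^{p,num}_{e,t}$, which is not second order on its own. What leaves only the product terms is the cancellation $\tau^{p,num}_{e,t}-\tau^{p}_{e,t,0}\,\tau^{p,den}_{e,t}=0$ in the $\tau_0$-weighted combination, not merely the finiteness of $\tau_0$.
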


\paragraph{Repeated Cross-Sections}
Using the EIF for the repeated cross-sections setting in
\cref{eq:if-latt-rc}, we obtain:
\begin{align}
	\nonumber
	\varphi^{rc}(O; \tau^{rc}_{e, t}, \eta^{rc}_{e, t})
	          & \propto
	w^{trt,rc}_{e} \{Y - m_{e, Y}^{trt, rc}(X)\}
	-
	w^{c,rc}_{e}
	\{Y - m_{e, Y}^{c,rc}(X)\}
	\\
	\nonumber
	\nonumber & \quad
	+ \frac{E_{e}}{\rho}
	\{
	m_{e,t}^{trt, rc}(X) - m_{e,e-1}^{trt, rc}(X)
	-
	[
	m_{e, t, t}^{c, rc}(X) - m_{e, t, e-1}^{c, rc}(X)
	]
	\}
	\\
	\nonumber & \quad
	\nonumber
	- \tau^{rc}_{e, t}
	(
	w^{trt,rc}_{e} \{D - g_{e, Y}^{trt, rc}(X)\}
	-
	w^{c,rc}_{e}
	\{D - g_{e, D}^{c,rc}(X)\}
	\\
	\nonumber
	\nonumber & \quad
	+ \frac{E_{e}}{\rho}
	\{
	g_{e,t}^{trt, rc}(X) - g_{e,e-1}^{trt, rc}(X)
	-
	[ g_{e, t, t}^{c, rc}(X) - g_{e, t, e-1}^{c, rc}(X) ]
	\}
	)
	\\
	\label{eq:dml-score-rc}
	          & =:
	\psi^{rc}_{b}(O; \eta^{rc}_{e, t})
	- \tau^{rc}_{e, t} \psi^{rc}_{a}(O; \eta^{rc}_{e, t})
	\\
	\nonumber
	          & =: \psi^{rc}(O; \tau^{rc}_{e, t}, \eta^{rc}_{e, t}).
\end{align}
Again, the influence function is proportional to a score that is
linear in the estimand $\tau^{rc}_{e, t}$.

Similarly to the panel data case, the DML estimator solves:
\begin{align*}
	 & K^{-1} \sum_{k=1}^{K}
	P_{n}^{k}
		[\psi^{rc}(O; \hat{\tau}^{dml, rc}_{e, t}, \hat{\eta}^{rc, -k}_{e, t})]
	= 0
	\\
	\iff
	\hat{\tau}^{dml, rc}_{e, t}
	 & =
	\frac{
		K^{-1} \sum_{k=1}^{K} P_{n}^{k}
			[\psi^{rc}_{b}(O; \hat{\eta}^{rc, -k}_{e, t})]
	}{
		K^{-1} \sum_{k=1}^{K} P_{n}^{k}
			[\psi^{rc}_{a}(O; \hat{\eta}^{rc, -k}_{e, t})]
	},
\end{align*}
where the score components are defined in \cref{eq:dml-score-rc}.
Hence we have the result:

\begin{proposition}[DML Repeated Cross-Sections]
	\label{prop:dml-rc}
	Suppose the identification in
	\Cref{prop:identification-rc} holds,
	and also the condition on the remainder in \cref{prop:remainder-term-rc}.
	Then:
	\begin{align*}
		\sqrt{n}(\hat{\tau}^{dml, rc}_{e, t}  - \tau^{rc}_{e, t, 0})
		\overset{d}{\to} N(0, \sigma^{2, rc}_{e, t}),
	\end{align*}
	where
	\begin{align*}
		\sigma^{2, rc}_{e, t} :=
		(J^{rc}_{0})^{-2}
		E[\{\psi^{rc}(O; \tau^{rc}_{e, t, 0}, \eta^{rc}_{e, t, 0})\}^{2}],
		\quad
		J^{rc}_{0} = E[\psi^{rc}_{a}(O; \eta^{rc}_{e, t, 0})]
	\end{align*}
	A variance estimator is:
	\begin{align*}
		\hat{\sigma}^{2, rc}_{e, t}
		=
		(\hat{J}^{rc}_{0})^{-2}
		K^{-1} \sum_{k=1}^{K}
		P_{n}^{k}
		[\{\psi^{rc}(O; \hat{\tau}^{dml, rc}_{e, t}, \hat{\eta}^{rc, -k}_{e, t})\}^{2}],
		\quad
		\hat{J}^{rc}_{0}
		=
		K^{-1} \sum_{k=1}^{K}
		P_{n}^{k}
			[\psi^{rc}_{a}(O; \hat{\eta}^{rc, -k}_{e, t})]
	\end{align*}
	and an approximate $(1 - \alpha)\%$ confidence interval is constructed as:
	\begin{align*}
		[\hat{\tau}^{dml, rc}_{e, t}
			\pm \Phi^{-1}(1 - \alpha/2) \hat{\sigma}^{rc}_{e, t}/\sqrt{n}].
	\end{align*}
\end{proposition}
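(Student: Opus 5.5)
The proof treats the DML estimator as a cross-fitted estimating-equation estimator whose score is linear in $\tau$. By \cref{eq:dml-score-rc}, the score is $\psi^{rc}(O;\tau,\eta)=\psi^{rc}_b(O;\eta)-\tau\psi^{rc}_a(O;\eta)$. Let $\hat J := K^{-1}\sum_k P_n^k[\psi^{rc}_a(O;\hat\eta^{rc,-k}_{e,t})]$. The closed form of $\hat{\tau}^{dml,rc}_{e,t}$ then gives the exact identity
\begin{align*}
\hat{\tau}^{dml,rc}_{e,t}-\tau^{rc}_{e,t,0}
= \hat J^{-1}\, K^{-1}\sum_{k=1}^K P_n^k\big[\psi^{rc}(O;\tau^{rc}_{e,t,0},\hat\eta^{rc,-k}_{e,t})\big].
\end{align*}
It therefore suffices to show two things: $\hat J \to_P J^{rc}_0$, and $\sqrt n$ times the averaged score at the true $\tau$ equals $\sqrt n P_n\psi^{rc}(O;\tau^{rc}_{e,t,0},\eta^{rc}_{e,t,0})+o_P(1)$. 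Slutsky and the CLT then give the limit $N(0,(J^{rc}_0)^{-2}E[\psi^{rc\,2}])$. Since $J^{rc}_0=E[\psi^{rc}_a]=\tau^{rc,den}_{e,t}$, this variance equals $E[\{\varphi^{rc}\}^2]$. The denominator $\tau^{rc,den}_{e,t}$ is nonzero by overlap together with a nondegenerate first stage (compliers exist); I would assume this implicitly, as the identification result does.

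\textbf{Step 1 (fold-wise decomposition).} On each fold $k$ I decompose
\begin{align*}
P_n^k\psi^{rc}(\cdot;\tau_0,\hat\eta^{-k}) = P_n^k\psi^{rc}(\cdot;\tau_0,\eta_0) + (P_n^k-P)\big[\psi^{rc}(\cdot;\tau_0,\hat\eta^{-k})-\psi^{rc}(\cdot;\tau_0,\eta_0)\big] + P\psi^{rc}(\cdot;\tau_0,\hat\eta^{-k}),
\end{align*}
using $P\psi^{rc}(\cdot;\tau_0,\eta_0)=0$. The expectation in the last term is over a fresh $O$ with the nuisance held fixed.

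\textbf{Step 2 (empirical process).} Conditional on the training sample for fold $k$, the middle term has mean zero. Its conditional variance is at most $n_k^{-1}\Vert\psi^{rc}(\cdot;\tau_0,\hat\eta^{-k})-\psi^{rc}(\cdot;\tau_0,\eta_0)\Vert_{2,P}^2$, and Chebyshev's inequality (Kennedy's sample-splitting lemma) makes it $o_P(n^{-1/2})$ once this $L^2$ distance is $o_P(1)$. That $L^2$ consistency follows from the assumed nuisance consistency in \Cref{prop:remainder-term-rc}, with consistency of the control nuisances and of $\hat p/(1-\hat p)$ implied by, or added alongside, the rate conditions. It also uses the overlap bound in \Cref{ass:overlap}, with $\hat p$ bounded away from one, which keeps the weights bounded, and consistency of the estimated $\hat\rho$, $\hat\lambda_{e-1}$ and $\hat\lambda_t$ at root-$n$ rates.

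\textbf{Step 3 (remainder).} By \Cref{remark:identical-remainders}, $P\psi^{rc}(\cdot;\tau_0,\hat\eta^{-k})$ equals, up to the constant $\tau^{rc,den}_{e,t}$, the remainder already controlled in \Cref{prop:remainder-term-rc}. Applying that proposition fold by fold gives $o_P(n^{-1/2})$ for each $k$; $K$ is fixed, so the average over folds is also $o_P(n^{-1/2})$. This step is the main obstacle. The remainder in \Cref{prop:remainder-term-rc} was stated for the Hájek-normalized estimand $\varphi^{dr,rc}$, whereas the DML score uses the unnormalized EIF \cref{eq:if-latt-rc} with the estimated normalizers $\hat\rho$ and $\hat\rho^{trt,rc}$. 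I must check that the weight-fluctuation condition (item 2 of \Cref{prop:remainder-term-rc}) absorbs the discrepancy. That discrepancy is a product of an $O_P(n^{-1/2})$ weight error and a treated-nuisance error that is $o_P(1)$ by item 1, and so is negligible.

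\textbf{Step 4 (Jacobian and variance).} $\hat J\to_P J^{rc}_0$ follows from the same decomposition applied to $\psi^{rc}_a$, where only $o_P(1)$ is needed. Consistency of $\hat\sigma^{2,rc}_{e,t}$ follows from the same decomposition applied to the squared score, using a law of large numbers on each fold, the $L^2$ convergence of $\hat\eta^{-k}$ and the consistency of $\hat\tau^{dml,rc}_{e,t}$. The confidence interval then follows from the continuous mapping theorem.
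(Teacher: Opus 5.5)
Your proposal is correct and follows the route the paper relies on: the DML2/cross-fitting argument of \cite{kennedy} and \cite{chernozhukovDoubleDebiasedMachine2018a}, with a CLT term, an empirical-process term removed by sample splitting, and a remainder controlled by \Cref{prop:remainder-term-rc} through \Cref{remark:identical-remainders}; you write it out in more detail, including the $L^2$-consistency of the score in Step~2, which the paper leaves implicit but genuinely needs. The worry in Step~3 is unnecessary, because both scores use the same H\'ajek-normalized weights, and at the true $\tau$ the centering constants of $\varphi^{dr,rc}$ either vanish or combine to $-\hat w^{trt,p}_{e}\tau^{rc,num}_{e,t}$, which cancels in the ratio; hence $\psi^{rc}(\cdot;\tau^{rc}_{e,t,0},\hat\eta)=\tau^{rc,den}_{e,t}\,\varphi^{dr,rc}(\cdot;\tau^{rc}_{e,t,0},\hat\eta)$ holds pointwise and there is no discrepancy to absorb.
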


\begin{remark}[Comparing DML to cross-fitting for IDiD]
	\label{remark:comparing-dml-cross-fit}
	\leavevmode
	DML with the Neyman-orthogonal scores
	\cref{eq:dml-score-panel,eq:dml-score-rc}
	is equivalent to a cross-fitted plug-in estimator based on
	the double robust estimands \cref{eq:dr-estimand-panel,eq:dr-estimand-rc},
	and conducting inference using the cross-fitted efficient influence
	functions \cref{eq:if-latt-panel,eq:if-latt-rc}.

	To see this, note that the cross-fitted estimator is obtained by solving
	an equation of the form in \cref{eq:estimating-equation}. When
	\Cref{eq:if-latt-panel,eq:if-latt-rc} are used, the resulting
	estimating-equation estimator coincides with the plug-in estimators of
	\Cref{eq:dr-estimand-panel,eq:dr-estimand-rc}. For instance, the
	panel denominator is estimated by
	\begin{align*}
		P_{n}[
		\{ \hat{w}^{trt,p}_{e} - \hat{w}^{c,p}_{e, t} \}
		\{\Delta_{t-e+1}D_{t} - \hat{g}_{e, t}^{c, p}(X)\}
		]
		= P_{n}[\psi^{p}_{a}(O; \hat{\eta}^{p}_{e, t})]
	\end{align*}
	and in the repeated cross-sections case by
	\begin{align*}
		P_{n}
		[
		\{\hat{w}^{trt,rc}_{e} - \hat{w}^{c,rc}_{e}\}
		\{D - \hat{g}_{e, Y}^{c,rc}(X)\}
		] + \hat{\kappa}_{e, t}^{D, rc}
		= P_{n}[\psi^{rc}_{a}(O; \hat{\eta}^{rc}_{e, t})],
	\end{align*}
	and similarly for the numerators.
	Hence, the cross-fitted estimators are equivalent to the DML estimators.

	Because of this equivalence, the remainder terms and nuisance-rate
	conditions for the cross-fitted estimators are the same as those
	needed for the DML estimators. The theorem for the cross-fitted
	estimator in \cite{kennedy} therefore also applies to the DML estimators.
	The DML estimators are also doubly robust, since they implicitly
	estimate the doubly robust estimand, and cross-fitting handles
	the empirical process term in the general decomposition automatically.

	As a final point, the DML estimators are built from the EIF with normalized
	weights. Hence
	the DML estimators with $K=1$ correspond to the estimating-equation
	estimators based on the normalized influence functions, which match
	the DR plug-in estimators of \cref{eq:dr-estimand-panel,eq:dr-estimand-rc}.
	However, the variance estimators of the DML estimators do not take into
	account the normalization adjustments as done in the influence functions
	\cref{eq:if-dr-estimand-panel,eq:if-dr-estimand-rc}
	when conducting inference.
	This does not seem to matter in the simulation experiments in
	\Cref{sec:simexp1}.
\end{remark}

\paragraph{Estimators of aggregated effects}
For completeness, we state below a proposition for estimators of the weighted
estimands in \Cref{sec:weighting}.
\begin{proposition}
	\label{prop:weighted-estimand-normality}
	Suppose the assumptions of one of
	\Cref{prop:dr-estimator-alin-panel,prop:dr-estimator-alin-rc,prop:dml-panel,prop:dml-rc}
	hold for each cohort-time estimator entering the aggregated estimator
	$\hat{\theta}$. Then
	\begin{align*}
		\sqrt{n}(\hat{\theta} - \theta)
		\overset{d}{\to}
		N(0, E[\{\varphi^{\theta}(O; \{\tau_{i}\}_{i \in \mathcal{I}};
		\{\eta_{i}\}_{i \in \mathcal{I}})\}^{2}])
	\end{align*}
	where the estimand and influence function are defined in
	\cref{eq:thetaiv,eq:if-agg-estimand}, respectively.
\end{proposition}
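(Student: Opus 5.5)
The plan is a joint delta-method argument on the stacked vector of cohort-time estimators and weight estimators. First, I would collect the indices $(e,t)\in\mathcal{I}$ entering $\hat{\theta}$ and, for each, invoke whichever of \Cref{prop:dr-estimator-alin-panel,prop:dr-estimator-alin-rc,prop:dml-panel,prop:dml-rc} applies. This gives an asymptotically linear representation
\begin{align*}
\sqrt{n}(\hat{\tau}_{e,t}-\tau_{e,t}) = \sqrt{n}(P_n-P)\varphi_{e,t}(\cdot;\tau_{e,t},\eta_{e,t}) + o_P(1).
\end{align*}
For the DML estimators this representation is not stated explicitly but is implied by their proofs: the ratio form $\hat{\tau}=P_n\psi_b/P_n\psi_a$ linearizes with influence function $\psi/J_0$. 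Because all these estimators are built on the same i.i.d.\ sample, the $o_P(1)$ remainders hold jointly, and the stacked vector of influence functions is a single mean-zero, finite-variance function of $O$.

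Second, I would treat the weights $w(e,t)$. These are smooth (rational) functions of finitely many simple estimands: cohort probabilities $P(E=e)$, conditional probabilities $P(E=e\mid A)$, and the complier shares $AET(e,t)$. The plug-in estimators of $P(E=e)$ are sample means, with influence function $\mathbf{1}\{E=e\}-P(E=e)$ as in \cref{eq:if-block-prob}. For $AET(e,t)$, its estimator is the denominator of the corresponding DR estimand, and it is asymptotically linear with the denominator influence function under the same remainder and Donsker/cross-fitting conditions; this is a by-product of the cohort-time propositions. The quotient-rule computation \cref{eq:if-weight-esl} then yields, by the ordinary delta method, $\sqrt{n}(\hat{w}(e,t)-w(e,t))=\sqrt{n}(P_n-P)\mathbb{IF}(w(e,t))+o_P(1)$. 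The overlap condition (\Cref{ass:overlap}), together with nonzero first stages, keeps the denominators bounded away from zero so that differentiability holds.

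Third, I would write the difference as
\begin{align*}
\hat{\theta}-\theta=\sum_{\mathcal{I}}\big[(\hat{w}-w)\tau+w(\hat{\tau}-\tau)+(\hat{w}-w)(\hat{\tau}-\tau)\big].
\end{align*}
The cross term is $O_P(n^{-1/2})\cdot O_P(n^{-1/2})=o_P(n^{-1/2})$. Substituting the two linear representations gives $\sqrt{n}(\hat{\theta}-\theta)=\sqrt{n}(P_n-P)\varphi^{\theta}+o_P(1)$, with $\varphi^{\theta}$ exactly \cref{eq:if-agg-estimand}. Since $\mathcal{I}$ is finite and $E[(\varphi^\theta)^2]<\infty$, the Lindeberg--L\'evy CLT and Slutsky's lemma deliver the stated normal limit.

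The main obstacle is the first and second steps in the mixed or cross-fitted case: the individual propositions give only marginal convergence, while we need joint asymptotic linearity, including for $\widehat{AET}$. I would handle this by noting that every proof of those propositions establishes the linear representation with an $o_P(1)$ remainder, not merely the distributional limit. Remainders from finitely many such representations sum to $o_P(1)$, so joint linearity, and hence the joint CLT via Cram\'er--Wold, follows without additional assumptions.
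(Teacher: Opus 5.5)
Your argument is correct and matches the route the paper intends. The paper omits the proof and defers to Corollary 2 of \cite{csa} and Corollary 1 of \cite{miyaji}, which rest on the same ingredients you use: joint asymptotic linearity of the cohort-time estimators and of the estimated weights, the linearization of $\hat{w}\hat{\tau}-w\tau$ (the estimator-level counterpart of the product rule behind \cref{eq:if-agg-estimand}), and a CLT. Your one step beyond what the cited propositions literally state is asymptotic linearity of $\widehat{AET}(e,t)$ with the denominator influence function; this is justified, because the paper's estimating-equation and remainder arguments (\Cref{prop:remainder-term-panel,prop:remainder-term-rc}) are carried out separately for numerator and denominator.
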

\Cref{prop:weighted-estimand-normality} is analogous to Corollary 1 in
\cite{miyaji} and Corollary 2 of \cite{csa}, and the proof is omitted.

\section{Simulation Experiments}
\label{sec:simulation}
In this section, we present three simulation experiments that assess the
finite-sample performance and double robustness of the estimators in
\Cref{sec:estimation-and-inference}.

The first experiment verifies double robustness in a simple two-period
DGP using the never-exposed control group. We compare the main estimators,
namely the DR estimators in
\Cref{prop:dr-estimator-alin-panel,prop:dr-estimator-alin-rc}
and the DML estimators in \Cref{prop:dml-panel,prop:dml-rc}, to three
non-doubly robust alternatives: inverse probability weighting (IPW),
standardized IPW (IPWS), and outcome regression (REG), each constructed by
exploiting the ratio-of-ATT-parameters structure in
\cref{eq:identification-panel,eq:identification-rc}. The standardization in
IPWS corresponds to the normalization of control weights as done in
\Cref{sec:nonparametric-identification}. This simulation setup follows the one in
\cite{sazhao}.

The second experiment considers staggered exposure to the instrument for
both control groups in \cref{eq:control-variables}, in both data
settings, and in the presence of an unobserved confounder affecting both
treatment and outcome. We focus on the $LATT(e,t)$ estimates varying
across cohorts and time, and aggregate these into estimates of the
weighted estimand in \cref{eq:esl-estimand}. Finally, we introduce a
group indicator under which $LATT$ differs across the two groups. We estimate
group-specific effects via \cref{eq:dr-estimand-group} and aggregate
these to horizon-specific effects as in \cref{eq:esl-estimand}, capturing
average dynamic differences across groups for each horizon $l$.

The third experiment follows the same setup as in the second experiment
(without the group indicator) but now imposes absorbing treatment
\cref{eq:treatment-absorbed}.
This setting allows us to verify the Bloom-type result in
\Cref{prop:bloom} and compare our estimators to those of \citet{csa}.
As before, a hidden confounder jointly determines treatment and outcome,
so the DiD estimator is biased, whereas the IDiD estimator is not.

\subsection{Simulation Experiment 1: Double Robustness}
\label{sec:simexp1}
We compare the DR and DML estimators to the IPW, IPWS and REG estimators
through a simulation experiment similar to \cite{sazhao}.
To assess the double robustness properties, we simulate four DGPs with varying
degrees of misspecification.
The first DGP has both the outcome regression and propensity score correctly
specified.
The second DGP has the outcome regression correctly specified and the propensity
score misspecified.
The third DGP has the outcome regression misspecified and the propensity
score correctly specified.
The fourth and last DGP has both the outcome regression and propensity score
misspecified.

We focus on a two-period setup with covariates using a never-exposed control
group with panel data.
Let
\begin{align*}
	X = (X_{1}, X_{2})' \sim N(0, I_{2}),
	\quad
	\mathrm{supp}(E) = \{\infty,2\},
\end{align*}
and
\begin{align*}
	h_{\ell}(X) = 0.8 X_{1} - 0.5 X_{2},
	\quad
	h_{n}(X)    = 0.8\sin(X_{1}) - 0.5 \cdot \mathbf{1}\{X_{2}>0\} + 0.3 X_{1}X_{2},
\end{align*}
and define
\begin{align*}
	g_{ps}(X) \in \{h_{\ell}(X), h_{n}(X)\},
	\qquad
	s(X) \in \{h_{\ell}(X), h_{n}(X)\}.
\end{align*}
Here, $h_{\ell}$ corresponds to correct specification and $h_{n}$ to
misspecification.
Exposure occurs in period $2$ with probability
\begin{align*}
	P(E=2 \mid X) = \mathrm{expit}(g_{ps}(X)).
\end{align*}
Treatment propensities are
\begin{align*}
	p_{t}(\infty) = \mathrm{expit}(0.2 + s(X) - 0.4t),
	\quad
	p_{t}(2)
	= \mathrm{expit}(0.2 + s(X) - 0.4t + \kappa),
\end{align*}
and treatment states are generated as
\begin{align*}
	D_{t}(\infty) = \mathbf{1}\{U \le p_{t}(\infty)\},
	\quad
	D_{t}(2) = \mathbf{1}\{U \le p_{t}(2)\},
	\quad
	U \sim \mathrm{Uniform}(0,1).
\end{align*}
Untreated potential outcomes follow
\begin{align*}
	Y_{t}(0) & = [1 + 0.8 \cdot (t - 1)] s(X) + \varepsilon_{t},
\end{align*}
while treated outcomes are
\begin{align*}
	Y_{t}(1) = Y_{t}(0) + \tau,
\end{align*}
where $\varepsilon_{1}, \varepsilon_{2} \sim N(0,0.2^{2})$.
The observed treatment is given by
\cref{eq:observed-treatment-status} and the observed outcome by
\cref{eq:observed-outcome}.
Note that the treatment and outcome evolutions depend on the same specification
$s(X)$.
Also, the true local average treatment effect on the treated for those exposed
in period $2$ equals $LATT(2, 2) = \tau$.

In the experiment, we set $\kappa, \tau = 1$,
$n = 5000$, $\mathcal{T} = 2$, and repeat the experiment $B=4999$ times.
For the repeated cross-sections case, we set $\lambda = 0.5$.
For the DML estimators, we use $K=5$ folds for cross-fitting. Outcome
regressions are estimated by linear regression, and propensity scores,
when required, by logistic regression.
The results are shown in \Cref{tab:sim-exp-1-panel,tab:sim-exp-1-rc} for the panel
and repeated cross-sections settings, respectively.

The panel results confirm the expected double robustness and efficiency
properties of the proposed estimators. When both nuisance components are
correctly specified (DGP1), all estimators are essentially
unbiased with coverage close to the nominal level.
The IPW estimators exhibit substantially higher variance than the other
estimators. When only the
outcome regression is correctly specified (DGP2), the DR, DML and regression estimators remain
approximately unbiased, whereas the IPW estimators become severely
biased and coverage deteriorates.
Conversely, when only the propensity score is correctly specified (DGP3), the
DR and DML estimators again remain unbiased, while the regression estimator is biased,
illustrating the double robustness property. When both nuisance components are
misspecified (DGP4), all estimators are biased, as expected.

The repeated cross-section results display the same qualitative double
robustness pattern but with larger dispersion. In DGP1-DGP3,
the DR and DML estimators remain approximately unbiased with reasonable coverage,
again verifying the double robustness property,
although their variance is higher than in the panel case. In contrast,
the IPW estimators, even in DGP1 under correct specification, exhibit extreme
variability, reflected in very large RMSE and asymptotic variance, driven by
instability in the weighting scheme.
In particular, the non-standardized IPW estimator explodes in variance, as was
also the case for the two-period DiD experiment in \cite{sazhao},
although here further amplified by the ratio structure of the estimators.
When the propensity score is misspecified (DGP2), the DR and DML estimators remain
relatively robust, while the regression estimator is biased. Finally, when both
nuisance components are misspecified (DGP4), all estimators are biased.

Overall, the results highlight the double robustness of the proposed
estimators. While this property holds in both data settings, the
repeated cross-section estimators are less stable in finite samples due
to noisier estimation of the underlying components; larger sample sizes
would mitigate this. The DR and DML estimators perform similarly in this
experiment, although the DML estimator exhibits slightly higher
variance, likely due to cross-fitting. In more complex settings, DML,
combined with more flexible nuisance models, is expected to outperform
the simpler DR estimator.

\begin{table}[H]
	\centering
	\caption{Simulation experiment 1: Panel data}
	\label{tab:sim-exp-1-panel}

	\begin{threeparttable}

		\includegraphics[width=\textwidth]{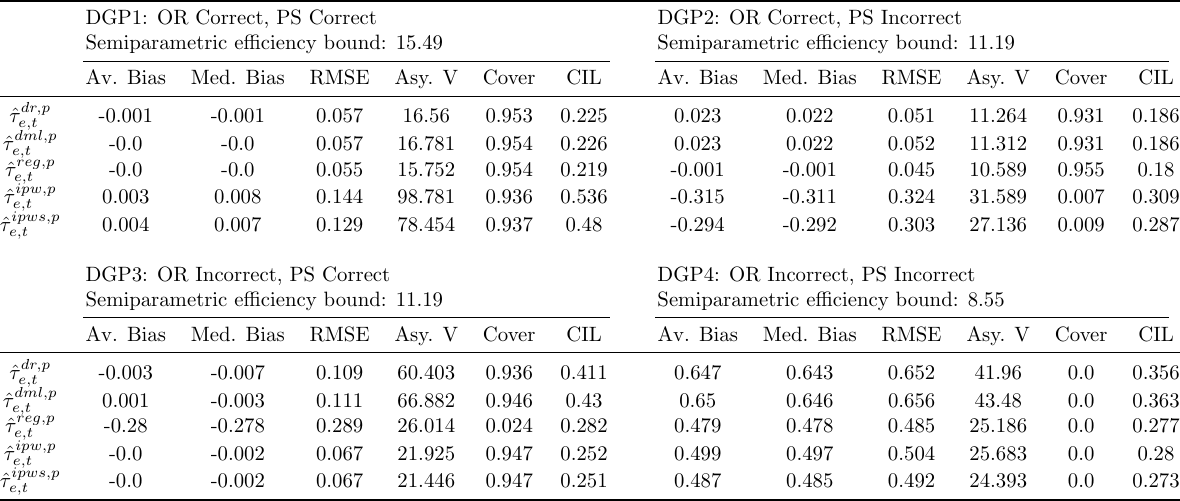}

		\begin{tablenotes}
			\footnotesize
			\item Notes: Simulations are based on $B=4999$ Monte Carlo repetitions
			with sample size $n=5000$. The target parameter is $LATT(2, 2)$;
			all subscripts $(e, t) = (2, 2)$.

			$\hat{\tau}^{reg,s}_{e,t}$ denotes the outcome regression (OR)
			estimator, $\hat{\tau}^{ipw,s}_{e,t}$ the inverse probability
			weighting (IPW) estimator, $\hat{\tau}^{ipws,s}_{e,t}$ the
			standardized IPW estimator, $\hat{\tau}^{dr,s}_{e,t}$ the doubly
			robust (DR) estimator, and $\hat{\tau}^{dml,s}_{e,t}$ the DML estimator,
			where $s \in \{p, rc\}$ indicates the panel and repeated cross-section
			settings, respectively.

			We use a linear working model for the outcome regression and a logistic
			working model for the propensity score, with parameters estimated by OLS
			and maximum likelihood. “Av. Bias”, “Med. Bias”, “RMSE”, “Asy. V”,
			“Cover”, and “CIL” denote the average bias, median bias, root mean
			squared error, average estimated asymptotic variance, 95\% coverage
			probability, and average confidence interval length, respectively.
			The semiparametric efficiency bounds are calculated using the true values
			and the results from \Cref{prop:if-latt-panel,prop:if-latt-rc},
			respectively.
			The table follows the structure of \cite{sazhao}.
		\end{tablenotes}

	\end{threeparttable}
\end{table}

\begin{table}[H]
	\centering
	\caption{Simulation experiment 1: Repeated cross-sections}
	\label{tab:sim-exp-1-rc}

	\begin{threeparttable}

		\includegraphics[width=\textwidth]{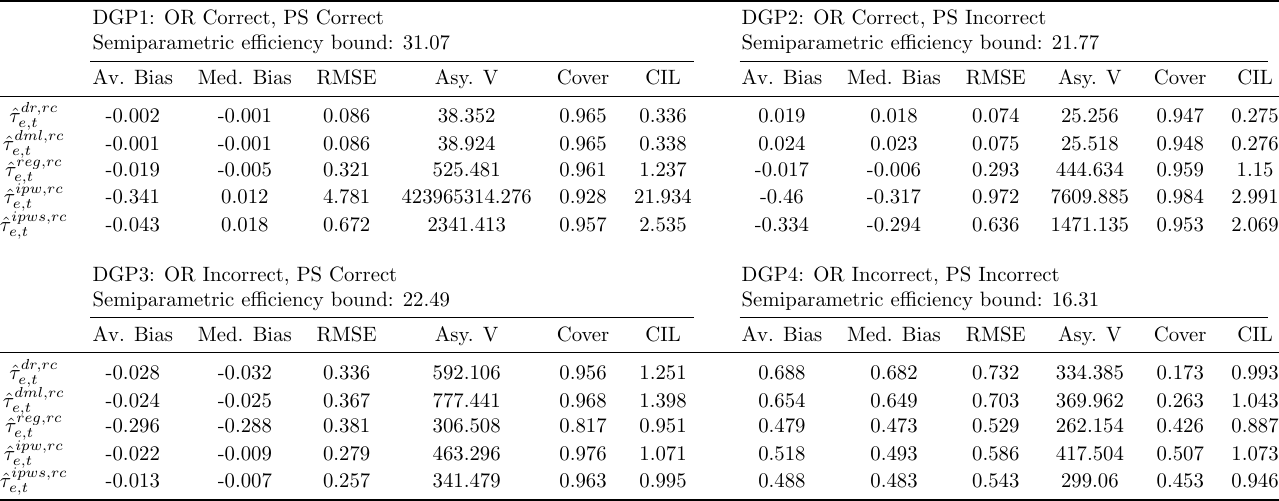}

		\begin{tablenotes}
			\footnotesize
			\item Notes: See \Cref{tab:sim-exp-1-panel} for notes.
		\end{tablenotes}

	\end{threeparttable}
\end{table}

\subsection{Simulation Experiment 2: Staggered exposure and control groups}
\label{sec:simexp2}
In this simulation experiment we simulate a panel with staggered exposure and
introduce the time-varying hidden confounder $H_{t}$
determining both the treatment $D_{t}$ and outcome $Y_{t}$.
The setup for the simulation is as follows.
We simulate a time-invariant covariate and time-varying hidden confounder
as $X, H_{t} \sim N(0, 1),$
and exposure cohorts from
\begin{align}
	\label{eq:propensity-and-support}
	P(E = e \mid X) = \frac{\exp(\beta_{e} X,)}{
		\sum_{e \in \mathcal{E}} \exp(\beta_{e} X,)
	},
	\quad
	\mathrm{supp}(E) = \{\infty, 2, 3, \ldots, n_{E}\},
\end{align}
with $\beta_{e} \in \{0.1, \ldots, 0.3\}$, $e \in \mathrm{supp}(E)$,
equally spaced between $0.1$ and $0.3$\footnote{
	For $P(E = e \mid X)$ simulated this way,
	the generalized propensity \cref{eq:generalized-propensity-scores}
	in the case of the never-exposed control group
	equals
	$p := P(E = e \mid X, E_{e} + C^{nev} = 1)
		= \mathrm{expit}([\beta_{e} - \beta_{\infty}]X),
	$
	where $\beta_{\infty}$ is the parameter corresponding to the never-exposed control
	group;
	hence, $p / (1 - p) = \exp([\beta_{e} - \beta_{\infty}]X)
		\sim \mathrm{LogNormal}(0, [\beta_{e} - \beta_{\infty}]^{2}).
	$
	Thus, in order to stabilize the propensity in the simulations,
	we constrain the parameters $\beta_{e}$ to $\{0.1, \ldots, 0.3\}$ so the
	maximum difference is $0.2$.
}.
The potential treatments and outcomes are simulated as:
\begin{align}
	\label{eq:meat-treatment}
	L_{t}         & = \nu + 0.5 X + H_{t} + \frac{1}{8} t/\mathcal{T},
	\quad
	\nu  \sim \mathrm{Uniform}(-1, -0.2),
	\\
	\nonumber
	p_{t}(\infty) & = \mathrm{expit}(L_{t}),
	\quad
	p_{t}(e) = \mathrm{expit}(L_{t} + \delta),
	\\
	\nonumber
	D_{t}(\infty) & = \mathbf{1}\{U_{t} \leq p_{t}(\infty)\},
	\quad
	D_{t}(e) = \mathbf{1}\{U_{t} \leq p_{t}(e)\},
	\quad
	U_{t}  \sim \mathrm{Uniform}(0, 1),
	\\
	\label{eq:pos-simulation}
	              & Y_{t}(0) = \eta + X + H_{t} + t/\mathcal{T} + \varepsilon_{t},
	\quad
	Y_{t}(1) = Y_{t}(0) + \tau_{t} + \nu_{t},
\end{align}
where $\eta, \varepsilon_{t}, \nu_{t} \sim N(0, 1)$.
Note that $H_{t}$ determines both the treatment and outcome cf.
\cref{eq:meat-treatment,eq:pos-simulation}.
Again, the observed treatment is given by
\cref{eq:observed-treatment-status} and the observed outcome by
\cref{eq:observed-outcome}.

For simplicity, we focus on the DR estimators
\cref{eq:dr-estimator-alin-panel,eq:dr-estimator-alin-rc} and evaluate
their finite-sample performance under panel and repeated cross-section
sampling, respectively, using never-exposed and not-yet-exposed control
groups. We set $n = 10{,}000$, $\mathcal{T} = 5$, $\tau_t = 1$ for all $t$, $\delta = 1$,
$n_{E} = 5$
and perform $B = 1499$ simulation draws.
In the not-yet-exposed experiment the $E = \infty$ is removed from
\cref{eq:propensity-and-support}.
The average evolution of the
treatment $D_t$ and outcome $Y_t$, conditional on the exposure cohort
$E$, for a simulated dataset is shown in
\Cref{fig:evolution}.

\medskip
\noindent\textit{Cohort-time effects.}
\Cref{fig:simexp2-a} reports the distribution of $\hat{\tau}^{dr, p}_{e,t}$ and
$\hat{\tau}^{dr, rc}_{e,t}$ across all $(e, t)$ pairs. The estimates are
centered around the true value in all designs, showing the robustness of the
IDiD procedure to hidden confounding. The panel
case is most tightly concentrated, while the repeated cross-sections case exhibits
higher dispersion; both show occasional large realizations due to the ratio
structure.
In both designs, the not-yet-exposed estimates, compared to the never-exposed,
become increasingly dispersed at longer horizons, reflecting the corresponding
reduction in the size of the control group.

\begin{figure}[!htbp]
	\centering

	\begin{subfigure}{\textwidth}
		\centering
		\includegraphics[width=0.70\textwidth]{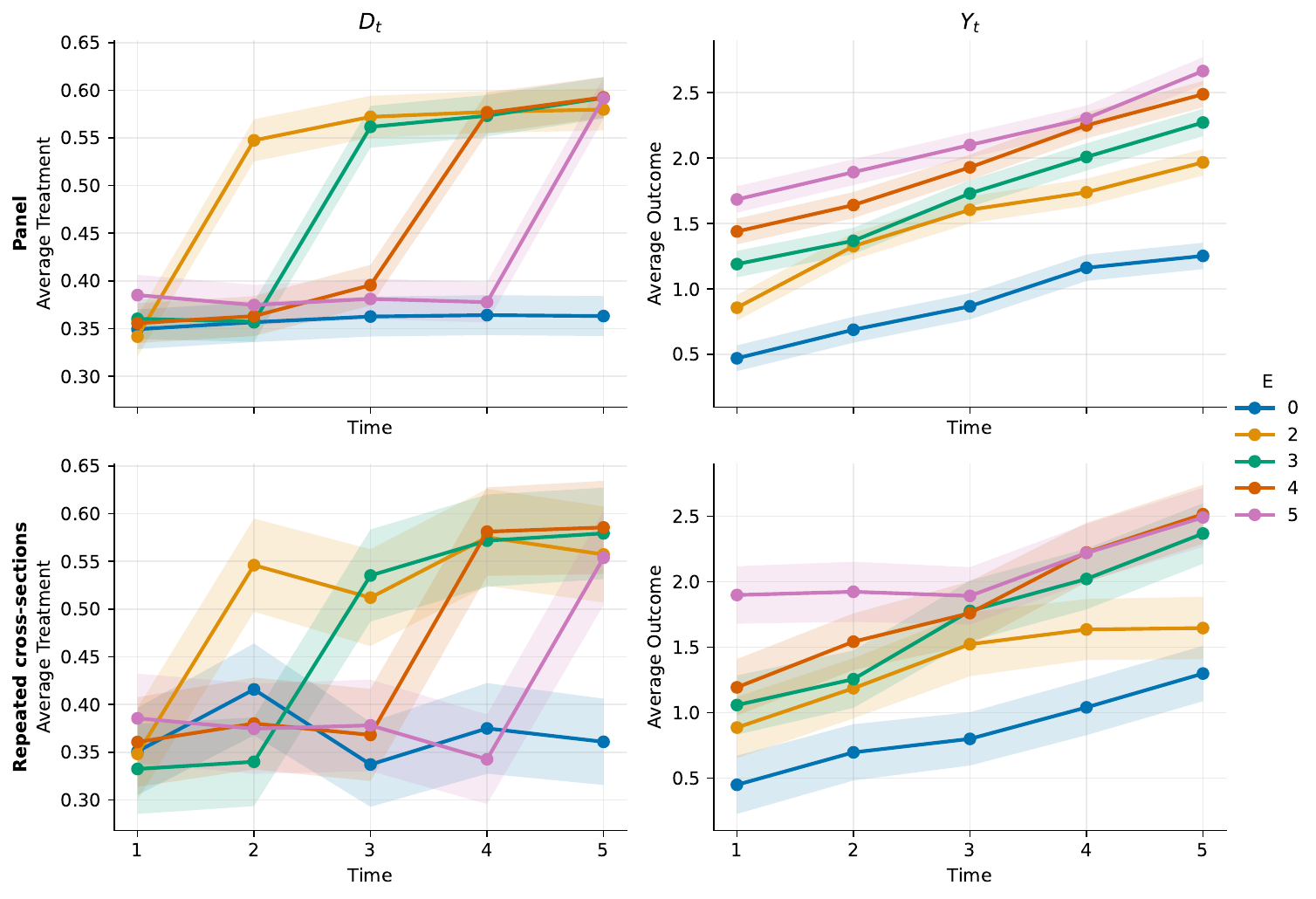}
		\caption{Average evolution of treatment $D_{t}$ and outcome $Y_{t}$
			conditional on exposure cohort $E$ from the DGP of
			Simulation Experiment 2. The upper row shows panel data and
			the lower repeated cross-sections.}
		\label{fig:evolution}
	\end{subfigure}

	\vspace{0.5em}

	\begin{subfigure}{\textwidth}
		\centering
		\includegraphics[width=0.7\textwidth]{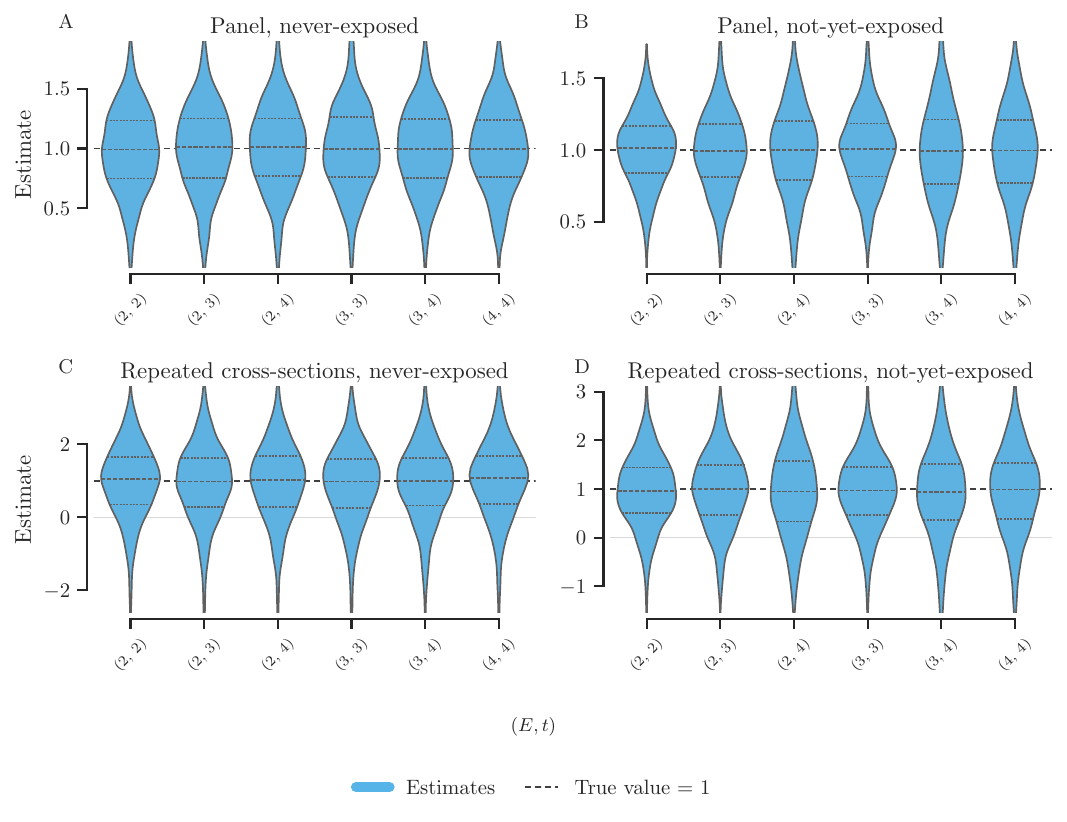}
		\caption{Simulation experiment 2: distributions of the cohort-time
			estimators $\hat{\tau}^{dr,p}_{e,t}$ and $\hat{\tau}^{dr,rc}_{e,t}$
			across $(e,t)$ pairs, for panel data and repeated cross-sections under
			never-exposed and not-yet-exposed controls.
			The estimates for $(e, t) = (5, 5)$ using never-exposed controls are not
			shown.
		}
		\label{fig:simexp2-a}
	\end{subfigure}

	\caption{Simulation experiment 2: design and estimator distributions.}
	\label{fig:simexp2}
\end{figure}

\medskip
\noindent\textit{Event study aggregation.}
\Cref{fig:simexp2-b} shows the aggregated effects $\hat{\theta}^{IV}_{es(l)}$. The
same pattern emerges: estimates remain centered, but variability increases in
repeated cross-sections. Again, for the not-yet-exposed controls, dispersion
increases with the horizon due to the shrinking control group.

We conduct a multiplier bootstrap and construct simultaneous confidence
bands for $\{\theta^{IV}_{es}(l) \mid l = 0,1,2\}$ in
\cref{eq:esl-estimand}.
Let $\hat{C}^{IV}_{es}(l)$ denote the interval for the
horizon-$l$ effect. Then, as in \cite{csa}, the intervals are simultaneous in
the sense that\footnote{
	The general simultaneous confidence band for
	$LATT(e, t)$, $t \geq e$,
	is simultaneous in the sense that
	$
		P (LATT(e, t) \in  \hat{C}(e, t) \,
		\forall (e,t) \in \mathcal{E} \times \{2, 3, \ldots,
		\mathcal{T}\}
		: t \geq e) \to  1 - \alpha
	$; see \cite{csa} for details.
}
\begin{equation}
	\label{eq:cov}
	P (\theta^{IV}_{es}(l) \in  \hat{C}^{IV}_{es}(l) \,
	\forall l \in  \{0,1,2\} )
	\to  1 - \alpha.
\end{equation}
We verify this in \Cref{tab:sim-exp-2-tests} with $\alpha = 0.05$.
The block ``Pooled'' reports coverage rates \cref{eq:cov} computed from
simultaneous confidence bands (``Simultaneous'') obtained via the
multiplier bootstrap and from pointwise confidence intervals
(``Pointwise'').

As expected, the simultaneous coverage is close to $0.95$, whereas the
pointwise coverage is too low. The repeated cross-section estimates are
noisier, which propagates to the simultaneous bands, yielding slightly
conservative (too high) coverage.

\begin{figure}[H]
	\centering
	\includegraphics[width=0.7\textwidth]{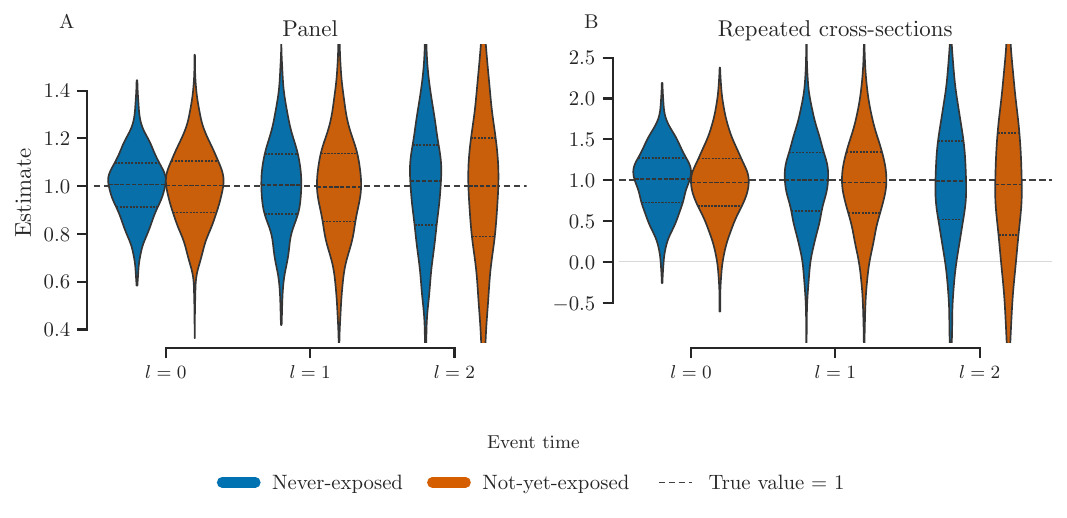}
	\caption{Simulation experiment 2: aggregated effects using never-exposed
		and not-yet-exposed controls, for panel data and repeated cross-sections.
		The largest realizations are omitted for readability.
	}
	\label{fig:simexp2-b}
\end{figure}

\begin{table}[H]
	\centering

	\caption{Simulation experiment 2: aggregated effects coverage}
	\label{tab:sim-exp-2-tests}

	\begin{threeparttable}

		\includegraphics[width=0.8\textwidth]{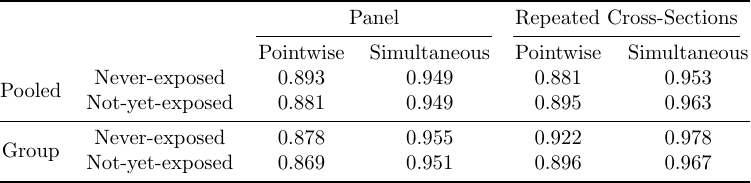}

		\begin{tablenotes}
			\footnotesize
			\item Notes: The first two rows report joint coverage for the
			horizon-aggregated effects in \Cref{fig:simexp2-b}. The last two rows
			report joint coverage for the differenced group-specific
			horizon-aggregated effects
			in \Cref{fig:simexp2-c}.
		\end{tablenotes}

	\end{threeparttable}
\end{table}

\medskip
\noindent
\textit{Group-specific aggregation.}
Finally, we simulate a new dataset with a binary group indicator $F \sim
	\mathrm{Bern}(0.5)$ splitting the units into two groups.
The groups have different $LATT$s with $F = 1$ being affected less.
We set $n=20,000$ so there are two groups of approximately $10,000$.
Our aim is to estimate the group-wise estimand $\tau^{dr,p,\Delta}_{e,t}$ as
defined in \cref{eq:dr-estimand-group}
and aggregate the effects into dynamic horizon-$l$ effects as in
\cref{eq:esl-estimand}.
The treated state in \cref{eq:pos-simulation} now becomes
\begin{align*}
	Y_{t}(1) = Y_{t}(0) + \tau_{t}(1 - 0.5 F) + \nu_{t},
\end{align*}
where those with $F = 1$ have $LATT(e, t) = 0.5 \tau_{t}$.

\Cref{fig:simexp2-c} reports the aggregated group-specific horizon
effects and their difference, i.e., the aggregated estimates of
$\tau^{dr,p,\Delta}_{e,t}$.
The estimator tracks the targets
well in the panel setting, whereas repeated cross-sections exhibit
higher variability, particularly for the difference estimand.
Again, the not-yet-exposed case has higher variability because of the
shrinking control group.

We again verify the validity of the multiplier bootstrap to construct
simultaneous confidence bands for the aggregated effects, here for the
aggregated horizon-$l$ effects of $\tau^{dr,p,\Delta}_{e,t}$.
The results are reported in block ``Group'' of \Cref{tab:sim-exp-2-tests}.
For panel data, the
simultaneous coverage is close to $0.95$ for both control groups. In repeated
cross-sections, coverage is again slightly conservative (above $0.95$),
reflecting the higher variability of this design, compounded by the additional
noise from differencing. Pointwise coverage is too low across all simulations,
as expected.

\begin{figure}[H]
	\centering
	\includegraphics[width=0.7\textwidth]{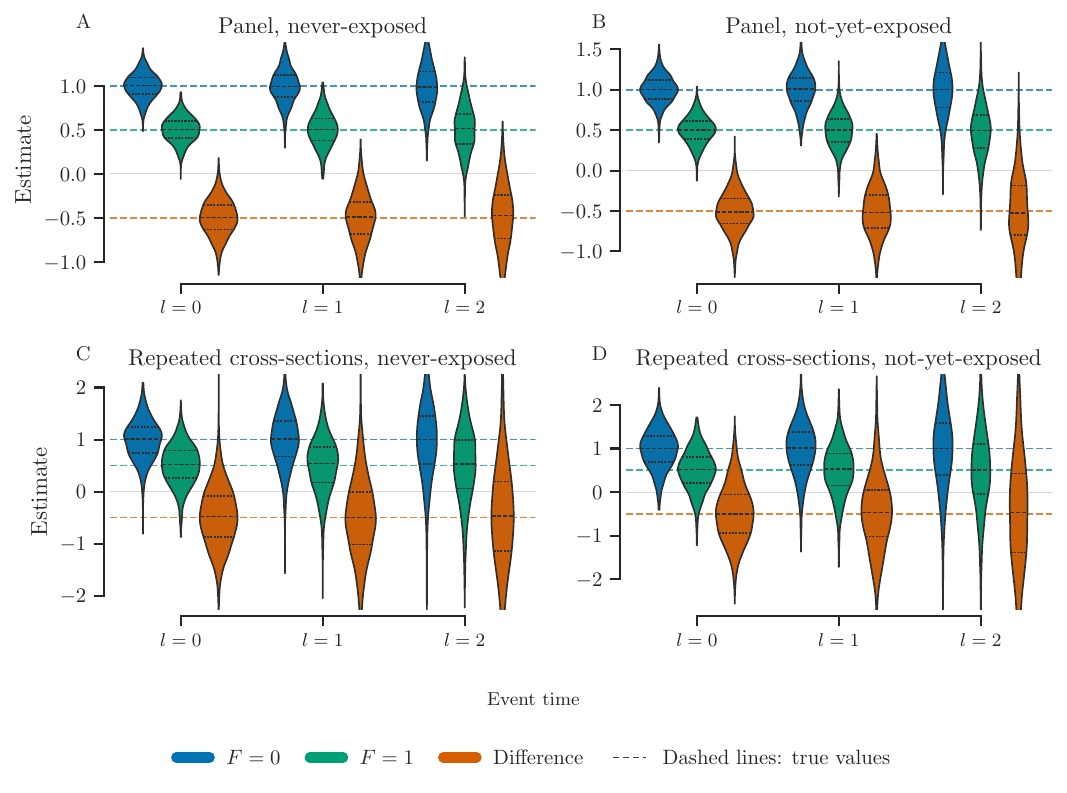}
	\caption{
		Simulation experiment 2: group-specific aggregated effects and their difference
		under never-exposed and not-yet-exposed controls.
		The largest realizations are omitted for readability.
	}
	\label{fig:simexp2-c}
\end{figure}

\subsection{Simulation Experiment 3: One-sided compliance and staggered exposure}
We simulate a setting with staggered exposure to the instrument and
one-sided compliance in the presence of an unobserved confounder
affecting both treatment and outcomes to verify \Cref{prop:bloom}. The
setup follows Simulation Experiment 2, except that treatment is an
absorbing state, i.e. \cref{eq:treatment-absorbed}, and the treatment effects
are now heterogeneous across treatment cohorts and time.
The potential outcomes of \cref{eq:pos-simulation}
in the treated state are now given by
\begin{align}
	\label{eq:att-effect-sim2}
	Y_{t}(g) = Y_{t}(0) + \tau_{g, t} + \nu_{t},
	\quad
	\tau_{g,t}
	=
	\frac{t-g+1}{2},
\end{align}
where the dependence on $g$ reflects the absorbing treatment;
the $ATT(g,t,e)$ parameters increases linearly over post-treatment periods as
$1/2,\,1,\,3/2,\ldots$ and are invariant across exposure cohorts $e \in
	\mathcal{E}$.
The potential outcomes in the untreated states are as in
\cref{eq:pos-simulation}.

In the simulation experiment, we set $n = 10,000$, $\mathcal{T}=5$,
$\mathrm{supp}(E)=\{\infty,2,3,4,5\}$, use as control group the
never-exposed units, $C^{nev}$,
and repeat the experiment $B=1499$ times.
The implied $ATT(g, t)$ parameters are
$
	\{\tau_{2,2}, \tau_{2,3}, \tau_{2,4}, \tau_{2,5},
	\tau_{3,3}, \tau_{3,4}, \tau_{3,5},
	\tau_{4,4}, \tau_{4,5},
	\tau_{5,5}\},
$
with true values given by \cref{eq:att-effect-sim2}.

For each draw, we estimate $LATT(e,t)$ using our DR
estimators in each data setting. We also construct the treatment cohort
variable $G = \min\{t \mid D_t = 1\}$, enabling two quantities:
(i) the true $ATT(g,t,e)$, computed from the simulated (unobserved)
potential outcomes, and (ii) $ATT(g,t)$ estimated via the
procedure of \cite{csa}.

By \Cref{prop:bloom}, for each $(e,t) \in \mathcal{E} \times \{2, 3, \ldots,
	\mathcal{T}\}$ with $t \ge e$,
$\hat{LATT}(e,t)$ should approximately equal $\sum_{g \le t}
	\hat{ATT}(g,t,e)\,\hat{P}(G = g \mid D_t = 1, E_e = 1),$ cf. \cref{eq:bloom}.
We compute this quantity using Oracle estimates for $\hat{ATT}(g,t,e)$ (the CSA
estimators won't work because of the hidden confounder) and empirical
conditional probabilities for the weights.

Results are reported in \Cref{tab:sim-exp-2}.
For horizon $l=t-e=0$, the $LATT(e, t)$ estimates coincide exactly with the
exposure-cohort-specific $ATT(g, t)$, so the aggregated $l=0$ effect equals the
cohort-weighted $ATT(g, t)$, which equals $1/2$ for each pair $(e, t) \in
	\{(2, 2), (3, 3), \ldots, (5, 5)\}$.
For $l>0$, the estimates of $LATT(e,t)$ are a convex combination of the
underlying $ATT(g, t, e)$ parameters.
For instance,
\begin{align*}
	\hat{LATT}(2, 4) = 1.25
	 & \approx
	\hat{ATT}(2, 4, 2)\hat{P}(G_{2} = 1 \mid D_{4} = 1, E_{2} = 1)
	\\ & \quad
	+ \hat{ATT}(3, 4, 2)\hat{P}(G_{3} = 1 \mid D_{4} = 1, E_{2} = 1)
	\\ & \quad
	+ \hat{ATT}(4, 4, 2)\hat{P}(G_{4} = 1 \mid D_{4} = 1, E_{2} = 1)
	\\
	& = 1.501 \cdot 0.615 + 1.001 \cdot 0.266 + 0.493 \cdot 0.119
\\
& = 1.248
,
\end{align*}
use the first two entries in row $(e,t) = (2,4)$ of
\Cref{tab:sim-exp-2} (the oracle and weight estimates are not reported
in the table).

The weights $\hat{P}(G_g=1 \mid D_4=1, E_2=1)$, $g = 2, 3, 4$, are decreasing, reflecting that
treatment is absorbing. Among units first exposed at $t=2$, a large share is
treated at $t=2$, a smaller share at $t=3$, and only a small
fraction remains untreated until $t=4$. The heterogeneity in $\hat{ATT}(g,4,2)$
captures the different treatment horizons.

Comparing the Oracle and $LATT(e,t)$ columns across both data settings,
we see that the columns are approximately equal, verifying the result in
\Cref{prop:bloom}. In contrast, the DiD estimates of $ATT(g,t)$ in the CSA
column are biased because of the unobserved confounder $H_t$ affecting
both the treatment and outcome, cf. \cref{eq:meat-treatment,eq:pos-simulation}.
The instrumental variable component of the IDiD estimator addresses this source
of bias and remains unbiased, illustrating the robustness of the IDiD design to
hidden confounding,
and how the IDiD estimators can be leveraged in settings where DiD estimators
fail.

\begin{table}[H]
	\centering

	\caption{Simulation experiment 3}
	\label{tab:sim-exp-2}

	\begin{threeparttable}

		\includegraphics[width=1\textwidth]{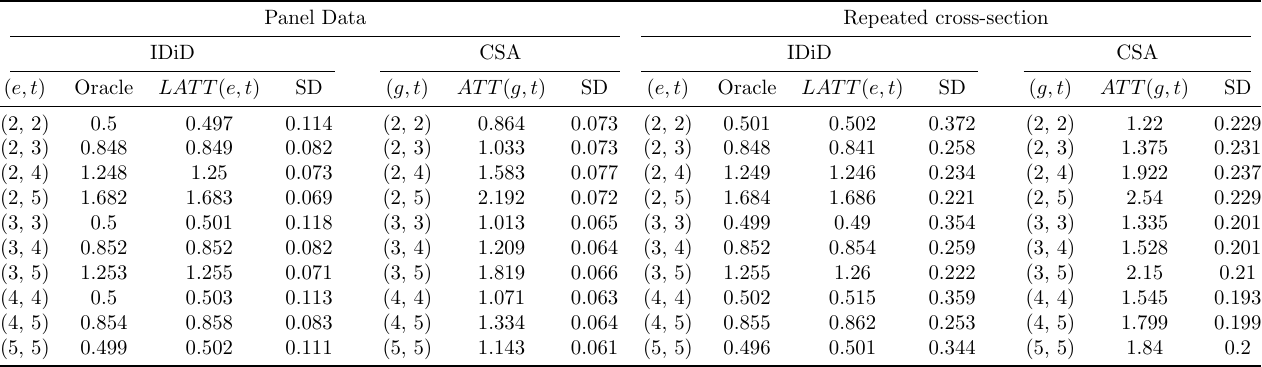}

		\begin{tablenotes}
			\footnotesize
			\item Notes: The table shows the simulation results of simulation
			experiment 3.
			Each entry is an average over $B=1499$ replications.
			The ``$LATT(e,t)$'' columns report the IDiD estimates, the
			``Oracle'' columns report the right-hand side of \cref{eq:bloom} using
			the simulated potential outcomes and estimated group probabilities, and
			the ``$ATT(g,t)$'' columns report the DiD estimates based on
			the estimator of \cite{csa}. The comparison between the ``$LATT(e,t)$''
			and ``Oracle'' columns verifies \Cref{prop:bloom}.
			The ''SD`` column reports the standard deviation of the $LATT(e, t)$
			and $ATT(g, t)$ over all simulation draws.
		\end{tablenotes}

	\end{threeparttable}
\end{table}

\section{Conclusion}
This paper develops doubly robust estimands for the $LATT(e, t)$ parameter in
the IDiD setting with covariates, covering both panel data and repeated
cross-sections, and allowing for never-exposed and not-yet-exposed control
groups. We also construct corresponding DR and DML estimators.

Our approach is estimand-based: using the influence function machinery of
\cite{kennedy}, adapted to our setting in
\Cref{sec:influence-functions}, we derive the DR estimands from first
principles. The simulation results confirm the validity
of the corresponding estimators across both data settings and control groups, and verify
the group-difference estimands and the Bloom result under absorbing
treatment, linking IDiD with staggered instrument exposure to DiD with
staggered treatment.
At the same time, the analysis highlights an important distinction between the
two data settings: while both admit doubly robust identification and
estimation, the repeated-cross-sections case does not inherit the same
remainder-term behavior as the panel case because of the additional nuisance
components.

More broadly, the paper underscores the value of an estimand-based approach: it
replaces reverse-engineering of regression parameters with the specification of
interpretable estimands, and enables modular reuse of components across
settings, in particular when deriving the corresponding EIFs.
It also provides another example of constructing Neyman-orthogonal
scores from EIFs (cf.~\cite[Section 2.2.5]{chernozhukovDoubleDebiasedMachine2018a};
see also \cite{chen2026equivalence}) and of two cases in which the
cross-fitted estimator coincides with the DML estimator.
Practically, the paper delivers both the DR and DML estimators, together with
a software implementation, making the proposed methods directly usable in
applied work.
We hope that the approach and components developed here, together with the work
on which they build, serve as a useful guide and toolbox for future work.

\newpage
\bibliography{refs}

@article{csa,
        title = {Difference-in-{{Differences}} with Multiple Time Periods},
        author = {Callaway, Brantly and Sant'Anna, Pedro H. C.},
        year = 2021,
        month = dec,
        journal = {Journal of Econometrics},
        series = {Themed {{Issue}}: {{Treatment Effect}} 1},
        volume = {225},
        number = {2},
        pages = {200--230},
        issn = {0304-4076},
        doi = {10.1016/j.jeconom.2020.12.001},
        url = {
               https://www.sciencedirect.com/science/article/pii/S0304407620303948
               },
        urldate = {2024-02-06},
}

@article{sazhao,
        title = {Doubly Robust Difference-in-Differences Estimators},
        author = {Sant'Anna, Pedro H.C. and Zhao, Jun},
        year = 2020,
        month = nov,
        journal = {Journal of Econometrics},
        volume = {219},
        number = {1},
        pages = {101--122},
        issn = {03044076},
        doi = {10.1016/j.jeconom.2020.06.003},
        url = {https://linkinghub.elsevier.com/retrieve/pii/S0304407620301901},
        urldate = {2024-08-29},
        langid = {english},
}

@misc{kennedy,
        title = {Semiparametric Doubly Robust Targeted Double Machine Learning:
                 A Review},
        shorttitle = {Semiparametric Doubly Robust Targeted Double Machine
                      Learning},
        author = {Kennedy, Edward H.},
        year = 2023,
        month = jan,
        number = {arXiv:2203.06469},
        eprint = {2203.06469},
        primaryclass = {stat},
        publisher = {arXiv},
        url = {http://arxiv.org/abs/2203.06469},
        urldate = {2024-02-15},
        archiveprefix = {arXiv},
}

@misc{miyaji2025,
        title = {Instrumented {{Difference-in-Differences}} with {{Heterogeneous
                 Treatment Effects}}},
        author = {Miyaji, Sho},
        year = 2025,
        month = jan,
        number = {arXiv:2405.12083},
        eprint = {2405.12083},
        primaryclass = {econ},
        publisher = {arXiv},
        doi = {10.48550/arXiv.2405.12083},
        url = {http://arxiv.org/abs/2405.12083},
        urldate = {2025-04-07},
        archiveprefix = {arXiv},
}

@misc{miyaji,
        title = {Instrumented {{Difference-in-Differences}} with {{Heterogeneous
                 Treatment Effects}}},
        author = {Miyaji, Sho},
        year = 2026,
        month = feb,
        journal = {arXiv.org},
        number = {arXiv:2405.12083},
        eprint = {2405.12083},
        primaryclass = {econ},
        publisher = {arXiv},
        doi = {10.48550/arXiv.2405.12083},
        url = {https://arxiv.org/abs/2405.12083v6},
        urldate = {2026-05-04},
        langid = {english},
        archiveprefix = {arXiv},
}

@article{abadieSemiparametricDifferenceinDifferencesEstimators2005a,
        title = {Semiparametric {{Difference-in-Differences Estimators}}},
        author = {Abadie, Alberto},
        year = 2005,
        month = jan,
        journal = {The Review of Economic Studies},
        volume = {72},
        number = {1},
        pages = {1--19},
        issn = {1467-937X, 0034-6527},
        doi = {10.1111/0034-6527.00321},
        url = {
               https://academic.oup.com/restud/article-lookup/doi/10.1111/0034-6527.00321
               },
        urldate = {2024-08-29},
        langid = {english},
}

@article{frolichExploitingRegionalTreatment2010,
        title = {Exploiting {{Regional Treatment Intensity}} for the {{
                 Evaluation}} of {{Labor Market Policies}}},
        author = {Fr{\"o}lich, Markus and Lechner, Michael},
        year = 2010,
        month = sep,
        journal = {Journal of the American Statistical Association},
        volume = {105},
        number = {491},
        pages = {1014--1029},
        publisher = {Taylor \& Francis},
        issn = {0162-1459},
        doi = {10.1198/jasa.2010.ap08148},
        url = {https://doi.org/10.1198/jasa.2010.ap08148},
        urldate = {2026-01-30},
}

@article{sloczynskiDoublyRobustEstimation2022,
        title = {Doubly {{Robust Estimation}} of {{Local Average Treatment
                 Effects Using Inverse Probability Weighted Regression Adjustment
                 }}},
        author = {S{\l}oczy{\'n}ski, Tymon and Uysal, S Derya and Wooldridge,
                  Jeffrey M},
        year = 2022,
        langid = {english},
}

@misc{csax,
        title = {Efficient {{Difference-in-Differences}} and {{Event Study
                 Estimators}}},
        author = {Chen, Xiaohong and Sant'Anna, Pedro H. C. and Xie, Haitian},
        year = 2025,
        month = jun,
        number = {arXiv:2506.17729},
        eprint = {2506.17729},
        primaryclass = {econ},
        publisher = {arXiv},
        doi = {10.48550/arXiv.2506.17729},
        url = {http://arxiv.org/abs/2506.17729},
        urldate = {2025-07-10},
        archiveprefix = {arXiv},
}

@article{chernozhukovDoubleDebiasedMachine2018a,
        title = {Double/Debiased Machine Learning for Treatment and Structural
                 Parameters},
        author = {Chernozhukov, Victor and Chetverikov, Denis and Demirer, Mert
                  and Duflo, Esther and Hansen, Christian and Newey, Whitney and
                  Robins, James},
        year = 2018,
        month = feb,
        journal = {The Econometrics Journal},
        volume = {21},
        number = {1},
        pages = {C1-C68},
        issn = {1368-4221, 1368-423X},
        doi = {10.1111/ectj.12097},
        url = {https://academic.oup.com/ectj/article/21/1/C1/5056401},
        urldate = {2025-02-13},
        copyright = {http://doi.wiley.com/10.1002/tdm\_license\_1.1},
        langid = {english},
}

@article{chang2020,
        title = {Double/Debiased Machine Learning for Difference-in-Differences
                 Models},
        author = {Chang, Neng-Chieh},
        year = 2020,
        month = may,
        journal = {The Econometrics Journal},
        volume = {23},
        number = {2},
        pages = {177--191},
        issn = {1368-4221, 1368-423X},
        doi = {10.1093/ectj/utaa001},
        url = {https://academic.oup.com/ectj/article/23/2/177/5722119},
        urldate = {2025-08-26},
        copyright = {https://academic.oup.com/journals/pages/open\_access/funder
                     \_policies/chorus/standard\_publication\_model},
        langid = {english},
}

@misc{metalearner,
        title = {A {{Meta-learner}} for {{Heterogeneous Effects}} in {{
                 Difference-in-Differences}}},
        author = {Lan, Hui and Chang, Haoge and Dillon, Eleanor and Syrgkanis,
                  Vasilis},
        year = 2025,
        month = apr,
        number = {arXiv:2502.04699},
        eprint = {2502.04699},
        primaryclass = {stat},
        publisher = {arXiv},
        doi = {10.48550/arXiv.2502.04699},
        url = {http://arxiv.org/abs/2502.04699},
        urldate = {2026-01-21},
        archiveprefix = {arXiv},
        langid = {english},
}

@article{deng,
        title = {Improved Two-Period Difference-in-Differences by Targeted
                 Estimation},
        author = {Deng, Yuhao and Zhang, Tao and Peng, Xiang and Liu, Qinqing},
        year = 2025,
        month = dec,
        journal = {Economics Letters},
        volume = {257},
        pages = {112600},
        issn = {01651765},
        doi = {10.1016/j.econlet.2025.112600},
        url = {https://linkinghub.elsevier.com/retrieve/pii/S0165176525004379},
        urldate = {2026-03-06},
        langid = {english},
}

@article{dukesvansteelandt,
        title = {On doubly robust inference for double machine learning in
                 semiparametric regression},
        author = {Dukes, Oliver and Vansteelandt, Stijn and Whitney, David},
        journal = {Journal of Machine Learning Research},
        volume = {25},
        number = {279},
        pages = {1--46},
        year = {2024},
}

@incollection{mogstad2024instrumental,
        title = {Instrumental variables with unobserved heterogeneity in
                 treatment effects},
        author = {Mogstad, Magne and Torgovitsky, Alexander},
        booktitle = {Handbook of labor economics},
        volume = {5},
        pages = {1--114},
        year = {2024},
        publisher = {Elsevier},
}

@article{newey1990,
        title = {Semiparametric Efficiency Bounds},
        author = {Newey, Whitney K.},
        year = 1990,
        journal = {Journal of Applied Econometrics},
        volume = {5},
        number = {2},
        pages = {99--135},
        issn = {1099-1255},
        doi = {10.1002/jae.3950050202},
        url = {https://onlinelibrary.wiley.com/doi/abs/10.1002/jae.3950050202},
        urldate = {2023-12-21},
        copyright = {Copyright \copyright{} 1990 John Wiley \& Sons, Ltd.},
        langid = {english},
}

@book{ap,
        title = {Mostly {{Harmless Econometrics}}: {{An Empiricist}}'s {{
                 Companion}}},
        shorttitle = {Mostly {{Harmless Econometrics}}},
        author = {Angrist, Joshua D. and Pischke, J{\"o}rn-Steffen},
        year = 2009,
        eprint = {j.ctvcm4j72},
        eprinttype = {jstor},
        publisher = {Princeton University Press},
        doi = {10.2307/j.ctvcm4j72},
        url = {https://www.jstor.org/stable/j.ctvcm4j72},
        urldate = {2024-09-07},
        isbn = {978-0-691-12034-8},
}

@article{bloom1984estimating,
        title = {Estimating the effect of job-training programs, using
                 longitudinal data: Ashenfelter's findings reconsidered},
        author = {Bloom, Howard S},
        journal = {Journal of Human Resources},
        pages = {544--556},
        year = {1984},
        publisher = {JSTOR},
}

@article{dml,
        title = {Double/Debiased Machine Learning for Treatment and Structural
                 Parameters},
        author = {Chernozhukov, Victor and Chetverikov, Denis and Demirer, Mert
                  and Duflo, Esther and Hansen, Christian and Newey, Whitney and
                  Robins, James},
        year = 2018,
        month = feb,
        journal = {The Econometrics Journal},
        volume = {21},
        number = {1},
        pages = {C1-C68},
        issn = {1368-4221, 1368-423X},
        doi = {10.1111/ectj.12097},
        url = {https://academic.oup.com/ectj/article/21/1/C1/5056401},
        urldate = {2025-02-13},
        copyright = {http://doi.wiley.com/10.1002/tdm\_license\_1.1},
        langid = {english},
}

@article{tan2006,
        title = {Regression and {{Weighting Methods}} for {{Causal Inference
                 Using Instrumental Variables}}},
        author = {Tan, Zhiqiang},
        year = 2006,
        month = dec,
        journal = {Journal of the American Statistical Association},
        volume = {101},
        number = {476},
        pages = {1607--1618},
        issn = {0162-1459, 1537-274X},
        doi = {10.1198/016214505000001366},
        url = {https://www.tandfonline.com/doi/full/10.1198/016214505000001366},
        urldate = {2026-01-30},
        langid = {english},
}

@book{tsiatis,
        title = {Semiparametric {{Theory}} and {{Missing Data}}},
        author = {Tsiatis, Anastasios},
        year = 2006,
        series = {Springer {{Series}} in {{Statistics}}},
        publisher = {Springer New York},
        address = {New York, NY},
        doi = {10.1007/0-387-37345-4},
        url = {http://link.springer.com/10.1007/0-387-37345-4},
        urldate = {2024-02-11},
        isbn = {978-0-387-32448-7},
        langid = {english},
}

@article{van2015statistics,
        title = {Statistics as a science, not an art: the way to survive in data
                 science},
        author = {van der Laan, Mark},
        journal = {Amstat News},
        volume = {1},
        pages = {292},
        year = {2015},
}

@book{vdlrose,
        title = {Targeted {{Learning}}: {{Causal Inference}} for {{Observational
                 }} and {{Experimental Data}}},
        shorttitle = {Targeted {{Learning}}},
        author = {Van Der Laan, Mark J. and Rose, Sherri},
        year = 2011,
        series = {Springer {{Series}} in {{Statistics}}},
        publisher = {Springer},
        address = {New York, NY},
        doi = {10.1007/978-1-4419-9782-1},
        url = {https://link.springer.com/10.1007/978-1-4419-9782-1},
        urldate = {2024-01-30},
        isbn = {978-1-4419-9781-4 978-1-4419-9782-1},
        langid = {english},
}

@article{bach2022doubleml,
        title = {DoubleML-an object-oriented implementation of double machine
                 learning in python},
        author = {Bach, Philipp and Chernozhukov, Victor and Kurz, Malte S and
                  Spindler, Martin},
        journal = {Journal of Machine Learning Research},
        volume = {23},
        number = {53},
        pages = {1--6},
        year = {2022},
}

@article{chen2026equivalence,
        title = {On the Equivalence between Neyman Orthogonality and Pathwise
                 Differentiability},
        author = {Chen, Yuxi and Kennedy, Edward H and Balakrishnan, Sivaraman},
        journal = {arXiv preprint arXiv:2603.15817},
        year = {2026},
}

\newpage
\appendix
\section{Proofs}
\label{sec:proofs}
\subsection{Panel Data Case}
\label{sec:proofs-panel}

Define the LATT conditional on $X$:
\begin{align}
	\label{eq:lattx}
	LATT(e, t, X) :=
	E [ Y_{t}(1) - Y_{t}(0)
			\mid X, E_{e} = 1, D_{t}(e) > D_{t}(\infty)
		]
\end{align}
Also:
\begin{align}
	LATT^{num}(e, t, X)
	 & :=
	LATT(e, t, X)
	P(D_{t}(e) > D_{t}(\infty)\mid X, E_{e} = 1) \\
	LATT^{den}(e, t, X)
	 & :=
	P(D_{t}(e) > D_{t}(\infty)\mid X, E_{e} = 1)
\end{align}
such that by definition
$LATT(e, t, X) = LATT^{num}(e, t, X)/LATT^{den}(e, t, X)$.

\begin{lemma}
	\label{lemma:id-panel-num}
	\begin{align}
		\label{eq:latt-num-lie}
		E [ LATT^{num}(e, t, X) \mid E_{e} = 1 ]
		=
		LATT(e, t) P(D_{t}(e) > D_{t}(\infty)\mid E_{e} = 1)
	\end{align}
\end{lemma}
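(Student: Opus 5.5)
The plan is to rewrite $LATT^{num}(e,t,X)$ as a single conditional expectation and then apply the law of iterated expectations. Introduce the complier indicator $K := \mathbf{1}\{D_{t}(e) > D_{t}(\infty)\}$ and the individual effect $\Delta := Y_{t}(1) - Y_{t}(0)$. By definition,
\begin{align*}
LATT^{num}(e,t,X) = E[\Delta \mid X, E_{e}=1, K=1]\, P(K=1 \mid X, E_{e}=1).
\end{align*}
The elementary identity $E[A \mid B, K=1]P(K=1\mid B) = E[A K \mid B]$ then gives $LATT^{num}(e,t,X) = E[\Delta K \mid X, E_{e}=1]$. This uses overlap (\Cref{ass:overlap}) only to ensure that the conditioning event $E_{e}=1$ has positive probability. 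On the set where $P(K=1\mid X,E_{e}=1)=0$, the identity still holds with the convention that $LATT(e,t,X)$ is multiplied by zero.

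Next, take the expectation conditional on $E_{e}=1$. The tower property, applied with $\sigma(X, E_{e}) \supseteq \sigma(E_{e})$, yields
\begin{align*}
E[LATT^{num}(e,t,X) \mid E_{e}=1] = E[\Delta K \mid E_{e}=1].
\end{align*}
Apply the same elementary identity in reverse, now conditioning only on $E_{e}=1$:
\begin{align*}
E[\Delta K \mid E_{e}=1] = E[\Delta \mid E_{e}=1, K=1]\, P(K=1\mid E_{e}=1).
\end{align*}
The first factor is exactly $LATT(e,t)$ as defined in \cref{eq:latt}, and the second is $P(D_{t}(e) > D_{t}(\infty)\mid E_{e}=1)$. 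Together these give \cref{eq:latt-num-lie}.

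No identifying assumptions are needed, since this is a purely probabilistic manipulation. The only delicate point is integrability: we need $E|\Delta| < \infty$ so that all conditional expectations are well defined. We also need $P(K=1\mid E_{e}=1)>0$ so that $LATT(e,t)$ itself is defined; if this probability is zero, both sides equal zero under the product convention. I would state these as implicit regularity conditions. Apart from that, the key step is recognising $LATT^{num}$ as $E[\Delta K\mid X,E_{e}=1]$, which avoids working with the ratio directly.
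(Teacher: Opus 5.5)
Your proof is correct and is the paper's argument: rewrite $LATT^{num}(e,t,X)$ as $E[\{Y_t(1)-Y_t(0)\}\mathbf{1}\{D_t(e)>D_t(\infty)\}\mid X, E_e=1]$, apply the law of iterated expectations, and split the product again conditional on $E_e=1$. The paper cites \Cref{ass:monotonicity} at that final splitting step, but you are right that the step is purely probabilistic and monotonicity is not needed there.
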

\begin{proof}
	First, notice:
	\begin{align*}
		 & E \left(
		[Y_{t}(1) - Y_{t}(0)]\mathbf{1}\{D_{t}(e) > D_{t}(\infty)\}
		\mid X, E_{e} = 1
		\right)
		\\&=
		E \left(
		Y_{t}(1) - Y_{t}(0)
		\mid X, E_{e} = 1, D_{t}(e) > D_{t}(\infty)
		\right)
		P(D_{t}(e) > D_{t}(\infty)\mid X, E_{e} = 1)
		\\
		 & =
		LATT^{num}(e, t, X)
	\end{align*}
	An application of the LIE yields:
	\begin{align*}
		 & E \left\{
		LATT^{num}(e, t, X)
		\mid E_{e} = 1
		\right\}
		\\&=
		E \left\{
		E \left(
		[Y_{t}(1) - Y_{t}(0)]\mathbf{1}\{D_{t}(e) > D_{t}(\infty)\}
		\mid X, E_{e} = 1
		\right)
		\mid E_{e} = 1
		\right\}
		\\&=
		E \left\{
		[Y_{t}(1) - Y_{t}(0)]\mathbf{1}\{D_{t}(e) > D_{t}(\infty)\}
		\mid E_{e} = 1
		\right\}
		\\&=
		E \left\{
		Y_{t}(1) - Y_{t}(0)
		\mid E_{e} = 1, D_{t}(e) > D_{t}(\infty)
		\right\}
		P(D_{t}(e) > D_{t}(\infty) \mid E_{e} = 1)
		\\
		 & =
		LATT(e, t) P(D_{t}(e) > D_{t}(\infty)\mid E_{e} = 1)
	\end{align*}
	where the second last equal sign used the LIE together with Monotonicity
	\Cref{ass:monotonicity}.
\end{proof}

\begin{proposition}
	\label{prop:latt-lie}
	\begin{align}
		\label{eq:latt-lie}
		LATT(e, t) =
		\frac{
			E[LATT^{num}(e, t, X) \mid E_{e} = 1]
		}{
			E[LATT^{den}(e, t, X) \mid E_{e} = 1]
		}
	\end{align}
\end{proposition}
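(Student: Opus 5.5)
The plan is to handle the numerator and the denominator of the right-hand side of \cref{eq:latt-lie} separately, and then divide.

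For the numerator, \Cref{lemma:id-panel-num} already gives
\[
E[LATT^{num}(e,t,X)\mid E_e=1]=LATT(e,t)\,P(D_t(e)>D_t(\infty)\mid E_e=1).
\]

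For the denominator, I will apply the law of iterated expectations within the subpopulation $E_e=1$. By definition $LATT^{den}(e,t,X)=E[\mathbf{1}\{D_t(e)>D_t(\infty)\}\mid X,E_e=1]$, so averaging over $X$ given $E_e=1$ yields
\[
E[LATT^{den}(e,t,X)\mid E_e=1]=P(D_t(e)>D_t(\infty)\mid E_e=1).
\]
This is the same argument as in the lemma, with the outcome contrast $Y_t(1)-Y_t(0)$ replaced by the constant $1$.

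Taking the ratio, the common factor $P(D_t(e)>D_t(\infty)\mid E_e=1)$ cancels and leaves $LATT(e,t)$. This establishes \cref{eq:latt-lie}.

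The only delicate point is that the cancelled factor must be nonzero, i.e.\ the complier share in cohort $e$ must be positive. I would state this explicitly as the relevance requirement implicit in the definition \cref{eq:latt}: the conditional expectation there is only well defined when $P(D_t(e)>D_t(\infty)\mid E_e=1)>0$. The same requirement is needed for $LATT^{num}(e,t,X)$ and for $LATT(e,t,X)$ to make sense on a set of $X$ with positive probability. Conditioning on $E_e=1$ itself is legitimate because \Cref{ass:overlap} gives $P(E_e=1)>c>0$.

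Beyond that, the argument is routine: monotonicity is needed only inside the lemma, not at this step.
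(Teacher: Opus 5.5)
Your proposal is correct and matches the paper's proof: the numerator comes from \Cref{lemma:id-panel-num}, the denominator from the law of iterated expectations given $E_e=1$, and the common complier share cancels in the ratio. Your explicit requirement that $P(D_t(e)>D_t(\infty)\mid E_e=1)>0$ for that cancellation is a worthwhile addition, since the paper leaves it implicit.
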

\begin{proof}
	We already have an expression for the numerator in \cref{eq:latt-num-lie}.
	For the denominator, an application of the LIE yields:
	\begin{align}
		\nonumber
		E[LATT^{den}(e, t, X) \mid E_{e} = 1]
		 & =
		E[P(D_{t}(e) > D_{t}(\infty) \mid X, E_{e} = 1) \mid E_{1} = 1]
		\\&=
		\label{eq:latt-denom-lie}
		P(D_{t}(e) > D_{t}(\infty) \mid E_{e} = 1)
	\end{align}
	Combining \cref{eq:latt-num-lie,eq:latt-denom-lie}
	yields \cref{eq:latt-lie}.
\end{proof}

\begin{proposition}
	Recall mean functions in \cref{eq:mean-functions-panel}.
	\cref{eq:identification-panel}.
	Identification of conditional LATT:
	\begin{align}
		\label{eq:lattx-id}
		LATT(e, t, X)
		 & =
		\frac{
			m_{e, t}^{trt, p}(X) - m_{e, t}^{c, p}(X)
		}{
			g_{e, t}^{trt, p}(X) - g_{e, t}^{c, p}(X)
		}
	\end{align}
\end{proposition}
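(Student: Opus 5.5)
The plan is to identify the numerator $LATT^{num}(e,t,X)$ and the denominator $LATT^{den}(e,t,X)$ separately, conditional on $X$, and then take the ratio. Overlap (\Cref{ass:overlap}) guarantees that all conditional expectations given $X, E_e=1$ and $X, C=1$ are well defined.

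\textbf{Denominator.} Under monotonicity (\Cref{ass:monotonicity}), $D_t(e)-D_t(\infty)=\mathbf{1}\{D_t(e)>D_t(\infty)\}$, so
\[
LATT^{den}(e,t,X)=E[D_t(e)-D_t(\infty)\mid X,E_e=1].
\]
We then add and subtract $D_{e-1}(\infty)$ inside the expectation. By no anticipation (\Cref{ass:no-anticipation}) at period $e-1<e$, $E[D_{e-1}(\infty)\mid X,E_e=1]=E[D_{e-1}(e)\mid X,E_e=1]$. By \eqref{eq:observed-treatment-status}, on $E_e=1$ we have $D_s=D_s(e)$, so
\[
E[D_t(e)-D_{e-1}(e)\mid X,E_e=1]=g^{trt,p}_{e,t}(X).
\]
The remaining term is $E[D_t(\infty)-D_{e-1}(\infty)\mid X,E_e=1]$. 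We telescope it into one-period changes from $e$ to $t$ and apply conditional parallel trends in the treatment (\Cref{ass:cpt-treat-ne} or \Cref{ass:cpt-treat-nye}) to each increment, replacing the conditioning on $E_e=1$ by $C=1$. On $C=1$ the unit is unexposed through period $t$: for $C^{nev}$ this is immediate, and for $C^{nye}_{e,t}$ it follows from $Z_t=0$ together with \Cref{ass:staggered-instrument}, which gives $Z_s=0$ for all $s\le t$. Hence $D_s=D_s(\infty)$ for $s\le t$, and the term equals $g^{c,p}_{e,t}(X)$.

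\textbf{Numerator.} By the first display in the proof of \Cref{lemma:id-panel-num},
\[
LATT^{num}(e,t,X)=E\big[(Y_t(1)-Y_t(0))\mathbf{1}\{D_t(e)>D_t(\infty)\}\mid X,E_e=1\big].
\]
Using monotonicity and the definition \eqref{eq:exposed-unexposed-outcome},
\[
Y_t(D_t(e))-Y_t(D_t(\infty))=(Y_t(1)-Y_t(0))(D_t(e)-D_t(\infty))=(Y_t(1)-Y_t(0))\mathbf{1}\{D_t(e)>D_t(\infty)\}.
\]
So the numerator is $E[Y_t(D_t(e))-Y_t(D_t(\infty))\mid X,E_e=1]$, and we repeat the denominator argument with outcomes in place of treatments.

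First we subtract and add $Y_{e-1}(D_{e-1}(\infty))$. We need $E[Y_{e-1}(D_{e-1}(e))\mid X,E_e=1]=E[Y_{e-1}(D_{e-1}(\infty))\mid X,E_e=1]$, which is the outcome analogue of no anticipation. The observed-outcome link on $E_e=1$, namely $Y_s=Y_s(D_s(e))$, follows from \Cref{ass:no-carryover} and \eqref{eq:observed-outcome}. This gives $m^{trt,p}_{e,t}(X)$. The parallel-trends step in the unexposed outcome (\Cref{ass:cpt-unexpout-ne} or \Cref{ass:cpt-unexpout-nye}) then gives $m^{c,p}_{e,t}(X)$, using $Y_s=Y_s(D_s(\infty))$ on $C=1$.

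\textbf{Main obstacle.} The delicate point is the pre-period outcome term. \Cref{ass:no-anticipation} is stated only as equality of first-stage means, not pathwise. I would first try to interpret it (as in \cite{miyaji}) as implying $D_{e-1}(e)=D_{e-1}(\infty)$ almost surely on $E_e=1$; under monotonicity this is automatic, since $D_{e-1}(e)-D_{e-1}(\infty)\ge0$ with zero mean. Monotonicity at $t=e-1$ is not literally assumed, however, so this needs an explicit remark or a mild strengthening. With $D_{e-1}(e)=D_{e-1}(\infty)$ in hand, $Y_{e-1}(D_{e-1}(e))=Y_{e-1}(D_{e-1}(\infty))$ pointwise and the numerator steps go through.

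Finally, dividing the two identified pieces yields \eqref{eq:lattx-id}. The denominator is nonzero wherever the complier share is positive, which is implicit in the definition of $LATT(e,t,X)$.
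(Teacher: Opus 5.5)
Your argument is the paper's proof run in the opposite direction. It uses the same add-and-subtract of the unexposed pre-period term at $e-1$, no anticipation for cohort $e$, telescoping with conditional parallel trends, and monotonicity to turn $Y_t(D_t(e))-Y_t(D_t(\infty))$ into the complier-weighted effect.

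The obstacle you flag is genuine, and the paper glosses over it. Its third equality simply cites \Cref{ass:no-anticipation} to replace $Y_{e-1}(D_{e-1}(e))$ by $Y_{e-1}(D_{e-1}(\infty))$ on $E_e=1$. Equality of first-stage means does not remove the cross term $E[(Y_{e-1}(1)-Y_{e-1}(0))(D_{e-1}(e)-D_{e-1}(\infty))\mid X,E_e=1]$. Your pathwise reading, or monotonicity at $e-1$, is exactly what is needed.

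Where you move too fast is the control side for $C^{nye}_{e,t}$. Staggered adoption gives $Z_s=0$ for $s\le t$. But a control unit with $t<E<\infty$ has, by \eqref{eq:observed-treatment-status}, $D_s=D_s(E)$ and $Y_s=Y_s(D_s(E))$, not $D_s(\infty)$ and $Y_s(D_s(\infty))$. Passing to the latter is again a no-anticipation step, now for the later cohorts.

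\begin{itemize}
\item In the denominator, the mean form suffices after averaging over the cohorts that make up $\{C^{nye}_{e,t}=1\}$, provided it is assumed for all of them. The stated assumption only covers $e\in\mathcal{E}$, so it misses $\bar{e}$.
\item In the numerator, the same cross term $E[(Y_s(1)-Y_s(0))(D_s(E)-D_s(\infty))\mid X,E]$ reappears. You therefore need the pathwise version for every cohort in the control group.
\end{itemize}

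The paper takes this step silently (``by the definition of the exposed/unexposed outcomes''). With your strengthening imposed for all finite exposure dates, your argument is complete for both control groups.
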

\begin{proof}
	\textbf{Numerator:}
	\begin{align}
		\nonumber
		          & m_{e, t}^{trt, p}(X) - m_{e, t}^{c, p}(X)
		\\
		\label{eq:numerator-proof-panel}
		          & =
		E[\Delta_{t-e+1}Y_{t} \mid X, E_{e} = 1]
		- E[\Delta_{t-e+1}Y_{t} \mid X, C = 1]
		\\
		\nonumber
		          & =
		E[Y_{t}(D_{t}(e)) - Y_{e - 1}(D_{e-1}(e)) \mid X, E_{e} = 1]
		- E[Y_{t}(D_{t}(\infty)) - Y_{e - 1}(D_{e-1}(\infty)) \mid X, C = 1]
		\\
		\nonumber
		          & =
		E[Y_{t}(D_{t}(e)) - Y_{e - 1}(D_{e-1}(\infty)) \mid X, E_{e} = 1]
		- E[Y_{t}(D_{t}(\infty)) - Y_{e - 1}(D_{e-1}(\infty)) \mid X, C = 1]
		\\
		\nonumber
		          & =
		E[Y_{t}(D_{t}(e)) - Y_{t}(D_{t}(\infty)) \mid X, E_{e} = 1]
		\\
		\nonumber & \quad +
		E[Y_{t}(D_{t}(\infty)) - Y_{e - 1}(D_{e-1}(\infty)) \mid X, E_{e} = 1]
		- E[Y_{t}(D_{t}(\infty)) - Y_{e - 1}(D_{e-1}(\infty)) \mid X, C = 1]
		\\
		\nonumber
		          & =
		E[Y_{t}(D_{t}(e)) - Y_{t}(D_{t}(\infty)) \mid X, E_{e} = 1]
		\\
		\nonumber & \quad +
		\sum_{l=0}^{t-e} E[\Delta Y_{t-l}(D_{t-l}(\infty))\mid X, E_{e} = 1]
		- \sum_{l=0}^{t-e} E[\Delta Y_{t-l}(D_{t-l}(\infty)) \mid X, C = 1]
		\\
		\nonumber
		          & =
		E[Y_{t}(D_{t}(e)) - Y_{t}(D_{t}(\infty)) \mid X, E_{e} = 1]
		\\
		\nonumber
		          & =
		E\{
		[Y_{t}(1) - Y_{t}(0)][D_{t}(e) - D_{t}(\infty)] \mid E_{e} = 1,
		X
		\}
		\\
		\nonumber
		          & =
		E \left(
		Y_{t}(1) - Y_{t}(0)
		\mid E_{e} = 1, D_{t}(e) > D_{t}(\infty), X
		\right)
		P(D_{t}(e) > D_{t}(\infty)\mid X, E_{e} = 1)
		\\
		\label{eq:lattx-id-num}
		          & =
		LATT^{num}(e, t, X)
	\end{align}
	where we in the second equal sign used the definition of the exposed/unexposed
	outcomes \cref{eq:exposed-unexposed-outcome},
	in the third equal sign used \Cref{ass:no-anticipation},
	in the fourth equal sign used the add-$0$-trick;
	likewise in the fifth equal sign
	for $t-1, t-2, \ldots, e - 2$,
	using the notation $\Delta Y_{t}(D_{t}(\infty)) := Y_{t}(D_{t}(\infty)) -
		Y_{t-1}(D_{t-1}(\infty))$;
	in the sixth equality used
	\Cref{ass:cpt-unexpout-ne}
	or \Cref{ass:cpt-unexpout-nye},
	and in the seventh equality used monotonicity \Cref{ass:monotonicity}
	and the LIE.

	\medskip\noindent
	\textbf{Denominator}
	\begin{align}
		\nonumber
		          & g_{e, t}^{trt, p}(X) - g_{e, t}^{c, p}(X)
		\\
		\label{eq:denominator-proof-panel}
		          & =
		E[\Delta_{t-e+1}D_{t} \mid X, E_{e} = 1]
		- E[\Delta_{t-e+1}D_{t} \mid X, C = 1]
		\\
		\nonumber
		          & =
		E[D_{t}(e) - D_{e-1}(e) \mid X, E_{e} = 1]
		- E[D_{t}(\infty) - D_{e-1}(\infty) \mid X, C = 1]
		\\
		\nonumber
		          & =
		E[D_{t}(e) - D_{e-1}(\infty) \mid X, E_{e} = 1]
		- E[D_{t}(\infty) - D_{e-1}(\infty) \mid X, C = 1]
		\\
		\nonumber
		          & =
		E[D_{t}(e) - D_{t}(\infty) \mid X, E_{e} = 1]
		\\
		\nonumber & \quad +
		E[D_{t}(\infty) - D_{e-1}(\infty) \mid X, E_{e} = 1]
		- E[D_{t}(\infty) - D_{e-1}(\infty) \mid X, C = 1]
		\\
		\nonumber
		          & =
		E[D_{t}(e) - D_{t}(\infty) \mid X, E_{e} = 1]
		\\
		\nonumber & \quad +
		\sum_{l=0}^{t-e} E[\Delta D_{t-l}(\infty)\mid X, E_{e} = 1]
		- \sum_{l=0}^{t-e} E[\Delta D_{t-l}(\infty) \mid X, C = 1]
		\\
		\nonumber
		          & =
		E[D_{t}(e) - D_{t}(\infty) \mid X, E_{e} = 1]
		\\
		\nonumber
		          & =
		P(D_{t}(e) > D_{t}(\infty) \mid X, E_{e} = 1)
		\\
		\label{eq:lattx-id-denom}
		          & =
		LATT^{den}(e, t, X)
	\end{align}
	using similar steps as for the outcome,
	the conditional parallel trends in the treatment,
	\Cref{ass:cpt-treat-ne}
	or \Cref{ass:cpt-treat-nye},
	and in the last equal sign that
	$D_{t}(e) - D_{t}(\infty)$ is binary by monotonicity.

	\medskip\noindent
	Dividing \eqref{eq:lattx-id-num} by \eqref{eq:lattx-id-denom} yields
	\cref{eq:lattx-id}.
\end{proof}

\begin{proofref}{prop:identification-panel}
	Taking the conditional expectation conditional on $(E_{e} = 1)$
	in
	\eqref{eq:lattx-id-num}
	and \eqref{eq:lattx-id-denom}
	yields:
	\begin{align}
		\label{eq:id-latt-reg}
		 & \frac{
		E[
		E[\Delta_{t-e+1}Y_{t} \mid X, E_{e} = 1]
		- E[\Delta_{t-e+1}Y_{t} \mid X, C = 1] \mid E_{e}
		= 1
		]
		}{
		E[
		E[\Delta_{t-e+1}D_{t} \mid X, E_{e} = 1]
		- E[\Delta_{t-e+1}D_{t} \mid X, C = 1]
		\mid E_{e} = 1
		]
		}
		\\
		\nonumber
		 & =
		\frac{E[
					m_{e, t}^{trt, p}(X) - m_{e, t}^{c, p}(X) \mid E_{e} = 1
				]
		}{E[
					g_{e, t}^{trt, p}(X) - g_{e, t}^{c, p}(X) \mid E_{e} = 1
				]}
		\\
		\nonumber
		 & =
		\frac{
			E[LATT^{num}(e, t, X) \mid E_{e} = 1]
		}{
			E[LATT^{den}(e, t, X) \mid E_{e} = 1]
		}
		\\
		\nonumber
		 & =
		LATT(e, t)
	\end{align}
	where the last equality follows by \Cref{prop:latt-lie}.
\end{proofref}

\subsection{Repeated Cross-Sections Case}
\label{sec:proofs-rc}
\begin{proposition}
	\label{prop:id-num-denom-rc}
	Let $C$ be either of the control variables $C^{nev}$ and $C^{nye}_{e, s}$.
	Recall the mean functions \cref{eq:mean-functions-rc}.
	Then:
	\begin{align}
		LATT^{num}(e, t, X)
		 & = m_{e,t}^{trt, rc}(X)
		- m_{e,e-1}^{trt, rc}(X)
		-
		[
		m_{e, t, t}^{rc, c}(X)
		- m_{e, t, e-1}^{rc, c}(X)
		]
		\; \text{a.s.}
		\\
		LATT^{den}(e, t, X)
		 & = g_{e,t}^{trt, rc}(X)
		- g_{e,e-1}^{trt, rc}(X)
		-
		[
		g_{e, t, t}^{rc, c}(X)
		- g_{e, t, e-1}^{rc, c}(X)
		]
		\; \text{a.s.}
	\end{align}
\end{proposition}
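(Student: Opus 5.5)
The plan mirrors the panel-data argument for the conditional identification result in \cref{eq:lattx-id}. The only new ingredient is translating each repeated cross-section mean function into a conditional expectation of potential outcomes at a fixed period, using the sampling scheme \cref{ass:sample:mixture}.

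\textbf{Step 1: remove the conditioning on $T$.} Since $(E_2,\ldots,E_{\mathcal{T}},C^{nev},X)$ is invariant to $T$, and conditional on $T=s$ the observed $(Y,D)$ equal $(Y_s,D_s)$ drawn from the population distribution, we get
\[
m_{e,s}^{trt,rc}(X)=E[Y_s\mid X,E_e=1], \qquad m_{e,t,s}^{c,rc}(X)=E[Y_s\mid X,C=1],
\]
and analogously for the $g$ functions with $D_s$. Here $C$ is $C^{nev}$ or $C^{nye}_{e,t}$, and the control indicator $C^{nye}_{e,t}$ is the same for both periods $s\in\{e-1,t\}$. Formally, one writes $E[Y\mid X,E_e=1,T=s]$ through the mixture measure \cref{eq:mixture-csa}: the mixture weight $\lambda_s$ cancels, and invariance of the covariates and groups across $T$ lets us drop the conditioning on $T$.

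\textbf{Step 2: reduce to the panel case.} By linearity of conditional expectation,
\[
m_{e,t}^{trt,rc}(X)-m_{e,e-1}^{trt,rc}(X)=E[Y_t-Y_{e-1}\mid X,E_e=1],
\]
which is a population quantity defined on the underlying (unobserved) panel of potential outcomes. The same holds for the control difference. The combination in the statement therefore equals $m^{trt,p}_{e,t}(X)-m^{c,p}_{e,t}(X)$, computed under the population distribution of potential outcomes. From there I would invoke the chain of equalities in the numerator part of the proof of \cref{eq:lattx-id} verbatim. That chain uses the exposed/unexposed outcomes \cref{eq:exposed-unexposed-outcome}, no anticipation (\Cref{ass:no-anticipation}), telescoping, conditional parallel trends (\Cref{ass:cpt-unexpout-ne} or \Cref{ass:cpt-unexpout-nye}), and monotonicity (\Cref{ass:monotonicity}). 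This yields $LATT^{num}(e,t,X)$ as in \cref{eq:lattx-id-num}. The denominator is handled identically via the treatment parallel trends assumptions and \cref{eq:lattx-id-denom}, using that $D_t(e)-D_t(\infty)$ is binary.

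\textbf{Main obstacle.} The only delicate point is Step 1: justifying that conditioning on $T=s$ under the mixture measure recovers the marginal population expectation of $Y_s$ given $(X,E_e=1)$ or $(X,C=1)$. This requires that $T$ be independent of the potential outcomes and of $(E,C,X)$ jointly, which is what \cref{ass:sample:mixture} asserts. I would spell out this step explicitly with densities from \cref{eq:mixture-csa}. Overlap (\Cref{ass:overlap}) together with $\lambda_s>0$ guarantees that the conditioning events have positive probability, so all conditional expectations are well defined.
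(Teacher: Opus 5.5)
Your proposal is correct and matches the paper's proof essentially step for step. The paper likewise uses \Cref{ass:sample:mixture} to rewrite $E[Y \mid X, E_e = 1, T = s]$ and its control analogue as population conditional means of $Y_s$, which turns the combination into $E[Y_t - Y_{e-1} \mid E_e = 1, X] - E[Y_t - Y_{e-1} \mid C = 1, X]$; it then invokes \cref{eq:lattx-id-num}, and \cref{eq:lattx-id-denom} for the denominator. Your explicit treatment of the mixture-measure step and of the positivity of the conditioning events simply spells out what the paper compresses into one line.
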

\begin{proof}
	For the numerator:
	\begin{align*}
		 & m_{e,t}^{trt, rc}(X)
		- m_{e,e-1}^{trt, rc}(X)
		-
		[
		m_{e, t, t}^{rc, c}(X)
		- m_{e, t, e-1}^{rc, c}(X)
		]
		\\
		 & = E[Y \mid X, E_{e} = 1, T = t]
		- E[Y \mid X, E_{e} = 1, T = e - 1]
		\\ & \quad
		-
		\left(
		E[Y \mid X, C = 1, T = t]
		- E[Y \mid X, C = 1, T = e - 1]
		\right)
		\\
		 & =
		E[Y_{t} - Y_{e - 1} \mid E_{e} = 1, X]
		- E[Y_{t} - Y_{e - 1} \mid C = 1, X]
		\\
		 & =
		LATT^{num}(e, t, X)
	\end{align*}
	where
	the second last equality follows by \Cref{ass:sample:mixture}
	and the last equality follows by \cref{eq:lattx-id-num}.
	For the denominator:
	\begin{align*}
		 & g_{e,t}^{trt, rc}(X)
		- g_{e,e-1}^{trt, rc}(X)
		-
		[
		g_{e, t, t}^{rc, c}(X)
		- g_{e, t, e-1}^{rc, c}(X)
		]
		\\
		 & = E[D \mid X, E_{e} = 1, T = t]
		- E[D \mid X, E_{e} = 1, T = e - 1]
		\\ & \quad
		-
		\left(
		E[D \mid X, C = 1, T = t]
		- E[D \mid X, C = 1, T = e - 1]
		\right)
		\\
		 & =
		E[D_{t} - D_{e - 1} \mid E_{e} = 1, X]
		- E[D_{t} - D_{e - 1} \mid C = 1, X]
		\\
		 & =
		LATT^{den}(e, t, X)
	\end{align*}
	where
	the second last equality follows by \Cref{ass:sample:mixture}
	and the last equality follows by \cref{eq:lattx-id-denom}.
\end{proof}

\medskip\noindent
\textbf{Proof of \Cref{prop:identification-rc}}
\begin{proof}
	A similar argument to \Cref{lemma:id-panel-num,}
	and \Cref{prop:latt-lie} using the mixture measure $P_{M}$
	allows us to write the $LATT(e, t)$ estimand in the ratio form as:
	\begin{align}
		LATT(e, t) =
		\frac{
			E[LATT^{num}(e, t, X) \mid E_{e} = 1, T = t]
		}{
			E[LATT^{den}(e, t, X) \mid E_{e} = 1, T = t]
		}
	\end{align}
	Next, exploiting that $LATT^{num}(e, t, X)$
	and $LATT^{den}(e, t, X)$ are functions of $X$ only,
	and the invariance in \Cref{ass:sample:mixture},
	yields:
	\begin{align}
		\nonumber
		 & \frac{
			E[LATT^{num}(e, t, X) \mid E_{e} = 1, T = t]
		}{
			E[LATT^{den}(e, t, X) \mid E_{e} = 1, T = t]
		}
		\\
		 & =
		\frac{
			E[LATT^{num}(e, t, X) \mid E_{e} = 1]
		}{
			E[LATT^{den}(e, t, X) \mid E_{e} = 1]
		}
		\\
		\nonumber
		 & =
		\frac{
		E[
		m_{e,t}^{trt, rc}(X)
		- m_{e,e-1}^{trt, rc}(X)
		-
		[
		m_{e, t, t}^{rc, c}(X)
		- m_{e, t, e-1}^{rc, c}(X)
		] \mid E_{e} = 1
		]
		}{
		E[
		g_{e,t}^{trt, rc}(X)
		- g_{e,e-1}^{trt, rc}(X)
		-
		[
		g_{e, t, t}^{rc, c}(X)
		- g_{e, t, e-1}^{rc, c}(X)
		] \mid E_{e} = 1
		]
		}
	\end{align}
	where the last equality follows from \Cref{prop:id-num-denom-rc}.
\end{proof}

\subsection{Other Proofs}
\label{sec:proof-other}

\begin{proofref}{prop:dr-both}
	For the panel case, write the normalized control weights
	\cref{eq:weights-nev-panel-norm,eq:weights-nye-panel-norm} as
	$w^{c,p}_{e,t}(C, X; \pi)$, where $\pi$ is a propensity score model and
	$C$ is a generic control indicator.

	For the repeated cross-sections case, write the normalized control weights
	\cref{eq:weights-nev-rc-norm,eq:weights-nye-rc-norm} as
	$w^{c,rc}_{e,t}(C, X; \pi)$, suppressing the period subscript $s$ in the
	not-yet-exposed case. Similarly, write
	$\mu^{c,rc}_{e,t}(X)$ and $\eta^{c,rc}_{e,t}(X)$ for working models of
	$m^{c,rc}_{e,t}(X)$ and $g^{c,rc}_{e,t}(X)$, again suppressing the period
	subscript $s$.

	Akin to \cite{csa}, the normalizations of the weights
	\cref{eq:weights-nye-panel-norm,eq:weights-nev-rc-norm} reduce under correct
	specification, in the panel case, as:
	\begin{align}
		\label{eq:normalizations-reduce-csa}
		E\left[C\frac{p(X)}{1 - p(X)}\right]
		= E[E_{e}],
	\end{align}
	and in the repeated cross-sections case,
	\cref{eq:weights-nev-rc-norm,eq:weights-nye-rc-norm}, as
	\begin{align}
		\label{eq:csa-denom-rc}
		E\left[
			\frac{C \cdot \mathbf{1}\{T = t\}p(X)}{1 - p(X)}
			\right]
		= E[E_{e}] \cdot \lambda_{t},
	\end{align}
	matching the normalization constants from the influence function
	panel data \cref{eq:weights-panel}
	and repeated cross-sections \cref{eq:weights0-rc}.
	Also, in the repeated cross-sections case,
	\begin{align}
		\label{eq:csa-denom-rc-treat}
		E[E_{e} \cdot \mathbf{1}\{T = t\}]
		=E[E_{e}] \cdot \lambda_{t}
	\end{align}
	by the invariance assumption in \Cref{ass:sample:mixture}.

	\medskip\noindent
	\paragraph{Case 1: Panel Data and propensity correctly specified: $\pi(X) = p(X)$ a.s.}
	Write:
	\begin{align*}
		 & E[
		\{ w^{trt,p}_{e} - w^{c,p}_{e, t}(C, X; \pi) \}
		\{\Delta_{t-e+1}Y_{t} - \mu_{e, t}^{c, p}(X)\}
		]
		\\
		 & =
		E[
		\{ w^{trt,p}_{e} - w^{c,p}_{e, t}(C, X; p) \}
		\{\Delta_{t-e+1}Y_{t} - \mu_{e, t}^{c, p}(X)\}
		]
		\\
		 & =
		E[
		\{ w^{trt,p}_{e} - w^{c,p}_{e, t}(C, X; p) \}
		\Delta_{t-e+1}Y_{t}
		]
		-
		E[
				\{ w^{trt,p}_{e} - w^{c,p}_{e, t}(C, X; p) \}
				\mu_{e, t}^{c, p}(X)
			]
	\end{align*}
	The second term equals $0$ by a LIE application:
	\begin{align*}
		E\left[
			\left\{ \frac{E_{e}}{E[E_{e}]} -
			\frac{
				C\frac{p(X)}{1 - p(X)}
			}{
				E\left[C\frac{p(X)}{1 - p(X)}\right]
			}
			\right\}
			\mu_{e, t}^{c, p}(X)
			\right]
		 & =
		E\left[
			\left\{ \frac{E_{e}}{E[E_{e}]} -
			\frac{C}{E[E_{e}]}\frac{p(X)}{1 - p(X)}
			\right\}
			\mu_{e, t}^{c, p}(X)
			\right]
		\\
		 & =
		E\left[
			\left\{ \frac{p(X)}{E[E_{e}]} -
			\frac{1 - p(X)}{E[E_{e}]}\frac{p(X)}{1 - p(X)}
			\right\}
			\mu_{e, t}^{c, p}(X)
			\right]
		\\
		 & = 0
	\end{align*}
	using \cref{eq:normalizations-reduce-csa},
	and $E_{e} + C = 1$ so $P(C = 1 \mid X) = 1 - P(E_{e} = 1 \mid X) = 1 -
		p(X)$.
	For the first term we have similarly:
	\begin{align*}
		 & E\left[
		\left\{ \frac{E_{e}}{E[E_{e}]} -
		\frac{C}{E[E_{e}]}\frac{p(X)}{1 - p(X)}
		\right\}
		\Delta_{t-e+1}Y_{t}
		\right]
		\\
		 & =
		E\left[
		\Delta_{t-e+1}Y_{t} \mid E_{e} = 1
		\right]
		-
		\frac{1}{E[E_{e}]}
		E\left[
		\frac{p(X)}{1 - p(X)}
		E[C \mid X]
		E[\Delta_{t-e+1}Y_{t} \mid X, C = 1]
		\right]
		\\
		 & =
		E\left[
		E[\Delta_{t-e+1}Y_{t} \mid X, E_{e} = 1] \mid E_{e} = 1
		\right]
		-
		\frac{1}{E[E_{e}]}
		E\left[
		p(X)
		E[\Delta_{t-e+1}Y_{t} \mid X, C = 1]
		\right]
		\\
		 & =
		E\left[
		E[\Delta_{t-e+1}Y_{t} \mid X, E_{e} = 1]
		-
		E[\Delta_{t-e+1}Y_{t} \mid X, C = 1]
		\mid E_{e} = 1
		\right]
	\end{align*}
	i.e. the numerator of \cref{eq:id-latt-reg}.

	A similar argument applied to the denominator yields:
	\begin{align*}
		E\left[
			\left\{ \frac{E_{e}}{E[E_{e}]} -
			\frac{
				C\frac{p(X)}{1 - p(X)}
			}{
				E\left[C\frac{p(X)}{1 - p(X)}\right]
			}
			\right\}
			\eta_{e, t}^{c, p}(X)
			\right] = 0
	\end{align*}
	and
	\begin{align*}
		 & E\left[
		\left\{ \frac{E_{e}}{E[E_{e}]} -
		\frac{C}{E[E_{e}]}\frac{p(X)}{1 - p(X)}
		\right\}
		\Delta_{t-e+1}D_{t}
		\right]
		\\
		 & =
		E\left[
		E[\Delta_{t-e+1}D_{t} \mid X, E_{e} = 1]
		-
		E[\Delta_{t-e+1}D_{t} \mid X, C = 1]
		\mid E_{e} = 1
		\right]
	\end{align*}
	i.e. the denominator of \cref{eq:id-latt-reg}.
	Taking this together yields the identification
	\begin{align*}
		\tau^{dr, p}_{e, t}
		 & = \frac{
		E[
		\{ w^{trt,p}_{e} - w^{c,p}_{e, t}(C, X; p) \}
		\{\Delta_{t-e+1}Y_{t} - \mu_{e, t}^{c, p}(X)\}
		]
		}{
		E[
		\{ w^{trt,p}_{e} - w^{c,p}_{e, t}(C, X; p) \}
		\{\Delta_{t-e+1}D_{t} - \eta_{e, t}^{c, p}(X)\}
		]
		}
		=
		LATT(e, t).
	\end{align*}

	\medskip\noindent
	\paragraph{ Case 2: Panel data and outcome regressions correctly specified }
	In this case,
	\begin{align*}
		\mu_{e, t}^{c, p}(X) = m_{e, t}^{c, p}(X) \, \text{a.s.},
		\quad
		\eta_{e, t}^{c, p}(X) = g_{e, t}^{c, p}(X) \, \text{a.s.}
	\end{align*}
	Consider the numerator:
	\begin{align*}
		 & E[
		\{ w^{trt,p}_{e} - w^{c,p}_{e, t}(C, X; \pi) \}
		\{\Delta_{t-e+1}Y_{t} - \mu_{e, t}^{c, p}(X)\}
		]
		\\
		 & =
		E[
		\{ w^{trt,p}_{e} - w^{c,p}_{e, t}(C, X; \pi) \}
		\{\Delta_{t-e+1}Y_{t} - m_{e, t}^{c, p}(X)\}
		]
		\\
		 & =
		E\left[
		\frac{E_{e}}{E[E_{e}]}
		\{E[\Delta_{t-e+1}Y_{t} \mid X, E_{e} = 1] - m_{e, t}^{c, p}(X)\}
		\right]
		\\ & \quad
		- E\left[
		\frac{
			C\frac{\pi(X)}{1 - \pi(X)}
		}{
			E\left[C\frac{\pi(X)}{1 - \pi(X)}\right]
		}
		\{E[\Delta_{t-e+1}Y_{t} \mid X, C = 1] - m_{e, t}^{c, p}(X)\}
		\right]
		\\
		 & =
		E\left[
			m_{e, t}^{trt, p}(X) - m_{e, t}^{c, p}(X)
			\mid E_{e} = 1
			\right]
	\end{align*}
	using the definitions \cref{eq:mean-functions-panel}.
	Similarly for the denominator,
	\begin{align*}
		 & E[
		\{ w^{trt,p}_{e} - w^{c,p}_{e, t}(C, X; \pi) \}
		\{\Delta_{t-e+1}D_{t} - \eta_{e, t}^{c, p}(X)\}
		]
		=
		E\left[
			g_{e, t}^{trt, p}(X) - g_{e, t}^{c, p}(X)
			\mid E_{e} = 1
			\right]
	\end{align*}
	These two terms constitute again the fraction \cref{eq:id-latt-reg},
	hence:
	\begin{align*}
		\tau^{dr, p}_{e, t}
		=
		\frac{
		E[
		\{ w^{trt,p}_{e} - w^{c,p}_{e, t}(C, X; \pi) \}
		\{\Delta_{t-e+1}Y_{t} - m_{e, t}^{c, p}(X)\}
		]
		}{
		E[
		\{ w^{trt,p}_{e} - w^{c,p}_{e, t}(C, X; \pi) \}
		\{\Delta_{t-e+1}D_{t} - g_{e, t}^{c, p}(X)\}
		]
		}
		= LATT(e, t).
	\end{align*}

	\paragraph{Case 3: Repeated cross-sections and propensity
		score model correctly specified: $\pi(X) = p(X)$ a.s.}
	Consider the numerator of \cref{eq:dr-estimand-rc}.
	As in \cite{sazhao},
	by \Cref{ass:sample:mixture},
	for any generic integrable and measurable function $\mu$,
	$E[\mu(X)|E_{e} = 1] = E[\mu(X)|E_{e} = 1, T = t]$
	for any $t$.
	Thus,
	\begin{equation}
		\label{eq:kappa-rc}
		E[\kappa_{e, t}^{Y, rc}] = 0,
	\end{equation}
	Hence:
	\begin{align}
		\nonumber
		 & E [
				\{w^{trt,rc}_{e} - w^{c,rc}_{e}(X, C; \pi)\}
				\{Y - \mu_{e, Y}^{c,rc}(X)\}
				+ \kappa_{e, t}^{Y, rc}
			]
		\\
		\label{eq:rc-case3-terms}
		 & =
		E [
				\{w^{trt,rc}_{e} - w^{c,rc}_{e}(X, C; p)\}Y
			]
		-
		E [
				\{w^{trt,rc}_{e} - w^{c,rc}_{e}(X, C; p)\}
				\mu_{e, Y}^{c,rc}(X)
			]
	\end{align}
	Hence,
	\begin{align*}
		E[w^{trt,rc}_{e}\mid X]
		 & = E[w^{trt,rc}_{e, t} \mid X]
		- E[w^{trt,rc}_{e, e - 1}  \mid X]
		\\
		 & =
		\frac{E[E_{e} \cdot \mathbf{1}\{T = t\} \mid X]}{E[E_{e}] \cdot \lambda_{t}}
		- \frac{E[E_{e} \cdot \mathbf{1}\{T = e - 1\} \mid X]}{E[E_{e}] \cdot
			\lambda_{e - 1}}
		\\
		 & =
		\frac{p(X) \lambda_{t}}{E[E_{e}] \cdot \lambda_{t}}
		- \frac{p(X) \lambda_{e - 1}}{E[E_{e}] \cdot \lambda_{e - 1}}
		\\
		 & = 0
	\end{align*}
	using \cref{eq:csa-denom-rc-treat}
	and that $T$ is invariant to $X$ cf. \Cref{ass:sample:mixture}.
	For the control weights:
	\begin{align*}
		E[ w^{c,rc}_{e}(X, C; p)  \mid  X]
		 & =
		E[ w^{c,rc}_{e, t}(X, C; p)  \mid  X]
		- E[ w^{c,rc}_{e, e - 1}(X, C; p) \mid  X]
		\\
		 & =
		\frac{1}{E[E_{e}] \cdot \lambda_{t}}
		E\left[\frac{C \cdot \mathbf{1}\{T = t\}p(X)}{1 - p(X)} \mid X\right]
		\\ & \quad
		- \frac{1}{E[E_{e}] \cdot \lambda_{e - 1}}
		E\left[\frac{C \cdot \mathbf{1}\{T = e - 1\}p(X)}{1 - p(X)} \mid X\right]
		\\
		 & =
		\frac{p(X)\lambda_{t}}{E[E_{e}] \cdot \lambda_{t}}
		- \frac{p(X)\lambda_{e - 1}}{E[E_{e}] \cdot \lambda_{e - 1}}
		\\
		 & = 0.
	\end{align*}
	Combining these two, we have
	\begin{align*}
		E [
		\{w^{trt,rc}_{e} - w^{c,rc}_{e}(X, C; p)\}
		\mu_{e, Y}^{c,rc}(X)
		]
		=
		E [
				E[w^{trt,rc}_{e} - w^{c,rc}_{e}(X, C; p) \mid X]
				\mu_{e, Y}^{c,rc}(X)
			]
		= 0.
	\end{align*}
	Thus we are only left with the first term in \cref{eq:rc-case3-terms}.
	Consider first:
	\begin{align*}
		E [ w^{trt,rc}_{e}Y ]
		=
		E [ w^{trt,rc}_{e, t}Y ]
		-
		E [ w^{trt,rc}_{e, e - 1}Y ]
	\end{align*}
	For the first term:
	\begin{align}
		\nonumber
		E [ w^{trt,rc}_{e, t}Y ]
		 & =
		E\left[
			\frac{E_{e} \cdot \mathbf{1}\{T = t\}}{
				E[E_{e} \cdot \mathbf{1}\{T = t\}]
			} Y
			\right]
		\\
		\nonumber
		 & =
		E\left[
			\frac{E_{e} \cdot \mathbf{1}\{T = t\}}{
				E[E_{e}] \cdot \lambda_{t}
			} E[Y \mid X, E_{e}, T]
			\right]
		\\
		\nonumber
		 & =
		E\left[
			\frac{E_{e} \cdot \mathbf{1}\{T = t\}}{
				E[E_{e}] \cdot \lambda_{t}
			} E[Y \mid X, E_{e} = 1, T = t]
			\right]
		\\
		\nonumber
		 & =
		E\left[
			\frac{E_{e}}{ E[E_{e}] } m_{e, t}^{trt, rc}(X)
			\right]
		\\
		\label{eq:rc-case3-1}
		 & =
		E\left[ m_{e, t}^{trt, rc}(X) \mid E_{e} = 1 \right],
	\end{align}
	where we in the second last equal sign again used the invariance assumption
	of \Cref{ass:sample:mixture}.
	With the exact same reasoning,
	\begin{align}
		\label{eq:rc-case3-2}
		E [ w^{trt,rc}_{e, e - 1}Y ]
		=
		E\left[ m_{e, e - 1}^{trt, rc}(X) \mid E_{e} = 1 \right].
	\end{align}
	Next, consider:
	\begin{align*}
		E [ w^{c,rc}_{e}(X, C; p)Y ]
		=
		E [ w^{c,rc}_{e, t}(X, C; p)Y ]
		-
		E [ w^{c,rc}_{e, e - 1}(X, C; p)Y ].
	\end{align*}
	Expanding out the definitions:
	\begin{align}
		\nonumber
		E [ w^{c,rc}_{e, t}(X, C; p)Y ]
		 & =
		E\left[
			\left(
			\frac{C \cdot \mathbf{1}\{T = t\}p(X)}{1 - p(X)}
			/
			E\left[
				\frac{C \cdot \mathbf{1}\{T = t\}p(X)}{1 - p(X)}
				\right]
			\right)
			Y
			\right]
		\\
		\nonumber
		 & =
		\frac{1}{E[E_{e}] \cdot \lambda_{t}}
		E\left[
			\frac{C \cdot \mathbf{1}\{T = t\}p(X)}{1 - p(X)}
			Y
			\right]
		\\
		\nonumber
		 & =
		\frac{1}{E[E_{e}] \cdot \lambda_{t}}
		E\left[
			\frac{C \cdot \mathbf{1}\{T = t\}p(X)}{1 - p(X)}
			E[Y \mid X, C, T]
			\right]
		\\
		\nonumber
		 & =
		\frac{1}{E[E_{e}] \cdot \lambda_{t}}
		E\left[
			\frac{E[C \mid X] \cdot E[\mathbf{1}\{T = t\} \mid X]p(X)}{1 - p(X)}
			E[Y \mid X, C = 1, T = t]
			\right]
		\\
		\nonumber
		 & =
		\frac{1}{E[E_{e}]}
		E\left[ p(X) m_{e,t}^{rc,c}(X) \right]
		\\
		\nonumber
		 & =
		\frac{1}{E[E_{e}]}
		E\left[ E_{e} m_{e,t}^{rc,c}(X) \right]
		\\
		\label{eq:rc-case3-3}
		 & =
		E\left[ m_{e,t}^{rc,c}(X) \mid E_{e} = 1 \right]
	\end{align}
	where we in the first equal sign used \cref{eq:csa-denom-rc} and the
	LIE repeatedly.
	With the exact same reasoning,
	\begin{align}
		\label{eq:rc-case3-4}
		E [ w^{c,rc}_{e, e - 1}(X, C; p)Y ]
		=
		E\left[ m_{e, e - 1}^{rc,c}(X) \mid E_{e} = 1 \right].
	\end{align}
	Combining \cref{eq:rc-case3-1,eq:rc-case3-2,eq:rc-case3-3,eq:rc-case3-4}
	we have
	\begin{align*}
		 & E [
				\{w^{trt,rc}_{e} - w^{c,rc}_{e}(X, C; p)\}
				\{Y - \mu_{e, Y}^{c,rc}(X)\}
				+ \kappa_{e, t}^{Y, rc}
			]
		\\
		 & =
		E[
		m_{e,t}^{trt,rc}(X)
		- m_{e,e-1}^{trt,rc}(X)
		-
		[
		m_{e,t}^{c,rc}(X)
		- m_{e,e-1}^{c,rc}(X)
		]
		\mid E_{e} = 1
		].
	\end{align*}
	The exact same calculations apply to the denominator of \cref{eq:dr-estimand-rc},
	yielding:
	\begin{align*}
		 & E [
				\{w^{trt,rc}_{e} - w^{c,rc}_{e}(X, C; p)\}
				\{D - \eta_{e, Y}^{c,rc}(X)\}
			] + \kappa_{e, t}^{D, rc}
		\\
		 & =
		E[
		g_{e,t}^{trt,rc}(X)
		- g_{e,e-1}^{trt,rc}(X)
		-
		[
		g_{e,t}^{c,rc}(X)
		- g_{e,e-1}^{c,rc}(X)
		]
		E_{e} = 1
		].
	\end{align*}
	Combining the above two expressions we have, using
	\cref{eq:identification-rc}, the result:
	\begin{align*}
		\frac{
		E [
		\{w^{trt,rc}_{e} - w^{c,rc}_{e}(X, C; p)\}
		\{Y - \mu_{e, Y}^{c,rc}(X)\}
		+ \kappa_{e, t}^{Y, rc}
		}{
		E [
				\{w^{trt,rc}_{e} - w^{c,rc}_{e}(X, C; p)\}
				\{D - \eta_{e, Y}^{c,rc}(X)\}
			] + \kappa_{e, t}^{D, rc}
		}
		= \tau^{rc}_{e, t}
		= LATT(e,t).
	\end{align*}

	\paragraph{Case 4: Repeated cross-sections and outcome regressions correctly
		specified }
	In this case:
	\begin{align*}
		 & \mu_{e,t}^{rc,trt}(X) = m_{e, t}^{trt, rc}(X) \, \text{a.s.},
		\quad
		\mu_{e, s, t}^{rc,c}(X) = m_{e, s, t}^{rc,c}(X) \, \text{a.s.},
		\\
		 & \eta_{e,t}^{rc,trt}(X) = g_{e, t}^{trt, rc}(X) \, \text{a.s.},
		\quad
		\eta_{e, s, t}^{rc,c}(X) = g_{e, s, t}^{rc,c}(X) \, \text{a.s.}.
	\end{align*}
	We consider again the numerator of \cref{eq:dr-estimand-rc}.
	In this case, note that $\kappa_{e, t}^{Y, rc}$ consists of the misspecified
	outcome regressions.
	However, the misspecified functions are also just measurable functions of
	$X$,
	and thus \cref{eq:kappa-rc} still holds.
	Hence,
	\begin{align}
		\nonumber
		 & E [
				\{w^{trt,rc}_{e} - w^{c,rc}_{e}(C, X; \pi)\}
				\{Y - \mu_{e, Y}^{c,rc}(X)\}
				+ \kappa_{e, t}^{Y, rc}
			]
		\\
		\nonumber
		 & = E [
				\{w^{trt,rc}_{e} - w^{c,rc}_{e}(C, X; \pi)\}
				\{Y - m_{e, Y}^{c,rc}(X)\}
			]
		\\
		\nonumber
		 & = E [
				w^{trt,rc}_{e}
				\{Y - m_{e, Y}^{c,rc}(X)\}
			]
		- E[
				w^{c,rc}_{e}(C, X; \pi)
				\{Y - m_{e, Y}^{c,rc}(X)\}
			]
	\end{align}
	By a consequence of the time indicators in the weights,
	$w^{c,rc}_{e}(C, X; \pi)$,
	we have, using the LIE:
	\begin{align*}
		 & E[ w^{c,rc}_{e}(C, X; \pi) \{Y - m_{e, Y}^{c,rc}(X)\} ]
		\\
		 & =  E[ w^{c,rc}_{e}(C, X; \pi) \{E[Y \mid X, C, T] - m_{e, Y}^{c,rc}(X)\} ]
		\\
		 & =  E[ w^{c,rc}_{e}(C, X; \pi) \{m_{e, Y}^{c,rc}(X) - m_{e, Y}^{c,rc}(X)\} ]
		\\
		 & = 0
	\end{align*}
	where the second last equality follows by constructing the mean function
	\cref{eq:mean-function-m-period-rc-control}
	under the expectation operator and multiplied by the time period indicators.

	Next,
	\begin{align*}
		E [
		w^{trt,rc}_{e}
		\{Y - m_{e, Y}^{c,rc}(X)\}
		]
		 & =
		E [ w^{trt,rc}_{e} Y ]
		-
		E [ w^{trt,rc}_{e} m_{e, Y}^{c,rc}(X) ]
	\end{align*}
	For the first term in the expression above,
	\cref{eq:rc-case3-1,eq:rc-case3-2} gives:
	\begin{align}
		\label{eq:rc-case4-1}
		E [ w^{trt,rc}_{e} Y ]
		=
		E[ m_{e, t}^{trt, rc}(X) - m_{e, e - 1}^{trt, rc}(X) \mid E_{e} = 1 ].
	\end{align}
	For the second term, we can expand the definitions as:
	\begin{align}
		\nonumber
		          & E [ w^{trt,rc}_{e} m_{e, Y}^{c,rc}(X) ]
		\\
		\nonumber
		          & =
		E \Big[
			\Big(
			\frac{E_{e} \cdot \mathbf{1}\{T = t\}}{ E[E_{e}] \cdot \lambda_{t} }
			- \frac{E_{e} \cdot \mathbf{1}\{T = e - 1\}}{ E[E_{e}] \cdot \lambda_{e - 1}
			}
			\Big)
		\\
		\nonumber & \quad
			\times
			(\mathbf{1}\{T = t\} \cdot m_{e, t, t}^{c, rc}(X)
			+ \mathbf{1}\{T = e-1\} \cdot m_{e, t, e-1}^{c, rc}(X))
			\Big]
		\\
		\nonumber
		          & =
		E \Big[
			\frac{E_{e} \cdot \mathbf{1}\{T = t\}}{ E[E_{e}] \cdot \lambda_{t} }
			\cdot m_{e, t, t}^{c, rc}(X)
			- \frac{E_{e} \cdot \mathbf{1}\{T = e - 1\}}{ E[E_{e}] \cdot \lambda_{e - 1}}
			m_{e, t, e-1}^{c, rc}(X))
			\Big]
		\\
		\nonumber
		          & =
		E \Big[
			\frac{E[E_{e} \cdot \mathbf{1}\{T = t\} \mid X]}{ E[E_{e}] \cdot \lambda_{t} }
			\cdot m_{e, t, t}^{c, rc}(X)
			- \frac{E[E_{e} \cdot \mathbf{1}\{T = e - 1\} \mid X]}
			{ E[E_{e}] \cdot \lambda_{e - 1}}
			m_{e, t, e-1}^{c, rc}(X))
			\Big]
		\\
		\nonumber
		          & =
		E \Big[
			\frac{
				p(X)\{m_{e, t, t}^{c, rc}(X) - m_{e, t, e-1}^{c, rc}(X))\}
			}{ E[E_{e}]}
			\Big]
		\\
		\nonumber
		          & =
		E \Big[
			\frac{
				p(X)\{m_{e, t, t}^{c, rc}(X) - m_{e, t, e-1}^{c, rc}(X))\}
			}{ E[E_{e}]}
			\Big]
		\\
		\label{eq:rc-case4-2}
		          & =
		E[ m_{e,t}^{c,rc}(X) - m_{e,e-1}^{c,rc}(X) \mid E_{e} = 1 ].
	\end{align}
	where we in the last equal sign went backwards from a LIE application and
	used the definition of the conditional expectation.
	Thus, by \cref{eq:rc-case4-1,eq:rc-case4-2}:
	\begin{align}
		\nonumber
		 & E [
				\{w^{trt,rc}_{e} - w^{c,rc}_{e}(C, X; \pi)\}
				\{Y - \mu_{e, Y}^{c,rc}(X)\}
				+ \kappa_{e, t}^{Y, rc}
			]
		\\
		\label{eq:rc-case-4-fin1}
		 & =
		E[
		m_{e, t}^{trt, rc}(X) - m_{e, e - 1}^{trt, rc}(X)
		-
		[m_{e,t}^{c,rc}(X) - m_{e,e-1}^{c,rc}(X)] \mid E_{e} = 1
		].
	\end{align}

	Again, the exact same calculations apply to the denominator of
	\cref{eq:dr-estimand-rc}, yielding:
	\begin{align}
		\nonumber
		 & E [
				\{w^{trt,rc}_{e} - w^{c,rc}_{e}(C, X; \pi)\}
				\{D - \eta_{e, Y}^{c,rc}(X)\}
				+ \kappa_{e, t}^{D, rc}
			]
		\\
		\label{eq:rc-case-4-fin2}
		 & =
		E[
		g_{e, t}^{trt, rc}(X) - g_{e, e - 1}^{trt, rc}(X)
		-
		[g_{e,t}^{c,rc}(X) - g_{e,e-1}^{c,rc}(X)] \mid E_{e} = 1
		].
	\end{align}
	Combining \cref{eq:rc-case-4-fin1,eq:rc-case-4-fin2}, and again using
	\cref{eq:identification-rc}, we have the double robustness result:
	\begin{align*}
		\frac{
			E [
					\{w^{trt,rc}_{e} - w^{c,rc}_{e}(C, X; \pi)\}
					\{Y - \mu_{e, Y}^{c,rc}(X)\}
					+ \kappa_{e, t}^{Y, rc}
				]
		}{
			E [
					\{w^{trt,rc}_{e} - w^{c,rc}_{e}(C, X; \pi)\}
					\{D - \eta_{e, Y}^{c,rc}(X)\}
					+ \kappa_{e, t}^{D, rc}
				]
		}
		= \tau^{rc}_{e, t}
		= LATT(e,t).
	\end{align*}
\end{proofref}

\medskip\noindent

\begin{proofref}{prop:bloom}
	First note that, as $E_{e} + C = 1$, the event $(C = 0)$ is equivalent to
	$(E_{e} = 1)$. Thus, taking the contrapositive of
	\cref{eq:assumption-bloom} yields:
	\begin{align}
		\label{eq:assumption-bloom-cp}
		D_{t} = 1 \implies E_{e} = 1.
	\end{align}
	Consider the numerator \cref{eq:numerator-proof-panel}.
	The control term equals:
	\begin{align*}
		E[Y_{t} - Y_{e - 1} \mid X, C = 1]
		=
		E[Y_{t}(0) - Y_{e - 1}(0) \mid X, C = 1]
	\end{align*}
	using \cref{eq:observed-outcome,eq:assumption-bloom}.
	Using \cref{eq:observed-outcome} again,
	the treated term equals:
	\begin{align*}
		 & E[Y_{t} - Y_{e - 1} \mid X, E_{e} = 1]
		\\
		 & =
		E\Big[
		Y_{t}(0) - Y_{e-1}(0)
		+ [Y_{t}(1) - Y_{t}(0)]D_{t}
		+ [Y_{e-1}(1) - Y_{e-1}(0)]D_{e-1}
		\mid X, E_{e} = 1 \Big]
		\\
		 & =
		E[ Y_{t}(0) - Y_{e-1}(0) \mid X, E_{e} = 1 ]
		+ E[ \tau_{t} D_{t} \mid X, E_{e} = 1 ]
	\end{align*}
	where we used that $E_{e} = 1$ implies $D_{e - 1} = 0$, since the unit has
	not yet been exposed in period $e - 1$, and defined $\tau_{t} := Y_{t}(1) -
		Y_{t}(0)$.
	By the exclusion restriction, \Cref{ass:no-carryover},
	the instrument is independent of the potential outcomes,
	hence
	\begin{align*}
		E[ Y_{t}(0) - Y_{e-1}(0) \mid X, E_{e} = 1 ]
		- E[Y_{t}(0) - Y_{e - 1}(0) \mid X, C = 1]
		= 0.
	\end{align*}
	Taking these expressions together, the numerator
	\cref{eq:numerator-proof-panel} equals:
	\begin{align*}
		 & E[Y_{t} - Y_{e - 1} \mid X, E_{e} = 1]
		- E[Y_{t} - Y_{e - 1} \mid X, C = 1]
		\\
		 & = E[ \tau_{t} D_{t} \mid X, E_{e} = 1 ]
	\end{align*}
	Similarly, for the denominator \cref{eq:denominator-proof-panel},
	\begin{align*}
		 & E[D_{t} - D_{e - 1} \mid X, E_{e} = 1]
		- E[D_{t} - D_{e - 1} \mid X, C = 1]
		\\
		 & = P(D_{t} = 1 \mid X, E_{e} = 1)
	\end{align*}
	by the same argument.

	Applying the LIE to the numerator yields:
	\begin{align}
		\nonumber
		 & E\left[E[ \tau_{t} D_{t} \mid X, E_{e} = 1 ] \mid E_{e} = 1\right]
		\\
		\nonumber
		 & = E[ \tau_{t} D_{t} \mid E_{e} = 1 ]
		\\
		\label{eq:bloom-num}
		 & =  E[ \tau_{t} \mid D_{t} = 1, E_{e} = 1 ]P(D_{t} = 1 \mid E_{e} = 1)
	\end{align}
	Likewise for the denominator,
	\begin{align}
		\label{eq:bloom-den}
		E[P(D_{t} = 1 \mid X, E_{e} = 1) \mid E_{e} = 1]
		= P(D_{t} = 1 \mid E_{e} = 1).
	\end{align}
	Hence, dividing \cref{eq:bloom-num} by \cref{eq:bloom-den} yields:
	\begin{align}
		\label{eq:bloom-id}
		\frac{
			E[ \tau_{t} \mid D_{t} = 1, E_{e} = 1 ]P(D_{t} = 1 \mid E_{e} = 1)
		}{P(D_{t} = 1 \mid E_{e} = 1).}
		= E[ \tau_{t} \mid D_{t} = 1, E_{e} = 1 ]
	\end{align}
	That treatment is absorbing implies that units can be uniquely partitioned
	into the treatment-cohort groups $G$, \cref{eq:csa-objects}.
	The event $(D_{t} = 1)$ consists of all individuals treated at time
	$t$, which by the previous point can be written as a union over the cohort
	groups, $(D_{t} = 1) = \cup_{g \leq t} (G_{g} = 1)$.
	Moreover, we can write the observed outcome in terms of the potential
	outcomes and treatment cohort indicator variables cf. \cite{csa} as
	$
		Y_{t} = Y_{ t}(0) + \sum_{g = 2}^{\mathcal{T}}
		[Y_{t}(g) - Y_{t}(0)]G_{g}.
	$
	In this case, the individual time $t$ treatment effect can be written as
	$\tau_{t} = Y_{t}(g) - Y_{t}(0)$.
	Hence, we can further decompose \cref{eq:bloom-id} as:
	\begin{align*}
		E[ \tau_{t} \mid D_{t} = 1, E_{e} = 1 ]
		 & = E[\tau_{t} \mid \cup_{g \leq t} (G_{g} = 1), E_{e} = 1]
		\\
		 & = \sum_{g \leq t}
		E[\tau_{t} \mid G_{g} = 1, \cup_{g \leq t} (G_{g} = 1), E_{e} = 1]
		P(G_{g} = 1 \mid \cup_{g \leq t} (G_{g} = 1), E_{e} = 1)
		\\
		 & =  \sum_{g \leq t}
		E[Y_{t}(g) - Y_{t}(0) \mid G_{g} = 1, E_{e} = 1]P(G_{g} = 1 \mid D_{t} = 1,
		E_{e} = 1)
		\\
		 & =  \sum_{g \leq t}  ATT(g, t, e)P(G_{g} = 1 \mid D_{t} = 1, E_{e} = 1)
	\end{align*}
	proving \cref{eq:bloom} and the proposition.
\end{proofref}

\newpage
\section{Influence Functions}
\label{sec:influence-functions}
In the following, we derive the DR estimands and corresponding EIFs in
both data settings.

As an overview of the construction, \Cref{tab:roadmap} summarizes the main
steps. In both sampling schemes, we first express the identified LATT as a
ratio of two simple ATT parameters; see
\Cref{rem:relabeling-idid}. We then
derive the LATT EIF by applying the quotient rule
(\cref{eq:if-quotient-rule-estimand}; to be defined),
and use this representation to construct the corresponding doubly robust
estimand and its EIF. \Cref{fig:roadmap} illustrates this
schematically\footnote{
	See also \Cref{table:comparison} for a comparison between DiD
	(two-period and staggered adoption) and the more general IDiD framework
	considered in this paper.
}.

\begin{remark}[Relabeling DiD objects for IDiD]
	\label{rem:relabeling-idid}
	As shown in \Cref{sec:identification},
	identification yields initial LATT estimands that are ratios of two ATT-type
	parameters, \cref{eq:identification-panel,eq:identification-rc}.
	We then derive EIFs for general ATT-type parameters corresponding to DiD with
	panel data and repeated cross-sections,
	and with the tools developed \cref{sec:eif-derivation}, use these DiD objects
	together with the ratio structure to get corresponding IDiD objects.

	Specifically, we apply the DiD formulas to each relevant $2 \times 2$
	comparison, replacing the DiD group indicator with the exposed-cohort
	indicator $E_e$ and the control indicator with the corresponding
	control variable $C$. In the panel case, the outcome is replaced by
	$\Delta_{t-e+1}Y_t$ in the numerator and by $\Delta_{t-e+1}D_t$ in the
	denominator. In the repeated cross-sections case, the same replacement
	is applied separately to the treated and control mean components in
	periods $t$ and $e-1$, using $Y$ in the numerator and $D$ in the
	denominator.
\end{remark}

\vspace{-5pt}
\begin{table}[H]
	\centering
	\small
	\caption{Roadmap of identification, influence functions, and doubly robust estimands}
	\label{tab:roadmap}
	\begin{tabular}{lll}
		\toprule
		Step & Panel data                                          & Repeated cross-sections \\
		\midrule
		DiD estimand and EIF
		     & \cref{eq:tau-drdid-estimand,eq:if-att-did-reduced}
		     & \cref{eq:estimand-rc-drdid,eq:if-estimand-rc-drdid}                           \\
		LATT identification
		     & \cref{eq:identification-panel}
		     & \cref{eq:identification-rc}                                                   \\

		LATT EIF
		     & \cref{eq:if-latt-panel}
		     & \cref{eq:if-latt-rc}                                                          \\

		DR estimand and EIF
		     & \cref{eq:dr-estimand-panel,eq:if-dr-estimand-panel}
		     & \cref{eq:dr-estimand-rc,eq:if-dr-estimand-rc}                                 \\
		\bottomrule
	\end{tabular}
\end{table}

\begin{figure}[H]
	\centering
	\begin{tikzpicture}[
		node distance=1.2cm and 1.5cm,
		box/.style={
				draw,
				rounded corners,
				align=center,
				inner sep=3pt,
				text width=2.8cm,
				font=\footnotesize
			},
		arrow/.style={
		-{Latex[length=1.4mm,width=0.9mm]},
		line width=0.5pt
		}
		]

		\node[box] (simplep) {DiD estimand + EIF\\
			\cref{eq:tau-drdid-estimand,eq:if-att-did-reduced}};

		\node[
			font=\bfseries\footnotesize,
			above=0.1cm of simplep,
			anchor=south
		] (panel_label) {Panel data};

		\node[box, right=of simplep] (idp) {LATT identification\\
			\cref{eq:identification-panel}};

		\node[box, right=of idp] (ifp) {LATT EIF\\
			\cref{eq:if-latt-panel}};

		\node[box, right=of ifp] (drp) {DR estimand + EIF\\
			\cref{eq:dr-estimand-panel,eq:if-dr-estimand-panel}};

		\node[
			font=\bfseries\footnotesize,
			below=1cm of simplep,
			anchor=south
		] (rc_label) {Repeated cross-sections};

		\node[box, below=0.1cm of rc_label] (simplerc) {DiD estimand + EIF\\
			\cref{eq:estimand-rc-drdid,eq:if-estimand-rc-drdid}};

		\node[box, right=of simplerc] (idrc) {LATT identification\\
			\cref{eq:identification-rc}};

		\node[box, right=of idrc] (ifrc) {LATT EIF\\
			\cref{eq:if-latt-rc}};

		\node[box, right=of ifrc] (drrc) {DR estimand + EIF\\
			\cref{eq:dr-estimand-rc,eq:if-dr-estimand-rc}};

		\draw[arrow]
		(simplep.east) to[out=40, in=160]
		node[midway, above=5pt, align=center, font=\scriptsize]
		{ATT ratio + relabeling + normalization}
		(ifp.north);

		\draw[arrow]
		(idp) -- node[midway, above=3pt, align=center, font=\scriptsize]
		{quotient\\rule} (ifp);

		\draw[arrow]
		(ifp) -- node[midway, above=5pt, align=center, font=\scriptsize]
		{EE\\and EIF \\ derivation} (drp);

		\draw[arrow]
		(simplerc.east) to[out=-40, in=-160]
		node[midway, below=5pt, align=center, font=\scriptsize]
		{ATT ratio + relabeling + normalization}
		(ifrc.south);

		\draw[arrow]
		(idrc) -- node[midway, below=3pt, align=center, font=\scriptsize]
		{quotient\\rule} (ifrc);

		\draw[arrow]
		(ifrc) -- node[midway, below=5pt, align=center, font=\scriptsize]
		{EE\\and EIF \\ derivation} (drrc);

	\end{tikzpicture}

	\caption{Roadmap from simple DiD objects to DR estimands and EIF for the LATT.
		The ``ATT ratio + relabeling'' step is described in
		\Cref{rem:relabeling-idid}. The "+ normalization" step refers to normalizing the
		control weights after relabeling.
		The ``EE and EIF derivation'' step
		means we derive the DR estimand using the estimation equation approach
		shown in \cref{eq:dr-estimand-foc}
		and then derive the DR estimand's EIF.
	}
	\label{fig:roadmap}
\end{figure}

\begin{table}[H]
	\centering
	\small
	\setlength{\tabcolsep}{4pt}
	\resizebox{\textwidth}{!}{%
		\begin{threeparttable}
			\begin{tabular}{lccc}
				\toprule
				                 & Two-period DiD
				                 & CSA (DiD)
				                 & IDiD
				\\
				\midrule
				Timing structure
				                 & Single treatment
				                 & Staggered treatment
				                 & Staggered exposure
				\\

				Group variable
				                 & $D$
				                 & $G := \min \{t \mid D_{t} = 1\} $;
				$G_{g} = \mathbf{1}\{G = g\}$
				                 & $E := \min \{t \mid Z_{t} = 1\}$;
				$E_{e} = \mathbf{1}\{E = e\}$
				\\

				Target parameter
				                 & $ATT = E[Y_t(1)-Y_t(0)\mid D=1]$
				                 & $ATT(g,t) = E[Y_t(g)-Y_t(0)\mid G_g=1]$
				                 & $LATT(e,t) = E[Y_t(1)-Y_t(0)\mid E_e=1,\, D_t(e)>D_t(\infty)]$ \\
				Control group(s) &
				$D = 0$
				                 &
				$C := \mathbf{1}\{G = \infty\} = 1$
				or
				$(1 - D_{s})(1 - G_{g}) = 1$
				                 &
				$C^{nev} := \mathbf{1}\{E = \infty\} = 1$
				or
				$C^{nye}_{e, s} = 1$
				\\

				Propensity score(s)
				                 & $p(X) = P(D = 1 \mid X)$
				                 &
				\makecell[c]{
				$p_g(X) = P(G_g = 1 \mid X, G_g + C = 1)$                                         \\
					or $p_{g,s}(X) = P(G_g = 1 \mid X, G_g + (1-D_s)(1-G_g) = 1)$
				}
				                 &
				\makecell[c]{
				$p_{e}(X) = P(E_{e} = 1 \mid X, E_{e} + C^{nev} = 1)$                             \\
					or $p_{e, s}(X) := P(E_{e} = 1 \mid X, E_{e} + C^{nye}_{e, s} = 1)$
				}
				\\

				\bottomrule
			\end{tabular}

			\begin{tablenotes}
				\footnotesize
				\item \textit{Notes}: The table compares the treatment definition,
				causal estimand, and propensity score(s) used in
				classical two-period DiD e.g. \cite{sazhao},
				DiD with staggered adoption \cite{csa},
				and general IDiD with staggered exposure and covariates.
				The control variables for the IDiD setup are given in
				\cref{eq:control-variables}.
			\end{tablenotes}

		\end{threeparttable}
	}
	\caption{Comparison of treatment indicators, estimands, and propensity scores across designs.}
	\label{table:comparison}
\end{table}

\subsection{EIF derivation strategy}
\label{sec:eif-derivation}
Instead of using the tangent space approach as in \cite{sazhao} (which follows
the tradition in the econometrics literature cf. \cite{newey1990}),
we follow the approach of \cite{kennedy} to derive the EIFs in a
nonparametric model.
Specifically, we use Kennedy's ``Strategy 2'', which consists of:
\begin{enumerate}
	\item Computing Gateaux derivatives assuming data are discrete
	\item Using derivative rules with simple influence functions as building
	      blocks
\end{enumerate}
Below, we state the strategy for completeness.

\paragraph{Strategy 2 \cite{kennedy}}
Introduce the operator
\[
	\mathbb{IF} : \Psi \to L^2(P),
\]
mapping functionals $\psi : \mathcal{P} \to \mathbb{R}$ (i.e. estimands) to
their influence functions
$\varphi \in L^2(P)$ in a nonparametric model.
The strategy to derive efficient influence functions in a nonparametric model
can then be stated as follows:
\begin{tricks}
	\item \label{trick:1} Pretend the data are discrete
	\item \label{trick:2} Treat influence functions as derivatives, allowing
	use of differentiation rules
	\begin{tricks}
		\item \label{trick:2a}
		\emph{(Product rule)}
		\begin{equation}
			\label{eq:if-product-rule}
			\mathbb{IF}(\psi_1 \psi_2)
			= \mathbb{IF}(\psi_1)\,\psi_2 + \psi_1\,\mathbb{IF}(\psi_2).
		\end{equation}

		\item \label{trick:2b}
		\emph{(Chain rule)}
		\[
			\mathbb{IF}(f(\psi)) = f'(\psi)\,\mathbb{IF}(\psi).
		\]
	\end{tricks}
	\item \label{trick:3} Use \textit{influence function building blocks}.
	E.g.
	that the influence function of the density $p(x)$ equals:
	\begin{align}
		\label{eq:block-density}
		\mathbb{IF}(p(x)) = \mathbf{1}\{X = x\} - p(x),
	\end{align}
	the influence function of the expectation $E[Y]$
	equals
	\begin{align}
		\label{eq:block-mean}
		\mathbb{IF}(E[Y]) = Y - E[Y],
	\end{align}
	and the influence function of the conditional expectation $E[Y
				\mid X = x]$ equals:
	\begin{align}
		\label{eq:block-cexp}
		\mathbb{IF}(E[Y \mid X = x])
		= \frac{\mathbf{1}\{X = x\}}{P(X = x)}\{Y - E[Y \mid X = x]\}.
	\end{align}
\end{tricks}

\subsection{Building Blocks}
In this section, we derive some more building blocks that we will use to arrive
at the main influence functions for the LATT parameter in the panel-data and
repeated-cross-sections settings.

\paragraph{Influence Function Probability}
Let $D \in \{0, 1\}$ be binary.
Using the building block \cref{eq:block-mean}
we derive the influence function of the probability $P(D = 1)$ as:
\begin{align}
	\label{eq:if-block-prob}
	\mathbb{IF}\left(P(D = 1)\right)
	=
	\mathbb{IF}\left(E[\mathbf{1}\{D = 1\}]\right)
	=
	\mathbf{1}\{D = 1\} - P(D = 1).
\end{align}
where we used that $D$ is binary.
In a similar fashion,
using the building block \cref{eq:block-cexp},
we derive the influence function of the conditional probability $P(D = 1 \mid X
	= x)$
as:
\begin{align}
	\nonumber
	\mathbb{IF}(P(D = 1 \mid X = x))
	 & = \mathbb{IF}(E(D\mid X = x))
	\\
	\label{eq:if-block-cprob}
	 & = \frac{\mathbf{1}\{X = x\}}{P(X = x)}\{D - P(D = 1 \mid X = x)\}.
\end{align}

\paragraph{Quotient Rule for Influence Functions}
We can also derive the influence function of an estimand that is a ratio of two
parameters:
\begin{align*}
	\psi^{ratio} =
	\frac{\psi^{num}}{\psi^{den}}
\end{align*}
which is done as follows:
\begin{align}
	\nonumber
	\mathbb{IF}(\psi^{ratio})
	 & =
	\mathbb{IF}\left(\frac{\psi^{num}}{\psi^{den}}\right)
	\\
	\nonumber
	 & =
	\frac{\mathbb{IF}(\psi^{num})}{\psi^{den}}
	+ \psi^{num}\mathbb{IF}\left(\frac{1}{\psi^{den}}\right)
	\\
	\nonumber
	 & =
	\frac{\mathbb{IF}\left(\psi^{num}\right)}{\psi^{den}}
	- \psi^{num}\frac{1}{(\psi^{den})^{2}}\mathbb{IF}(\psi^{den})
	\\
	\nonumber
	 & =
	\frac{\mathbb{IF}(\psi^{num})}{\psi^{den}}
	- \frac{\psi^{num}}{\psi^{den}}
	\frac{\mathbb{IF}(\psi^{den})}{\psi^{den}}
	\\
	 & =
	\label{eq:if-quotient-rule-estimand}
	\frac{1}{\psi^{den}}
	\left[
		\mathbb{IF}(\psi^{num})
		- \psi^{ratio}
		\mathbb{IF}(\psi^{den})
		\right],
\end{align}
where
the second equality follows by \Cref{trick:2a},
the third equality follows by \Cref{trick:2b},
and the rest by simple algebra.

\paragraph{Ratio Estimand Linear IF}
Often we can write the influence functions for the numerator and denominator as
linear in their respective target parameter weighted by some (possibly
data and nuisance-function-dependent) weight $w(O; \eta)$:
\begin{equation}
	\label{eq:if-ratio-components-linear}
	\mathbb{IF}(\psi^{num})
	=
	\phi^{num}(\cdot; \eta^{num}) - w(\cdot; \eta) \psi^{num},
	\quad
	\mathbb{IF}(\psi^{den})
	=
	\phi^{den}(\cdot; \eta^{den}) - w(\cdot; \eta) \psi^{den}.
\end{equation}
In this case, \cref{eq:if-quotient-rule-estimand} equals:
\begin{align}
	\label{eq:if-quotient-rule-estimand-linear-if}
	\mathbb{IF}(\psi^{ratio})
	=
	\frac{1}{\psi^{den}}
	\left[
		\phi^{num}(\cdot; \eta^{num})
		- \psi^{ratio}
		\phi^{den}(\cdot; \eta^{den})
		\right]
\end{align}
as
$
	- \frac{w(\eta^{num}) \psi^{num}}{\psi^{den}}
	+ \frac{1}{\psi^{den}}
	\frac{\psi^{num}}{\psi^{den}}w(\eta) \psi^{den}
	= 0.
$

\paragraph{Influence Function Conditional Density}
Let again $D \in {0, 1}$ be binary. We derive the influence function of the
conditional density $p_{X \mid D}(x \mid d)$ using the building blocks
established above.
The influence functions of the joint and marginal density equal:
\begin{align*}
	\mathbb{IF}(p_{D}(1))       & = \mathbf{1}\{D = 1\} - p_{D}(1)
	\\
	\mathbb{IF}(p_{X, D}(x, 1)) & = \mathbf{1}\{X = x,D = 1\} - p_{X, D}(x, 1).
\end{align*}
With the above and the quotient rule \cref{eq:if-quotient-rule-estimand},
we derive the influence function as:
\begin{align}
	\nonumber
	 & \mathbb{IF}(p_{X \mid D}(x \mid 1))
	\\
	\nonumber
	 & =
	\mathbb{IF}\left(\frac{p_{X, D}(x, 1)}{p_{D}(1)}\right)
	\\
	\nonumber
	 & =
	\frac{1}{p_{D}(1)}
	\left[
	\mathbb{IF}(p_{X, D}(x, 1))
	- \frac{p_{X, D}(x, 1)}{p_{D}(1)}
	\mathbb{IF}(p_{D}(1))
	\right]
	\\
	\nonumber
	 & =
	\frac{1}{p_{D}(1)}
	\left[
		\{
		\mathbf{1}\{X = x,D = 1\} - p_{X, D}(x, 1)
		\}
		- \frac{p_{X, D}(x, 1)}{p_{D}(1)}
		\{
		\mathbf{1}\{D = 1\} - p_{D}(1)
		\}
		\right]
	\\
	\nonumber
	 & =
	\frac{1}{p_{D}(1)}
	\left[
		\{
		D \mathbf{1}\{X = x\} - p_{X, D}(x, 1)
		\}
		-
		\left\{
		D \frac{p_{X, D}(x, 1)}{p_{D}(1)}
		- p_{X, D}(x, 1)
		\right\}
		\right]
	\\
	\nonumber
	 & =
	\frac{D}{p_{D}(1)}
	\left[
		\mathbf{1}\{X = x\}
		- \frac{p_{X, D}(x, 1)}{p_{D}(1)}
		\right]
	\\
	\label{eq:if-cdensity}
	 & =
	\frac{D}{p_{D}(1)}
	\left[
		\mathbf{1}\{X = x\}
		- p_{X \mid D}(x \mid 1)
		\right].
\end{align}

\subsection{ATT parameter}
\subsubsection{Panel case}
\label{sec:panel-if-derivation}
\begin{proposition}
	\label{prop:att-panel}
	Let $D \in {0, 1}$ be binary,
	$X$ some vector of covariates
	and $\Delta Y := Y_{1} - Y_{0}$ the differenced outcome variable.
	Define the estimand of interest to be the average treatment effect on the
	treated when panel data is available:
	\begin{align}
		\label{eq:tau-drdid-estimand}
		\tau^{p} & = E[
				E[\Delta Y \mid D = 1, X]
				-
				E[\Delta Y \mid D = 0, X]
				\mid D = 1
			]
		\\
		\nonumber
		         & =
		E[ \mu(1, X) - \mu(0, X) \mid D = 1 ]
	\end{align}
	for $\mu(d, x) := E[\Delta Y \mid D = d, X = x]$\footnote{
		This estimable parameter identifies the ATT in the two-period DiD setup
		assuming no anticipation and parallel trends.
	}.
	Then the influence function of $\tau^{p}$ equals:
	\begin{align}
		\label{eq:if-att-did-reduced}
		\varphi(O; \tau^{p}, \eta^{p})
		=
		\left\{ w^{p}_{1}(D) - w^{p}_{0}(D, X; p) \right\}
		\{\Delta Y - \mu(0, X)\}
		- \frac{D}{\rho}\tau^{p}
	\end{align}
	where $\rho := P(D = 1)$,
	$p(X) := P(D = 1 \mid X)$
	and
	\begin{align}
		\label{eq:weights-panel}
		w^{p}_{1}(D) := \frac{D}{\rho},
		\quad
		w^{p}_{0}(D, X; p)
		:=
		\frac{(1 - D)}{\rho}\frac{p(X)}{1 - p(X)}.
	\end{align}
\end{proposition}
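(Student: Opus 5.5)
Following Kennedy's Strategy 2, I will pretend $X$ is discrete and write the estimand as an explicit functional of the joint law,
\[
\tau^{p} = \sum_{x}\{\mu(1,x)-\mu(0,x)\}\,p_{X\mid D}(x\mid 1).
\]
Applying the product rule \cref{eq:if-product-rule} term by term gives three pieces:
\[
\sum_x \mathbb{IF}(\mu(1,x))\,p_{X\mid D}(x\mid 1),\quad
-\sum_x \mathbb{IF}(\mu(0,x))\,p_{X\mid D}(x\mid 1),\quad
\sum_x \{\mu(1,x)-\mu(0,x)\}\,\mathbb{IF}(p_{X\mid D}(x\mid 1)).
\]
Each is computed from the building blocks already established. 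For the conditional means I use \cref{eq:block-cexp} with conditioning event $\{D=d,X=x\}$, so that
\[
\mathbb{IF}(\mu(d,x))=\frac{\mathbf{1}\{D=d,X=x\}}{P(D=d,X=x)}\{\Delta Y-\mu(d,x)\}.
\]
For the conditional density I use \cref{eq:if-cdensity}.

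\emph{Treated mean term.} Summing over $x$ against $p_{X\mid D}(x\mid 1)=P(D=1,X=x)/\rho$, the joint probability cancels and leaves
\[
\frac{D}{\rho}\{\Delta Y-\mu(1,X)\}.
\]

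\emph{Control mean term.} The same sum produces the ratio $P(D=1,X=x)/[\rho\,P(D=0,X=x)]$, which equals $p(x)/[\rho(1-p(x))]$. This gives
\[
-\frac{(1-D)}{\rho}\frac{p(X)}{1-p(X)}\{\Delta Y-\mu(0,X)\}=-w^{p}_{0}(D,X;p)\{\Delta Y-\mu(0,X)\}.
\]

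\emph{Density term.} Inserting \cref{eq:if-cdensity} and summing gives
\[
\frac{D}{\rho}\big[\mu(1,X)-\mu(0,X)-\tau^{p}\big].
\]

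Adding the three pieces, the $\pm \frac{D}{\rho}\mu(1,X)$ terms cancel. The treated contributions combine to $\frac{D}{\rho}\{\Delta Y-\mu(0,X)\}=w^{p}_{1}(D)\{\Delta Y-\mu(0,X)\}$, and what remains is exactly \cref{eq:if-att-did-reduced}, including the centering term $-\frac{D}{\rho}\tau^{p}$. I will then remark that the discreteness assumption can be dropped: the resulting expression is well defined for general $X$, and one can verify that it is the Gateaux derivative along parametric submodels (or appeal to Kennedy's justification of Strategy 2). Because the model is nonparametric, this influence function is the efficient one.

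The main obstacle is bookkeeping rather than any conceptual step. Two things need care. First, the Bayes-rule conversion $P(D=1\mid X)/P(D=0\mid X)=p/(1-p)$ inside the control term must be done correctly. Second, the treated-group residual $\Delta Y-\mu(1,X)$ has to be recombined with the density contribution so that only the control regression $\mu(0,X)$ survives. I would check this cancellation first, since it is what yields the stated reduced form with only $\mu(0,X)$ appearing.
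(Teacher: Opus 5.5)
Your proposal is correct and follows the paper's route. Both use Kennedy's Strategy~2 with the conditional-mean block \cref{eq:block-cexp}, the conditional-density block \cref{eq:if-cdensity}, and the Bayes-rule identity \cref{eq:helper-if-zhao-panel} for the control term. The only difference is cosmetic: the paper first splits $\tau^{p}=\psi_{1}-\psi_{2}$ and derives each influence function separately, whereas you apply the product rule once to $\sum_x\{\mu(1,x)-\mu(0,x)\}\,p_{X\mid D}(x\mid 1)$ and collect the density contributions together.
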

\begin{remark}
	An equivalent representation of the influence function
	\cref{eq:if-att-did-reduced} follows from
	\begin{align*}
		w^{p}_{1}(D) - w^{p}_{0}(D, X; p)
		=
		\frac{D[1 - p(X)] - (1 - D)p(X)}{\rho[1 - p(X)]}
		= \frac{D - p(X)}{\rho[1 - p(X)]}
	\end{align*}
	such that
	\begin{align*}
		\varphi(O; \tau^{p}, \eta^{p})
		=
		\frac{D - p(X)}{\rho[1 - p(X)]}
		\{\Delta Y - \mu(0, X)\}
		- \frac{D}{\rho}\tau^{p}.
	\end{align*}
	We retain the formulation in \cref{eq:if-att-did-reduced}, as the distinction
	between treated and control weights is more convenient for extensions to the
	$LATT$ estimands.
\end{remark}
\begin{proof}
	Split up the estimand in two parts:
	\begin{align*}
		\tau^{p} & = \psi_{1} - \psi_{2},       \\
		\psi_{1} & := E[ \mu(1, X) \mid D = 1 ]
		=
		\int \mu(1, x) dP(X \in dx \mid D = 1),
		\\
		\psi_{2} & := E[ \mu(0, X) \mid D = 1 ]
		=
		\int \mu(0, x) dP(X \in dx \mid D = 1).
	\end{align*}
	\paragraph{Influence function of $\psi_{1}$.}
	We proceed as follows:
	\begin{align}
		\nonumber
		 & \mathbb{IF}(\psi_{1})
		\\
		\nonumber
		 & = \mathbb{IF}\left(\sum_{x}  \mu(1, x)p_{X \mid D}(x \mid 1)\right)
		\\
		\nonumber
		 & =
		\sum_{x} \left[\mathbb{IF}\{\mu(1, x)\}p_{X \mid D}(x \mid 1)
		+ \mu(1, x)\mathbb{IF}\{p_{X \mid D}(x \mid 1)\}\right]
		\\
		\nonumber
		 & =
		\sum_{x}
		\left(\frac{\mathbf{1}\{X = x, D = 1\}}{p_{X, D}(x, 1)}\{\Delta Y - \mu(1, x)\}p_{X \mid D}(x \mid 1)
		+ \mu(1, x)\left[\frac{D}{p_{D}(1)}
			\left\{
			\mathbf{1}\{X = x\}
			- p_{X \mid D}(x \mid 1)
			\right\}\right]\right)
		\\
		\nonumber
		 & =
		\sum_{x}
		\left(D \frac{\mathbf{1}\{X = x\}}{p_{D}(1)}\{\Delta Y - \mu(1, x)\}
		+ \mu(1, x)\left[\frac{D}{p_{D}(1)}
				\left\{
				\mathbf{1}\{X = x\}
				- p_{X \mid D}(x \mid 1)
				\right\}\right]\right)
		\\
		\nonumber
		 & =
		\sum_{x}
		D \frac{\mathbf{1}\{X = x\}}{p_{D}(1)}\{\Delta Y - \mu(1, x)\}
		+
		\sum_{x}
		\mathbf{1}\{X = x\}
		\frac{D}{p_{D}(1)}
		\mu(1, x)
		-
		\sum_{x}
		p_{X \mid D}(x \mid 1)
		\frac{D}{p_{D}(1)}
		\mu(1, x)
		\\
		\nonumber
		 & =
		\frac{D}{p_{D}(1)}\{\Delta Y - \mu(1, X)\}
		+
		\frac{D}{p_{D}(1)}
		\mu(1, X)
		-
		\sum_{x}
		\frac{D}{p_{D}(1)}
		\mu(1, x)
		p_{X \mid D}(x \mid 1)
		\\
		\label{eq:if-psi1-drp}
		 & =
		\frac{D}{\rho}\{\Delta Y - \mu(1, X)\}
		+
		\frac{D}{\rho}
		\mu(1, X)
		-
		\frac{D}{\rho}
		\psi_{1}
	\end{align}
	where we
	in the first equality used \Cref{trick:1},
	in the second equality used \Cref{trick:2a},
	and defined $\rho := p_{D}(1) = P(D = 1)$.

	\paragraph{Influence function of $\psi_{2}$.}
	Using the identity:
	\begin{align}
		\label{eq:helper-if-zhao-panel}
		\frac{p_{X \mid D}(x \mid 1)}{p_{X, D}(x, 0)}
		=
		\frac{p_{X, D}(x, 1)/p_{D}(1)}{p_{D \mid X}(0 \mid x)p(x)}
		= \frac{1}{p_{D}(1)}\frac{p_{D \mid X}(1 \mid x)}{1 - p_{D \mid X}(1 \mid x)}
	\end{align}
	we get:
	\begin{align}
		\nonumber
		 & \mathbb{IF}(\psi_{2})
		\\
		\nonumber
		 & = \sum_{x}
		\left[ \mathbb{IF}\{\mu(0, x)\}p_{X \mid D}(x \mid 1)
		+ \mu(0, x)\mathbb{IF}\{p_{X \mid D}(x \mid 1)\} \right]
		\\
		\nonumber
		 & =
		\sum_{x}
		\left(\frac{\mathbf{1}\{X = x, D = 0\}}{p_{X, D}(x, 0)}\{\Delta Y - \mu(0, x)\}p_{X \mid D}(x \mid 1)
		+ \mu(0, x)\left[\frac{D}{p_{D}(1)}
			\left\{
			\mathbf{1}\{X = x\}
			- p_{X \mid D}(x \mid 1)
			\right\}\right]\right)
		\\
		\nonumber
		 & =
		\sum_{x}
		\left((1 - D)
		\frac{\mathbf{1}\{X = x\}}{p_{D}(1)}\frac{p_{D \mid X}(1 \mid x)}{1 - p_{D \mid X}(1 \mid x)}
		\{\Delta Y - \mu(0, x)\}
		+ \mu(0, x)\left[\frac{D}{p_{D}(1)}
			\left\{
			\mathbf{1}\{X = x\}
			- p_{X \mid D}(x \mid 1)
			\right\}\right]\right)
		\\
		\label{eq:if-psi2-drp}
		 & =
		\frac{(1 - D)}{\rho}\frac{p(X)}{1 - p(X)}
		\{\Delta Y - \mu(0, X)\}
		+\frac{D}{\rho}\mu(0, X)
		-
		\frac{D}{\rho}
		\psi_{0}
	\end{align}
	where we in the third equality used
	\eqref{eq:helper-if-zhao-panel},
	and defined $p(X) := p_{D \mid X}(1 \mid x) = P(D = 1 \mid X)$.
	\paragraph{Combining Influence functions.}
	Combining \cref{eq:if-psi1-drp,eq:if-psi2-drp}:
	\begin{align*}
		 & \mathbb{IF}(\tau^{p})
		\\
		 & = \mathbb{IF}(\psi_{1}) - \mathbb{IF}(\psi_{2})
		\\
		 & =
		\frac{D}{\rho}\{\Delta Y - \mu(1, X)\}
		+
		\frac{D}{\rho}
		\mu(1, X)
		-
		\frac{D}{\rho}
		\psi_{1}
		\\ & \quad -
		\left(
		\frac{(1 - D)}{\rho}\frac{p(X)}{1 - p(X)}
		\{\Delta Y - \mu(0, X)\}
		+ \mu(0, X)\frac{D}{\rho}
		-
		\frac{D}{\rho}
		\psi_{0}
		\right)
		\\
		 & =
		\frac{D}{\rho}\{\Delta Y - \mu(1, X)\}
		-
		\frac{(1 - D)}{\rho}\frac{p(X)}{1 - p(X)}
		\{\Delta Y - \mu(0, X)\}
		+ \frac{D}{\rho}
		[
			\mu(1, X) - \mu(0, X)
		]
		- \frac{D}{\rho}[\psi_{1} - \psi_{0}]
		\\
		 & =
		\frac{D}{\rho}\{\Delta Y - \mu(1, X)\}
		-
		\frac{(1 - D)}{\rho}\frac{p(X)}{1 - p(X)}
		\{\Delta Y - \mu(0, X)\}
		+ \frac{D}{\rho}
		[
			\mu(1, X) - \mu(0, X) - \tau^{p}
		]
	\end{align*}
	which matches the expression in the appendix of \citep[p. 6]{sazhao}.

	This expression can be further reduced as:
	\begin{align}
		\nonumber
		\mathbb{IF}(\tau^{p})
		 & =
		\frac{D}{\rho}\{\Delta Y - \mu(1, X)\}
		-
		\frac{(1 - D)}{\rho}\frac{p(X)}{1 - p(X)}
		\{\Delta Y - \mu(0, X)\}
		+ \frac{D}{\rho}
		[
			\mu(1, X) - \mu(0, X) - \tau^{p}
		]
		\\
		\nonumber
		 & =
		\frac{D}{\rho}\{\Delta Y - \mu(0, X)\}
		-
		\frac{(1 - D)}{\rho}\frac{p(X)}{1 - p(X)}
		\{\Delta Y - \mu(0, X)\}
		- \frac{D}{\rho}\tau^{p}
		\\
		\nonumber
		 & =
		\nonumber
		\left[
			\frac{D}{\rho}
			-
			\frac{(1 - D)}{\rho}\frac{p(X)}{1 - p(X)}
			\right]
		\{\Delta Y - \mu(0, X)\}
		- \frac{D}{\rho}\tau^{p}
		\\
		\nonumber
		 & =
		\left\{ w^{p}_{1}(D) - w^{p}_{0}(D, X; p) \right\}
		\{\Delta Y - \mu(0, X)\}
		- \frac{D}{\rho}\tau^{p}
	\end{align}
	completing the proof.
\end{proof}
Note that reducing the estimand in \cref{eq:tau-drdid-estimand}
by the LIE to $E[\Delta Y \mid D  = 1] - E[E[\Delta Y \mid D = 0, X] \mid D = 1]$
and then applying the IF-derivation-procedure of course leads to the same IF,
namely \cref{eq:if-att-did-reduced}.

\subsubsection{Repeated Cross-Sections}
\label{sec:rc-if-derivation}
\begin{proposition}
	\label{prop:att-if-estimand-rc}
	Let $D, T \in \{0, 1\}$ be binary,
	for $D$ a treatment dummy
	and $T$ a time dummy;
	$X$ some vector of covariates;
	and $Y$ an outcome variable.
	Define the estimand of interest to be the
	\begin{align}
		\label{eq:estimand-rc-drdid}
		\tau^{rc} & = E\Big[
			E\left[ Y \mid D = 1, T = 1, X \right]
			- E\left[ Y \mid D = 1, T = 0, X \right]
		\\ & \quad
			\nonumber
			- \Big(
			E\left[ Y \mid D = 0, T = 1, X \right]
			- E\left[ Y \mid D = 0, T = 0, X \right]
			\Big)
			\mid D = 1
			\Big]
		\\
		\nonumber
		          & = E[
		\mu_{1, 1}(X) - \mu_{1, 0}(X)
		- [
		\mu_{0, 1}(X) - \mu_{0, 0}(X)
		]
		\mid D = 1
		]
	\end{align}
	for $\mu_{d, t}(X) := E\left[ Y \mid D = d, T = t, X \right]$.
	Assume that $(D, X)$ is invariant to $T$ and that we sample from a mixture
	distribution with probability $\lambda :=
		P(T  = 1)$\footnote{
		I.e. akin to assumption 1.b of \cite{sazhao}.
	}.
	Define:
	\begin{align}
		\label{eq:mean-dY-rc}
		\mu_{d, Y}(T, X) :=
		T \cdot \mu_{d, 1}(X)
		+ (1 - T) \cdot \mu_{d, 0}(X)
	\end{align}
	Then the influence function of $\tau^{rc}$ equals:
	\begin{align}
		\label{eq:if-estimand-rc-drdid}
		\mathbb{IF}(\tau^{rc})
		          & =
		w^{rc}_{1}(D, T) \{Y - \mu_{1, Y}(T, X)\}
		-
		w^{rc}_{0}(D, T, X; p)
		\{Y - \mu_{0, Y}(T, X)\}
		\\
		\nonumber & \quad
		+ \frac{D}{\rho}
		\left\{
		\mu_{1, 1}(X)
		-
		\mu_{1, 0}(X)
		-
		[
		\mu_{0, 1}(X)
		-
		\mu_{0, 0}(X)
		]
		- \tau^{rc}
		\right\}
	\end{align}
	where $\lambda = P(T = 1)$
	and
	\begin{align}
		\label{eq:weights1-rc}
		w^{rc}_{1}(D, T)
		 & = w^{rc}_{1, 1}(D, T) - w^{rc}_{1, 0}(D, T)
		:= \frac{D T}{\rho \lambda}
		- \frac{D (1 - T)}{\rho (1 - \lambda)}
		\\
		\label{eq:weights0-rc}
		w^{rc}_{0}(D, T, X; p)
		 & = 		w^{rc}_{0, 1}(D, T, X; p)
		- w^{rc}_{0, 0}(D, T, X; p)
		\\
		 & :=
		\nonumber
		\frac{(1 - D) T}{\rho \lambda}
		\frac{p(X)}{1 - p(X)}
		- \frac{(1 - D) (1 - T)}{\rho (1 - \lambda)}
		\frac{p(X)}{1 - p(X)}
	\end{align}
\end{proposition}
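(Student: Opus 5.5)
The plan mirrors the panel proof of \Cref{prop:att-panel}: apply Kennedy's Strategy 2 under \Cref{trick:1}, and split the estimand into four pieces,
\[
	\tau^{rc} = \psi_{1,1} - \psi_{1,0} - \psi_{0,1} + \psi_{0,0},
	\qquad
	\psi_{d,s} := E[\mu_{d,s}(X) \mid D = 1] = \sum_{x} \mu_{d,s}(x)\, p_{X\mid D}(x\mid 1).
\]
By linearity of $\mathbb{IF}$, it suffices to compute $\mathbb{IF}(\psi_{d,s})$ for each $(d,s)$ and then add the results with the appropriate signs. For each piece, the product rule \Cref{trick:2a} gives two contributions. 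The first is $\sum_x \mathbb{IF}\{\mu_{d,s}(x)\}\,p_{X\mid D}(x\mid 1)$. The second is $\sum_x \mu_{d,s}(x)\,\mathbb{IF}\{p_{X\mid D}(x\mid 1)\}$.

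The second contribution is handled exactly as in the panel case. Using \cref{eq:if-cdensity}, it equals $\frac{D}{\rho}\{\mu_{d,s}(X) - \psi_{d,s}\}$. Summing these over the four pieces with signs yields
\[
	\frac{D}{\rho}\bigl\{\mu_{1,1}(X)-\mu_{1,0}(X)-[\mu_{0,1}(X)-\mu_{0,0}(X)]-\tau^{rc}\bigr\},
\]
which is the last line of \cref{eq:if-estimand-rc-drdid}.

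For the first contribution I use the conditional-expectation block \cref{eq:block-cexp}, now conditioning on $(D,T,X)$:
\[
	\mathbb{IF}\{\mu_{d,s}(x)\} = \frac{\mathbf{1}\{X=x, D=d, T=s\}}{p_{X,D,T}(x,d,s)}\{Y-\mu_{d,s}(x)\}.
\]
The key step is to simplify the ratio $p_{X\mid D}(x\mid 1)/p_{X,D,T}(x,d,s)$ using the invariance of $(D,X)$ to $T$. Invariance gives $p_{X,D,T}(x,d,s) = p_{X,D}(x,d)\,P(T=s)$, with $P(T=1)=\lambda$ and $P(T=0)=1-\lambda$. Then:
\begin{itemize}
	\item For $d=1$, the ratio reduces to $1/(\rho\,\lambda_s)$, where $\lambda_1=\lambda$ and $\lambda_0=1-\lambda$.
	\item For $d=0$, the identity \cref{eq:helper-if-zhao-panel} gives $\frac{1}{\rho\lambda_s}\frac{p(x)}{1-p(x)}$.
\end{itemize}
Summing over $x$ collapses the indicators. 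The four first-order terms become
\[
	\frac{DT}{\rho\lambda}\{Y-\mu_{1,1}(X)\}
	-\frac{D(1-T)}{\rho(1-\lambda)}\{Y-\mu_{1,0}(X)\}
	-\frac{(1-D)T}{\rho\lambda}\frac{p(X)}{1-p(X)}\{Y-\mu_{0,1}(X)\}
	+\frac{(1-D)(1-T)}{\rho(1-\lambda)}\frac{p(X)}{1-p(X)}\{Y-\mu_{0,0}(X)\}.
\]

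The last step is regrouping. Because $T$ is binary, $T\{Y-\mu_{d,1}(X)\}=T\{Y-\mu_{d,Y}(T,X)\}$ and $(1-T)\{Y-\mu_{d,0}(X)\}=(1-T)\{Y-\mu_{d,Y}(T,X)\}$, using \cref{eq:mean-dY-rc}. This lets me factor the residuals into $w^{rc}_1(D,T)\{Y-\mu_{1,Y}(T,X)\}-w^{rc}_0(D,T,X;p)\{Y-\mu_{0,Y}(T,X)\}$, which gives \cref{eq:if-estimand-rc-drdid}.

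I expect the main obstacle to be justifying the factorization $p_{X,D,T}=p_{X,D}\,\lambda_s$ under the mixture sampling scheme. This factorization is what turns the density ratio into the $\rho\lambda$ normalizations. I would state it explicitly as a consequence of the invariance assumption, in the same way as \cite{sazhao}. The remaining work is bookkeeping of signs.
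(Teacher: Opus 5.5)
Your proposal is correct and follows essentially the same route as the paper. You split $\tau^{rc}$ into the four $\psi_{d,s}$ pieces and apply the product rule with the conditional-expectation block on $(D,T,X)$ and \cref{eq:if-cdensity}. You then use invariance to factor $p_{D,T,X}(d,s,x)=p_{D,X}(d,x)\,P(T=s)$, together with \cref{eq:helper-if-zhao-panel} for $d=0$, and regroup via $T^{2}=T$ and $T(1-T)=0$ into the $\mu_{d,Y}(T,X)$ form; the only cosmetic difference is that the paper first writes one helper for $\mathbb{IF}(E[\mu_{d,t}(X)\mid D=1])$ and then specializes to $d\in\{0,1\}$.
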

\begin{proof}
	Split up the estimand as
	\begin{align}
		 & \tau^{rc} = E\Big[
		\mu_{1, 1}(X) - \mu_{1, 0}(X)
		- [
		\mu_{0, 1}(X) - \mu_{0, 0}(X)
		]
		\mid D = 1
		\Big]
		\label{eq:estimands-rc}
		=:
		\psi_{1, 1} - \psi_{1, 0} - [\psi_{0, 1} - \psi_{0, 0}]
	\end{align}
	As a helper, consider the influence function for $d, t \in \{0, 1\}$:
	\begin{align}
		\nonumber
		 & \mathbb{IF}(E[\mu_{d, t}(X) \mid D = 1])
		\\
		\nonumber
		 & =
		\sum_{x}
		\Big(
		\frac{\mathbf{1}\{D = d, T = t, X = x\}}{p_{D, T, X}(d, t, x)}
		\{Y - \mu_{d, t}(x)\}
		p_{X \mid D}(x \mid 1)
		\\ & \quad \quad \quad  \quad
		\nonumber
		+ \mu_{d, t}(x)\left[\frac{D}{p_{D}(1)}
			\left\{
			\mathbf{1}\{X = x\}
			- p_{X \mid D}(x \mid 1)
			\right\}\right]\Big)
		\\
		\nonumber
		 & =
		\sum_{x}
		\frac{\mathbf{1}\{D = d\} \mathbf{1}\{T = t\}\mathbf{1}\{X = x\}}{p_{D, X \mid T}(d, x \mid t)p_{T}(t)}
		\{Y - \mu_{d, t}(x)\}
		p_{X \mid D}(x \mid 1)
		+ \frac{D}{p_{D}(1)}
		\left\{
		\mu_{d, t}(X)
		- \psi_{d, t}
		\right\}
		\\
		\label{eq:helper-rc-if}
		 & =
		\sum_{x}
		\frac{\mathbf{1}\{D = d\} \mathbf{1}\{T = t\}\mathbf{1}\{X = x\}}{p(d, x)p_{T}(t)}
		\{Y - \mu_{d, t}(x)\}
		p_{X \mid D}(x \mid 1)
		+ \frac{D}{p_{D}(1)}
		\left\{
		\mu_{d, t}(X)
		- \psi_{d, t}
		\right\}
	\end{align}
	where the last line follows by the assumption of $(D, X)$ being invariant to
	$T$.

	\medskip\noindent
	Setting $d = 1$ in \cref{eq:helper-rc-if} gives:
	\begin{align*}
		 & \mathbb{IF}(E[\mu_{1, t}(X) \mid D = 1])
		\\
		 & =
		\sum_{x}
		\frac{\mathbf{1}\{D = 1\} \mathbf{1}\{T = t\}\mathbf{1}\{X = x\}}{p_{X \mid
					D}(x \mid 1)p_{D}(1)p_{T}(t)}
		\{Y - \mu_{1, t}(x)\}
		p_{X \mid D}(x \mid 1)
		+ \frac{D}{p_{D}(1)}
		\left\{
		\mu_{1, t}(X)
		- \psi_{1, t}
		\right\}
		\\
		 & =
		\sum_{x}
		D \mathbf{1}\{T = t\} \frac{\mathbf{1}\{X = x\}}{p_{D}(1)p_{T}(t)}
		\{Y - \mu_{1, t}(x)\}
		+ \frac{D}{p_{D}(1)}
		\left\{
		\mu_{1, t}(X)
		- \psi_{1, t}
		\right\}
		\\
		 & =
		\frac{D \mathbf{1}\{T = t\}}{\rho [t \lambda + (1 - t)(1 - \lambda)]}
		\{Y - \mu_{1, t}(X)\}
		+ \frac{D}{p_{D}(1)}
		\left\{
		\mu_{1, t}(X)
		- \psi_{1, t}
		\right\}
	\end{align*}
	where we set $p_{T}(t) = t \lambda + (1 - t)(1 - \lambda)$.
	Hence, the influence function of the first two components in
	\cref{eq:estimands-rc} equal:
	\begin{align*}
		\mathbb{IF}(\psi_{1, 1})
		 & =
		\mathbb{IF}(E[\mu_{1, 1}(X) \mid D = 1])
		=
		\frac{D T}{\rho \lambda}
		\{Y - \mu_{1, 1}(X)\}
		+ \frac{D}{\rho}
		\left\{
		\mu_{1, 1}(X)
		- \psi_{1, 1}
		\right\}
		\\
		\mathbb{IF}(\psi_{1, 0})
		 & =
		\mathbb{IF}(E[\mu_{1, 0}(X) \mid D = 1])
		=
		\frac{D (1 - T)}{\rho (1 - \lambda)}
		\{Y - \mu_{1, 0}(X)\}
		+ \frac{D}{\rho}
		\left\{
		\mu_{1, 0}(X)
		- \psi_{1, 0}
		\right\}
	\end{align*}
	where $\lambda := p_{T}(1) = P(T = 1)$.

	\medskip\noindent
	Setting $d = 0$ in \cref{eq:helper-rc-if} gives:
	\begin{align*}
		 & \mathbb{IF}(E[\mu_{0, t}(X) \mid D = 1])
		\\
		 & =
		\sum_{x}
		\frac{\mathbf{1}\{D = 0\} \mathbf{1}\{T = t\}\mathbf{1}\{X = x\}}{p_{X, D}(x, 0)p_{T}(t)}
		\{Y - \mu_{d, t}(x)\}
		p_{X \mid D}(x \mid 1)
		+ \frac{D}{p_{D}(1)}
		\left\{
		\mu_{d, t}(X)
		- \psi_{0, t}
		\right\}
		\\
		 & =
		\sum_{x}
		\frac{\mathbf{1}\{D = 0\} \mathbf{1}\{T = t\}\mathbf{1}\{X = x\}}{p_{D}(1)p_{T}(t)}\frac{p_{D \mid X}(1 \mid x)}{1 - p_{D \mid X}(1 \mid x)}
		\{Y - \mu_{0, t}(x)\}
		+ \frac{D}{p_{D}(1)}
		\left\{
		\mu_{0, t}(X)
		- \psi_{0, t}
		\right\}
		\\
		 & =
		\frac{(1 - D) \mathbf{1}\{T = t\}}{\rho [t \lambda + (1 - t)(1 - \lambda)]}
		\frac{p(X)}{1 - p(X)}
		\{Y - \mu_{0, t}(X)\}
		+ \frac{D}{\rho}
		\left\{
		\mu_{0, t}(X)
		- \psi_{0, t}
		\right\}
	\end{align*}
	where we in the second equality used
	\eqref{eq:helper-if-zhao-panel}.
	So:
	\begin{align*}
		\mathbb{IF}(\psi_{0, 1})
		 & =
		\mathbb{IF}(E[\mu_{0, 1}(X) \mid D = 1])
		\\
		 & =
		\frac{(1 - D) T}{\rho \lambda}
		\frac{p(X)}{1 - p(X)}
		\{Y - \mu_{0, 1}(X)\}
		+ \frac{D}{\rho}
		\left\{
		\mu_{0, 1}(X)
		- \psi_{0, 1}
		\right\}
		\\
		\mathbb{IF}(\psi_{0, 0})
		 & =
		\mathbb{IF}(E[\mu_{0, 0}(X) \mid D = 1])
		\\
		 & =
		\frac{(1 - D) (1 - T)}{\rho (1 - \lambda)}
		\frac{p(X)}{1 - p(X)}
		\{Y - \mu_{0, 0}(X)\}
		+ \frac{D}{\rho}
		\left\{
		\mu_{0, 0}(X)
		- \psi_{0, 0}
		\right\}
	\end{align*}

	\paragraph{Combining:}
	\begin{align}
		\nonumber
		\mathbb{IF}(\tau^{rc})
		          & = \mathbb{IF}(\psi_{1, 1}) - \mathbb{IF}(\psi_{1, 0})
		- \mathbb{IF}(\psi_{0,1}) + \mathbb{IF}(\psi_{0, 0})
		\\
		\nonumber
		          & =
		\frac{D T}{\rho \lambda}
		\{Y - \mu_{1, 1}(X)\}
		+ \frac{D}{\rho}
		\left\{
		\mu_{1, 1}(X)
		- \psi_{1, 1}
		\right\}
		\\
		\nonumber & \quad
		- \frac{D (1 - T)}{\rho (1 - \lambda)}
		\{Y - \mu_{1, 0}(X)\}
		- \frac{D}{\rho}
		\left\{
		\mu_{1, 0}(X)
		- \psi_{1, 0}
		\right\}
		\\
		\nonumber & \quad
		- \frac{(1 - D) T}{\rho \lambda}
		\frac{p(X)}{1 - p(X)}
		\{Y - \mu_{0, 1}(X)\}
		- \frac{D}{\rho}
		\left\{
		\mu_{0, 1}(X)
		- \psi_{0, 1}
		\right\}
		\\
		\nonumber & \quad
		+ \frac{(1 - D) (1 - T)}{\rho (1 - \lambda)}
		\frac{p(X)}{1 - p(X)}
		\{Y - \mu_{0, 0}(X)\}
		+ \frac{D}{\rho}
		\left\{
		\mu_{0, 0}(X)
		- \psi_{0, 0}
		\right\}
		\\
		\nonumber
		          & =
		\frac{D T}{\rho \lambda}
		\{Y - \mu_{1, 1}(X)\}
		- \frac{D (1 - T)}{\rho (1 - \lambda)}
		\{Y - \mu_{1, 0}(X)\}
		\\
		\nonumber & \quad
		- \frac{(1 - D) T}{\rho \lambda}
		\frac{p(X)}{1 - p(X)}
		\{Y - \mu_{0, 1}(X)\}
		+ \frac{(1 - D) (1 - T)}{\rho (1 - \lambda)}
		\frac{p(X)}{1 - p(X)}
		\{Y - \mu_{0, 0}(X)\}
		\\
		\nonumber & \quad
		+ \frac{D}{\rho}
		\left\{
		\mu_{1, 1}(X)
		-
		\mu_{1, 0}(X)
		-
		[
		\mu_{0, 1}(X)
		-
		\mu_{0, 0}(X)
		]
		- [\psi_{1, 1} - \psi_{1, 0} - \{\psi_{0, 1} - \psi_{0, 0}\}]
		\right\}
	\end{align}
	This expression almost matches the one in the
	appendix of \citep[p.
		8]{sazhao}\footnote{
	\cite{sazhao} haven't reduced their expression entirely as there is a
	$
		\frac{D}{p(X)}\left\{
		\mu_{1, 1}(X)
		-
		\mu_{1, 0}(X)
		-
		[
		\mu_{0, 1}(X)
		-
		\mu_{0, 0}(X)
		]
		\right\}
	$
	left to be to be cancelled.
	}.
	Using $T^{2} = T$ such that $T (1 - T) = (1 - T) T = 0$,
	we can reduce the previous expression into the form
	\begin{align*}
		\mathbb{IF}(\tau^{rc})
		          & =
		w^{rc}_{1}(D, T)Y - w^{rc}_{0}(D, T, X; p) Y
		\\ & \quad
		-
		\frac{D T}{\rho \lambda}
		\{T \cdot \mu_{1, 1}(X) + (1 - T) \cdot \mu_{1, 0}(X)\}
		+ \frac{D (1 - T)}{\rho (1 - \lambda)}
		\{T \cdot \mu_{1, 1}(X) + (1 - T) \cdot \mu_{1, 0}(X)\}
		\\ & \quad
		+ \frac{(1 - D) T}{\rho \lambda}
		\frac{p(X)}{1 - p(X)}
		\{T \cdot \mu_{0, 1}(X) + (1 - T) \cdot \mu_{0, 0}(X)\}
		\\ & \quad
		- \frac{(1 - D) (1 - T)}{\rho (1 - \lambda)}
		\frac{p(X)}{1 - p(X)}
		\{T \cdot \mu_{0, 1}(X) + (1 - T) \cdot \mu_{0, 0}(X)\}
		\\
		\nonumber & \quad
		+ \frac{D}{\rho}
		\left\{
		\mu_{1, 1}(X)
		-
		\mu_{1, 0}(X)
		-
		[
		\mu_{0, 1}(X)
		-
		\mu_{0, 0}(X)
		]
		- [\psi_{1, 1} - \psi_{1, 0} - \{\psi_{0, 1} - \psi_{0, 0}\}]
		\right\}
		\\
		          & =
		w^{rc}_{1}(D, T)Y - w^{rc}_{0}(D, T, X; p) Y
		-
		w^{rc}_{1}(D, T) \mu_{1, Y}(X)
		+
		w^{rc}_{0}(D, T) \mu_{0, Y}(X)
		\\
		\nonumber & \quad
		+ \frac{D}{\rho}
		\left\{
		\mu_{1, 1}(X)
		-
		\mu_{1, 0}(X)
		-
		[
		\mu_{0, 1}(X)
		-
		\mu_{0, 0}(X)
		]
		- [\psi_{1, 1} - \psi_{1, 0} - \{\psi_{0, 1} - \psi_{0, 0}\}]
		\right\}
	\end{align*}
	i.e. \cref{eq:if-estimand-rc-drdid}, completing the proof.
\end{proof}

\subsection{LATT parameter}
\label{sec:latt-parameter-if}
In this section, we apply the influence function derived for the ATT parameter
to arrive at the influence function of the LATT parameter \eqref{eq:latt}.
we do this directly for the case of staggered exposure; the case of two-period
IDiD then follows by setting $\mathcal{T} = 2$.
The key step is using the quotient rule \cref{eq:if-quotient-rule-estimand} together
with the just derived influence functions.

\subsubsection{Panel Data}
\label{sec:latt-parameter-if-panel}
\begin{proofref}{prop:if-latt-panel}
	We apply \Cref{prop:att-panel} using the relabeling described in
	\Cref{rem:relabeling-idid} to obtain the influence function
	\cref{eq:if-att-did-reduced} for the numerator and denominator components.
	Also, we normalize the control weights when going from $w^{p}_{0}(D, X; p)$
	in \cref{eq:weights-panel}
	to
	$w^{nev,p}_{e, t},w^{nye,p}_{e, s, t}$
	in \cref{eq:weights-nev-panel-norm,eq:weights-nye-panel-norm} after relabeling.

	Then:
	\begin{align}
		\label{eq:if-latt-panel-num}
		\varphi_{num}(O; \tau^{p, num}_{e, t}, \eta^{p, num}_{e, t})
		 & :=
		\mathbb{IF}(\tau^{p, num}_{e, t})
		\\
		\nonumber
		 & =
		\left\{ w^{trt,p}_{e} - w^{c,p}_{e, t} \right\}
		\{\Delta_{t-e+1}Y_{t} - m_{e, t}^{c, p}(X)\}
		- \frac{E_{e}}{E[E_{e}]}\tau^{p, num}_{e, t},
		\\
		\label{eq:if-latt-panel-den}
		\varphi_{den}(O; \tau^{p, den}_{e, t}, \eta^{p, den}_{e, t})
		 & :=
		\mathbb{IF}(\tau^{p, den}_{e, t})
		\\
		\nonumber
		 & =
		\left\{ w^{trt,p}_{e} - w^{c,p}_{e, t} \right\}
		\{\Delta_{t-e+1}D_{t} - g_{e, t}^{c, p}(X)\}
		- \frac{E_{e}}{E[E_{e}]}\tau^{p, den}_{e, t}.
	\end{align}

	\medskip\noindent
	Note that the above components are of the form
	\cref{eq:if-ratio-components-linear}; hence the terms involving the target
	parameter for the numerator and denominator will cancel.
	Thus cf. the quotient rule \cref{eq:if-quotient-rule-estimand-linear-if}:
	\begin{align}
		 & \varphi(O; \tau^{p}_{e, t}, \eta^{p}_{e, t})
		:=
		\mathbb{IF}\left(\frac{\tau^{p, num}_{e, t}}{\tau^{p, den}_{e, t}}\right)
		\\
		\nonumber
		 & =
		\frac{1}{\tau^{p, den}_{e, t}}
		\left[
			\mathbb{IF}\left(\tau^{p, num}_{e, t}\right)
			- \frac{\tau^{p, num}_{e, t}}{\tau^{p, den}_{e, t}}
			\mathbb{IF}\left(\tau^{p, den}_{e, t}\right)
			\right]
		\\
		 & =
		\nonumber
		\frac{1}{\tau^{p, den}_{e, t}}
		\Big[
		\left\{ w^{trt,p}_{e} - w^{c,p}_{e, t} \right\}
		\{\Delta_{t-e+1}Y_{t} - m_{e, t}^{c, p}(X)\}
		\nonumber
		- \tau^{p}_{e, t}
		\left[
		\left\{ w^{trt,p}_{e} - w^{c,p}_{e, t} \right\}
		\{\Delta_{t-e+1}D_{t} - g_{e, t}^{c, p}(X)\}
		\right]
		\Big].
	\end{align}
\end{proofref}

\subsubsection{Repeated Cross-Sections}
\label{sec:latt-parameter-if-rc}
\begin{proofref}{prop:if-latt-rc}
	As done in the panel setting, we apply \Cref{prop:att-if-estimand-rc} using
	the relabeling described in \Cref{rem:relabeling-idid} to obtain the
	influence function \cref{eq:if-estimand-rc-drdid} for the numerator and
	denominator components.
	Also, we normalize the control weights when going from $w^{rc}_{0}(D, T, X; p)$
	in \cref{eq:weights0-rc}
	to
	$w^{nev,rc}_{e, t},w^{nye,rc}_{e, s, t}$
	in \cref{eq:weights-nev-rc-norm,eq:weights-nye-rc-norm} after relabeling.

	Then:
	\begin{align}
		\label{eq:if-num-rc}
		          & \mathbb{IF}(\tau^{rc, num}_{e, t})
		\\
		\nonumber
		          & =
		w^{trt,rc}_{e} \{Y - m_{e,Y}^{trt,rc}(X)\}
		-
		w^{c,rc}_{e}
		\{Y - m_{e, Y}^{c,rc}(X)\}
		\\
		\nonumber & \quad
		+ \frac{E_{e}}{\rho}
		\left\{
		m_{e,t}^{trt, rc}(X) - m_{e,e-1}^{trt, rc}(X)
		- [ m_{e, t, t}^{c, rc}(X) - m_{e, t, e-1}^{c, rc}(X) ]
		- \tau^{rc, num}_{e, t}
		\right\},
		\\
		\label{eq:if-denom-rc}
		          & \mathbb{IF}(\tau^{rc, den}_{e, t})
		\\
		\nonumber
		          & =
		w^{trt,rc}_{e} \{D - g_{e,D}^{trt,rc}(X)\}
		-
		w^{c,rc}_{e}
		\{D - g_{e, D}^{c,rc}(X)\}
		\\
		\nonumber & \quad
		+ \frac{E_{e}}{\rho}
		\left\{
		g_{e,t}^{trt, rc}(X) - g_{e,e-1}^{trt, rc}(X)
		-
		[ g_{e, t, t}^{c, rc}(X) - g_{e, t, e-1}^{c, rc}(X) ]
		- \tau^{rc, den}_{e, t}
		\right\}.
	\end{align}
	The above components are of the form
	\cref{eq:if-ratio-components-linear}; hence applying the quotient rule
	\cref{eq:if-quotient-rule-estimand-linear-if} yields:
	\begin{align}
		\nonumber
		          &
		\varphi^{rc}(O; \tau^{rc}_{e, t}, \eta^{rc}_{e, t})
		:=
		\mathbb{IF}\left(\frac{\tau^{rc, num}_{e, t}}{\tau^{rc, den}_{e, t}}\right)
		\\
		\label{eq:latt-ratio-step-rc}
		          & =
		\frac{1}{\tau^{rc, den}_{e, t}}
		\left[
			\mathbb{IF}\left(\tau^{rc, num}_{e, t}\right)
			- \frac{\tau^{rc, num}_{e, t}}{\tau^{rc, den}_{e, t}}
			\mathbb{IF}\left(\tau^{rc, den}_{e, t}\right)
			\right]
		\\
		          & =
		\frac{1}{\tau^{rc, den}_{e, t}}
		\Big[
		w^{trt,rc}_{e} \{Y - m_{e,Y}^{trt,rc}(X)\}
		-
		w^{c,rc}_{e}
		\{Y - m_{e, Y}^{c,rc}(X)\}
		\\
		\nonumber & \quad
		+ \frac{E_{e}}{\rho}
		\left\{
		m_{e,t}^{trt, rc}(X) - m_{e,e-1}^{trt, rc}(X)
		- [ m_{e, t, t}^{c, rc}(X) - m_{e, t, e-1}^{c, rc}(X) ]
		\right\}
		\\ & \quad
		\nonumber
		- \tau^{rc}_{e, t}
		\Big(
		w^{trt,rc}_{e} \{D - g_{e,D}^{trt,rc}(X)\}
		-
		w^{c,rc}_{e}
		\{D - g_{e, D}^{c,rc}(X)\}
		\\
		\nonumber & \quad
		+ \frac{E_{e}}{\rho}
		\left\{
		g_{e,t}^{trt, rc}(X) - g_{e,e-1}^{trt, rc}(X)
		-
		[ g_{e, t, t}^{c, rc}(X) - g_{e, t, e-1}^{c, rc}(X) ]
		\right\}
		\Big)
		\Big].
	\end{align}
\end{proofref}

\section{DR Estimands motivated by EIFs}
\subsection{ATT}
\subsubsection{Panel data}
\begin{proposition}
	When panel data are available,
	a doubly robust estimand for the ATT equals
	\begin{align}
		\label{eq:att-did-id}
		\tau & =
		E\left[
			\left\{
			w^{p}_{1}(D)
			-
			w^{p}_{0}(D, X; p)
			\right\}
			\{\Delta Y - \mu(0, X)\}
			\right]
	\end{align}
\end{proposition}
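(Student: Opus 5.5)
The plan is to obtain \cref{eq:att-did-id} from the estimating-equation construction \cref{eq:dr-estimand-foc}, using the influence function of \Cref{prop:att-panel}. After that I will verify directly that the resulting object equals $\tau^{p}$ in \cref{eq:tau-drdid-estimand}, and that it is doubly robust.

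\textbf{Step 1 (solve the moment condition).} By \cref{eq:if-att-did-reduced}, $\varphi(O;\tau,\eta^{p}) = \{w^{p}_{1}(D)-w^{p}_{0}(D,X;p)\}\{\Delta Y-\mu(0,X)\} - \tfrac{D}{\rho}\tau$. This is linear in $\tau$. Imposing $P\varphi(\cdot;\tau^{dr},\eta^{p})=0$ and using $E[D/\rho]=1$ gives
\[
\tau^{dr}=E\big[\{w^{p}_{1}(D)-w^{p}_{0}(D,X;p)\}\{\Delta Y-\mu(0,X)\}\big],
\]
which is exactly \cref{eq:att-did-id}. No normalization of the weights is needed in this step.

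\textbf{Step 2 (it is the ATT estimand).} I will split the expectation into the treated and control parts and evaluate each by the LIE.

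For the treated part, $E[\tfrac{D}{\rho}\{\Delta Y-\mu(0,X)\}]=E[\mu(1,X)-\mu(0,X)\mid D=1]$.

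For the control part, I condition on $X$ and use $E[1-D\mid X]=1-p(X)$:
\[
E\Big[\tfrac{1-D}{\rho}\tfrac{p(X)}{1-p(X)}\{\Delta Y-\mu(0,X)\}\Big]
=E\Big[\tfrac{p(X)}{\rho}\{E[\Delta Y\mid X,D=0]-\mu(0,X)\}\Big]=0.
\]
Hence the expression equals $\tau^{p}$.

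\textbf{Step 3 (double robustness).} I will replace $(p,\mu(0,\cdot))$ by working models $(\pi,\mu^{w})$ and mirror Cases 1 and 2 of the proof of \Cref{prop:dr-both}.

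If $\pi=p$, the weight difference has conditional mean zero given $X$. Concretely, $E[\tfrac{D}{\rho}-\tfrac{1-D}{\rho}\tfrac{p}{1-p}\mid X]=0$, so any $\mu^{w}(X)$ term drops out. The $\Delta Y$ term then reduces, as in Step 2, to $E[\Delta Y\mid D=1]-E[\mu(0,X)\mid D=1]=\tau^{p}$.

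If $\mu^{w}=\mu(0,\cdot)$, the control term vanishes for any $\pi$, by the same conditioning as in Step 2 with $\pi$ in place of $p$. The treated term is unaffected.

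\textbf{Main obstacle.} The main obstacle is bookkeeping rather than substance. With the unnormalized weight $w^{p}_{0}$, the identity $E[w^{p}_{0}]=1$ holds only under the true $p$. I therefore have to check that Case 1 does not require $E[w_{0}]=1$ for the misspecified model. It does not: the argument only uses conditional mean zero when $\pi=p$, and in Case 2 it only uses the residual having conditional mean zero. This is why the unnormalized version suffices here. The Hájek-normalized version is the one used later, and it follows by the same steps with \cref{eq:normalizations-reduce-csa}.
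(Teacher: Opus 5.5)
Your Step 1 is the paper's proof: set the mean of the influence function \cref{eq:if-att-did-reduced} to zero and solve the linear equation in $\tau$ using $E[D]=\rho$. Steps 2 and 3 are correct extra checks that the paper does not carry out for this proposition; they mirror Cases 1 and 2 of the proof of \Cref{prop:dr-both}, and you are right that the unnormalized weights suffice, since neither case needs $E[w^{p}_{0}]=1$ under a misspecified propensity.
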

\begin{proof}
	Using \cref{eq:if-att-did-reduced}:
	\begin{align}
		\nonumber
		E[\mathbb{IF}(\tau)] & =
		E\left[
			\left(
			\frac{D}{\rho}
			-
			\frac{(1 - D)}{\rho}\frac{p(X)}{1 - p(X)}
			\right)
			\{\Delta Y - \mu(0, X)\}
			\right]
		- E\left[\frac{D}{\rho}\tau\right] = 0
		\\
		\iff
		\nonumber
		\tau
		                     & =
		E\left[
			\left\{
			w^{p}_{1}(D)
			-
			w^{p}_{0}(D, X; p)
			\right\}
			\{\Delta Y - \mu(0, X)\}
			\right]
	\end{align}
	using $E[D] = \rho$.
\end{proof}
\subsubsection{Repeated Cross-Sections}
\medskip\noindent
\begin{proposition}
	\label{prop:rc-estimand-set-zero}
	When repeated cross-sections are available,
	the expected value of the influence function \cref{eq:if-estimand-rc-drdid}
	equals:
	\begin{align}
		\label{eq:rc-estimand-exp-if}
		E\left[\mathbb{IF}(\tau)\right]
		 & =
		E \left[
			\{w^{rc}_{1}(D, T) - w^{rc}_{0}(D, T, X; p)\}
			\{Y - \mu_{0, Y}(T, X)\}
			\right]
		+ \kappa - \tau,
	\end{align}
	where
	\begin{align*}
		\kappa & :=
		E \left[ \mu_{1, 1}(X) - \mu_{0, 1}(X) \mid D = 1 \right]
		-
		E[\mu_{1, 1}(X) - \mu_{0, 1}(X) \mid D = 1, T = 1]
		\\ & \quad
		\nonumber
		-
		\left(
		E \left[ \mu_{1, 0}(X) - \mu_{0, 0}(X) \mid D = 1 \right]
		-
		E[\mu_{1, 0}(X) - \mu_{0, 0}(X) \mid D = 1, T = 0]
		\right),
	\end{align*}
	and hence, a doubly robust estimand for the LATT equals
	\begin{align}
		\label{eq:dr-did-rc-estimand}
		\tau & =
		E \left[
			\{w^{rc}_{1}(D, T) - w^{rc}_{0}(D, T, X; p)\}
			\{Y - \mu_{0, Y}(T, X)\}
			\right] + \kappa.
	\end{align}
\end{proposition}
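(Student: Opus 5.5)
Take expectations term by term in the influence function \cref{eq:if-estimand-rc-drdid} of \Cref{prop:att-if-estimand-rc}, then solve $E[\mathbb{IF}(\tau)]=0$ for $\tau$, exactly as in the panel case above. The influence function has three parts: the treated residual term $w^{rc}_{1}(D,T)\{Y-\mu_{1,Y}(T,X)\}$, the control residual term $w^{rc}_{0}(D,T,X;p)\{Y-\mu_{0,Y}(T,X)\}$, and the "plug-in" term $\frac{D}{\rho}\{\mu_{1,1}(X)-\mu_{1,0}(X)-[\mu_{0,1}(X)-\mu_{0,0}(X)]-\tau\}$. The last part has expectation $E[\mu_{1,1}-\mu_{1,0}-(\mu_{0,1}-\mu_{0,0})\mid D=1]-\tau$, using $E[D]=\rho$. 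The control residual term is kept as is, since it already appears in \cref{eq:rc-estimand-exp-if}.

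The key step is to rewrite the treated residual term so that it merges with the control residual into a single factor $\{w^{rc}_{1}-w^{rc}_{0}\}\{Y-\mu_{0,Y}(T,X)\}$. Add and subtract $\mu_{0,Y}(T,X)$:
\[
w^{rc}_{1}\{Y-\mu_{1,Y}\}=w^{rc}_{1}\{Y-\mu_{0,Y}\}-w^{rc}_{1}\{\mu_{1,Y}-\mu_{0,Y}\}.
\]
Since $w^{rc}_{1}$ contains $T$ and $1-T$ factors, and $T(1-T)=0$, the correction term becomes
\[
\frac{DT}{\rho\lambda}\{\mu_{1,1}(X)-\mu_{0,1}(X)\}-\frac{D(1-T)}{\rho(1-\lambda)}\{\mu_{1,0}(X)-\mu_{0,0}(X)\}.
\]
Its expectation is $E[\mu_{1,1}-\mu_{0,1}\mid D=1,T=1]-E[\mu_{1,0}-\mu_{0,0}\mid D=1,T=0]$. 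This uses $E[DT]=P(D=1,T=1)$ together with the invariance of $(D,X)$ to $T$, which gives $P(D=1,T=1)=\rho\lambda$ and $P(D=1,T=0)=\rho(1-\lambda)$.

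Collecting terms, the plug-in expectation contributes the unconditional-in-$T$ pieces $E[\mu_{1,1}-\mu_{0,1}\mid D=1]-E[\mu_{1,0}-\mu_{0,0}\mid D=1]$. The correction subtracts the $T$-conditional pieces. Grouping the period-1 differences and the period-0 differences gives precisely $\kappa$ as defined. This establishes \cref{eq:rc-estimand-exp-if}, and setting it to zero yields \cref{eq:dr-did-rc-estimand}.

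The only delicate point is the bookkeeping in this regrouping. The period-1 pieces $\mu_{1,1}-\mu_{0,1}$ enter with a plus sign and the period-0 pieces with a minus sign, and the signs must be tracked so that the $\kappa$ expression matches. Note that under invariance $\kappa=0$ at the true nuisances. However, it must be retained for double robustness, because under misspecified outcome models only the identity $E[\mu(X)\mid D=1]=E[\mu(X)\mid D=1,T=t]$ guarantees its vanishing. This motivates keeping $\kappa$ in the estimand and its sample analogue in the estimator, as in \cite{sazhao}.
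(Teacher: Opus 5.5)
Your argument is correct and is essentially the paper's proof. Both take expectations term by term, use $E[DT\mu(X)]/(\rho\lambda)=E[\mu(X)\mid D=1,T=1]$ (and its $T=0$ analogue) from invariance together with $T(1-T)=0$, and regroup into $\kappa$; the only cosmetic difference is that you add and subtract $\mu_{0,Y}(T,X)$ inside the treated residual before taking expectations, whereas the paper adds and subtracts $E[\mu_{0,t}(X)\mid D=1,T=t]$ afterwards and converts them back into weighted terms.

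One small correction to your closing aside: under invariance, $\kappa=0$ in population for any functions of $X$, correctly specified or not, as the paper notes in Case 4 of \Cref{prop:dr-both}. So $\kappa$ is not needed for double robustness of the estimand; its sample analogue matters only for the estimator's efficiency, as in \cite{sazhao}.
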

\begin{proof}
	The expected value of the influence function in
	\cref{eq:if-estimand-rc-drdid} equals:
	\begin{align}
		\label{eq:dr-did-rc-estimand-eif0}
		E[\mathbb{IF}(\tau)]
		          & = E \left[w^{rc}_{1}(D, T)
			\{Y - \mu_{1, Y}(T, X)\}\right]
		-
		E \left[w^{rc}_{0}(D, T, X; p)
			\{Y - \mu_{0, Y}(T, X)\}\right]
		\\
		\nonumber & \quad
		+ E \left[\frac{D}{\rho}
		\left\{
		\mu_{1, 1}(X)
		-
		\mu_{1, 0}(X)
		-
		[
		\mu_{0, 1}(X)
		-
		\mu_{0, 0}(X)
		]
		\right\}\right]
		- \tau
	\end{align}
	using $E[D/\rho \tau] = \tau E[D]/\rho = \tau$.

	\medskip\noindent
	Inspecting the third term we see that:
	\begin{align}
		\nonumber
		 & E \left[\frac{D}{\rho}
		\left\{
		\mu_{1, 1}(X)
		-
		\mu_{1, 0}(X)
		-
		[
		\mu_{0, 1}(X)
		-
		\mu_{0, 0}(X)
		]
		\right\}\right]
		\\
		\nonumber
		 & =
		E \left[
			\mu_{1, 1}(X)
			-
			\mu_{1, 0}(X)
			\mid D = 1
			\right]
		-
		E \left[
			\mu_{0, 1}(X)
			-
			\mu_{0, 0}(X)
			\mid D = 1
			\right]
		\\
		\label{eq:dr-did-rc-estimand-term1}
		 & =
		E \left[
			\mu_{1, 1}(X)
			-
			\mu_{0, 1}(X)
			\mid D = 1
			\right]
		-
		E \left[
			\mu_{1, 0}(X)
			-
			\mu_{0, 0}(X)
			\mid D = 1
			\right]
	\end{align}
	by the definition of the conditional expectation
	and rearranging in the last equal sign.

	\medskip\noindent
	By invariance:
	\begin{align}
		\label{eq:cexp-invariance-trick}
		E[\mathbf{1}\{D = 1, T = t\}]
		= E[D]E[\mathbf{1}\{T = t\}]
		= \begin{cases}
			  \rho \lambda,       & t = 1, \\
			  \rho (1 - \lambda), & t = 0.
		  \end{cases}
	\end{align}
	As a helper, for $t \in \{0, 1\}$ and generic $\mu(\cdot)$:
	\begin{align}
		\label{eq:helper-cexp-drdid}
		E[\mu(X) \mid D = 1, T = t]
		 & =
		\frac{E[\mathbf{1}\{D = 1, T = t\} \mu(X)]}
		{E[\mathbf{1}\{D = 1, T = t\}]}
		=
		\frac{E[D \mathbf{1}\{T = t\} \mu(X)]}
		{E[D]E[\mathbf{1}\{T = t\}]}
	\end{align}
	Hence, expanding the weight
	\cref{eq:weights1-rc} we get:
	\begin{align*}
		 & E \left[w^{rc}_{1}(D, T)
			\{Y - \mu_{1, Y}(T, X)\}\right]
		\\
		 & =
		E \left[
			\left( \frac{D T}{\rho \lambda}
			- \frac{D (1 - T)}{\rho (1 - \lambda)} \right )
			Y
			\right]
		-
		E \left[
			\frac{D T}{\rho \lambda} \mu_{1, 1}(X)
			\right]
		+
		E \left[
			\frac{D (1 - T)}{\rho (1 - \lambda)}
			\mu_{1, 0}(X)
			\right]
		\\
		 & =
		E \left[
			w^{rc}_{1}(D, T) Y
			\right]
		-
		E[\mu_{1, 1}(X) \mid D = 1, T = 1]
		+
		E[\mu_{1, 0}(X) \mid D = 1, T = 0]
	\end{align*}
	where the last equality follows by
	\cref{eq:helper-cexp-drdid}.
	Continuing from the last equality, adding and subtracting $0$ twice gives:
	\begin{align}
		\nonumber
		          & E \left[
			w^{rc}_{1}(D, T) Y
			\right]
		-
		E[\mu_{1, 1}(X) \mid D = 1, T = 1]
		+
		E[\mu_{1, 0}(X) \mid D = 1, T = 0]
		\\
		\nonumber & \quad
		\pm
		E[\mu_{0, 1}(X) \mid D = 1, T = 1]
		\pm
		E[\mu_{0, 0}(X) \mid D = 1, T = 0]
		\\
		\nonumber
		          & =
		E \left[
			w^{rc}_{1}(D, T) Y
			\right]
		-
		E[\mu_{0, 1}(X) \mid D = 1, T = 1]
		+
		E[\mu_{0, 0}(X) \mid D = 1, T = 0]
		\\
		\nonumber & \quad
		+
		E[\mu_{0, 1}(X) \mid D = 1, T = 1]
		-
		E[\mu_{0, 0}(X) \mid D = 1, T = 0]
		\\
		\nonumber & \quad
		- E[\mu_{1, 1}(X) \mid D = 1, T = 1]
		+
		E[\mu_{1, 0}(X) \mid D = 1, T = 0]
		\\
		\nonumber
		          & =
		E \left[
			w^{rc}_{1}(D, T) Y
			\right]
		-
		E\left[
			\frac{D T}{\rho \lambda} \mu_{0, 1}(X)\right]
		+
		E\left[
			\frac{D (1 - T)}{\rho (1 - \lambda)} \mu_{0, 0}(X)
			\right]
		\\
		\nonumber & \quad
		- E[\mu_{1, 1}(X) - \mu_{0, 1}(X) \mid D = 1, T = 1]
		+
		E[\mu_{1, 0}(X) - \mu_{0, 0}(X) \mid D = 1, T = 0]
		\\
		\nonumber
		          & =
		E \left[
			w^{rc}_{1}(D, T) Y
			\right]
		-
		E\left[ \frac{D T}{\rho \lambda} T \cdot \mu_{0, 1}(X)\right]
		+
		E\left[
			\frac{D (1 - T)}{\rho (1 - \lambda)} (1 - T) \cdot \mu_{0, 0}(X)
			\right]
		\\
		\nonumber & \quad
		+
		E\left[
			\frac{D (1 - T)}{\rho (1 - \lambda)} T \cdot \mu_{0, 1}(X)
			\right]
		-
		E\left[ \frac{D T}{\rho \lambda} (1 - T) \cdot \mu_{0, 0}(X) \right]
		\\
		\nonumber & \quad
		- E[\mu_{1, 1}(X) - \mu_{0, 1}(X) \mid D = 1, T = 1]
		+
		E[\mu_{1, 0}(X) - \mu_{0, 0}(X) \mid D = 1, T = 0]
		\\
		\nonumber
		          & =
		E \left[
			w^{rc}_{1}(D, T) Y
			\right]
		-
		E\left[
			\left( \frac{D T}{\rho \lambda}
			- \frac{D (1 - T)}{\rho (1 - \lambda)} \right )
			\{
			T \cdot \mu_{0, 1}(X) + (1 - T) \cdot \mu_{0, 0}(X)
			\}
			\right]
		\\
		\nonumber & \quad
		- E[\mu_{1, 1}(X) - \mu_{0, 1}(X) \mid D = 1, T = 1]
		+
		E[\mu_{1, 0}(X) - \mu_{0, 0}(X) \mid D = 1, T = 0]
		\\
		\label{eq:dr-did-rc-estimand-term2}
		          & =
		E \left[
			w^{rc}_{1}(D, T)
			\{Y - \mu_{0, Y}(X)\}
			\right]
		\\
		\nonumber & \quad
		- E[\mu_{1, 1}(X) - \mu_{0, 1}(X) \mid D = 1, T = 1]
		+
		E[\mu_{1, 0}(X) - \mu_{0, 0}(X) \mid D = 1, T = 0]
	\end{align}
	using the helper \cref{eq:helper-cexp-drdid},
	$T^{2} = T; (1 - T)^{2} = (1 - T)$,
	adding zero twice again and
	the mean function definition \cref{eq:mean-dY-rc},
	Hence, combining
	\cref{eq:dr-did-rc-estimand-eif0,eq:dr-did-rc-estimand-term1,eq:dr-did-rc-estimand-term2}
	gives:
	\begin{align*}
		E\left[\mathbb{IF}(\tau)\right] & =
		E \left[
			w^{rc}_{1}(D, T)
			\{Y - \mu_{0, Y}(T, X)\}
			\right]
		\\ & \quad
		- E[\mu_{1, 1}(X) - \mu_{0, 1}(X) \mid D = 1, T = 1]
		+
		E[\mu_{1, 0}(X) - \mu_{0, 0}(X) \mid D = 1, T = 0]
		\\ & \quad
		-
		E \left[w^{rc}_{0}(D, T, X; p)
			\{Y - \mu_{0, Y}(T, X)\}\right]
		\\
		\nonumber                       & \quad
		+ E \left[ \mu_{1, 1}(X) - \mu_{0, 1}(X) \mid D = 1 \right]
		-
		E \left[ \mu_{1, 0}(X) - \mu_{0, 0}(X) \mid D = 1 \right]
		\\
		                                & =
		E \left[
			\{w^{rc}_{1}(D, T) - w^{rc}_{0}(D, T, X; p)\}
			\{Y - \mu_{0, Y}(T, X)\}
			\right]
		\\
		\nonumber                       & \quad
		+
		E \left[ \mu_{1, 1}(X) - \mu_{0, 1}(X) \mid D = 1 \right]
		-
		E[\mu_{1, 1}(X) - \mu_{0, 1}(X) \mid D = 1, T = 1]
		\\ & \quad
		\nonumber
		-
		\left(
		E \left[ \mu_{1, 0}(X) - \mu_{0, 0}(X) \mid D = 1 \right]
		-
		E[\mu_{1, 0}(X) - \mu_{0, 0}(X) \mid D = 1, T = 0]
		\right) - \tau
		\\
		                                & =
		E \left[
			\{w^{rc}_{1}(D, T) - w^{rc}_{0}(D, T, X; p)\}
			\{Y - \mu_{0, Y}(T, X)\}
			\right]
		+ \kappa  - \tau,
	\end{align*}
	and by using that the expected value of the influence function equals zero,
	\cref{eq:dr-did-rc-estimand} now follows.
\end{proof}

\subsection{LATT}
\subsubsection{Panel Data}
\label{sec:dr-estimand-panel}

\begin{proofref}{prop:dr-estimand-panel}
	Setting the expected value of \cref{eq:if-latt-panel} equal to $0$
	yields:
	\begin{align}
		\nonumber
		E\left[\varphi(O; \tau^{p}_{e, t}, \eta^{p}_{e, t})\right]
		 & = 0
		\\
		\iff
		\nonumber
		\tau^{dr, p}
		 & = \frac{
		E[
		\{ w^{trt,p}_{e} - w^{c,p}_{e, t} \}
		\{\Delta_{t-e+1}Y_{t} - m_{e, t}^{c, p}(X)\}
		]
		}{
		E[
		\{ w^{trt,p}_{e} - w^{c,p}_{e, t} \}
		\{\Delta_{t-e+1}D_{t} - g_{e, t}^{c, p}(X)\}
		]
		}.
	\end{align}
\end{proofref}

\subsubsection{Repeated Cross-Sections}
\label{sec:dr-estimand-rc}

\begin{proofref}{prop:dr-estimand-rc}
	The expected value of influence function in repeated cross-sections case
	can be reduced as in \cref{eq:rc-estimand-exp-if}.
	We can apply this to \cref{eq:if-num-rc} and \cref{eq:if-denom-rc}
	to get expressions for $E\left[\mathbb{IF}(\tau^{rc, num}_{e, t})\right]$
	and $E\left[\mathbb{IF}(\tau^{rc, den}_{e, t})\right]$,
	respectively.
	These influence functions are linear in the parameters, and
	hence by the quotient rule \cref{eq:if-quotient-rule-estimand-linear-if}:
	\begin{align}
		\nonumber
		E\left[\mathbb{IF}\left(\frac{\tau^{rc, num}_{e, t}}{\tau^{rc, den}_{e, t}}\right)\right]
		 & =
		\frac{1}{\tau^{rc, den}_{e, t}}
		\left[
			E\left[\mathbb{IF}\left(\tau^{rc, num}_{e, t}\right)\right]
			- \frac{\tau^{rc, num}_{e, t}}{\tau^{rc, den}_{e, t}}
			E\left[\mathbb{IF}\left(\tau^{rc, den}_{e, t}\right)\right]
			\right]
		= 0
		\\
		\iff
		\tau^{dr, rc}_{e, t}
		:=
		\frac{\tau^{rc, num}_{e, t}}{\tau^{rc, den}_{e, t}}
		 & =
		\frac{
			E \left[
				\{w^{trt,rc}_{e} - w^{c,rc}_{e}\}
				\{Y - m_{e, Y}^{c,rc}(X)\}
				\right] + \kappa_{e, t}^{Y, rc}
		}{
			E \left[
				\{w^{trt,rc}_{e} - w^{c,rc}_{e}\}
				\{D - g_{e, Y}^{c,rc}(X)\}
				\right] + \kappa_{e, t}^{D, rc}
		}.
	\end{align}
\end{proofref}

\section{EIFs for DR estimands}
In the following we will derive the influence functions for the DR estimands
in \cref{eq:dr-estimand-panel} and \cref{eq:dr-estimand-rc}.
The normalization of the weights complicates the derivations a little bit.
Fortunately, the quotient rule \cref{eq:if-quotient-rule-estimand} can be applied in
a similar fashion as previously.
To do this, we will first derive some helpers for deriving the influence
function of the weighted terms in the estimands.

\paragraph{EIF Weight Term}
Let $w := \mathbf{1}\{B\}/E[\mathbf{1}\{B\}]$ for some event $B$.
This form encompasses all weights in
\cref{eq:weights1-panel-norm,eq:weights-nye-panel-norm,eq:weights-nev-panel-norm}
and likewise for the repeated cross-section setting.
Let $V_t$ be a generic random variable.
Then
\begin{align*}
	E[w V_t]
	=
	\frac{E[\mathbf{1}\{B\} V_t]}{E[\mathbf{1}\{B\}]}.
\end{align*}
Applying the quotient rule \cref{eq:if-quotient-rule-estimand} yields
\begin{align}
	\nonumber
	\mathbb{IF}\big(E[w V_t]\big)
	 & =
	\frac{1}{E[\mathbf{1}\{B\}]}
	\Big[
		\mathbf{1}\{B\} V_t
		-
		E[\mathbf{1}\{B\} V_t]
		-
		\frac{E[\mathbf{1}\{B\} V_t]}{E[\mathbf{1}\{B\}]}
		\big(
		\mathbf{1}\{B\} - E[\mathbf{1}\{B\}]
		\big)
		\Big]
	\\
	\nonumber
	 & =
	\frac{\mathbf{1}\{B\}}{E[\mathbf{1}\{B\}]}
	\left[
		V_t
		-
		\frac{E[\mathbf{1}\{B\} V_t]}{E[\mathbf{1}\{B\}]}
		\right]
	\\
	\label{eq:if-w1-term-panel}
	 & =
	w \left\{ V_t - E[w V_t] \right\}.
\end{align}

\subsubsection{Panel Data}
\label{sec:eif-dr-estimand-panel}

\medskip\noindent
\begin{proofref}{prop:dr-estimand-weighted-panel}
	Recall definition of estimand \cref{eq:dr-estimand-panel}.
	We will apply \cref{eq:if-w1-term-panel}
	to each term in the numerator and denominator.
	First the numerator:
	\begin{align}
		\nonumber
		\varphi^{dr, p}_{num}
		:=
		\mathbb{IF}(\tau^{dr, p}_{num})
		          & =
		w^{trt,p}_{e}
		\left\{ \Delta_{t-e+1}Y_{t} - E[w^{trt,p}_{e} \Delta_{t-e+1}Y_{t} ] \right\}
		\\
		\nonumber & \quad
		-
		w^{c,p}_{e, t}
		\left\{\Delta_{t-e+1}Y_{t} - E[w^{c,p}_{e, t} \Delta_{t-e+1}Y_{t} ] \right\}
		\\
		\nonumber & \quad
		-w^{trt,p}_{e}
		\left\{ m_{e, t}^{c, p}(X) - E[w^{trt,p}_{e} m_{e, t}^{c, p}(X) ] \right\}
		\\
		\nonumber & \quad
		+
		w^{c,p}_{e, t}
		\left\{m_{e, t}^{c, p}(X) - E[w^{c,p}_{e, t} m_{e, t}^{c, p}(X) ] \right\}
		\\
		\nonumber
		          & =
		w^{trt,p}_{e}
		\{
		\Delta_{t-e+1}Y_{t} - m_{e, t}^{c, p}(X)
		-
		E[ w^{trt,p}_{e} \{\Delta_{t-e+1}Y_{t} - m_{e, t}^{c, p}(X)\} ]
		\}
		\\
		\nonumber & \quad
		-
		w^{c,p}_{e, t}
		\{
		\Delta_{t-e+1}Y_{t} - m_{e, t}^{c, p}(X)
		-
		E[ w^{c,p}_{e, t} \{\Delta_{t-e+1}Y_{t} - m_{e, t}^{c, p}(X)\} ]
		\}
		\\
		\label{eq:if-dr-estimand-panel-num}
		          & =
		w^{trt,p}_{e}
		\{
		\varepsilon^{Y, p}_{e, t}
		-
		E[ w^{trt,p}_{e} \varepsilon^{Y, p}_{e, t} ]
		\}
		-
		w^{c,p}_{e, t}
		\{
		\varepsilon^{Y, p}_{e, t}
		-
		E[ w^{c,p}_{e, t} \varepsilon^{Y, p}_{e, t} ]
		\}
	\end{align}
	Likewise for the denominator:
	\begin{align}
		\varphi^{dr, p}_{den}
		:=
		\mathbb{IF}(\tau^{dr, p}_{den})
		          & =
		w^{trt,p}_{e}
		\{
		\Delta_{t-e+1}D_{t} - g_{e, t}^{c, p}(X)
		-
		E[ w^{trt,p}_{e} \{\Delta_{t-e+1}D_{t} - g_{e, t}^{c, p}(X)\} ]
		\}
		\\
		\nonumber & \quad
		-
		w^{c,p}_{e, t}
		\{
		\Delta_{t-e+1}D_{t} - g_{e, t}^{c, p}(X)
		-
		E[ w^{c,p}_{e, t} \{\Delta_{t-e+1}D_{t} - g_{e, t}^{c, p}(X)\} ]
		\}
		\\
		\label{eq:if-dr-estimand-panel-den}
		          & =
		w^{trt,p}_{e}
		\{
		\varepsilon^{D, p}_{e, t}
		-
		E[ w^{trt,p}_{e} \varepsilon^{D, p}_{e, t} ]
		\}
		-
		w^{c,p}_{e, t}
		\{
		\varepsilon^{D, p}_{e, t}
		-
		E[ w^{c,p}_{e, t} \varepsilon^{D, p}_{e, t} ]
		\}
	\end{align}
	Hence applying the quotient rule \cref{eq:if-quotient-rule-estimand}
	to $\mathbb{IF}(\tau^{dr, p}_{num}/\tau^{dr, p}_{den})$
	gives
	\cref{eq:if-dr-estimand-panel}.
\end{proofref}

\begin{remark}[Relating the results to \cite{sazhao}]
	\label{eq:remark-panel-saz}
	\leavevmode
	\begin{itemize}[nosep]
		\item  \Cref{eq:if-dr-estimand-panel-num} is exactly, when relabeling the
		      exposure variable $E$ to the treatment variable $D$, the $\eta^{p}_{1}(W;
			      \beta) - \eta^{p}_{0}(W; \beta, \gamma)$ term of \cite{sazhao} just
		      without the parametrization.
		      Inserting estimators into the expression and Taylor-expanding the
		      estimators will yield the final $\eta^{p}_{est}(W; \beta^{*},
			      \gamma^{*})$ term of their influence function.
		\item The core of their DR-for-inference argument is
		      that using inverse propensity tilting for the propensity score model,
		      and weighted least squares for the outcome regression,
		      will exactly set the estimation effects $\eta^{p}_{est}(W; \beta^{*},
			      \gamma^{*})$ to $0$,
		      and hence
		      that their \textit{improved} DR estimator,
		      $\hat{\tau}^{dr, p}_{imp}$, admits an asymptotic linear
		      representation that is insensitive to first-step estimators.
	\end{itemize}
\end{remark}

\subsubsection{Repeated Cross-Sections}
\label{sec:eif-dr-estimand-rc}
\paragraph{IF of the weights}
Weights joint rc \cref{eq:weights-control-joint-rc}.
By a similar argument to \cref{eq:if-w1-term-panel},
\begin{align}
	\label{eq:if-w1t-term-rc}
	\mathbb{IF}\left(E[w^{trt,rc}_{e, t} V_{t} ]\right)
	 & =
	w^{trt,rc}_{e, t} \left\{ V_{t} - E[w^{trt,rc}_{e, t} V_{t} ] \right\}
	\\
	\label{eq:if-w0t-term-rc}
	\mathbb{IF}\left(E[w^{c,rc}_{e} V_{t} ]\right)
	 & =
	w^{c,rc}_{e}
	\left\{ V_{t} - E[w^{c,rc}_{e} V_{t} ] \right\}.
\end{align}

\begin{proofref}{prop:dr-estimand-weighted-rc}
	Define
	\begin{align}
		\label{eq:resid-rc-case-joined}
		\varepsilon^{Y, rc}_{e, t} = Y - m_{e, Y}^{c,rc}(X),
		\quad
		\varepsilon^{D, rc}_{e, t} = D - g_{e, D}^{c,rc}(X).
	\end{align}
	\textbf{Numerator:}
	Consider numerator of \cref{eq:dr-estimand-rc}:
	\begin{align}
		\mathbb{IF}(\tau^{dr, rc}_{num})
		=
		\mathbb{IF}(
		E \left[
			\{w^{trt,rc}_{e} - w^{c,rc}_{e}\}
			\varepsilon^{Y, rc}_{e, t}
			\right] + \kappa_{e, t}^{Y, rc}
		)
	\end{align}
	For the first term:
	\begin{align}
		\nonumber
		          & \mathbb{IF}\left(
		E \left[
			\{w^{trt,rc}_{e} - w^{c,rc}_{e}\}
			\varepsilon^{Y, rc}_{e, t}
			\right]
		\right)
		\\
		\nonumber
		          & =
		\mathbb{IF}\left(
		E \left[
			[w^{trt,rc}_{e, t} - w^{trt,rc}_{e, e-1}]
		\varepsilon^{Y, rc}_{e, t}
		-
		[w^{c,rc}_{e, t, t} - w^{c,rc}_{e, t, e - 1}]
		\varepsilon^{Y, rc}_{e, t}
		\right]
		\right)
		\\
		\label{eq:if-first-term-rc-num-dr}
		          & =
		w^{trt,rc}_{e, t}
		\left\{ \varepsilon^{Y, rc}_{e, t}
		- E[w^{trt,rc}_{e, t} \varepsilon^{Y, rc}_{e, t} ] \right\}
		-
		w^{trt,rc}_{e, e-1}
		\left\{ \varepsilon^{Y, rc}_{e, t}
		- E[w^{trt,rc}_{e, e-1} \varepsilon^{Y, rc}_{e, t} ] \right\}
		\\
		\nonumber & \quad
		- w^{c,rc}_{e, t, t}
		\left\{ \varepsilon^{Y, rc}_{e, t}
		- E[w^{c,rc}_{e, t, t} \varepsilon^{Y, rc}_{e, t} ] \right\}
		+ w^{c,rc}_{e, t, e - 1}
		\left\{ \varepsilon^{Y, rc}_{e, t}
		- E[w^{c,rc}_{e, t, e - 1} \varepsilon^{Y, rc}_{e, t} ] \right\}
	\end{align}
	which matches the
	$\eta^{rc,1}_{1}(W; \beta) - \eta^{rc,1}_{0}(W; \beta)$
	term of \cite{sazhao} without the parametrization.

	Recall def of $\kappa_{e, t}^{Y, rc}$ in \cref{eq:kappa-Y-rc}
	and the conditional expectation trick \cref{eq:helper-cexp-drdid} (under the
	mixture assumption \Cref{ass:sample:mixture}).
	We can write it out as:
	\begin{align}
		\nonumber
		 & \kappa_{e, t}^{Y, rc}
		\\
		\nonumber
		 & =
		E [ m_{e, t}^{trt, rc}(X)
				-  m_{e, t, t}^{c, rc}(X) \mid E_{e} = 1 ]
		-
		E[ m_{e, t}^{trt, rc}(X)
				-  m_{e, t, t}^{c, rc}(X) \mid E_{e} = 1, T = t ]
		\\ & \quad
		\nonumber
		-
		\left(
		E [ m_{e, e-1}^{trt, rc}(X)
					-  m_{e, t, e-1}^{c, rc}(X) \mid E_{e} = 1 ]
		-
		E[m_{e, e-1}^{trt, rc}(X)
					-  m_{e, t, e-1}^{c, rc}(X) \mid E_{e} = 1, T = e - 1]
		\right),
		\\
		\label{eq:kappa-y-rc-exp}
		 & =
		E\left[\{w^{trt,p}_{e} - w^{trt,rc}_{e, t}\}
			\{m_{e, t}^{trt, rc}(X) -  m_{e, t, t}^{c, rc}(X)\}\right]
		-
		E\left[[w^{trt,p}_{e} - w^{trt,rc}_{e, e-1}]
			\{m_{e, e-1}^{trt, rc}(X) -  m_{e, t, e-1}^{c, rc}(X)\}\right]
	\end{align}
	Thus,
	we can use
	\cref{eq:if-w1-term-panel,eq:if-w1t-term-rc}
	to get:
	\begin{align}
		\label{eq:if-kappa-Y}
		 & \mathbb{IF}(\kappa_{e, t}^{Y, rc})
		\\
		\nonumber
		 & =
		w^{trt,p}_{e} \left\{ \{m_{e, t}^{trt, rc}(X) -  m_{e, t, t}^{c, rc}(X)\}
		- E[w^{trt,p}_{e} \{m_{e, t}^{trt, rc}(X) -  m_{e, t, t}^{c, rc}(X)\} ] \right\}
		\\ & \quad
		\nonumber
		- w^{trt,rc}_{e, t} \left\{ \{m_{e, t}^{trt, rc}(X) -  m_{e, t, t}^{c, rc}(X)\}
		- E[w^{trt,rc}_{e, t} \{m_{e, t}^{trt, rc}(X) -  m_{e, t, t}^{c, rc}(X)\} ] \right\}
		\\ & \quad
		\nonumber
		- w^{trt,p}_{e} \left\{ \{m_{e, e-1}^{trt, rc}(X) -  m_{e, t, e-1}^{c, rc}(X)\}
		- E[w^{trt,p}_{e} \{m_{e, e-1}^{trt, rc}(X) -  m_{e, t, e-1}^{c, rc}(X)\} ] \right\}
		\\ & \quad
		\nonumber
		+ w^{trt,rc}_{e, e-1} \left\{ \{m_{e, e-1}^{trt, rc}(X) -  m_{e, t, e-1}^{c, rc}(X)\}
		- E[w^{trt,rc}_{e, e-1} \{m_{e, e-1}^{trt, rc}(X) -  m_{e, t, e-1}^{c, rc}(X)\} ] \right\}
		\\
		\nonumber
		 & =
		w^{trt,p}_{e} \left\{ m_{e, \Delta}^{trt, rc}(X)
		- E[w^{trt,p}_{e} m_{e, \Delta}^{trt, rc}(X) ] \right\}
		- w^{trt,p}_{e} \left\{ m_{e, t, \Delta}^{c, rc}(X)
		- E[w^{trt,p}_{e} m_{e, t, \Delta}^{c, rc}(X) ] \right\}
		\\ & \quad
		\nonumber
		- w^{trt,rc}_{e, t} \left\{ \{m_{e, t}^{trt, rc}(X) -  m_{e, t, t}^{c, rc}(X)\}
		- E[w^{trt,rc}_{e, t} \{m_{e, t}^{trt, rc}(X) -  m_{e, t, t}^{c, rc}(X)\} ] \right\}
		\\ & \quad
		\nonumber
		+ w^{trt,rc}_{e, e-1} \left\{ \{m_{e, e-1}^{trt, rc}(X) -  m_{e, t, e-1}^{c, rc}(X)\}
		- E[w^{trt,rc}_{e, e-1} \{m_{e, e-1}^{trt, rc}(X) -  m_{e, t, e-1}^{c, rc}(X)\} ] \right\}
	\end{align}
	For the residuals \cref{eq:resid-rc-case-joined}, weights
	\cref{eq:weights-trt-joint-rc,eq:weights-control-joint-rc}
	and mean function \cref{eq:mean-function-both-period-rc},
	multiplying the control residual with the weight picks out the mean function in
	the same period as the weight:
	\begin{align*}
		w^{trt,rc}_{e, t} \varepsilon^{Y, rc}_{e, t}
		 & =
		w^{trt,rc}_{e, t} [Y - m_{e, Y}^{c,rc}(X)]
		=
		w^{trt,rc}_{e, t} [Y - m_{e, t, t}^{c,rc}(X)]
		\\
		w^{c,rc}_{e, t, t} \varepsilon^{Y, rc}_{e, t}
		 & =
		w^{c,rc}_{e, t, t} [Y - m_{e, t, t}^{c,rc}(X)]
	\end{align*}
	We can use this on \cref{eq:if-first-term-rc-num-dr} to get:
	\begin{align*}
		          & \mathbb{IF}\left(
		E \left[
			\{w^{trt,rc}_{e} - w^{c,rc}_{e}\}
			\varepsilon^{Y, rc}_{e, t}
			\right]
		\right)
		\\
		          & =
		w^{trt,rc}_{e, t}
		\left\{ [Y - m_{e, t, t}^{c,rc}(X)]
		- E[w^{trt,rc}_{e, t} [Y - m_{e, t, t}^{c,rc}(X)] ] \right\}
		\\ & \quad
		-
		w^{trt,rc}_{e, e-1}
		\left\{ [Y - m_{e, t, e - 1}^{c,rc}(X)]
		- E[w^{trt,rc}_{e, e-1} [Y - m_{e, t, e - 1}^{c,rc}(X)] ] \right\}
		\\
		\nonumber & \quad
		- w^{c,rc}_{e, t, t}
		\left\{ [Y - m_{e, t, t}^{c,rc}(X)]
		- E[w^{c,rc}_{e, t, t} [Y - m_{e, t, t}^{c,rc}(X)] ] \right\}
		\\ & \quad
		+ w^{c,rc}_{e, t, e - 1}
		\left\{ [Y - m_{e, t, e - 1}^{c,rc}(X)]
		- E[w^{c,rc}_{e, t, e - 1} [Y - m_{e, t, e - 1}^{c,rc}(X)] ] \right\}
	\end{align*}
	Thus,
	\begin{align*}
		          & \mathbb{IF}\left(
		E \left[
			\{w^{trt,rc}_{e} - w^{c,rc}_{e}\}
			\varepsilon^{Y, rc}_{e, t}
			\right]
		\right)
		\\ & \quad
		- w^{trt,rc}_{e, t} \left\{ \{m_{e, t}^{trt, rc}(X) -  m_{e, t, t}^{c, rc}(X)\}
		- E[w^{trt,rc}_{e, t} \{m_{e, t}^{trt, rc}(X) -  m_{e, t, t}^{c, rc}(X)\} ] \right\}
		\\ & \quad
		\nonumber
		+ w^{trt,rc}_{e, e-1} \left\{ \{m_{e, e-1}^{trt, rc}(X) -  m_{e, t, e-1}^{c, rc}(X)\}
		- E[w^{trt,rc}_{e, e-1} \{m_{e, e-1}^{trt, rc}(X) -  m_{e, t, e-1}^{c, rc}(X)\} ] \right\}
		\\
		          & =
		w^{trt,rc}_{e, t}
		\left\{ [Y - m_{e, t}^{trt, rc}(X)]
		- E[w^{trt,rc}_{e, t} [Y - m_{e, t}^{trt, rc}(X)] ] \right\}
		\\ & \quad
		-
		w^{trt,rc}_{e, e-1}
		\left\{ [Y - m_{e, e - 1}^{trt, rc}(X)]
		- E[w^{trt,rc}_{e, e-1} [Y - m_{e, e - 1}^{trt, rc}(X)] ] \right\}
		\\
		\nonumber & \quad
		- w^{c,rc}_{e, t, t}
		\left\{ [Y - m_{e, t, t}^{c,rc}(X)]
		- E[w^{c,rc}_{e, t, t} [Y - m_{e, t, t}^{c,rc}(X)] ] \right\}
		\\ & \quad
		+ w^{c,rc}_{e, t, e - 1}
		\left\{ [Y - m_{e, t, e - 1}^{c,rc}(X)]
		- E[w^{c,rc}_{e, t, e - 1} [Y - m_{e, t, e - 1}^{c,rc}(X)] ] \right\}
	\end{align*}
	i.e. the mean-functions of the controls cancel under the treated weights.
	Hence, combining \cref{eq:if-first-term-rc-num-dr,eq:if-kappa-Y}
	\begin{align}
		          & \mathbb{IF}(\tau^{dr, rc}_{num})
		\\
		\nonumber
		          & =
		w^{trt,rc}_{e, t}
		\left\{ [Y - m_{e, t}^{trt, rc}(X)]
		- E[w^{trt,rc}_{e, t} [Y - m_{e, t}^{trt, rc}(X)] ] \right\}
		\\ & \quad
		\nonumber
		+ w^{trt,p}_{e} \left\{ m_{e, \Delta}^{trt, rc}(X)
		- E[w^{trt,p}_{e} m_{e, \Delta}^{trt, rc}(X) ] \right\}
		\\ & \quad
		\nonumber
		-
		w^{trt,rc}_{e, e-1}
		\left\{ [Y - m_{e, e - 1}^{trt, rc}(X)]
		- E[w^{trt,rc}_{e, e-1} [Y - m_{e, e - 1}^{trt, rc}(X)] ] \right\}
		\\ & \quad
		\nonumber
		- w^{trt,p}_{e} \left\{ m_{e, t, \Delta}^{c, rc}(X)
		- E[w^{trt,p}_{e} m_{e, t, \Delta}^{c, rc}(X) ] \right\}
		\\
		\nonumber & \quad
		- w^{c,rc}_{e, t, t}
		\left\{ [Y - m_{e, t, t}^{c,rc}(X)]
		- E[w^{c,rc}_{e, t, t} [Y - m_{e, t, t}^{c,rc}(X)] ] \right\}
		\\ & \quad
		\nonumber
		+ w^{c,rc}_{e, t, e - 1}
		\left\{ [Y - m_{e, t, e - 1}^{c,rc}(X)]
		- E[w^{c,rc}_{e, t, e - 1} [Y - m_{e, t, e - 1}^{c,rc}(X)] ] \right\}
		\\
		\label{eq:if-match-sazhao}
		          & =
		w^{trt,rc}_{e, t}
		\left\{ \varepsilon^{Y,trt,rc}_{e,t}
		- E[w^{trt,rc}_{e, t} \varepsilon^{Y,trt,rc}_{e,t} ] \right\}
		+ w^{trt,p}_{e} \left\{ m_{e, \Delta}^{trt, rc}(X)
		- E[w^{trt,p}_{e} m_{e, \Delta}^{trt, rc}(X) ] \right\}
		\\ & \quad
		\label{eq:if-match-sazhao2}
		-
		\left[w^{trt,rc}_{e, e-1}
			\left\{ \varepsilon^{Y,trt,rc}_{e,e-1}
			- E[w^{trt,rc}_{e, e-1} \varepsilon^{Y,trt,rc}_{e,e-1} ] \right\}
			+ w^{trt,p}_{e} \left\{ m_{e, t, \Delta}^{c, rc}(X)
			- E[w^{trt,p}_{e} m_{e, t, \Delta}^{c, rc}(X) ] \right\}\right]
		\\
		\label{eq:if-match-sazhao3}
		          & \quad
		- [w^{c,rc}_{e, t, t}
		\left\{ \varepsilon^{Y,c,rc}_{e,t,t}
		- E[w^{c,rc}_{e, t, t} \varepsilon^{Y,c,rc}_{e,t,t} ] \right\}
		- w^{c,rc}_{e, t, e - 1}
		\left\{ \varepsilon^{Y,c,rc}_{e,t,e-1}
		- E[w^{c,rc}_{e, t, e - 1} \varepsilon^{Y,c,rc}_{e,t,e-1} ] \right\}]
	\end{align}
	A similar expression is found for $\mathbb{IF}(\tau^{dr, rc}_{den})$
	by applying the exact same previous steps.
	Applying the quotient rule \cref{eq:if-quotient-rule-estimand} now yields the
	result \cref{eq:if-dr-estimand-rc}.
\end{proofref}

\begin{remark}[Relating the results to \cite{sazhao}]
	\label{eq:remark-rc-saz}
	\leavevmode
	\begin{itemize}[nosep]
		\item Relabeling the exposure variable $E$ to the treatment variable $D$,
		      we see that \cref{eq:if-match-sazhao}, \cref{eq:if-match-sazhao2}, and
		      \cref{eq:if-match-sazhao3} correspond to $\eta^{rc,2}_{1,1}(W; \beta)$,
		      $\eta^{rc,2}_{1,0}(W; \beta)$, and $\eta^{rc,2}_{0}(W; \beta)$ of
		      \cite{sazhao} (without the parametrization), respectively.
		\item Again, the same DR-for-inference comment as in
		      \Cref{eq:remark-panel-saz} also holds for the repeated cross-section
		      case.
		      Interestingly, as noted in \cite{sazhao},
		      estimating the OR coefficients associated with the treated group does
		      not lead to any estimation effect.
	\end{itemize}
\end{remark}

\section{Remainder terms}
\label{sec:remainder-terms}
Below we analyze the remainder terms of the estimators.
We show that the remainder is second order, implying double robustness:
it is proportional to the product of the $L_{2}(P)$ errors of the propensity
score and outcome regression estimators, so each nuisance estimator may
converge at rates slower than $n^{-1/2}$ (in the panel case).
For repeated cross-sections, we additionally require consistency of the
treated mean function.

\paragraph{Decomposition of panel and rcs estimators}
The numerator and denominator can be written as in
\Cref{eq:if-ratio-components-linear}
\begin{align*}
	\varphi(\cdot; \tau_{e, t}, \eta_{e, t})
	 & =
	\phi(\cdot; \eta_{e, t}) - w(\cdot; \eta_{e, t}) \tau_{e, t}
\end{align*}
The estimating equation implies that the estimator $\hat{\tau}_{e, t}$ solves:
\begin{align}
	\label{eq:ee-estimator-generic}
	P_{n}\varphi(\cdot; \hat{\tau}_{e, t}, \hat{\eta}_{e, t}) = 0
	\iff
	\hat{\tau}_{e, t}
	=
	P_{n}\phi(\cdot; \hat{\eta}_{e, t})
\end{align}
using $P_{n}w(\cdot; \hat{\eta}_{e, t}) = 1$.
Thus, the ratio estimator is also an estimating-equation estimator.

Using this:
\begin{align*}
	\hat{\tau}_{e, t} - \tau_{e, t}
	:=
	\frac{\hat{\tau}^{num}_{e, t}}{\hat{\tau}^{den}_{e, t}}
	- \frac{\tau^{num}_{e, t}}{\tau^{den}_{e, t}}
	 & =
	\frac{
		\hat{\tau}^{num}_{e, t} \tau^{den}_{e, t}
		-
		\tau^{num}_{e, t} \hat{\tau}^{den}_{e, t}
	}{
		\hat{\tau}^{den}_{e, t}\tau^{den}_{e, t}
	}
	\\
	 & =
	\frac{1}{\hat{\tau}^{den}_{e, t}}
	\left[
		\hat{\tau}^{num}_{e, t}
		-
		\frac{\tau^{num}_{e, t}}{\tau^{den}_{e, t}} \hat{\tau}^{den}_{e, t}
		\right]
	\\
	 & =
	\frac{1}{\hat{\tau}^{den}_{e, t}}
	\left[
		P_{n}\phi_{num}(\cdot; \hat{\eta}^{num}_{e, t})
		-
		\frac{\tau^{num}_{e, t}}{\tau^{den}_{e, t}} P_{n}\phi_{den}(\cdot; \hat{\eta}^{den}_{e, t})
		\right]
	\\
	 & =
	\frac{1}{\hat{\tau}^{den}_{e, t}}
	P_{n}
	\left[
		\phi_{num}(\cdot; \hat{\eta}^{num}_{e, t})
		-
		\frac{\tau^{num}_{e, t}}{\tau^{den}_{e, t}} \phi_{den}(\cdot; \hat{\eta}^{den}_{e, t})
		\right],
\end{align*}
where in the third equality we used that the numerator and denominator
estimators are of the estimating-equation type, i.e. satisfy
\cref{eq:ee-estimator-generic}.

Multiplying both sides by $\sqrt{n}$ gives:
\begin{align}
	\nonumber
	\sqrt{n}(\hat{\tau}_{e, t} - \tau_{e, t})
	 & =
	\frac{1}{\hat{\tau}^{den}_{e, t}}
	\sqrt{n}
	P_{n}
	\left[
		\phi_{num}(\cdot; \hat{\eta}^{num}_{e, t})
		-
		\frac{\tau^{num}_{e, t}}{\tau^{den}_{e, t}} \phi_{den}(\cdot; \hat{\eta}^{den}_{e, t})
		\right]
	\\
	\nonumber
	 & =
	\frac{1}{\tau^{den}_{e, t}}
	\sqrt{n}
	P_{n}
	\left[
		\phi_{num}(\cdot; \hat{\eta}^{num}_{e, t})
		-
		\frac{\tau^{num}_{e, t}}{\tau^{den}_{e, t}} \phi_{den}(\cdot; \hat{\eta}^{den}_{e, t})
		\right]
	+
	o_{P}(1)
	\\
	\label{eq:latt-estimator-expansion-generic}
	 & =
	\sqrt{n} P_{n} \varphi(\cdot; \tau_{e, t}, \hat{\eta}_{e, t}) + o_{P}(1),
\end{align}
using $\hat{\tau}^{den}_{e, t} \overset{P}{\to} \tau^{den}_{e, t}$ and the definition
of the influence function \Cref{eq:if-quotient-rule-estimand-linear-if}.

\paragraph{Further Decomposition}
A further decomposition of \cref{eq:latt-estimator-expansion-generic} gives:
\begin{align*}
	 & \sqrt{n}(\hat{\tau}_{e,t} - \tau_{e,t})
	\\
	\nonumber
	 & =
	\sqrt{n}
	P_{n} \varphi(\cdot; \tau_{e,t}, \hat{\eta}_{e,t})
	\pm \sqrt{n} P_{n} \varphi(\cdot; \tau_{e,t}, \eta_{e,t})
	\pm \sqrt{n} P \varphi(\cdot; \tau_{e,t}, \hat{\eta}_{e,t})
	+
	o_{P}(1)
	\\
	 & =
	\sqrt{n} (P_{n} - P) \varphi(\cdot; \tau_{e,t}, \eta_{e,t})
	+
	\sqrt{n} (P_{n} - P)
	[
		\varphi(\cdot; \tau_{e,t}, \hat{\eta}_{e,t})
		-
		\varphi(\cdot; \tau_{e,t}, \eta_{e,t})
	]
	\\ & \quad
	\nonumber
	+ \sqrt{n} P \varphi(\cdot; \tau_{e,t}, \hat{\eta}_{e,t})
	+
	o_{P}(1)
\end{align*}
This yields the standard decomposition in \Cref{eq:decomposition}.
The first term is mean-zero with variance
$\mathrm{Var}(\varphi(O; \tau_{e,t}, \eta_{e,t})) =
	E[\{\varphi(O; \tau_{e,t}, \eta_{e,t})\}^{2}]$
and, by a central limit theorem, converges in distribution to
$N(0, E[\{\varphi(O; \tau_{e,t}, \eta_{e,t})\}^{2}])$.
The second term is an empirical process term and is $o_{P}(1)$ under either
Donsker class conditions or by cross-fitting.
The final term is a remainder term, which we analyze for each estimator in this
section.

\paragraph{Estimators satisfying expansion above}
\textbf{Panel EE estimator:}
Define the weights
\cref{eq:weights1-panel-norm,eq:weights-nev-panel-norm,eq:weights-nye-panel-norm}
with estimated nuisances as
$\hat{w}^{trt,p}_{e}, \hat{w}^{c,p}_{e}$ (both control cases)
which sum to one in sample.
Numerator and denominator \cref{eq:if-latt-panel-num,eq:if-latt-panel-den}
satisfies \cref{eq:ee-estimator-generic}.
\begin{align*}
	P_{n}\varphi^{p}_{num}(\cdot; \tau^{p, num}_{e, t}, \hat{\eta}^{p, num}_{e, t})
	 & =
	P_{n}\phi^{p}_{num}(\cdot; \hat{\eta}^{p, num}_{e, t}) - \tau^{p, num}_{e, t}
	=
	\hat{\tau}^{p, num}_{e, t} - \tau^{p, num}_{e, t}
	\\
	P_{n}\varphi^{p}_{den}(\cdot; \tau^{p, den}_{e, t}, \hat{\eta}^{p, den}_{e, t})
	 & =
	P_{n}\phi^{p}_{den}(\cdot; \hat{\eta}^{p, den}_{e, t}) - \tau^{p, den}_{e, t}
	=
	\hat{\tau}^{p, den}_{e, t} - \tau^{p, den}_{e, t}
\end{align*}
where
\begin{align*}
	\hat{\tau}^{p, num}_{e, t}
	=
	P_{n}[
			( \hat{w}^{trt,p}_{e} - \hat{w}^{c,p}_{e, t})
			\hat{\varepsilon}^{Y, p}_{e, t}
		],
	\quad
	\hat{\tau}^{p, den}_{e, t}
	=
	P_{n}[
			( \hat{w}^{trt,p}_{e} - \hat{w}^{c,p}_{e, t})
			\hat{\varepsilon}^{D, p}_{e, t}
		].
\end{align*}
\textbf{Panel EE estimator double IF:}
Have influence functions:
\cref{eq:if-dr-estimand-panel-num,eq:if-dr-estimand-panel-den}.
We can write the influence function components out as:
\begin{align*}
	\varphi^{dr, p, num}(O; \tau^{dr, p, num}_{e, t}, \eta^{dr, p, num}_{e, t})
	 & =
	w^{trt,p}_{e}
	\{
	\varepsilon^{Y, p}_{e, t}
	-
	E[ w^{trt,p}_{e} \varepsilon^{Y, p}_{e, t} ]
	\}
	-
	w^{c,p}_{e, t}
	\{
	\varepsilon^{Y, p}_{e, t}
	-
	E[ w^{c,p}_{e, t} \varepsilon^{Y, p}_{e, t} ]
	\}
	\\
	 & =
	( w^{trt,p}_{e} - w^{c,p}_{e, t})
	\varepsilon^{Y, p}_{e, t}
	- w^{trt,p}_{e}
	E[ w^{trt,p}_{e} \varepsilon^{Y, p}_{e, t} ]
	+
	w^{c,p}_{e, t}
	E[ w^{c,p}_{e, t} \varepsilon^{Y, p}_{e, t} ]
	\\
	 & =:
	( w^{trt,p}_{e} - w^{c,p}_{e, t})
	\varepsilon^{Y, p}_{e, t}
	- w^{trt,p}_{e}
	\tau^{dr,p,num}_{e,t,1}
	+
	w^{c,p}_{e, t}
	\tau^{dr,p,num}_{e,t,0}
\end{align*}
where $\tau^{dr,p,num}_{e,t} = \tau^{dr,p,num}_{e,t,1} - \tau^{dr,p,num}_{e,t,0}$.
Applying the empirical measure to the expression above evaluated at
$\hat{\eta}^{dr, p, num}_{e, t}$ yields:
\begin{align*}
	P_{n} \varphi^{dr, p, num}(\cdot; \tau^{dr, p, num}_{e, t}, \hat{\eta}^{dr, p, num}_{e, t})
	 & =
	P_{n}[
			( \hat{w}^{trt,p}_{e} - \hat{w}^{c,p}_{e, t})
			\hat{\varepsilon}^{Y, p}_{e, t}
		]
	- P_{n}[\hat{w}^{trt,p}_{e}]
	\tau^{dr,p,num}_{e,t,1}
	+
	P_{n}[\hat{w}^{c,p}_{e, t}]
	\tau^{dr,p,num}_{e,t,0}
	\\
	 & =
	P_{n}[
			( \hat{w}^{trt,p}_{e} - \hat{w}^{c,p}_{e, t})
			\hat{\varepsilon}^{Y, p}_{e, t}
		]
	-
	\tau^{dr,p,num}_{e,t}
	\\
	 & =
	\hat{\tau}^{dr, p, num}_{e, t}
	-
	\tau^{dr,p,num}_{e,t}
\end{align*}
The same argument applies to the denominator.
\paragraph{Comparing the decompositions}
The estimators above are very similar.
The latter is based on the DR-estimand while the former is based on the
original LATT estimand.
Hence the estimands, $\tau^{p}_{e, t}$ and $\tau^{dr, p}_{e, t}$, are
different but the meat,
i.e. the estimators
\begin{align*}
	P_{n} \phi^{dr, p}_{num}(\cdot; \hat{\eta}^{dr, p, num}_{e, t})
	= P_{n}\phi^{p}_{num}(\cdot; \hat{\eta}^{p, num}_{e, t}),
	\quad
	P_{n} \phi^{dr, p}_{den}(\cdot; \hat{\eta}^{dr, p, den}_{e, t})
	= P_{n}\phi^{p}_{den}(\cdot; \hat{\eta}^{p, den}_{e, t}),
\end{align*}
are the same due to the cancellation. This is consistent with the fact that
both estimators solve the estimating equation with the original influence
function; what differs is the influence function used for inference.

Applying the decomposition \Cref{eq:latt-estimator-expansion-generic} yields:
\begin{align}
	\nonumber
	\sqrt{n}(\hat{\tau}^{dr,p}_{e,t} - \tau^{dr,p}_{e,t})
	 & =
	\frac{1}{\tau^{dr,p,den}_{e,t}}
	\sqrt{n}
	P_{n}
		[
			\phi^{dr, p, num}(\cdot; \hat{\eta})
			-
			\frac{\tau^{dr,p,num}_{e,t}}{\tau^{dr,p,den}_{e,t}}
			\phi^{dr, p, den}(\cdot; \hat{\eta})
		]
	+
	o_{P}(1)
	\\
	\label{eq:dr-estimator-panel-expansion}
	 & =
	\sqrt{n}
	P_{n}
	\varphi^{dr, p}(\cdot; \tau^{dr, p}_{e,t}, \hat{\eta}^{dr, p}_{e,t})
	+
	o_{P}(1)
\end{align}
and similarly for the other.
Hence the remainder terms,
\begin{align*}
	P \varphi^{dr, p}(\cdot; \tau^{dr, p}, \hat{\eta}^{dr, p}),
	\quad
	P \varphi^{p}(\cdot; \tau^{p}, \hat{\eta^{p}})
\end{align*}
are the same as well; see \Cref{remark:identical-remainders}.

\subsubsection{Remainder Panel Data}
\label{sec:remainder-panel}
\begin{proofref}{prop:remainder-term-panel}
	Consider the DR estimand \cref{eq:dr-estimand-panel}
	with influence function \cref{eq:if-dr-estimand-panel}.
	Write the estimand as $\tau^{dr, p}_{e, t} =
		\tau^{dr, p,num}_{e, t}/\tau^{dr, p,den}_{e, t}$.
	Define notation for the parameters in the influence function
	\cref{eq:if-dr-estimand-panel}:
	\begin{align*}
		\tau^{dr,p,num}_{e, t, 1}
		 & :=
		E[ w^{trt,p}_{e} \varepsilon^{Y, p}_{e, t} ],
		\quad
		\tau^{dr,p,num}_{e, t, 0}
		:=
		E[ w^{c,p}_{e, t} \varepsilon^{Y, p}_{e, t} ]
		\\
		\tau^{dr,p,den}_{e, t, 1}
		 & :=
		E[ w^{trt,p}_{e} \varepsilon^{D, p}_{e, t} ],
		\quad
		\tau^{dr,p,den}_{e, t, 0}
		:=
		E[ w^{c,p}_{e, t} \varepsilon^{D, p}_{e, t} ]
	\end{align*}
	Consider the remainder:
	\begin{align*}
		 & P \varphi^{dr, p}(\cdot; \tau^{dr, p}_{e, t}, \hat{\eta}^{dr, p}_{e, t})
		\\
		 & =
		E\Big[
			\frac{1}{\tau^{dr, p, den}_{e, t}}
			\Big[
				\hat{w}^{trt,p}_{e}
				\{
				\hat{\varepsilon}^{Y, p}_{e, t} - \tau^{dr,p,num}_{e, t, 1}
				\}
				-
				\hat{w}^{c,p}_{e, t}
				\{
				\hat{\varepsilon}^{Y, p}_{e, t} - \tau^{dr,p,num}_{e, t, 0}
				\}
		\\ & \quad
				\nonumber
				- \frac{\tau^{dr, p, num}_{e, t}}{\tau^{dr, p, den}_{e, t}}
				(
				\hat{w}^{trt,p}_{e}
				\{
				\hat{\varepsilon}^{D, p}_{e, t} - \tau^{dr,p,den}_{e, t, 1}
				\}
				-
				\hat{w}^{c,p}_{e, t}
				\{
				\hat{\varepsilon}^{D, p}_{e, t} - \tau^{dr,p,den}_{e, t, 0}
				\}
				)
				\Big]
			\Big]
	\end{align*}
	We focus on the numerator components; the denominator follows by simlar
	arguments.
	Consider the components weighted by the treated weight:
	\begin{align*}
		\tau^{dr,p,num}_{e, t, 1}
		=
		E[ w^{trt,p}_{e} \varepsilon^{Y, p}_{e, t} ]
		=
		E\left[ \frac{E_{e}}{\rho} \varepsilon^{Y, p}_{e, t} \right]
		\iff
		\rho \tau^{dr,p,num}_{e, t, 1}
		=
		E\left[ E_{e} \varepsilon^{Y, p}_{e, t} \right].
	\end{align*}
	Hence, as $\rho \tau^{dr,p,num}_{e, t, 1} = E[E_{e} \tau^{dr,p,num}_{e, t, 1}]$
	and
	\begin{align*}
		E\left[ E_{e} \varepsilon^{Y, p}_{e, t} \right]
		=
		E\left[ E_{e} \{E[\Delta_{t-e+1}Y_{t} \mid X, E_{e}] - m_{e, t}^{c, p}(X)\} \right]
		=
		E\left[ E_{e} \{m_{e, t}^{trt, p}(X) - m_{e, t}^{c, p}(X)\} \right],
	\end{align*}
	where we used the LIE,
	we have:
	\begin{align}
		\label{eq:dummy-times-estimand-trick-panel}
		E[E_{e} \tau^{dr,p,num}_{e, t, 1}]
		=
		E\left[ E_{e} \{m_{e, t}^{trt, p}(X) - m_{e, t}^{c, p}(X)\} \right].
	\end{align}
	Likewise,
	\begin{align*}
		E\left[ \hat{w}^{trt,p}_{e}\hat{\varepsilon}^{Y, p}_{e, t} \right]
		=
		E\left[ \frac{E_{e}}{\hat{\rho}}
		\{\Delta_{t-e+1}Y_{t} - \hat{m}_{e, t}^{c, p}(X)\} \right]
		=
		E\left[ \hat{w}^{trt,p}_{e}
			\{m_{e, t}^{trt, p}(X) - \hat{m}_{e, t}^{c, p}(X)\} \right],
	\end{align*}
	and with an analogous LIE argument for the controls:
	\begin{align*}
		E\left[
			\hat{w}^{c,p}_{e, t}\hat{\varepsilon}^{Y, p}_{e, t}
			\right]
		=
		E\left[
			\hat{w}^{c,p}_{e, t}\{m_{e, t}^{c, p}(X) - \hat{m}_{e, t}^{c, p}(X)\}
			\right].
	\end{align*}
	Lastly,
	\begin{align*}
		\tau^{dr,p,num}_{e, t, 0}
		=
		E[w^{c,p}_{e, t}\varepsilon^{Y, p}_{e, t}]
		=
		E[w^{c,p}_{e, t}\{m_{e, t}^{c, p}(X) - m_{e, t}^{c, p}(X)\}]
		= 0,
	\end{align*}
	where we implicitly use that the mean function for the controls is
	correctly specified,
	so
	\begin{align}
		\label{eq:estimand-left-panel}
		\tau^{dr,p,num}_{e, t}
		=
		E[ w^{trt,p}_{e} \{m_{e, t}^{trt, p}(X) - m_{e, t}^{c, p}(X)\} ]
		= E[
				m_{e, t}^{trt, p}(X) - m_{e, t}^{c, p}(X) \mid E_{e} = 1
			]
	\end{align}
	i.e. RHS of \cref{eq:identification-panel}.
	Hence,
	\begin{align}
		\label{eq:remainder-num-panel-compare}
		 & P \varphi^{dr, p, num}(\cdot; \tau^{dr, p, num}_{e, t},
		\hat{\eta}^{dr, p, num}_{e, t})
		\\
		\nonumber
		 & =
		E[
				\hat{w}^{trt,p}_{e}
				\{
				\hat{\varepsilon}^{Y, p}_{e, t} - \tau^{dr,p,num}_{e, t, 1}
				\}
				-
				\hat{w}^{c,p}_{e, t}
				\{
				\hat{\varepsilon}^{Y, p}_{e, t} - \tau^{dr,p,num}_{e, t, 0}
				\}
			]
		\\
		\nonumber
		 & =
		E[
				\hat{w}^{trt,p}_{e}
				\{
				m_{e, t}^{trt, p}(X) - \hat{m}_{e, t}^{c, p}(X)
				- \{m_{e, t}^{trt, p}(X) - m_{e, t}^{c, p}(X)\}
				\}
				-
				\hat{w}^{c,p}_{e, t}
				\{m_{e, t}^{c, p}(X) - \hat{m}_{e, t}^{c, p}(X)\}
			]
		\\
		\nonumber
		 & =
		E[
				(
				\hat{w}^{trt,p}_{e} - \hat{w}^{c,p}_{e, t}
				)
				\{
				m_{e, t}^{c, p}(X) - \hat{m}_{e, t}^{c, p}(X)
				\}
			]
	\end{align}
	Next, the expected value of the weights equals:
	\begin{align*}
		E [\hat{w}^{trt,p}_{e} - \hat{w}^{c,p}_{e, t}]
		 & =
		E \left[\frac{E_{e}}{\hat{\rho}}
			- \frac{C}{\hat{\rho}_{0}}
			\frac{\hat{p}(X)}{1 - \hat{p}(X)}
			\right]
		\\
		 & =
		E \left[\frac{p(X)}{\hat{\rho}}
			- \frac{1 - p(X)}{\hat{\rho}_{0}}
			\frac{\hat{p}(X)}{1 - \hat{p}(X)}
			\right]
		\\
		 & =
		E \left[
			\{1 - p(X)\}
			\left[
				\frac{1}{\hat{\rho}}\frac{p(X)}{1 - p(X)}
				- \frac{1}{\hat{\rho}_{0}}
				\frac{\hat{p}(X)}{1 - \hat{p}(X)}
				\right]
			\right]
	\end{align*}
	where we used the LIE in the second equals sign, that
	$E[C \mid X] = 1 - p(X)$ for a generic control variable and corresponding
	propensity score,
	and defined $\hat{\rho}_{0}$ as the estimated value of the denominator in the
	corresponding control weight.
	Therefore:
	\begin{align*}
		 & E[
				\hat{w}^{trt,p}_{e}
				\{
				\hat{\varepsilon}^{Y, p}_{e, t} - \tau^{dr,p,num}_{e, t, 1}
				\}
				-
				\hat{w}^{c,p}_{e, t}
				\{
				\hat{\varepsilon}^{Y, p}_{e, t} - \tau^{dr,p,num}_{e, t, 0}
				\}
			]
		\\
		 & =
		E \left[
			\{1 - p(X)\}
			\left[
				\frac{1}{\hat{\rho}}\frac{p(X)}{1 - p(X)}
				- \frac{1}{\hat{\rho}_{0}}
				\frac{\hat{p}(X)}{1 - \hat{p}(X)}
				\right]
			\{
			m_{e, t}^{c, p}(X) - \hat{m}_{e, t}^{c, p}(X)
			\}
			\right]
	\end{align*}
	The exact same argument applies to the denominator, yielding:
	\begin{align*}
		 & E[
				\hat{w}^{trt,p}_{e}
				\{
				\hat{\varepsilon}^{D, p}_{e, t} - \tau^{dr,p,den}_{e, t, 1}
				\}
				-
				\hat{w}^{c,p}_{e, t}
				\{
				\hat{\varepsilon}^{D, p}_{e, t} - \tau^{dr,p,den}_{e, t, 0}
				\}
			]
		\\
		 & =
		E \left[
			\{1 - p(X)\}
			\left[
				\frac{1}{\hat{\rho}}\frac{p(X)}{1 - p(X)}
				- \frac{1}{\hat{\rho}_{0}}
				\frac{\hat{p}(X)}{1 - \hat{p}(X)}
				\right]
			\{
			g_{e, t}^{c, p}(X) - \hat{g}_{e, t}^{c, p}(X)
			\}
			\right]
	\end{align*}
	Using the above, we can bound the remainder term as:
	\begin{align*}
		 & \vert P \varphi^{dr, p}(\cdot; \tau^{dr, p}, \hat{\eta}^{dr, p}) \vert
		\\
		 & \leq
		\Big\vert \frac{1}{\tau^{dr, p, den}_{e, t}} \Big\vert
		E \Big[
			\Big\vert
			\{1 - p(X)\}
			\left[
				\frac{1}{\hat{\rho}}\frac{p(X)}{1 - p(X)}
				- \frac{1}{\hat{\rho}_{0}}
				\frac{\hat{p}(X)}{1 - \hat{p}(X)}
				\right]
			\{
			m_{e, t}^{c, p}(X) - \hat{m}_{e, t}^{c, p}(X)
			\}
			\Big\vert
			\Big]
		\\ & \quad
		+
		\Big\vert \frac{\tau^{dr, p}}{\tau^{dr, p, den}_{e, t}} \Big\vert
		E \Big[
			\Big\vert
			\{1 - p(X)\}
			\left[
				\frac{1}{\hat{\rho}}\frac{p(X)}{1 - p(X)}
				- \frac{1}{\hat{\rho}_{0}}
				\frac{\hat{p}(X)}{1 - \hat{p}(X)}
				\right]
			\{
			g_{e, t}^{c, p}(X) - \hat{g}_{e, t}^{c, p}(X)
			\}
			\Big\vert
			\Big]
		\\
		 & \leq
		\Big\vert \frac{1}{\tau^{dr, p, den}_{e, t}} \Big\vert
		\frac{1}{c}
		E \Big[
			\Big\vert
			\left[
				\frac{1}{\hat{\rho}}\frac{p(X)}{1 - p(X)}
				- \frac{1}{\hat{\rho}_{0}}
				\frac{\hat{p}(X)}{1 - \hat{p}(X)}
				\right]
			\{
			m_{e, t}^{c, p}(X) - \hat{m}_{e, t}^{c, p}(X)
			\}
			\Big\vert
			\Big]
		\\ & \quad
		+ \Big\vert \frac{\tau^{dr, p}}{\tau^{dr, p, den}_{e, t}} \Big\vert
		\frac{1}{c}
		E \Big[
			\Big\vert
			\left[
				\frac{1}{\hat{\rho}}\frac{p(X)}{1 - p(X)}
				- \frac{1}{\hat{\rho}_{0}}
				\frac{\hat{p}(X)}{1 - \hat{p}(X)}
				\right]
			\{
			g_{e, t}^{c, p}(X) - \hat{g}_{e, t}^{c, p}(X)
			\}
			\Big\vert
			\Big]
		\\
		 & \leq
		\Big\vert \frac{1}{\tau^{dr, p, den}_{e, t}} \Big\vert
		\frac{1}{c}
		\Big\Vert
		\frac{1}{\hat{\rho}}\frac{p}{1 - p}
		- \frac{1}{\hat{\rho}_{0}}
		\frac{\hat{p}}{1 - \hat{p}}
		\Big\Vert_{2, P}
		\Big\Vert
		m_{e, t}^{c, p} - \hat{m}_{e, t}^{c, p}
		\Big\Vert_{2, P}
		\\ & \quad
		+ \Big\vert \frac{\tau^{dr, p}}{\tau^{dr, p, den}_{e, t}} \Big\vert
		\frac{1}{c}
		\Big\Vert
		\frac{1}{\hat{\rho}}\frac{p}{1 - p}
		- \frac{1}{\hat{\rho}_{0}}
		\frac{\hat{p}}{1 - \hat{p}}
		\Big\Vert_{2, P}
		\Big\Vert
		g_{e, t}^{c, p} - \hat{g}_{e, t}^{c, p}
		\Big\Vert_{2, P}
	\end{align*}
	where we
	in the first inequality used
	Jensen's inequality and the triangle inequality for integrals,
	in the second inequality used
	Hölder's inequality and that
	$1 - p(X) \leq 1 < 1/c$ cf. \Cref{ass:overlap},
	and
	in the third inequality used Hölder's inequality again.
	Provided the assumptions on the nuisances of \Cref{prop:remainder-term-panel}
	hold the proposition now follows.
\end{proofref}

\subsubsection{Remainder Repeated Cross-Sections}
\label{sec:remainder-rc}
\begin{proofref}{prop:remainder-term-rc}
	Consider the DR estimand \cref{eq:dr-estimand-rc}
	with influence function \cref{eq:if-dr-estimand-rc}.
	Write the estimand as $\tau^{dr, rc}_{e, t} =
		\tau^{dr, rc,num}_{e, t}/\tau^{dr, rc,den}_{e, t}$.
	As in the panel case, we consider the numerator and denominator separately.

	First we rewrite the numerator of the estimand in \cref{eq:dr-estimand-rc}
	into another form.
	Write out \cref{eq:kappa-y-rc-exp} recalling mean function definitions
	\cref{eq:mean-functions-rc-diffs}:
	\begin{align*}
		 & \kappa_{e, t}^{Y, rc}
		\\
		 & =
		E\left[\{w^{trt,p}_{e} - w^{trt,rc}_{e, t}\}
			\{m_{e, t}^{trt, rc}(X) -  m_{e, t, t}^{c, rc}(X)\}\right]
		-
		E\left[[w^{trt,p}_{e} - w^{trt,rc}_{e, e-1}]
			\{m_{e, e-1}^{trt, rc}(X) -  m_{e, t, e-1}^{c, rc}(X)\}\right]
		\\
		 & =
		E\left[
			w^{trt,rc}_{e, e-1}
			\{m_{e, e-1}^{trt, rc}(X) -  m_{e, t, e-1}^{c, rc}(X)\}
			- w^{trt,rc}_{e, t}
			\{m_{e, t}^{trt, rc}(X) -  m_{e, t, t}^{c, rc}(X)\}
			\right]
		\\ & \quad
		+
		E\left[
		w^{trt,p}_{e}
		[
		m_{e, \Delta}^{trt, rc}(X)
		-
		m_{e, t, \Delta}^{c, rc}(X)
		]
		\right]
	\end{align*}
	Next, write out the term with the residuals \cref{eq:resid-rc-case-individual}:
	\begin{align*}
		E \left[
			\{w^{trt,rc}_{e} - w^{c,rc}_{e}\}
			\varepsilon^{Y, rc}_{e, t}
			\right]
		 & =
		E \Big[
		w^{trt,rc}_{e, t}
		[Y - m_{e, t, t}^{c,rc}(X)]
		-
		w^{trt,rc}_{e, e-1}
		[Y - m_{e, t, e - 1}^{c,rc}(X)]
		\\ & \quad
		- w^{c,rc}_{e, t, t}
		[Y - m_{e, t, t}^{c,rc}(X)]
		+ w^{c,rc}_{e, t, e - 1}
		[Y - m_{e, t, e - 1}^{c,rc}(X)]
		\Big]
	\end{align*}
	Using the expressions above we see, as before, that
	the mean functions for the controls under the treated rc-weights cancel:
	\begin{align}
		\nonumber
		\tau^{dr, rc, num}_{e, t}
		          & =
		E \left[
			\{w^{trt,rc}_{e} - w^{c,rc}_{e}\}
			\varepsilon^{Y, rc}_{e, t}
			\right]
		+
		\kappa_{e, t}^{Y, rc}
		\\
		\nonumber
		          & =
		E \Big[
		w^{trt,rc}_{e, t}
		[Y - m_{e, t}^{trt, rc}(X)]
		-
		w^{trt,rc}_{e, e-1}
		[Y - m_{e, e-1}^{trt, rc}(X)]
		\\
		\nonumber & \quad
		- w^{c,rc}_{e, t, t}
		[Y - m_{e, t, t}^{c,rc}(X)]
		+ w^{c,rc}_{e, t, e - 1}
		[Y - m_{e, t, e - 1}^{c,rc}(X)]
		\Big]
		+
		E\left[
		w^{trt,p}_{e}
		[
		m_{e, \Delta}^{trt, rc}(X)
		-
		m_{e, t, \Delta}^{c, rc}(X)
		]
		\right]
		\\
		          & =
		\label{eq:rc-eps-terms-num}
		E \Big[
		w^{trt,rc}_{e, t}
		\varepsilon^{Y,trt,rc}_{e,t}
		-
		w^{trt,rc}_{e, e-1}
		\varepsilon^{Y,trt,rc}_{e,e-1}
		- w^{c,rc}_{e, t, t}
		\varepsilon^{Y,c,rc}_{e,t,t}
		\\
		\nonumber & \quad
		+ w^{c,rc}_{e, t, e - 1}
		\varepsilon^{Y,c,rc}_{e,t,e - 1}
		+
		w^{trt,p}_{e}
		[
		m_{e, \Delta}^{trt, rc}(X)
		-
		m_{e, t, \Delta}^{c, rc}(X)
		]
		\Big].
	\end{align}
	Next, we show the terms involving the residual $\varepsilon$ equal $0$ in
	expectation,.
	Recall the weights: \cref{eq:weights-trt-rc-norm} and consider:
	\begin{align*}
		E[w^{trt,rc}_{e, t} \varepsilon^{Y,trt,rc}_{e,t} ]
		 & =
		E\left[
			\frac{E_{e} \cdot \mathbf{1}\{T = t\}}{
				E[E_{e} \cdot \mathbf{1}\{T = t\}]
			}
			\varepsilon^{Y,trt,rc}_{e,t}
			\right]
		\\
		 & =
		E\left[
			\frac{E_{e} \cdot \mathbf{1}\{T = t\}}{
				E[E_{e} \cdot \mathbf{1}\{T = t\}]
			}
			[Y - m_{e, t}^{trt, rc}(X)]
			\right]
		\\
		 & =
		E\left[
			\frac{E_{e} \cdot \mathbf{1}\{T = t\}}{
				E[E_{e} \cdot \mathbf{1}\{T = t\}]
			}
			[m_{e, t}^{trt, rc}(X) - m_{e, t}^{trt, rc}(X)]
			\right]
		= 0
	\end{align*}
	using the LIE; similar calculation shows the other weighted residual terms in
	\cref{eq:rc-eps-terms-num} equal $0$.
	Hence we are left with:
	\begin{align}
		\label{eq:estimand-left-rc}
		\tau^{dr, rc, num}_{e, t}
		 & =
		E
		\Big[
		w^{trt,p}_{e}
		[
		m_{e, \Delta}^{trt, rc}(X)
		-
		m_{e, t, \Delta}^{c, rc}(X)
		]
		\Big]
		=:
		\tau^{dr, rc, num}_{e, t, 1} - \tau^{dr, rc, num}_{e, t, 0}
	\end{align}
	matching the numerator of \cref{eq:identification-rc} (which we already knew).

	Can write the IF \cref{eq:if-dr-num-estimand-rc} using the notation for the
	implicit estimand \cref{eq:estimand-left-rc} as:
	\begin{align}
		 & \varphi^{dr,rc}_{num}(O; \tau^{dr, rc, num}_{e, t}; \eta^{dr, rc,
			num}_{e, t})
		\\
		\nonumber
		 & =
		w^{trt,rc}_{e, t}
		\varepsilon^{Y,trt,rc}_{e,t}
		+ w^{trt,p}_{e} \left\{ m_{e, \Delta}^{trt, rc}(X)
		- \tau^{dr, rc, num}_{e, t, 1} \right\}
		\\ & \quad
		\nonumber
		-
		\left[w^{trt,rc}_{e, e-1}
			\varepsilon^{Y,trt,rc}_{e,e-1}
			+ w^{trt,p}_{e} \left\{ m_{e, t, \Delta}^{c, rc}(X)
			- \tau^{dr, rc, num}_{e, t, 0} \right\}\right]
		\\
		\nonumber
		 & \quad
		- [w^{c,rc}_{e, t, t}
		\varepsilon^{Y,c,rc}_{e,t,t}
		- w^{c,rc}_{e, t, e - 1}
		\varepsilon^{Y,c,rc}_{e,t,e-1}
		]
		\\
		\nonumber
		 & =
		w^{trt,rc}_{e, t}
		\varepsilon^{Y,trt,rc}_{e,t}
		-
		w^{trt,rc}_{e, e-1}
		\varepsilon^{Y,trt,rc}_{e,e-1}
		- [w^{c,rc}_{e, t, t}
		\varepsilon^{Y,c,rc}_{e,t,t}
		- w^{c,rc}_{e, t, e - 1}
		\varepsilon^{Y,c,rc}_{e,t,e-1}
		]
		\\ & \quad
		\nonumber
		+ w^{trt,p}_{e}  m_{e, \Delta}^{trt, rc}(X)
		- w^{trt,p}_{e}  m_{e, t, \Delta}^{c, rc}(X)
		- w^{trt,p}_{e} \tau^{dr, rc, num}_{e, t}
	\end{align}

	Thus the relevant part of the influence function is exactly:
	\begin{align}
		\label{eq:remainder-num-rc}
		 & P \varphi^{dr,rc}_{num}(\cdot; \tau^{dr, rc, num}_{e, t};
		\hat{\eta}^{dr, rc}_{num})
		\\
		\label{eq:remainder-num-rc-compare}
		 & =
		E \Big[
			\hat{w}^{trt,rc}_{e, t} \hat{\varepsilon}^{Y,trt,rc}_{e,t}
			- \hat{w}^{trt,rc}_{e, e-1} \hat{\varepsilon}^{Y,trt,rc}_{e,e-1}
			- \hat{w}^{c,rc}_{e, t, t} \hat{\varepsilon}^{Y,c,rc}_{e,t,t}
			+ \hat{w}^{c,rc}_{e, t, e - 1} \hat{\varepsilon}^{Y,c,rc}_{e,t,e-1}
		\\ & \quad
			\nonumber
			+ \hat{w}^{trt,p}_{e}  \hat{m}_{e, \Delta}^{trt, rc}(X)
			- \hat{w}^{trt,p}_{e}  \hat{m}_{e, t, \Delta}^{c, rc}(X)
			- \hat{w}^{trt,p}_{e} \tau^{dr, rc, num}_{e, t}
			\Big]
	\end{align}
	\medskip\noindent
	As done for the panel case in \cref{eq:dummy-times-estimand-trick-panel},
	\begin{align}
		\nonumber
		\tau^{dr, rc, num}_{e, t}
		 & =
		E
		\Big[
			\frac{E_{e}}{E[E_{e}]}
			[
				m_{e, \Delta}^{trt, rc}(X)
				-
				m_{e, t, \Delta}^{c, rc}(X)
			]
			\Big]
		\\
		\iff
		\label{eq:dummy-times-estimand-trick-rc}
		E[E_{e}\tau^{dr, rc, num}_{e, t}]
		 & =
		E
		\Big[
			E_{e}
				[
					m_{e, \Delta}^{trt, rc}(X)
					-
					m_{e, t, \Delta}^{c, rc}(X)
				]
			\Big]
	\end{align}

	\medskip\noindent
	Now:
	\begin{align}
		\label{eq:remainder-num-rc-components}
		E [
				\hat{w}^{trt,rc}_{e, t} \hat{\varepsilon}^{Y,trt,rc}_{e,t}
			]
		= E [
				\hat{w}^{trt,rc}_{e, t} \{Y - \hat{m}_{e, t}^{trt, rc}(X)\}
			]
		= E [
		\hat{w}^{trt,rc}_{e, t} [m_{e, t}^{trt, rc}(X) - \hat{m}_{e, t}^{trt, rc}(X)]
		]
	\end{align}
	by the LIE, and similar for the other three terms:
	\begin{align}
		\label{eq:remainder-num-rc-components-other}
		E [
				\hat{w}^{trt,rc}_{e, e-1} \hat{\varepsilon}^{Y,trt,rc}_{e,e-1}
			]
		 & =
		E [
				\hat{w}^{trt,rc}_{e, e-1}
				\{m_{e, e-1}^{trt, rc}(X) - \hat{m}_{e, e-1}^{trt, rc}(X)\}
			]
		\\
		\nonumber
		E [
				\hat{w}^{c,rc}_{e, t, t} \hat{\varepsilon}^{Y,c,rc}_{e,t,t}
			]
		 & =
		E [
				\hat{w}^{c,rc}_{e, t, t}
				\{m_{e, t}^{c, rc}(X) - \hat{m}_{e, t}^{c, rc}(X)\}
			]
		\\
		\nonumber
		E [
				\hat{w}^{c,rc}_{e, t, e-1} \hat{\varepsilon}^{Y,c,rc}_{e,t,e-1}
			]
		 & =
		E [
				\hat{w}^{c,rc}_{e, t, e-1}
				\{m_{e, e-1}^{c, rc}(X) - \hat{m}_{e, e-1}^{c, rc}(X)\}
			]
	\end{align}

	Hence, we can rewrite the remainder term \cref{eq:remainder-num-rc},
	using \cref{eq:dummy-times-estimand-trick-rc} and
	\cref{eq:remainder-num-rc-components,eq:remainder-num-rc-components-other},
	as:
	\begin{align}
		\nonumber
		          & P \varphi^{dr,rc}_{num}(\cdot; \tau^{dr, rc, num}_{e, t}; \hat{\eta}^{dr, rc}_{num})
		\\
		\nonumber
		          & =
		E \Big[
		\hat{w}^{trt,rc}_{e, t} [m_{e, t}^{trt, rc}(X) - \hat{m}_{e, t}^{trt, rc}(X)]
		-
		\hat{w}^{trt,rc}_{e, e-1}
		\{m_{e, e-1}^{trt, rc}(X) - \hat{m}_{e, e-1}^{trt, rc}(X)\}
		\\
		\nonumber & \quad
		- \hat{w}^{c,rc}_{e, t, t}
		\{m_{e, t}^{c, rc}(X) - \hat{m}_{e, t}^{c, rc}(X)\}
		+
		\hat{w}^{c,rc}_{e, t, e-1}
		\{m_{e, e-1}^{c, rc}(X) - \hat{m}_{e, e-1}^{c, rc}(X)\}
		\\
		\nonumber & \quad
		+ \hat{w}^{trt,p}_{e}
		[\hat{m}_{e, t}^{trt, rc}(X) - \hat{m}_{e, e - 1}^{trt, rc}(X)
		- \{\hat{m}_{e, t, t}^{c, rc}(X) - \hat{m}_{e, t, e - 1}^{c, rc}(X)\}]
		\\
		\nonumber & \quad
		- \hat{w}^{trt,p}_{e}
		[
		m_{e, t}^{trt, rc}(X) - m_{e, e-1}^{trt, rc}(X)
		-
		\{m_{e, t, t}^{c, rc}(X) - m_{e, t, e-1}^{c, rc}(X)\}
		]
		\Big]
		\\
		\label{eq:remainder-terms-product}
		          & =
		E \Big[
			(\hat{w}^{trt,rc}_{e, t} - \hat{w}^{trt,p}_{e})
			[m_{e, t}^{trt, rc}(X) - \hat{m}_{e, t}^{trt, rc}(X)]
		\\ & \quad
			\nonumber
			-
			(\hat{w}^{trt,rc}_{e, e-1} - \hat{w}^{trt,p}_{e})
			\{m_{e, e-1}^{trt, rc}(X) - \hat{m}_{e, e-1}^{trt, rc}(X)\}
		\\
		\nonumber & \quad
			- (\hat{w}^{c,rc}_{e, t, t} - \hat{w}^{trt,p}_{e})
			\{m_{e, t}^{c, rc}(X) - \hat{m}_{e, t}^{c, rc}(X)\}
		\\ & \quad
			\nonumber
			+
			(\hat{w}^{c,rc}_{e, t, e-1} - \hat{w}^{trt,p}_{e})
			\{m_{e, e-1}^{c, rc}(X) - \hat{m}_{e, e-1}^{c, rc}(X)\}
			\Big]
	\end{align}
	The expression in \cref{eq:remainder-terms-product} has the correct form for
	the mean functions;
	that is, the true mean function minus the estimated one.
	What remains is to derive an expression for the weights.

	To this end, recall the weight definitions \cref{eq:weights-trt-rc-norm,eq:weights-nev-rc-norm,eq:weights-nye-rc-norm}.
	For notational purposes we refine the denominators of the weights as follows:
	\begin{align*}
		w^{trt,rc}_{e, t}
		:= E_{e} \cdot \mathbf{1}\{T = t\}/\rho^{trt, rc}_{e, t},
		\quad
		w^{c,rc}_{e, t}
		 & :=
		\frac{C \cdot \mathbf{1}\{T = t\}p(X)}{1 - p(X)}
		/
		\rho^{c, rc}_{e, t}
	\end{align*}
	and likewise
	\begin{align*}
		\hat{w}^{trt,rc}_{e, t}
		:= E_{e} \cdot \mathbf{1}\{T = t\}/\hat{\rho}^{trt, rc}_{e, t},
		\quad
		\hat{w}^{c,rc}_{e, t}
		 & :=
		\frac{C \cdot \mathbf{1}\{T = t\}\hat{p}(X)}{1 - \hat{p}(X)}
		/
		\hat{\rho}^{c, rc}_{e, t}
	\end{align*}
	where we defined
	$\hat{\rho}^{c, rc}_{e, t}
		:=
		P_{n}
		\left[\frac{C \cdot \mathbf{1}\{T = t\}\hat{p}(X)}{1 - \hat{p}(X)}\right]$
	for an estimator of the denominator of a generic control weight in the
	repeated cross-sections case
	\cref{eq:weights-nev-rc-norm,eq:weights-nye-rc-norm}.

	Then,
	\begin{align*}
		E[\hat{w}^{trt,rc}_{e, t}]
		=
		E[E[E_{e} \mathbf{1}\{T = t\} \mid X]/\hat{\rho}^{trt, rc}_{e, t}]
		= E[p(X) \lambda_{t}/\hat{\rho}^{trt, rc}_{e, t}]
	\end{align*}
	and
	\begin{align*}
		E[\hat{w}^{c,rc}_{e, t}]
		=
		E\left[[1 - p(X)] \lambda_{t}\frac{\hat{p}(X)}{1 - \hat{p}(X)}
			/\hat{\rho}^{c, rc}_{e, t}\right].
	\end{align*}
	Also, $\hat{w}^{trt,p}_{e}$ is similar to the panel case,
	where we define $\rho := E[E_{e}]$ and $\hat{\rho}$ its estimated
	value.
	Hence:
	\begin{align*}
		E[
		\hat{w}^{c,rc}_{e, t}
		- \hat{w}^{trt,p}_{e}
		]
		 & =
		E[
				[1 - p(X)] \lambda_{t}\frac{\hat{p}(X)}{1 - \hat{p}(X)}
				/\hat{\rho}^{c, rc}_{e, t}
				- p(X)/\hat{\rho}
			]
		\\
		 & =
		- E\left[
			[1 - p(X)]
			\left(
			\frac{p(X)}{1 - p(X)}
			/\hat{\rho}
			-
			\lambda_{t}
			\frac{\hat{p}(X)}{1 - \hat{p}(X)}
			/\hat{\rho}^{c, rc}_{e, t}
			\right)
			\right]
	\end{align*}
	and similarly:
	\begin{align*}
		E[
		\hat{w}^{c,rc}_{e, e-1}
		- \hat{w}^{trt,p}_{e}
		]
		 & =
		- E\left[
			[1 - p(X)]
			\left(
			\frac{p(X)}{1 - p(X)}
			/\hat{\rho}
			-
			\lambda_{e - 1}
			\frac{\hat{p}(X)}{1 - \hat{p}(X)}
			/\hat{\rho}^{c, rc}_{e, e - 1}
			\right)
			\right]
	\end{align*}
	The last two difference-in-weights of \cref{eq:remainder-terms-product},
	corresponding to the treated cases in periods $t$ and $e - 1$,
	are rewritten as follows:
	\begin{align*}
		E[\hat{w}^{trt,rc}_{e, t} - \hat{w}^{trt,p}_{e}]
		 & = E[
				p(X) \lambda_{t} / \hat{\rho}^{trt, rc}_{e, t}
				- p(X)/\hat{\rho}
			]
		= E[
				p(X)
				(
				\lambda_{t} / \hat{\rho}^{trt, rc}_{e, t} - 1/\hat{\rho}
				)
			],
		\\
		E[\hat{w}^{trt,rc}_{e, e-1} - \hat{w}^{trt,p}_{e}]
		 & = E[
				p(X)
				(
				\lambda_{e - 1} / \hat{\rho}^{trt, rc}_{e, e - 1} - 1/\hat{\rho}
				)
			].
	\end{align*}
	Hence, taking the absolute value of \cref{eq:remainder-terms-product}, we can
	bound it as:
	\begin{align}
		\nonumber
		          & \vert P \varphi^{dr,rc}_{num}(\cdot; \tau^{dr, rc, num}_{e, t}; \hat{\eta}^{dr, rc}_{num}) \vert
		\\
		\nonumber
		          & = \Big\vert
		E \Big[
		p(X)
		(
		\lambda_{t} / \hat{\rho}^{trt, rc}_{e, t} - 1/\hat{\rho}
		)
		[m_{e, t}^{trt, rc}(X) - \hat{m}_{e, t}^{trt, rc}(X)]
		\\
		\nonumber & \quad
		-
		p(X)
		(
		\lambda_{e - 1} / \hat{\rho}^{trt, rc}_{e, e - 1} - 1/\hat{\rho}
		)
		\{m_{e, e-1}^{trt, rc}(X) - \hat{m}_{e, e-1}^{trt, rc}(X)\}
		\\
		\nonumber
		\nonumber & \quad
		+ [1 - p(X)]
		\left(
		\frac{p(X)}{1 - p(X)}
		/\hat{\rho}
		-
		\lambda_{t}
		\frac{\hat{p}(X)}{1 - \hat{p}(X)}
		/\hat{\rho}^{c, rc}_{e, t}
		\right)
		\{m_{e, t}^{c, rc}(X) - \hat{m}_{e, t}^{c, rc}(X)\}
		\\
		\nonumber & \quad
		- [1 - p(X)]
		\left(
		\frac{p(X)}{1 - p(X)}
		/\hat{\rho}
		-
		\lambda_{e - 1}
		\frac{\hat{p}(X)}{1 - \hat{p}(X)}
		/\hat{\rho}^{c, rc}_{e, e - 1}
		\right)
		\{m_{e, e-1}^{c, rc}(X) - \hat{m}_{e, e-1}^{c, rc}(X)\}
		\Big]
		\Big\vert
		\\
		\nonumber
		          & \leq
		E \Big[
			\Big\vert
			p(X)
			(
			\lambda_{t} / \hat{\rho}^{trt, rc}_{e, t} - 1/\hat{\rho}
			)
			[m_{e, t}^{trt, rc}(X) - \hat{m}_{e, t}^{trt, rc}(X)]
			\Big\vert
		\\
		\nonumber & \quad
			+
			\Big\vert
			p(X)
			(
			\lambda_{e - 1} / \hat{\rho}^{trt, rc}_{e, e - 1} - 1/\hat{\rho}
			)
			\{m_{e, e-1}^{trt, rc}(X) - \hat{m}_{e, e-1}^{trt, rc}(X)\}
			\Big\vert
		\\
			\nonumber
		\nonumber & \quad
			+
			\Big\vert
			[1 - p(X)]
			\left(
			\frac{p(X)}{1 - p(X)}
			/\hat{\rho}
			-
			\lambda_{t}
			\frac{\hat{p}(X)}{1 - \hat{p}(X)}
			/\hat{\rho}^{c, rc}_{e, t}
			\right)
			\{m_{e, t}^{c, rc}(X) - \hat{m}_{e, t}^{c, rc}(X)\}
			\Big\vert
		\\
		\nonumber & \quad
			+
			\Big\vert
			[1 - p(X)]
			\left(
			\frac{p(X)}{1 - p(X)}
			/\hat{\rho}
			-
			\lambda_{e - 1}
			\frac{\hat{p}(X)}{1 - \hat{p}(X)}
			/\hat{\rho}^{c, rc}_{e, e - 1}
			\right)
			\{m_{e, e-1}^{c, rc}(X) - \hat{m}_{e, e-1}^{c, rc}(X)\}
			\Big\vert
			\Big]
		\\
		\nonumber
		          & \leq
		\Big\vert
		p
		(
		\lambda_{t} / \hat{\rho}^{trt, rc}_{e, t} - 1/\hat{\rho}
		)
		\Big\vert
		\Big\Vert
		m_{e, t}^{trt, rc} - \hat{m}_{e, t}^{trt, rc}
		\Big\Vert_{2, P}
		+
		\Big\vert
		p
		(
		\lambda_{e - 1} / \hat{\rho}^{trt, rc}_{e, e - 1} - 1/\hat{\rho}
		)
		\Big\vert
		\Big\Vert
		m_{e, e-1}^{trt, rc} - \hat{m}_{e, e-1}^{trt, rc}
		\Big\Vert_{2, P}
		\\
		\nonumber
		\nonumber & \quad
		+
		\Big\vert
		[1 - p]
		\left(
		\frac{p}{1 - p}
		/\hat{\rho}
		-
		\lambda_{t}
		\frac{\hat{p}}{1 - \hat{p}}
		/\hat{\rho}^{c, rc}_{e, t}
		\right)
		\Big\vert
		\Big\Vert
		m_{e, t}^{c, rc} - \hat{m}_{e, t}^{c, rc}
		\Big\Vert_{2, P}
		\\
		\nonumber & \quad
		+
		\Big\vert
		[1 - p]
		\left(
		\frac{p}{1 - p}
		/\hat{\rho}
		-
		\lambda_{e - 1}
		\frac{\hat{p}}{1 - \hat{p}}
		/\hat{\rho}^{c, rc}_{e, e - 1}
		\right)
		\Big\vert
		\Big\Vert
		m_{e, e-1}^{c, rc} - \hat{m}_{e, e-1}^{c, rc}
		\Big\Vert_{2, P}
		\\
		\label{eq:rc-num-bound-remainder}
		          & \leq
		(1 - c)
		\Big\{
		\Big\vert
		\lambda_{t} / \hat{\rho}^{trt, rc}_{e, t} - 1/\hat{\rho}
		\Big\vert
		\Big\Vert
		m_{e, t}^{trt, rc} - \hat{m}_{e, t}^{trt, rc}
		\Big\Vert_{2, P}
		\\
		\nonumber & \quad
		+
		\Big\vert
		\lambda_{e - 1} / \hat{\rho}^{trt, rc}_{e, e - 1} - 1/\hat{\rho}
		\Big\vert
		\Big\Vert
		m_{e, e-1}^{trt, rc} - \hat{m}_{e, e-1}^{trt, rc}
		\Big\Vert_{2, P}
		\Big\}
		\\
		\nonumber
		\nonumber & \quad
		+
		\frac{1}{c}
		\Big\{
		\Big\Vert
		\frac{p}{1 - p}
		/\hat{\rho}
		-
		\lambda_{t}
		\frac{\hat{p}}{1 - \hat{p}}
		/\hat{\rho}^{c, rc}_{e, t}
		\Big\Vert_{2, P}
		\Big\Vert
		m_{e, t}^{c, rc} - \hat{m}_{e, t}^{c, rc}
		\Big\Vert_{2, P}
		\\
		\nonumber & \quad
		+
		\Big\Vert
		\frac{p}{1 - p}
		/\hat{\rho}
		-
		\lambda_{e - 1}
		\frac{\hat{p}}{1 - \hat{p}}
		/\hat{\rho}^{c, rc}_{e, e - 1}
		\Big\Vert_{2, P}
		\Big\vert
		\Big\Vert
		m_{e, e-1}^{c, rc} - \hat{m}_{e, e-1}^{c, rc}
		\Big\Vert_{2, P}
		\Big\}
	\end{align}
	where we
	in the first inequality used
	Jensen's inequality and the triangle inequality for integrals,
	in the second inequality used
	Hölder's inequality,
	in the third
	Hölder's inequality again together with
	$1 - p(X) < 1/c$ and $p(X) < 1 - c$ cf. \Cref{ass:overlap}.

	\medskip\noindent
	The exact same argument applies to the denominator,
	providing a bound on $\vert P \varphi^{dr,rc}_{den}(\cdot; \tau^{dr, rc,
			den}_{e, t}; \hat{\eta}^{dr, rc}_{den}) \vert$ for
	\Cref{eq:if-dr-den-estimand-rc}.
	It has the exact same form as \cref{eq:rc-num-bound-remainder}
	but with the mean function related to the denominator instead of the
	numerator.
	Hence, similar to the panel data case, by the quotient rule,
	the unscaled remainder term satisfies:
	\begin{align*}
		\Big\vert P \varphi^{dr,rc}(\cdot; \tau^{dr, rc}; \hat{\eta^{dr, rc}}) \Big\vert
		= o_{P}(n^{-1/2})
	\end{align*}
	provided the assumptions in \Cref{prop:remainder-term-rc} hold.
\end{proofref}

\begin{remark}[Identical remainder terms]
	\label{remark:identical-remainders}
	\leavevmode

	\noindent
	\textbf{Panel Data:}
	\Cref{eq:estimand-left-panel}
	equals \cref{eq:identification-panel};
	and
	from \Cref{eq:remainder-num-rc-compare}
	we see
	\begin{align*}
		 & P \varphi^{dr, p, num}(\cdot; \tau^{dr, p, num}_{e, t},
		\hat{\eta}^{dr, p, num}_{e, t})
		\\
		 & =
		E[
				(
				\hat{w}^{trt,p}_{e}
				-
				\hat{w}^{c,p}_{e, t}
				)
				\hat{\varepsilon}^{Y, p}_{e, t}
				- \hat{w}^{trt,p}_{e} \tau^{dr,p,num}_{e, t, 1}
			]
		= P \varphi^{p, num}(\cdot; \tau^{p, num}_{e, t},
		\hat{\eta}^{p, num}_{e, t})
	\end{align*}
	i.e. the numerator component of \cref{eq:if-latt-panel}
	is identical to \cref{eq:if-dr-estimand-rc} under the expectation operator
	with estimated nuisances inserted.
	Hence the remainder terms using either influence function
	\cref{eq:if-latt-panel}
	or \cref{eq:if-dr-estimand-panel} are the same,
	and similarly for the denominator and the joint influence function.

	\medskip\noindent
	\textbf{Repeated Cross-Sections:}
	\Cref{eq:estimand-left-rc}
	equals \cref{eq:identification-rc};
	and
	\Cref{eq:remainder-num-rc-compare}
	is exactly equal to
	$P \varphi^{rc, num}(\cdot; \tau^{rc}_{e, t}, \hat{\eta}^{rc}_{e, t})$.
	Hence the remainder terms using either influence function
	\cref{eq:if-latt-rc}
	or \cref{eq:if-dr-estimand-rc} are the same,
	i.e.
	\begin{align*}
		P \varphi^{dr,rc}_{num}(\cdot; \tau^{dr, rc, num}_{e, t};
		\hat{\eta}^{dr, rc}_{num})
		=
		P \varphi^{rc, num}(\cdot; \tau^{rc}_{e, t}, \hat{\eta}^{rc, num}_{e, t})
	\end{align*}
	and similarly for the denominator and the joint influence function.
\end{remark}

\begin{remark}[Similarity of the IDiD and DiD remainder term calculations]
	\leavevmode
	\begin{itemize}[nosep]
		\item The analysis was applied to the numerator and denominator separately
		      and then joined together.
		      Hence the same remainder-term calculations also apply to the DiD
		      setting for the $ATT(g, t)$ estimand and its influence function, cf.
		      \cite{sazhao,csa}.
		\item As for the panel case, the non-normalized \cref{eq:if-latt-rc} and normalized
		      influence function turn out to have a similar remainder term.
	\end{itemize}
\end{remark}

\newpage
\section{Other calculations}
\subsection{Plug-in bias DR EIF}
\label{eq:plug-in-extra}
Here we show that even though the DR estimator $\hat{\tau}^{dr, p}_{e, t}$
solves
\begin{align*}
	P_{n}\varphi^{p}(\cdot; \hat{\tau}^{dr, p}_{e, t}, \hat{\eta}^{p}_{e, t}) = 0
\end{align*}
using the EIF $\varphi^{p}(O; \tau^{p}_{e, t}, \eta^{p}_{e, t})$
\cref{eq:if-latt-panel},
it also holds that
\begin{align*}
	P_{n}\varphi^{dr, p}(\cdot; \hat{\tau}^{dr, p}_{e, t}, \hat{\eta}^{dr, p}_{e, t}) = 0
\end{align*}
that is, the empirical mean of
$\varphi^{dr, p}(\cdot; \hat{\tau}^{dr, p}_{e, t}, \hat{\eta}^{dr, p}_{e, t})$
equals $0$.
A similar argument shows that
\begin{align*}
	P_{n}\varphi^{dr, rc}(\cdot; \hat{\tau}^{dr, rc}_{e, t}, \hat{\eta}^{dr, rc}_{e, t}) = 0
\end{align*}
as well.

To see this, first write \cref{eq:if-latt-panel} as
\begin{align*}
	 & \varphi^{p}(O; \tau^{p}_{e, t}, \eta^{p}_{e, t})
	\\
	\nonumber
	 & :=
	\frac{1}{\tau^{p, den}_{e, t}}
	[
	\{ w^{trt,p}_{e} - w^{c,p}_{e, t} \}
	\{\Delta_{t-e+1}Y_{t} - m_{e, t}^{c, p}(X)\}
	- \tau^{p}_{e, t}
	\{ w^{trt,p}_{e} - w^{c,p}_{e, t} \}
	\{\Delta_{t-e+1}D_{t} - g_{e, t}^{c, p}(X)\}
	]
	\\
	\nonumber
	 & =
	\frac{1}{\tau^{p, den}_{e, t}}
	[
		\{ w^{trt,p}_{e} - w^{c,p}_{e, t} \} \varepsilon^{Y, p}_{e, t}
		- \tau^{p}_{e, t}
		\{ w^{trt,p}_{e} - w^{c,p}_{e, t} \} \varepsilon^{D, p}_{e, t}
	],
\end{align*}
and \cref{eq:if-dr-estimand-panel} as
\begin{align*}
	\varphi^{dr, p}(O; \tau^{dr, p}_{e, t}, \eta^{dr, p}_{e, t})
	 & =
	\frac{1}{\tau^{dr, p, den}_{e, t}}
	\Big[
		w^{trt,p}_{e}
		\{
		\varepsilon^{Y, p}_{e, t}
		-
		E[ w^{trt,p}_{e} \varepsilon^{Y, p}_{e, t} ]
		\}
		-
		w^{c,p}_{e, t}
		\{
		\varepsilon^{Y, p}_{e, t}
		-
		E[ w^{c,p}_{e, t} \varepsilon^{Y, p}_{e, t} ]
		\}
	\\ & \quad
		\nonumber
		- \tau^{dr, p}_{e, t}
		(w^{trt,p}_{e}
		\{
		\varepsilon^{D, p}_{e, t}
		-
		E[ w^{trt,p}_{e} \varepsilon^{D, p}_{e, t} ]
		\}
		-
		w^{c,p}_{e, t}
		\{
		\varepsilon^{D, p}_{e, t}
		-
		E[ w^{c,p}_{e, t} \varepsilon^{D, p}_{e, t} ]
		\})
		\Big]
\end{align*}
Consider the part corresponding to the numerator of the latter expression:
\begin{align*}
	 & w^{trt,p}_{e}
	\{
	\varepsilon^{Y, p}_{e, t}
	-
	E[ w^{trt,p}_{e} \varepsilon^{Y, p}_{e, t} ]
	\}
	-
	w^{c,p}_{e, t}
	\{
	\varepsilon^{Y, p}_{e, t}
	-
	E[ w^{c,p}_{e, t} \varepsilon^{Y, p}_{e, t} ]
	\}
	\\
	 & =
	(
	w^{trt,p}_{e}
	-
	w^{c,p}_{e, t}
	)
	\varepsilon^{Y, p}_{e, t}
	-
	w^{trt,p}_{e}
	E[ w^{trt,p}_{e} \varepsilon^{Y, p}_{e, t} ]
	+
	w^{c,p}_{e, t}
	E[ w^{c,p}_{e, t} \varepsilon^{Y, p}_{e, t} ].
\end{align*}
Taking expectations yields
\begin{align*}
	 & E\left[
		(
		w^{trt,p}_{e}
		-
		w^{c,p}_{e, t}
		)
		\varepsilon^{Y, p}_{e, t}
		-
		w^{trt,p}_{e}
		E[ w^{trt,p}_{e} \varepsilon^{Y, p}_{e, t} ]
		+
		w^{c,p}_{e, t}
		E[ w^{c,p}_{e, t} \varepsilon^{Y, p}_{e, t} ]
		\right]
	\\
	 & =
	E\left[
		(
		w^{trt,p}_{e}
		-
		w^{c,p}_{e, t}
		)
		\varepsilon^{Y, p}_{e, t}
		\right]
	-
	E[
			(w^{trt,p}_{e} - w^{c,p}_{e, t}) \varepsilon^{Y, p}_{e, t}
		]
	= 0
\end{align*}
since the expected values of the normalized weights equal $1$.
The same calculation applies to the denominator. Hence the influence function
has mean zero. Likewise, its empirical mean also equals $0$:
\begin{align*}
	 & P_{n}\left[
		(
		\hat{w}^{trt,p}_{e}
		-
		\hat{w}^{c,p}_{e, t}
		)
		\hat{\varepsilon}^{Y, p}_{e, t}
		-
		\hat{w}^{trt,p}_{e}
		P_{n}[ \hat{w}^{trt,p}_{e} \hat{\varepsilon}^{Y, p}_{e, t} ]
		+
		\hat{w}^{c,p}_{e, t}
		P_{n}[ \hat{w}^{c,p}_{e, t} \hat{\varepsilon}^{Y, p}_{e, t} ]
		\right]
	\\
	 & =
	P_{n}\left[
		(
		\hat{w}^{trt,p}_{e}
		-
		\hat{w}^{c,p}_{e, t}
		)
		\hat{\varepsilon}^{Y, p}_{e, t}
		\right]
	-
	P_{n}[ (\hat{w}^{trt,p}_{e} - \hat{w}^{c,p}_{e, t})
			\hat{\varepsilon}^{Y, p}_{e, t} ]
	\\
	 & = 0
\end{align*}
since the weights also sum to $1$ in sample. The same calculation applies to
the denominator.

Therefore, even though we solve the estimating equation for the first
influence function, the second one is also equal to $0$ in sample. It follows
that for the plug-in estimator of $\tau^{dr,s}_{e, t}$, $s \in \{p, rc\}$, the only remaining
terms to handle are the empirical process term and the remainder term. The
remainder term is the same for both influence functions.

\end{document}